\newif\ifrozprawa
\newcommand{\newpageateven}{%
	\ifodd\thepage%
	\else%
	\newpage
	\phantom{ }%
	\fi%
}
\tikzset{nodeStyle/.style = {circle,draw,minimum size=30pt}}
\tikzset{arrowStyle/.style = {-latex}}
\newcommand{\ie}{\emph{i.e.\/}\xspace}
\newcommand{\R}{\ensuremath{\mathbb{R}}}
\newcommand{\C}{\ensuremath{\mathbb{C}}}
\newcommand{\N}{\mathbb{N}}
\newtheorem{theorem}{Theorem}[chapter]
\newtheorem{proposition}[theorem]{Proposition}
\newtheorem{lemma}[theorem]{Lemma}
\theoremstyle{definition}
\newtheorem{definition}[theorem]{Definition}
\patchcmd{\chapter}{plain}{empty}{}{}
\newcommand{\Id}{\mathrm I}
\newcommand\randg[1][]{
\ifstrempty{#1}{
\mathcal G}{\mathcal G^{\rm #1}}}
\newcommand\randgn[1][]{
	\ifstrempty{#1}{
		{\mathcal G}_n}{{\mathcal G}_n^{\rm #1}}}
\newcommand\randdgn[1][]{
	\ifstrempty{#1}{
		{\vec{\mathcal G}}_n}{{\vec{\mathcal G}}_n^{\rm #1}}}
\newcommand{\PP}{\mathbb P}
\newcommand{\EE}{\mathbb E}
\newcommand{\ER}{Erd\H{o}s-R\'enyi\xspace}
\newcommand{\BA}{Barab\'asi-Albert\xspace}
\newcommand{\CL}{Chung-Lu\xspace}
\newcommand{\indeg}{\operatorname{indeg}}
\newcommand{\outdeg}{\operatorname{outdeg}}
\newcommand{\ZZ}{\mathbb Z}
\newcommand{\RR}{\mathbb R}
\newcommand{\CC}{\mathbb C}
\newcommand{\ii}{\mathrm i}
\newcommand{\ee}{\mathrm e}
\newcommand{\kron}{\otimes}
\newcommand{\vecc}[1]{|#1\rangle\rangle}
\newcommand{\depref}[1]{{\ifrozprawa\ref{#1}\else\ref*{#1}\fi}}
\begin{document}

% !TeX spellcheck = en_GB
\frontmatter
	
	\begin{titlepage}
\begin{center}
\includegraphics[width=0.4\textwidth]{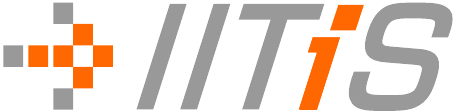}\\
\vspace{0.5em}
\textsc{\large Instytut Informatyki Teoretycznej i Stosowanej\\
Polskiej Akademii Nauk}
\vspace*{1in}
\hrule
\vspace*{0.5em}
\textsc{\Large Wykorzystanie teorii grafów w informatyce kwantowej}
\vspace*{0.5em}
\hrule
\vspace*{1em}
\textsc{\large Rozprawa doktorska}
\par
\vspace{1.5in}
{\large mgr Adam \textsc{Glos}}\\
Promotor: dr hab. Jarosław Adam \textsc{Miszczak}
\vfill
Gliwice, Luty 2021
\end{center}
\end{titlepage}

	\begin{titlepage}
\begin{center}
\includegraphics[width=0.4\textwidth]{iitis_logo}\\
\vspace{0.5em}
\textsc{\large Institute of Theoretical and Applied Informatics, Polish 
Academy of Sciences}
\vspace*{1in}
\hrule
\vspace*{0.5em}
\textsc{\huge  Application of graph theory in quantum computer science}
\vspace*{0.5em}
\hrule
\vspace*{1em}
\textsc{\large Doctoral dissertation}
\par
\vspace{1.5in}
{\large mgr Adam \textsc{Glos}}\\
Supervisor: dr hab. Jarosław Adam \textsc{Miszczak}
\vfill
Gliwice, February 2021
\end{center}
\end{titlepage}

	\setcounter{page}{4}
	
	\cleardoublepage

	\chapter*{\vspace{-2cm}Dedication}

The arrogance of a young scientist who claims that everything he owes is only due to his hard work is truly remarkable. Looking back these few years, at the beginnings of my scientific work, I can see how much of hard work, trust of other people, but also random meetings have led me to the place where I~am. Due to the multitude of people who helped me reach the first giant leap for me (small for mankind, though), I am fully aware that I will not be able to sufficiently thank everyone for their help.

However, there are people who deserve special thanks. My greatest thanks go to my family for supporting me in this path. Special thanks go to my wife, whose support without doubt strengthened me in following my scientific career.

I would also like to thank Prof. Jarosław Miszczak, not only for being my supervisor, but also (or perhaps particularly) for teaching me what science should look like and that it is something more than just optimizing quality measures required for various grant calls. I would like to thank Foundation for Polish Science for convincing me that my supervisor was right.

Special thanks go to all who helped me keeping open-minded. In particular, I would like to thank Piotr Gawron, Mateusz Ostaszewski, Łukasz Pawela, Przemysław Sadowski, Alexander Rivosh, Abuzer Yakaryilmaz and others for discussions on many (not necessarily quantum) topics.

I would like to thank the National Science Center, for granting me scholarship Etiuda, which enabled me to revisit Centre for Quantum Computer Science from the University of Latvia. This dissertation was also prepared under the scholarship Etiuda, no. 2019/32/T/ST6/00158.

Finally, I would like to thank Ryszard Kukulski for many discussions on the convergence of random variables in context of random graphs, and Bruno Coutinho for discussion on efficiency of the classical search. I would also like to Izabela Miszczak for reviewing dissertation and Aleksandra Krawiec for reviewing this dedication.

\cleardoublepage \setcounter{tocdepth}{2} \tableofcontents

	%%%%%%%%%%%%%%%%%%%%%%%%%%%%%%%%%%%%%%%%%%%%%%%%%%%%%%%%%%%%%%%%%%%%%%%%%%%%%%%%
	\chapter*{\vspace{-2cm}List of
		publications}\addcontentsline{toc}{chapter}{List of publications}
	\markboth{\MakeUppercase{List of publications}}{\MakeUppercase{List of
			publications}}
	%%%%%%%%%%%%%%%%%%%%%%%%%%%%%%%%%%%%%%%%%%%%%%%%%%%%%%%%%%%%%%%%%%%%%%%%%%%%%%%%
	% !TeX spellcheck = en_GB

Publications and preprints relevant to the dissertation are highlighted with \textbf{bold}.

\section*{Published work}

\begin{enumerate}[leftmargin=.5cm]
	\itemsep0em
\item A.~Glos, A.~Krawiec, and {\L}.~Pawela, ``Asymptotic entropy of the Gibbs state
of complex networks,'' {\em Scientific Reports}, vol.~11, p.~311, 2021.

\item A.~Glos, N.~Nahimovs, K.~Balakirev, and K.~Khadiev, ``Upperbounds on the
probability of finding marked connected components using quantum walks,''
{\em Quantum Information Processing}, vol.~20, p.~6, 2021.
	
\item Z.~{Tabi}, K.~H. {El-Safty}, Z.~{Kallus}, P.~{Hága}, T.~{Kozsik}, A.~{Glos},
and Z.~{Zimborás}, ``Quantum optimization for the graph coloring problem
with space-efficient embedding,'' in {\em 2020 IEEE International Conference
	on Quantum Computing and Engineering (QCE)}, pp.~56--62, 2020.

\item
A.~Glos, ``Spectral similarity for {B}arabási–{A}lbert and {C}hung–{L}u
models,'' {\em Physica A: Statistical Mechanics and its Applications},
vol.~516, pp.~571--578, 2019.

\item
A.~Glos and J.~A. Miszczak, ``The role of quantum correlations in {C}op and
{R}obber game,'' {\em Quantum Studies: Mathematics and Foundations}, vol.~6,
no.~1, pp.~15--26, 2019.

\item
A.~Glos and J.~A. Miszczak, ``Impact of the malicious input data modification
on the efficiency of quantum spatial search,'' {\em Quantum Information
	Processing}, vol.~18, p.~343, 2019.

\item
A.~Glos and T.~Januszek, ``Impact of global and local interaction on quantum
spatial search on Chimera graph,'' {\em International Journal of Quantum
	Information}, p.~1950040, 2019.

\item
A.~Glos, J.~Miszczak, and M.~Ostaszewski, ``\textbf{{QSW}alk.jl: {J}ulia package for
quantum stochastic walks analysis},'' {\em Computer Physics Communications},
2018.

\item
K.~Domino, A.~Glos, M.~Ostaszewski, P.~Sadowski, and Ł.~Pawela, ``\textbf{Properties of
{Q}uantum {S}tochastic {W}alks from the asymptotic scaling exponent},'' {\em
	Quantum Information and Computation}, vol.~18, no.~3\&4, pp.~0181--0199,
2018.

\item
A.~Glos, A.~Krawiec, R.~Kukulski, and Z.~Pucha{\l}a, ``\textbf{Vertices cannot be
hidden from quantum spatial search for almost all random graphs},'' {\em
	Quantum Information Processing}, vol.~17, no.~4, p.~81, 2018.

\item
A.~Glos and T.~Wong, ``\textbf{Optimal quantum-walk search on {K}ronecker graphs with
dominant or fixed regular initiators},'' {\em Physical Review A}, vol.~98,
no.~6, p.~062334, 2018.

\item
K.~Domino, A.~Glos, and M.~Ostaszewski, ``\textbf{Superdiffusive {Q}uantum {S}tochastic
{W}alk definable on arbitrary directed graph},'' {\em Quantum Information \&
	Computation}, vol.~17, no.~11-12, pp.~973-986, 2017.

\item
A.~Glos, J.~A. Miszczak, and M.~Ostaszewski, ``\textbf{Limiting properties of
stochastic quantum walks on directed graphs},'' {\em Journal of Physics A:
	Mathematical and Theoretical}, vol.~51, no.~3, p.~035304, 2017.
	
\item
A.~Glos and P.~Sadowski, ``Constructive quantum scaling of unitary matrices,''
{\em Quantum Information Processing}, vol.~15, no.~12, pp.~5145--5154, 2016.

\item
D.~Kurzyk and A.~Glos, ``Quantum inferring acausal structures and the {M}onty
{H}all problem,'' {\em Quantum Information Processing}, vol.~15, no.~12,
pp.~4927--4937, 2016.

\end{enumerate}

\section*{Preprints}
\begin{enumerate}[leftmargin=.5cm]
\item R.~Kukulski and A.~Glos, ``\textbf{Comment to `{S}patial search by quantum walk is
	optimal for almost all graphs'},'' {\em arXiv:2009.13309}, 2020.

\item
A.~Glos, A.~Krawiec, and Z.~Zimbor{\'a}s, ``Space-efficient binary optimization
for variational computing,'' {\em arXiv:2009.07309}, 2020.

\item K.~Domino and A.~Glos, ``Hiding higher order cross-correlations of multivariate
data using {A}rchimedean copulas,'' {\em arXiv:1803.07813}, 2018.
\end{enumerate}

	\begin{otherlanguage}{polish}
		%%%%%%%%%%%%%%%%%%%%%%%%%%%%%%%%%%%%%%%%%%%%%%%%%%%%%%%%%%%%%%%%%%%%%%%%%%%%%%%
		\chapter*{\vspace{-4cm}Streszczenie w języku 
			polskim}\addcontentsline{toc}{chapter}{Streszczenie w języku polskim}
		\markboth{\MakeUppercase{Streszczenie w języku 
				polskim}}{\MakeUppercase{Streszczenie w języku polskim}}
		%%%%%%%%%%%%%%%%%%%%%%%%%%%%%%%%%%%%%%%%%%%%%%%%%%%%%%%%%%%%%%%%%%%%%%%%%%%%%%%
		W ramach rozprawy wykazałem, że ciągłe w czasie błądzenie kwantowe pozostaje skuteczne dla ogólnych struktur grafowych. Przeanalizowałem dwa aspekty tego problemu.

Po pierwsze, wiadomym jest, że model \emph{Continuous-Time Quantum Walk} (CTQW), zaproponowany przez Childsa i Goldstone'a, potrafi szybko propagować na grafie będącym nieskończoną ścieżką. Jednak równanie Schr\"odingera wymaga, aby Hamiltonian był symetryczny, przez co mogą być zaimplementowane jedynie nieskierowane grafy. W ramach tej rozprawy przeanalizowałem, czy możliwe jest zaprojektowanie ciągłego w czasie błądzenia kwantowego dla ogólnego grafu skierowanego, tak aby zachowywał on szybką propagację.

Po drugie, przeszukiwanie grafów zdefiniowane przez CTQW jest efektywne dla wielu różnych rodzajów grafów. Jednakże większość z tych grafów miała bardzo prostą strukturę. Najbardziej zaawansowanymi przypadkami były model grafów losowych Erd\H{o}sa-R\'eyniego, który choć najpopularniejszy  nie daje grafów opisujących rzeczywiste interakcje spotykane w przyrodzie, oraz model grafów Barab\'asiego-Alberta, dla których kwadratowe przyspieszenie nie było udowodnione. W ramach tego aspektu przeanalizowałem, czy przyspieszenie kwantowe jest możliwe także dla skomplikowanych struktur grafowych.

Rozprawa składa się z siedmiu rozdziałów. W rozdziale~\ref{sec:intro} umieszczony został wstęp oraz motywacja podjęcia tematu. W rozdziale~\ref{sec:preliminaries} wprowadziłem notację oraz podstawowe koncepcje użyte w rozprawie.

W rozdziałach~\ref{sec:nonmoralizing-qsw} oraz \ref{sec:convergence-qsw} przeanalizowałem pierwszy wprowadzony problem. W rozdziale~\ref{sec:nonmoralizing-qsw} zaproponowałem błądzenie niemoralizujące kwantowo-stochastyczne o globalnych interakcjach, które jest dobrze zdefiniowane dla grafów skierowanych. Wykazałem, że dla tego modelu obserwujemy szybką propagację dla nieskończonej ścieżki. Aby uzyskać ten efekt, istotnie lepszy niż dla klasycznego błądzenia, wprowadziłem mały transfer amplitudy w kierunku niezgodnym z kierunkiem istniejących łuków grafu. W rozdziale~\ref{sec:convergence-qsw} przeanalizowałem własności graniczne wprowadzonego modelu. Zbadałem również dwa inne błądzenia zwane odpowiednio lokalnymi i globalnymi kwantowo-stochastycznymi. Pokazałem, że każdy z wprowadzonych do tej pory modeli miał inne właściwości. W szczególności, w przypadku błądzeń lokalnego i niemoralizującego globalnego wskazałem najbardziej intuicyjne zachowanie dla grafów skierowanych. Badania pokazują, że możliwe jest zaproponowanie szybkiego, ciągłego w czasie błądzenia kwantowego, które jest dobrze zdefiniowane dla ogólnego grafu skierowanego.

W rozdziałach~\ref{sec:hiding} oraz~\ref{sec:complex} przeanalizowałem drugi z postawionych problemów badawczych. W rozdziale~\ref{sec:hiding} poprawiłem i wzmocniłem obecnie wiodące wyniki dotyczące grafów Erd\H{o}sa-R\'enyiego. Wykazałem, że przyspieszenie kwantowe jest poprawne dla wszystkich wierzchołków, nie tylko dla ,,prawie wszystkich''. W porównaniu z obecnie wiodącymi wynikami pokazałem, że Laplasjan jest o wiele prostszym operatorem w analizie niż macierz sąsiedztwa. W rozdziale~\ref{sec:complex} porównałem trzy różne operatory możliwe do wykorzystania w ramach kwantowego przeszukiwania przestrzennego. Pokazałem, że znormalizowany Laplasjan, przy pewnych założeniach, umożliwia osiągnięcie pełnego, kwadratowego przyspieszenia. Przeanalizowałem dwa modele grafów losowych, które zwracają grafy o skomplikowanej strukturze z wysokim prawdopodobieństwem. Analiza potwierdziła, że zaproponowana macierz jest lepsza niż te dotychczas używane. Ostatecznie, zaproponowałem procedurę, która powala rozwiązać problem znajdowania optymalnego czasu pomiaru dla kwantowego przeszukiwania.

W rozdziale~\ref{sec:conclusions} podsumowałem uzyskane wyniki. Rozprawa zawiera również dwa dodatki, gdzie umieszczone zostały dowody wyników użytych w~rozprawie.
	\end{otherlanguage}
	
	%%%%%%%%%%%%%%%%%%%%%%%%%%%%%%%%%%%%%%%%%%%%%%%%%%%%%%%%%%%%%%%%%%%%%%%%%%%%%%%
	\chapter*{\vspace{-4cm}Abstract in
		English}\addcontentsline{toc}{chapter}{Abstract in English}
	\markboth{\MakeUppercase{Abstract in English}}{\MakeUppercase{Abstract in 
			English}}
	%%%%%%%%%%%%%%%%%%%%%%%%%%%%%%%%%%%%%%%%%%%%%%%%%%%%%%%%%%%%%%%%%%%%%%%%%%%%%%%
	In this dissertation we demonstrate that the
continuous-time quantum walk models remain powerful for nontrivial graph structures.
We consider two aspects of this problem.

First, it is known that the standard Continuous-Time Quantum Walk (CTQW),
proposed by Childs and Goldstone, can propagate quickly on the infinite path graph.
However, the Schr\"odinger equation requires the Hamiltonian to be symmetric,
and thus only undirected graphs can be implemented. In this thesis, we address the question,
whether it is possible to construct a continuous-time quantum walk
on general directed graphs, preserving its propagation properties.

Secondly, the quantum spatial search defined through CTQW has been proven to work
well on various undirected graphs. However, most of these graphs have very
simple structures. The most advanced results concerned the \ER model of random graphs, which
is the most popular but not realistic random graph model, and \BA random graphs,
for which full quadratic speed-up was not confirmed. In the scope of this aspect we analyze, whether quantum speed-up is observed for complicated graph structures as well. 

The dissertation consists of seven chapters. In Chapter~\depref{sec:intro} we
provide an introduction and motivation. In Chapter~\depref{sec:preliminaries}
we present a notation and preliminary concepts used in the dissertation.

In Chapters~\depref{sec:nonmoralizing-qsw} and~\depref{sec:convergence-qsw} we
approach the first aspect. In Chapter~\depref{sec:nonmoralizing-qsw} we
propose a nonmoralizing global interaction quantum stochastic walk, which is
well-definable on an arbitrary directed graph. We show that this model propagates rapidly on an infinite
path graph. In order to achieve the propagation speed better than the classical one, we
introduce a small amplitude transfer in the direction opposite to the direction of the existing
arcs. In Chapter~\depref{sec:convergence-qsw} we analyze the convergence
properties of the introduced model. We also analyze two other quantum stochastic walk
models called local and global interaction quantum stochastic walks. We show that
each of these models has very different properties. In particular, local and
nonmoralizing global models present the most intuitive behavior on directed
graphs. Our analysis shows that it is indeed possible to introduce a fast
continuous-time quantum walk which is well-definable on general directed graphs.

In Chapters~\depref{sec:hiding} and~\depref{sec:complex} we study the
second of the posed questions. In Chapter~\depref{sec:hiding} we correct and improve
state-of-the-art results on \ER graphs. We also demonstrate that the quantum speed-up
is correct for all vertices, instead of only `most of them'. Compared to
the previous state-of-the-art results we show that Laplacian matrix is a much simpler
operator to be taken into consideration compared to the adjacency matrix. In
Chapter~\depref{sec:complex} we compare three different operators plausible for
the quantum spatial search. We show that the normalized Laplacian, under certain
conditions, provides the full quadratic speed-up. We analyze two random graph models
which output the graphs with complex structure with high probability. The
analysis confirms that the proposed operator is indeed better than other
commonly used operators. Finally, we propose the procedure which solves the
problem of determining the optimal time for running the quantum
search algorithm.

Finally, in Chapter~\depref{sec:conclusions} we review and conclude our results.
The dissertation also consists of two Appendix sections, which provide the
proofs for the results used in the dissertation.

	\mainmatter	

%%%%%%%%%%%%%%%%%%%%%%%%%%%%%%%%%%%%%%%%%%%%%%%%%%%%%%%%%%%%%%%%%%%%%%%%%%%
%%%%%%%%%%%%%%%%%%%%%%%%%%% beginning %%%%%%%%%%%%%%%%%%%%%%%%%%%%%%%%%%%%%
%%%%%%%%%%%%%%%%%%%%%%%%%%%%%%%%%%%%%%%%%%%%%%%%%%%%%%%%%%%%%%%%%%%%%%%%%%%

%\todo[inline]{preface?}

\chapter{Introduction} \label{sec:intro}
% !TeX spellcheck = en_US

% State of the art

Recently, quantum computers have attracted a huge attention. This is because
	such devices can solve vital computational problems faster than their classical
	counterparts. What is more interesting, the speed-up is observable even in
	the complexity of algorithms. The best example is the Shor's algorithm
	\cite{shor1994algorithms} which solves the integer factorization problem in
	polynomial time in the terms of number length. It is notable to recall that any
	known classical algorithm that solves the same problem requires exponential time
	in a number of bits. Furthermore, the algorithm may threaten the current cryptographic
	protocols, as it can easily break RSA encryption.

The Shor's algorithm and other quantum algorithms \cite{deutsch1992rapid,grover1996fast} started an
important and beautiful field called quantum computer science. The goal
	of this discipline is to construct the algorithms which are faster compared to
the currently known algorithms for conventional computers. Despite numerous important theoretical algorithms
\cite{deutsch1992rapid,simon1997power,aaronson2018forrelation}, there are also
the algorithms which have the potential practical application. One can point to the Grover's
algorithm and its extensions \cite{grover1996fast,ambainis2019quantum}, quantum
annealing algorithms \cite{kadowaki1998quantum,finnila1994quantum}, variational
optimization algorithms
\cite{peruzzo2014variational,farhi2014quantum,glos2020space,tabi2020quantum}, 
and Quantum PageRank \cite{paparo2012google,sanchez2012quantum}.

Quantum algorithms can be divided into various classes according to the problem
they solve or the computational model they are based on \cite{montanaro2016quantum}.
In this dissertation, we focus on a particularly important class called
	\textit{quantum walks}, in which the amplitude transfer is done within some
underlying graph structure
\cite{childs2002example,aharonov2001quantum,childs2004spatial}. It can be
considered as an equivalent of random walk algorithms, where the probability mass
transfer is not allowed when the states are not connected.

\begin{figure}
	\subfloat[\label{fig:introduction-classical}random walk]{\includegraphics{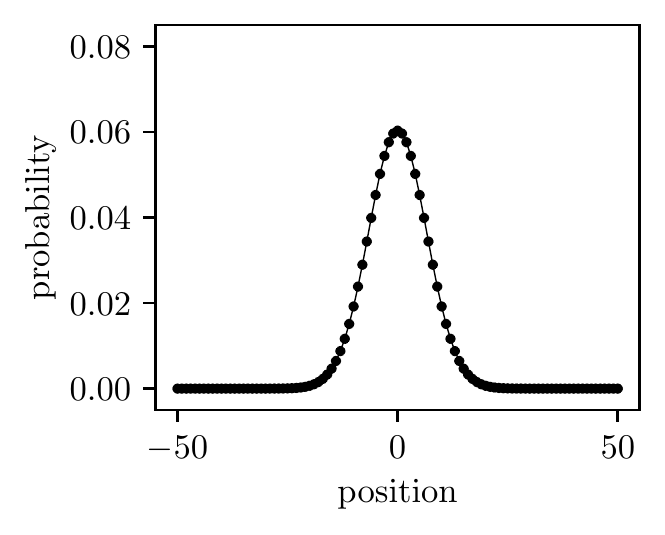}}
	\subfloat[\label{fig:introduction-quantum}quantum walk]{\includegraphics{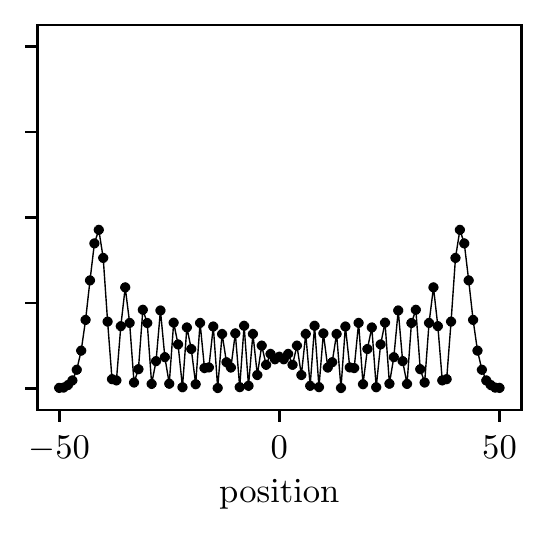}}
	\caption{Distribution of continuous-time random walk (left) vs continuous-time quantum walk (right) on a path graph with 101 nodes and after evolution time 22. The evolution starts in the middle of the graph.}\label{fig:classical_vs_quantum}
\end{figure}

Quantum walks application comes in particular from its ballistic propagation. Let
us consider a random walk on an infinite path with the probability localized at
position 0. Then after time $t$, the probability distribution of finding the walker can be well approximated by Gauss distribution $N(0,\Theta(t))$~\cite{portugal2013quantum}. Since the standard deviation grows proportionally
to the square root of time, we say that the stochastic process obeys a normal
diffusion. This is contrary to a quantum walk, where the variance grows
like $\Theta(t^2)$~\cite{konno2005limit}, i.e. we can observe the ballistic diffusion.
Thus the propagation in a quantum walk may be much faster which may explain the
speed-up appearing in quantum walk algorithms. The resulting
	distributions for both classical and quantum walks are presented in
	Fig.~\ref{fig:classical_vs_quantum}.

% Motivation

Despite the fact that the very first quantum walk is almost 20 years old, there are two
important questions regarding the generality of the results in the terms of graph
structure. Many quantum closed-system walk models, proposed so far, were
definable on relatively general graph
structures~\cite{szegedy2004quantum,childs2004spatial,portugal2016staggered}.
However, it was shown under general and reasonable assumptions that by using
the closed-system quantum evolution one cannot define a quantum walk on a general
directed graph \cite{montanaro2007quantum}. This results from the
quasi-periodicity of the closed-system evolution, i.e.~there exists
an arbitrarily large time evolution $t$ after which the system evolves to the state
close to the initial state. This in turn implies that a closed-system quantum walk
can only be well-defined for graphs, where, for arbitrary two
nodes, there is a path connecting them.

Since close-system quantum walks are not sufficient to model the evolution on general directed graphs, interactions with the environment are necessary. However,
currently known open quantum walk models do not yield the ballistic
propagation~\cite{bringuier2017central,attal2015central,sadowski2016central}. In
particular, for the continuous-time open quantum walk \cite{bringuier2017central},
the classical evolution destroys its coherence, and the proposed model lacks the
ballistic propagation. It has been an open question whether there exists a quantum
walk model which preserves the directed graph structure and whose propagation is
better than the propagation observed in random walks. This may be important, for
the directed graph model, for example in the case of the evolution for classical heuristic optimization
algorithms like simulated annealing or Tabu Search.

There is a similar lack of generality for quantum spatial search algorithms. The quantum search algorithms are defined as the graph-restricted evolution, which aims at finding a
marked node. Note that there are known examples of discrete quantum walks,
yielding even a quadratic speed-up over an arbitrary Markov-chain walk
\cite{ambainis2020quadratic,szegedy2004quantum}. However, general and simple results are still missing for continuous-time
quantum walks.

The first continuous-time quantum spatial search algorithm
\cite{childs2004spatial} has been deeply investigated for the special classes of
graphs like complete graphs \cite{childs2004spatial}, grid graphs
\cite{childs2004spatial}, binary trees \cite{philipp2016continuous}, simplex of
complete graphs \cite{meyer2015connectivity}, and others~\cite{novo2015systematic,glos2019impact_chimera,wong2016laplacian}. Based on
these results, the special properties of quantum walks were presented.
While the obtained results were an important step toward the development of
quantum search algorithms, all of the graphs considered were almost regular (meaning all vertices have very similar degrees) and
we can split the vertices into several classes (so-called vertex-transitivity),
within which the vertices are indistinguishable.

The first approach in generalizing the above results was made for \ER graphs
\cite{chakraborty2016spatial,chakraborty2017optimal,
	cattaneo2018quantum,glos2018vertices,kukulski2020comment}. While these graphs
are not regular, the deviations between the highest and the smallest degrees are
sufficiently small to provide very tight results on the efficiency of quantum
search on these graphs. Then three more general results were provided. The first one
showed a quadratic speed-up compared to a general Markov-chain search
\cite{chakraborty2020finding}, at the cost of larger Hilbert space.
Additionally, in \cite{chakraborty2020optimality} quite
general conditions for (optimal) quantum search for original continuous-time
spatial search were presented. However, the application of these results required the full
eigen-decomposition of the graph-based Hamiltonian, which in general is a hard
computational task. In fact, this task is much more demanding compared to
the quantum or even the classical search itself. Finally, in \cite{osada2020continuous}
the authors determined the efficiency of the quantum spatial search for complex graphs. However, while the speed-up over the classical search was shown, it remains an open
question whether the quadratic speed-up over the Markov search is achievable.

\paragraph{Dissertation overview} 

In the scope of the dissertation we demonstrate that the continuous-time
quantum walk models remain powerful for nontrivial graph structure. The analysis was done by approaching two problems:
\begin{enumerate}
\item Does a time-independent continuous-time quantum walk model which is
definable for general directed graphs and which maintains fast propagation exist?
\item Is the original Continuous-Time Quantum Walk based spatial search
\cite{childs2004spatial} powerful enough to offer the speed-up for heterogeneous
graphs?
\end{enumerate}
Note that for the proposed problems the context of `nontrivial graph structure' changes. For the first problem, we focus on directed graphs, while for the second problem -- on undirected graphs with significant deviation between the degrees of vertices. Currently, most of the results for quantum search considers almost regular graphs. Therefore, we consider heterogeneous graphs as a reasonable next step for investigation.

The first problem is approached using the formalism of quantum stochastic walks
\cite{whitfield2010quantum}. The model is a generalization of both
Continuous-Time Quantum Walk \cite{childs2004spatial}  and continuous-time random
walk. For the second problem, we analyze the CTQW-based spatial search
\cite{childs2004spatial} on random heterogeneous graphs and complex \BA graphs
\cite{albert2002statistical}. The latter is a paradigmatic random graph model
which simulates Internet network evolution.

The dissertation is organized as follows. In Chapter~\depref{sec:preliminaries} we present
a notation and preliminary information used in the dissertation. In
Chapter~\depref{sec:nonmoralizing-qsw} we analyze a quantum walk model
presented in \cite{whitfield2010quantum}, in the context of the propagation. We improve the model into \emph{nonmoralizing quantum stochastic
	walk} which is well-defined on any directed graphs. In order to achieve better than classical diffusion, we allowed a small amplitude transfer in the direction opposite to the direction of the existing arcs. In
Chapter~\depref{sec:convergence-qsw} we present convergence properties of the
introduced model and compare it to other well-known quantum stochastic walk
models. We confirm that the structure of the directed graph is
indeed preserved. In Chapter~\depref{sec:hiding} we improve the
results for \ER presented in \cite{chakraborty2016spatial}, in order to clarify
the approach to the analysis of CTQW-based spatial search to random graphs. In
Chapter~\depref{sec:complex} we present the analysis of the spatial search
algorithm for heterogeneous and complex graphs. Finally, in
Chapter~\depref{sec:conclusions} we justify the correctness of our hypothesis
in the context of the results presented in the
dissertation.

The results presented in Chapters~\depref{sec:nonmoralizing-qsw} and
\depref{sec:convergence-qsw} are based on the results from~\cite{domino2018properties,domino2017superdiffusive,
	glos2017limiting}. The results presented in
chapters~\depref{sec:hiding} and \depref{sec:complex} are based on the results from
\cite{glos2018vertices,glos2018optimal,kukulski2020comment,glos2020complex}.

% !TeX spellcheck = en_US
\chapter{Preliminaries} \label{sec:preliminaries}

\section{General preliminaries} 
In this chapter we introduce basic notation concepts used in the dissertation. The notation includes basics of set theory, complexity notation and linear algebra.

\subsection{Set theory notation} 

We will denote by $\ZZ$, $\RR$, $\CC$ the set of integers, real numbers and
complex numbers. We will use notation $\ZZ_{\geq 0}$ to denote the set of
non-negative integers, similarly for positive, negative and nonpositive, and for
other sets. We will write $|X|$ for the number of elements of the set $X$. We
will apply the notation $[n] \coloneqq \{1,\dots,n\}$.

Let $X$ be an arbitrary set and $\tilde X= \{Y \subseteq X\}$ be such a family
of sets that for arbitrary $Y_1,Y_2\in \tilde X$ we have $Y_1=Y_2$ or $Y_1 \cap
Y_2 = \emptyset$. Furthermore let
\begin{equation}
\bigcup_{Y\in \tilde X} Y = X.
\end{equation}
Then we call $\tilde X$ a partition of $X$.
	
\subsection{Complexity notation} \label{sec:complexity-notation}

Throughout the dissertation we will use the big O notation. Let $f:\RR_{> 0} \to
\RR$ and $g:\RR_{>0} \to \RR_{>0}$ be functions. We will write $f(x) =\order{g(x)}$ if there exists $x_0>0$
and $C> 0$ such that for all $x>x_0$ we have
\begin{equation}
|f(x)| \leq Cg(x).
\end{equation}
In fact $\order{g(x)}$ is usually
defined as a set of all functions $f$ satisfying mentioned relation, so formally
one should write $f(x)\in \order{g(x)}$. However, in the dissertation we will
follow a widely accepted computer science convention and use `$=$' instead of
`$\in$'.
	
With $\order{\cdot}$ notation we can define other asymptotic notations. We
present them and their definition in Tab.~\ref{tab:asymptoptic-definitions}.
Note that any of these symbols hide the constant next to the leading term. For
example, if $f(x) = \order{n^2}$, then at the same time $f(x) = \order{2n^2}$ or
$f(x) = \order{\frac{1}{2}n^2}$. In case we know the constant next to the
leading term we will write $f(x) \sim Cn^2$, which is defined as $f(x) =
Cn^2(1+o(1))= Cn^2 + o(n^2)$.

\begin{table}
	\centering
	\begin{tabular}{l@{\qquad}l}
		Notation & Definition \\\hline
		$f(x) = \order{g(x)}$ & see Sec.~\ref{sec:complexity-notation} \\
		$f(x) = \Omega(g(x))$ & $g(x) = \order{f(x)}$\\
		$f(x) = \Theta(g(x))$ & $g(x) = \order{f(x)}$ and $f(x) = \order{g(x)}$ \\
		$f(x) = o(g(x))$ & $\lim_{x\to\infty} \frac{f(x)}{g(x)} = 0$\\
		$f(x) = \omega(g(x))$ & $\lim_{x\to\infty} \left | \frac{f(x)}{g(x)} \right | = +\infty$\\
		$f(x) \sim g(x)$ & $f(x)=g(x)(1+o(1))$
	\end{tabular}
	\caption{Asymptotic notations and their definitions\label{tab:asymptoptic-definitions}.}
\end{table}

\subsection{Linear algebra} 

Let $X$ be a countable set. Let $\C^X$ be a complex-vector space and let
$\{\ket{x}: x\in X\}$ be its orthonormal basis. Arbitrary vector
$\ket{\psi}\in\C^X$ has a unique representation
\begin{equation}
\ket{\psi} = \sum_{x\in X} \alpha_x \ket{x},
\end{equation}
where $\alpha_x\in \C$. We call $\{\ket{x}: x\in X\}$ a computational basis and
we choose them to be of the form
\begin{equation}
\ket{x} = \begin{bmatrix}
0 \\ \dots \\ 0 \\ 1 \\ 0 \\ \dots \\ 0
\end{bmatrix},
\end{equation}
where $1$ appears on the $x^{\rm th}$ position. A conjugate
transpose of $\ket{\psi}$ is defined as
\begin{equation}
\bra{\psi} \coloneqq \left(\ket{\psi}\right)^\dagger = \sum_{x\in X} \bar\alpha_x \bra{x},
\end{equation}
where $\bar \alpha$ is a conjugate of $\alpha$ and $\bra{x}$ is a row vector
with $1$ on the $x^{\rm th}$ position and $0$ otherwise. Note that
the conjugate transpose is a composition of transpose and element-wise
conjugation of the vector.

If $\ket{\psi},\ket{\phi}\in\C^X$, then $\braket{\psi}{\phi}$ denotes their
scalar product. The outer product of vectors $\ket{\psi}\in
\C^X,\ket{\phi}\in\C^Y$ is denoted as $\ketbra{\psi}{\phi}$. The tensor product
of states $\ket{\psi}=\sum_{i=1}^n\alpha_{x_i}\ket{x_i}\in \C^{X}$ with
$X=\{x_1,\dots,x_n\}$ and $\ket{\phi}\in \C^Y$ is defined as
\begin{equation}
\ket{\psi}\otimes \ket{\phi} = \begin{bmatrix}
\alpha_{x_1}\ket{\phi} \\ \vdots \\ \alpha_{x_n}\ket{\phi}
\end{bmatrix}\in \CC^{X}\otimes \CC^Y.
\end{equation}
We will also use abbreviations $\ket{\psi,\phi}$, $\ket{\psi\phi}$ instead of
$\ket \psi \otimes \ket \phi$. Note $\CC^X\otimes \CC^Y$ is isomorphic to
$\CC^{X \times Y}$.

Let $B\in \C^{X\times Y}$ be a complex-valued matrix. Then $B$ has a unique
representation
\begin{equation}
B = \sum_{x\in X} \sum_{y\in Y} b_{xy} \ketbra{x}{y}.
\end{equation}
The vectorization of $B$ is defined as $\vecc{B} = \sum_{x\in X} \sum_{y\in Y}
b_{xy} \ket{xy}$. Furthermore we define a conjugate transpose of $B$
\begin{equation}
B^\dagger \coloneqq \sum_{x\in X} \sum_{y\in Y} \bar b_{xy} \ketbra{y}{x}.
\end{equation}

Suppose $B$ is a square matrix, \ie{} $B\in \CC^{X\times X}$. If $B^\dagger B =
BB^\dagger$, then we call $B$ a \emph{normal matrix}. For such matrices an
eigendecomposition can be found, i.e. there exists
$\lambda_1(B),\dots,\lambda_{|X|}(B)\in \C$ and orthonormal vectors
$\ket{\lambda_1}(B),\dots,\ket{\lambda_{|X|}(B)}\in\C^X$ such that
\begin{equation}
B = \sum_{i=1}^{|X|} \lambda_i(B) \ketbra{\lambda_i(B)}.
\end{equation}
We call $\lambda_i(B)$ an eigenvalue and $\ket{\lambda_i}(B)$ a corresponding
eigenvector of $B$. Whenever it will be clear from the context, we will write shortly $\lambda$ and $\ket{\lambda}$ instead of $\lambda(B)$ and $\ket{\lambda(B)}$. Furthermore, if all eigenvalues are real we will  assume that $\lambda_i \geq \lambda_j $ for $j< i$.

The space of normal matrices encapsulates many classes of matrices important for
quantum mechanics. In particular if $B^\dagger = B$, then $B$ is Hermitian. If
$B^\dagger B = \Id$, where $\Id$ is identity matrix, then we call $B$ a unitary
matrix. Eigenvalues of Hermitian operators are real, while eigenvalues of
unitary matrices are complex and lie on unit circle.

Matrix $B\in \CC^{X\times X}$ is called nonnegative if for any vector
$\ket{\psi}\in\CC^X$ we have $\bra \psi B \ket{\psi} \geq 0$. If additionally
the trace of $B$ equals 1,
\begin{equation}
\tr(B) \coloneqq \sum _{x\in X} \bra x B \ket x = 1,
\end{equation}
then we call $B$ a \emph{density operator}. It can be shown that eigenvalues of
$B$ form a proper probability vector, i.e. they are nonnegative and they sum up
to 1.

Matrix $B$ is called stochastic if its columns are proper probability
vectors.

Let $A\in\C^{X\times X}$ be a matrix and $\tilde X=\{X_1,\dots,X_k\}$ be a
partition of $X$. We can construct a $\tilde X$-block representation of $A$ as
\begin{equation}
\left [ \begin{array}{c|c|c|c}
A_{1,1} & A_{1,2} & \dots & A_{1,k} \\\hline
A_{2,1} & A_{2,2} & \dots & A_{2,k} \\\hline
\vdots & \vdots & & \vdots \\\hline
A_{k,1} & A_{k,2} & \dots & A_{k,k} 
\end{array}\right ],
\end{equation}
where $A_{i,j}\in \C^{X_i\times X_j}$ satisfies $\bra{k} A_{ij} \ket{l} = \bra k
A \ket l$ for any $k\in X_i$ and $l \in X_j$. Note that currently a block matrix
is considered to be a matrix, which elements are matrices. Such
definition implies even the change of how the multiplication is defined. Our
definition is used for representation purposes only. We call $A$ a $\tilde X$-block
diagonal matrix iff for all $i\neq j$ matrix $A_{ij}$ is a zero matrix.

In the dissertation we will oftenly choose $X=[n]$. In this case the vector $\ket{i}$ will
always have 1 on $i^{\rm th}$ position. 

\section{Graph theory} 

\subsection{General concepts} 

We call a pair $\vec G=(V,\vec E)$ a simple directed graph (digraph), iff $V$ is
a finite set and $\vec E\subseteq \{(v,w):v,w\in V, v\neq w\}$. We call the
elements of $V$ vertices or nodes, and of $\vec E$ arcs. We call $|V|$ the order of the
digraph and $|\vec E|$ the size of the digraph. Similarly we call a pair
$(V,E)$ a simple undirected graph, iff $V$ is a finite set and $\vec E\subseteq
\{\{v,w\}:v,w\in V,v\neq w \}$ with order $|V|$ and size $|E|$. We call the
elements of $V$ vertices or nodes and of $E$ edges.

Note that a directed graphs can be considered as an undirected graph if for any
$(v,w)\in \vec E$ we have $(w,v)\in \vec E$. Thus many definitions for directed
graphs can be formulated for undirected graphs as well. Because of this, unless
explicitly stated, we will provide definitions for directed graphs only. 

Let $\vec G=(V,\vec E)$ be a directed graph. We call $G=(V,E)$ an underlying
graph of $G$ iff
\begin{equation}
\{v,w\}\in E \iff \left ((v,w)\in \vec E \lor (v,w)\in \vec E\right).
\end{equation}
Note that the undirected graph is its own underlying graph. Conversely the
directed graph $\vec G=(V,\vec E)$ is an orientation of the graph $G=(V,E)$ if
each edges $\{v,w\}$ is replaced with a either $(v,w)$ or $(v,w)$. Note that we
have $2^{|E|}$ orientations of graph $G$, but there is single underlying graph
for digraph $\vec G$.

We call set $P(v)=\{w\in V:(w,v)\in \vec E\}$ a set of parents of $v\in V$. We
define a children set of $v\in V$ as the collection $C(v)=\{w\in V:(v,w)\in \vec
E\}$. We define indegree and the outdegree of $v$ as a sizes of these sets i.e.
$\indeg(v) = |P(v)|$ and $\outdeg(v) = |C(v)|$. If $\indeg (v)=0$ we call $v$ a
source. If $\outdeg(v)=0$ then we call $v$ a sink or a leaf. A
collection of all sinks (leaves) of a digraph $\vec G$ is denoted by $L(\vec
G)$. Note that for undirected graphs we have $\deg(v)\coloneqq
\indeg(v)=\outdeg(v)$ which is simply a degree of the vertex $v$

A path form $v_1$ to $v_{k+1}$  is a sequence $(v_1,\dots, v_{k+1})$ such that
$(v_{i},v_{i+1})\in \vec E$ for each $i=1,\dots,k$, and $k$ is called a length
of the path. We say a digraph (graph) is strongly connected (connected) iff for
each $v,w\in V$ there exists a path from $v$ to $w$. We say that a digraph is
weakly connected iff its underlying graph is connected. The distance $d(v,w)$
from $v$ to $w$ is defined to be the minimum length of all paths from $v$ to
$w$. Note that in general for directed graphs $d(v,w)\neq d(w,v)$.

If a path $(v_1,\dots, v_k)$ does not have a vertex repetition except $v_1=v_k$
then we call it a simple path. If $v_1=v_k$ then we call it a cycle. If digraph
does not have a cycle of length $k\geq 3$ then we call it acyclic. We will call
undirected graph $G$ a tree if it has no cycles and is connected.

Let $\vec G=(V,\vec E)$ be a directed graph. A directed graph $\vec H=(V_H, \vec
E_H)$ is called a subgraph iff $V_H\subseteq V$ and $\vec E_H \subseteq \vec 
E$. We denote this fact by $\vec H\subseteq \vec G$. If $\vec E_H$ is maximal in
the number of arcs, \ie{} is of the form
\begin{equation}
E_H = \{(v,w)\in \vec E: v,w\in V_H\},
\end{equation}
the we call $\vec H$ an induced subgraph, which we denote $\vec H \subseteq_{\rm
	ind} \vec G$. Note that given subset of vertices $V_H$ there is a unique induced
subgraph of $\vec G$, however there may be multiple subgraph. Maximal connected
subgraph is called a connected component.

Let $\vec G_1=(V_1,\vec E_1)$ and $\vec G_2=(V_2,\vec E_2)$ be directed graphs
and $f:V_1\to V_2$. We call $f$ a graph homomorphism from $\vec G_1$ to $\vec
G_2$ iff for each $(v,w)\in \vec E_1$ we have $(f(v), f(w))\in\vec   E_2$. If
$f$ is bijection and both $f$ and $f^{-1}$ are homomorphisms then $f$ is an
isomorphism, and we call $G$ and $H$ isomorphic graphs.

Isomorphism from $\vec G$ to $\vec G$ is called automorphism. We call $G=(V,E)$
a vertex-transitive graph, if for any $v,w\in V$ there exists a automorphism
$f:V\to V$ such that $f(v)=w$.

Let $A(\vec G)\in \R^{V\times V}$ be an operator defined as
\begin{equation}
\bra w A(\vec G) \ket v = \begin{cases}
1,& (v,w)\in \vec E,\\
0,& \textrm{otherwise}.
\end{cases}
\end{equation} 
We call $A(\vec G)$ an adjacency matrix of $\vec G$. Note that $A(\vec G)$ is
symmetric iff graph is undirected. Furthermore, in the literature the adjacency
matrix is usually the transpose of the operator above, however our definition is
more convenient based on form of evolution considered in this dissertation. If
clear from the context which graphs is considered, we will write simply $A$
instead of $A(\vec G)$.

Let $D(G)\in \R^{V\times V}$ be a diagonal matrix such that $\bra{v}D(G)\ket{v}= \deg
(v)$. We define (combinatorial) Laplacian as $L(G)\coloneqq D(G) - A(G)$ and
normalized Laplacian as $\mathcal L(G)\coloneqq D(G)^{-1/2}L(G)D(G)^{-1/2}= \Id
-D(G)^{-1/2}A(G)D(G)^{-1/2}$. Note that the normalized Laplacian is well-defined only
for graphs without isolated nodes, i.e. nodes with degree 0. Laplacian and
normalized Laplacian are always nonnegative. The multiplicity of
eigenvalue 0 for both equals the number of connected components of $G$.

We call adjacency matrix, Laplacian and normalized Laplacian \emph{graph
	matrices}.

\subsection{Random graphs} 

Random graph model $\randgn$  is a probabilistic measure space defined over a
set of graphs with $n$ vertices. Precise definition requires the notion of
measurable space. However, it is common to provide the sampling method instead
of writing exact form of probability distribution. Each random graph model will
be denoted by $\randgn[LABEL](\nu_1,\dots,\nu_k)$, where $\rm LABEL$ is the
label setting the sampling method and $\nu_1,\dots,\nu_k$ are free parameters of
sampling method. Note that parameter $\nu_i$ may depend on $n$.

Let us recall here the most popular random graph models. We will start with the
\ER random graph model $\randgn[ER](p)$, where $p\in [0,1]$
\cite{erdos1959random}. The sampling method goes as follows. We start with empty
graph $G=([n],\emptyset)$. Then for each pair of different vertices $v,w\in [n]$
we add edge $\{v,w\}$ independently with probability $p$. Similarly for directed
graphs each arc $(v,w)$ is added independently with probability $p$. The random
graph model is so popular that in many papers authors by `random graphs'
consider precisely this model. The reason for such is, beside the fact that it
is the first random graph model proposed, is because for $p=1/2$ we have uniform
distribution over all graphs with fixed vertex set.

Unfortunately, while the model is well known, it does not represent the
real-world dependencies. In particular, \ER graphs do not have power-law degree
distribution, which means that vertices with degree $d$ are present withf
$\Theta(d^{-\alpha})$ probability for some constant $\alpha$
\cite{albert2002statistical}. Real graphs usually are also small-world, which
means the existence of small-length paths between all vertices, sparse (have
small number of edges) and one can often observe community structures, i.e.
small but dense subgraphs.

There are many graph models which may possess some of these properties. The
closest to the \ER graph model is the \CL model $\randgn[CL](\omega)$~\cite{chung2002connected}. This model depends on single parameter being a
real-valued vector $\omega=(\omega_1,\dots,\omega_n)\in [0,n-1]^n$. Similarly as
for the \ER model we start with empty graph $G=([n],\emptyset)$ with $n$
vertices, and an edge between vertices $i$ and $j$ is added with probability
$\omega_i\omega_j/\sum_k \omega_k$. Let us assume for now that we allow loops.
Then
\begin{equation}
\EE \deg(i) = \sum_{j=1}^n \EE \mathbf 1_{\{i,j\}\in E} = \sum_{j=1}^{n}\left (w_iw_j/\sum_{k=1}^nw_k \right )=w_i.
\end{equation}
where $\mathbf 1_{\varphi}$ is an random variable which outputs one if $\varphi$ is satisfied. While we will remove all
self-loops at the end of sampling method, for large $n$ this simplification has
negligible impact. Note that for proper choice of $\omega$ one obtains almost
surely power-law graphs~\cite{chung2011spectra}.

Very well known \BA random graph model $\randgn[BA](m_0)$ with $m_0\in \ZZ_{\geq
	1}$, which was designed to simulate evolution of Internet network
\cite{albert2002statistical}. The procedure iteratively adds vertices as long as
the final graph has $n$ vertices. There are two nonequivalent sampling
procedures that share similar concept and produce graphs with similar
properties. In the first, original version algorithm starts with complete graph
$K_{m_0}$. Then new vertex  $v$ is added, and is connected to $m_0$ already
existing vertices. Already present vertex $w$ will be a neighbor of $v$ with
probability $\deg(w)/(2|E|)$. With such procedure only simple, connected graphs
are sampled, with power-law distribution.

Unfortunately, the very first definition provided by Barab\'asi and Albert in
\cite{albert2002statistical} was not precise, and in past years many
nonequivalent definitions were proposed and utilized
\cite{durrett2006random,flaxman2005high,lightgraphs}. Because of that, result
concerning \BA model presented in Chapter~\ref{sec:complex} will be strengthen
by numerical investigations based on the model implemented in
\cite{lightgraphs}.

The directed version of \BA random graph model $\randdgn[BA](m_0)$ is defined analogically, however instead of adding edge $\{v,w\}$ for newly added $w$, arc $(v,w)$ is added \cite{lightgraphs}. For the directed version of \ER graph model $\randdgn[ER](p)$ each \emph{arc} is independently added with probability $p$.

Finally, random orientation of an undirected grap $G=(V,E)$ is a digraph $\vec G=(V,\vec E)$, where each edge $\{v,w\}\in E$ was replaced by one of arcs $(v,w)$ or $(w,v)$. Note that for random orientation of a graph we have $|\vec E|=|E|$.

\section{Stochastic and quantum dynamics} 

\subsection{Stochastic evolution}

Let us consider closed system which can be in any of canonical state $x\in X$.
Provided the system follows the rules of statistical mechanics, its state can be
defined with the probability distribution over $X$. For finite $X$ it has a
probability vector representation $p = \sum_{x\in X} p_x\ket{x}\in \R^X$, where
$p_x$ is the probability that the system is in the state $x$.

%The evolution of the system can be described by continuous-time and
%discrete-time evolution. Throughout the dissertation we will be only interested
%in time-independent, linear evolution. Thus in case of discrete stochastic
%operation
%\begin{equation}
%p(t+1) = S p(t),
%\end{equation}
%operator $S$ needs to be $|X|\times |X| $ stochastic operation. $t$ denotes the step of the evolution.

We define continuous-time stochastic evolution through differential equation
\begin{equation}
\frac{\dd p(t)}{\dd t} = -L p(t), \label{eq:ctrw}
\end{equation}
with $L$ being a transition rate matrix \cite{childs2004spatial} and $t$ being
the evolution time. Columns of the transition rate matrix needs to sum to 0,
furthermore the diagonal elements needs to be nonnegative and out-diagonal needs
to be nonpositive. Note that the matrix does not have to be symmetric.

Suppose we have two probabilistic systems that can be in canonical states $X$
and $Y$ respectively. Then the joint system can be defined through probabilistic
vectors over $X\times Y$, \ie{} on $\RR^X\otimes \RR^Y$ space. In particular if
the first system is in state $p^X$ and second in $p^Y$, and the random these
systems are independent, then the joint system's probability vector takes the
form $p^X\otimes p^Y$.

One can also define a discrete-time stochastic evolution through stochastic
matrix $P$ and the evolution
\begin{equation}
p(t+1) = Pp(t).\label{eq:dtrw}
\end{equation}
Note that $P$ has to be a stochastic matrix in this case.

\subsection{Evolution in quantum systems}
\paragraph{Closed quantum system}

Let us consider closed system which can be in any canonical state $x\in X$. If
the the system obeys the laws of quantum mechanics, its (quantum) state has
representation as a normed vector in Euclidean space $\CC^X$, called a pure
state.

The continuous-time quantum evolution of the state $\ket{\psi_t}$ is defined by
Schr\"odinger equation
\begin{equation}
\frac{\dd\ket{\psi(t)}}{\dd t} = -\ii \hbar H \ket{\psi(t)}, \label{eq:schrodinger}
\end{equation}
where $H\in \CC^{X\times X}$ is the Hamiltonian of the system and $\hbar$ is the
reduced Planck constant. In order to defined quantum state preserving evolution
one need to assume that Hamiltonian is a Hermitian operator. It is common to
assume that $\hbar =1$, which can be done by careful physics units change.
 
Suppose $H$ does not depend on the evolution time $t$, and that at time $t=0$ we
start with quantum state $\ket{\psi(0)}$. Then the Eq.~\eqref{eq:schrodinger}
can be solved  analytically into
\begin{equation}
\ket{\psi(t)} = \exp(- \ii t H ) \ket{\psi(0)}.
\end{equation}
Note that $\exp(-\ii tH)$ is a unitary matrix.

Suppose we have two quantum systems. Furthermore, suppose the first system is in
state $\ket{\phi_X}\in\CC^X$ and second in $\ket{\phi_Y}\in\CC^Y$. Then the
state of the composite system is $\ket{\phi_X}\otimes
\ket{\phi_Y}\in\CC^X\otimes \CC^Y$. General state of the quantum system lies in
a in $\CC^X\otimes \CC^Y$ space. Similarly as for probability distribution, it
may be the case that a quantum state of the composite system cannot be written
in the form $\ket{\phi}\otimes\ket{\psi}$. Such dependence has different
properties and thus earned a new term -- entanglement. Entanglement takes a
vital role in quantum communication, however in context of quantum computation
it can be seen as an extension of superposition. Thus understanding of quantum
entanglement is not vital for understanding our results.

Contrary to the statistical mechanics, measurement may affect the state of the
quantum system. Suppose we have a classical system described by probability
vector $p$. Suppose the measurement is performed but its output is ignored. Then
the description of the system has not changed from our point of view---the system
can still be described by the same probability vector $p$.

This is no longer the case for the quantum system. Suppose the system $\CC^X$ is
in the state $\ket{\psi}$ and we perform the measurement in the canonical basis.
Then the classical outcome of the measurement will be $x\in X$ with probability
$|\braket{\psi}{x}|^2$, and the system will change into $\ket{x}$. In case we
ignore the classical output, we can only say that the system will be a
statistical mixture of quantum states $\{\ket{x}:x\in X\}$. The mathematical
representation of such measurement outcome requires a density operator notation,
which will be explained later in this chapter.

While in statistical mechanics different probabilistic vector represents
different states of the system, in quantum mechanics two different vectors may
be physically indistinguishable. If $\ket{\psi}$ is the quantum state of the
system, then $\exp(-\ii \alpha)\ket{\psi}$ for any $\alpha\in\RR$ is a
description of the very same state. This means, that no measurement can
distinguish these two states, even in probability. The factor $\exp(-\ii \alpha)$ is
called a \emph{global phase}. Because of that $H$ and $H+\alpha\Id$ represents
the same quantum evolution, since
\begin{equation}
\exp(-\ii t(H+\alpha \Id)) \ket{\psi} = \exp(-\ii tH)\exp(-\ii t\alpha \Id)) \ket{\psi}.
\end{equation}

\paragraph{Open quantum system} 

Suppose we have a quantum system that is in state $\ket{\psi}\in \CC^X$ with
probability $p$ and in $\ket{\varphi}\in \CC^X$ with probability $1-p$. For such
description we use the notion of mixed states. Such states can be represented
by density matrices---the example would take the form 
\begin{equation}
\varrho = p \dyad{\psi} + (1-p) \dyad{\varphi} \in \CC^{X\times X}.
\end{equation}
In general if a quantum system is in state $\ket{\psi_i}$ with probability
$p_i$, then the state can be described by a mixed quantum state
\begin{equation}
\varrho = \sum_i p_i \dyad{\psi_i}{\psi_i}.
\end{equation}
Note that if system is in a pure state $\ket{\psi}$ with probability 1, then we
have $\varrho = \dyad{\psi}$. In this representation there is no
notion of global phase, and different density operators results in different
quantum states. Similarly as in the case of pure quantum states, if two
separable systems are in states $\varrho_1$ and $\varrho_2$, then the joint
system is in state $\varrho_1\kron \varrho_2$.

We can enrich the quantum evolution to open-system evolution, which assumes the
existence of interactions of the quantum system with the environment. In the
scope of this dissertation we will consider GKSL master equation which takes the form \cite{gorini1976completely}
\begin{equation}
\frac{\dd \varrho}{\dd t} = -\ii \hbar [H, \varrho] + \sum_{L\in \mathbb L} \gamma_L\left(  L \varrho L^\dagger - 
\frac12 \{L ^\dagger L, \varrho\} \right). \label{eq:gksl-master-equation}
\end{equation}
Here $H \in \CC^{X\times X}$ is the Hamiltonian which describes coherent, closed
system interaction. Set $\mathbb L$ consists of of Lindblad operators $L\in \mathbb L$, which may be
arbitrary complex-valued $L\in \CC^{X\times X}$ matrices. Values  $\gamma_L$ are
called intensities. Lindblad operators describe dissipative, open-system
interactions. By $[A,B]=AB-BA$ we denote the commutator, and by $\{A, B\}=AB+BA$
the anticommutator. We assume $\hbar =1$ and $\gamma_L\equiv1$ for all $L\in
\mathbb L$. Note that the GKSL master equation encapsulates both Schr\"odinger
and stochastic equation.

Eq.~\eqref{eq:gksl-master-equation} has an equivalent representation of the form
\cite{miszczak2011singular}
\begin{equation}
\frac{\dd \vecc{\varrho_t}}{\dd t} = S \vecc{\varrho_t},
\end{equation}
where
\begin{equation}
S = -\ii \left(H \kron \Id - \Id \kron \bar H \right) +
\sum_{L \in \mathbb L } \left ( L \kron \bar L - \frac{1}{2} L^\dagger L \kron \Id - \frac{1}{2} \Id
\kron L^\top \bar L \right ). \label{eq:diff-qsw-operator}
\end{equation}
We call $S$ an evolution generator. Assuming the Hamiltonian and Lindblad
operators are time-independent, we can provide a solution of the system
\begin{equation}\label{eq:integrated-qsw}
\vecc{\varrho(t)} = \exp(St) \vecc{\varrho(0)},
\end{equation}
where $\varrho(0)$ is the initial state.

Quantum measurement theory can be enriched as well, however in our case we will
be satisfied with simply generalizing the notion of von Nuemann measurement into
mixed states notation. Let $\CC^{[n]}$ be a quantum system, and let
$\{\ket{\varphi_i}\}_{i=1}^n $ be its orthonormal basis. Let $\varrho$ be the
mixed quantum state of the system being measured. Then, the measurement outputs
$i$ with probability $\bra {\varphi_i} \varrho \ket{\varphi_i}$. In the scope of
this dissertation we will ignore the quantum state coming from the measurement,
and we can consider measurement as the operation destroying the quantum system.

%Similarly as Schr\"odinger equation can be generalized into GKSL master
%equation, it is possible to extend the density-matrix von Neumann measurements
%into so-called generalized measurement or POVM. For the purpose of the
%dissertation we are satisfied with the introduced measurement.

\section{Random and quantum walk theory} \label{sec:random_quantum_walk}
\subsection{Typical random and quantum walks}

Let $\vec G=(V,\vec E)$ be a directed graph and let $\R^V$ be the space of the
walker's evolution. The continuous-time random walk (CTRW) is defined through
continuous-time stochastic evolution given by Eq.~\eqref{eq:ctrw}. The
operator defining the walk has to satisfy $\bra w L \ket v =0$ iff $(v,w) \not
\in \vec E$. This way the probability is not (directly) transported through
nonexisting arcs. A discrete-time equivalent is defined through
Eq.~\eqref{eq:dtrw} with similar condition $\bra w P \ket v =0$ iff
$(v,w)\not\in \vec E$. Note that in general it is acceptable to have $\bra w P
\ket v =0$ even if $(v,w) \in \vec E$, however in this dissertation we consider
only time-independent evolution, and such situation essentially excludes the arc
from $\vec E$. Thus such relaxation of the definition is of no interest. We call
$p_v(t)$ a probability of state being at vertex $v$ after time $t$.

There is a special class of random walks with unbiased choice of probability
transfer. For such a walk, for any $(v,w),(v,w')\in \vec E$ we have $\bra w L
\ket v= \bra{w'} L \ket v$. In the case of discrete random walk, we define
equivalent definition $\bra w P \ket v=\bra {w'} P \ket v=\frac{1}{\deg v} $. We
will call this special kind of random walks a \emph{uniform random walks}. Note
that contrary to the discrete random walk, the choice of continuous-time random
walk is non-unique. In this dissertation we mostly consider Laplacian
matrix as an evolution operator of uniform random walk.

The continuous-time quantum walk (CTQW) is defined similarly as the
continuous-time random walk \cite{childs2004spatial}. Let $G=(V,E)$ be an
undirected graph and let $\C^V$ be the space of the walker's evolution. The
evolution is defined through the Schr\"odinger equation provided in
Eq.~\eqref{eq:schrodinger}, with extra condition $\bra w H \ket v=0$ iff
$\{v,w\}\not \in E$. Since the evolution operator has to be Hermitian, CTQW is
well-defined only for undirected graphs. The probability of the walker to be at a
vertex $v$ after evolution time $t$ equals $|\braket{v}{\psi(t)}|^2$. Similarly
as for the continuous-time random walk, we call quantum walk a uniform CTQW
iff $\bra w H \ket v = \bra {w'} H \ket v$ for any choice of $\{v,w\}\in E$.
Note that the normalized Laplacian defines a valid CTQW, while adjacency matrix
and Laplacian define a valid uniform CTQW. Note that the evolution for $d$-regular
graphs are equivalent independently on chosen graph matrix. Since
\begin{equation}
\exp(-\ii t L) = \exp(-\ii t (d\Id-A))=\exp(-\ii td )\exp(-\ii t(-A)),
\end{equation}
the evolution using Laplacian is equivalent to adjacency matrix up to the global phase and sign of $t$. Similarly 
\begin{equation}
\exp(-\ii t \mathcal L) = \exp(-\ii t \left (\Id - \frac{1}{d}A \right )) = \exp(-\ii t) \exp(-\ii\left (-\frac{t}{d}\right )A),
\end{equation}
hence the normalized Laplacian is equivalent to adjacency matrix up to global phase and the time rescaling $t \mapsto -\frac{t}{d}$. Similar equivalence for nonregular graphs does not take place, which also has impact on the application of the walk \cite{wong2016laplacian}.

It is far more complicated to design a discrete-time quantum walk. Asserting
similar condition to the above one for unitary matrix leads to pathological
evolution, counter-intuitively prohibiting amplitude transfer
\cite{glos2019role}. Instead it is common to enlarge the walker's space into
$\C^{[m]}$ and provide mapping $h:[m]\to V$. This way the probability of being
at vertex $v$ equals
\begin{equation}
\sum_{j\in h^{-1}(v)} |\braket{j}{\psi(t)}|^2.
\end{equation}
While discrete-time quantum walks are beyond the topic of this dissertation,
system enlargement in order to guarantee special quantum properties will be of
use in Chapter~\ref{sec:nonmoralizing-qsw}.

Finally let us define a quantum stochastic walk (QSW). This model was proposed in
\cite{whitfield2010quantum} to encapsulate the CTRW and CTQW, but also to
provide new form of dynamics. Let as recall GKSL master equation
\begin{equation}
\frac{\dd \varrho}{\dd t} = -\ii [H, \varrho] + \sum_{L\in\mathbb L} \left(  L
\varrho L^\dagger - \frac12  \{L ^\dagger L, \varrho\} \right).
\end{equation}

Note that GKSL master equation encapsulates continuous-time stochastic and
quantum evolutions \cite{whitfield2010quantum}. Hence we can define a local
environment interaction QSW as a mixture of stochastic and quantum evolution.
\begin{definition}
	Let $\vec G = (V, \vec E)$ be a digraph and $G=(V,E)$  be its underlying graph. Let $H\in \C^{V\times V}$ be a
hermitian operator such that for $v,w \in V$ satisfying $v\neq w$
	\begin{equation}
	\{v,w\} \not \in E \iff \bra{w}H\ket v =0.
	\end{equation}
	Let $\mathbb  L= \{c_{ij}\ketbra{j}{i}: (i,j)\in\vec E, c_{ij}\neq 0 \}$ be a
collection of Lindblad operators. Then we call a GKSL master equation with
Hamiltonian $H$ and Lindblad operators collection $\mathbb  L$ a \emph{local
	enviornment interaction QSW} (LQSW).
\end{definition}
Note that condition $c_{ij}\neq 0$ can be relaxed to $c_{ij}>0$ based on the
form of GKSL master equation. If $H$ is the adjacency matrix of $G$ and $c_{ij}
= 1$ for any $(i,j)\in \vec E$, then we call a LQSW  a \emph{standard LQSW}. If
we consider a GKSL master equation  of the  form
\begin{equation}
\frac{\dd \varrho}{\dd t} = -\ii (1-\omega) [H, \varrho] + \omega\sum_{L\in \mathbb L} \left(  L \varrho L^\dagger - 
\frac12 
\gamma_L
\{L ^\dagger L, \varrho\} \right). 
\end{equation}
with extra smooth transition parameter $\omega\in [0,1]$, we call it an
\emph{interpolated LQSW}. Note that for $\omega=0$ and $\omega=1$ we recover
respectively CTQW and CTRW. The LQSW has a potential to be applied in quantum
computer science  \cite{zimboras2013quantum,sanchez2012quantum}.

Note that in \cite{whitfield2010quantum} authors proposed a more complicated
quantum walk model, which cannot be represented as a LQSW. We will focus on this
model in the Chapter~\ref{sec:nonmoralizing-qsw}.

\subsection{Walk quality measures} \label{sec:preliminaries-propagation}

In order to quantify the `usefulness' of walks, several measures can be
proposed. In the scope of the dissertation, we will focus on the propagation on
the graph and the search efficiency.

\subsubsection{Propagation speed}

A propagation of a walk is typically defined by the pace of change of the second
moment of a position of the walker in time on a infinite path graph. Let
$G=(\ZZ, E)$ be an infinite path graph with $E=\{\{k,k+1\}:k\in \ZZ\}$. The
evolution starts in a state localized at position 0, which for both CTQW and
CTRW is $\ket{0}$. Provided $p_k(t)$ is the probability of walker being measured
in $k$ after evolution time $t$, the central second  moment equals
\begin{equation}
\mu_2(t) = \sum_{k\in \ZZ} k^2 p_k(t). 
\end{equation}

For time-independent walk we always have $\mu_2(t)=\order{t^2}$. It can be
easily explained in term of discrete walks. After $t$ steps the probability of
finding an element at position $k$ such that $|k| > t$ equals zero. Thus
\begin{equation}
\mu_2(t) = \sum_{k=-t}^t k^2 p_k^t \leq t^2 \sum_{k=-t}^t  p_k^t = t^2.
\end{equation}
Provided $\mu_2(t) = \Theta(t^\alpha)$, we use $\alpha$ as a measure of
propagation of the walk. We distinguish the following propagation regimes:
\begin{enumerate}
\item if $\alpha<1$, the process is sub-diffusive,
\item if $\alpha=1$, the process obeys a normal diffusion regime,
\item if $1<\alpha<2$, the process is super-diffusive,
\item if $\alpha=2$, the process obeys a ballistic diffusion regime.
\end{enumerate}
In general, higher exponent $\alpha$ means the walker faster propagates through
a graph, which in turn may provide algorithmic speed-up. It can be shown that
time-independent random walk obeys a normal diffusion regime, as the probability
distribution can be approximated by Gaussian distribution. On the other hand, one
can show that CTQW obeys a ballistic diffusion regime~\cite{konno2005limit}.

\subsubsection{Search algorithms} 

For general search problem it is assumed that the single or multiple elements of
the database are marked, and the goal is to find the elements in as short time
as possible. In case of walk search, vertices plays the role of elements of the
database, and the evolution has to be a walk defined by some graph. Contrary to
the propagation of the walk, the way the time required to find a marked vertex
is determined differs between random and quantum walks.

In case of discrete-time random walk given by stochastic matrix $P$, the
evolution starts in its stationary distribution $p_{\rm stat}$. At each step of
random walk evolution walker is checked whether it is at the marked vertex. If
it is the case, then the procedure stop, otherwise the step is repeated. Note
that for general graph it is possible (although highly unlike for uniform random
walk) to search for a marked vertex infinitely. However expected Mean
First-Passage Time $\langle T_{v} \rangle$ is necessarily finite
\cite{noh2004random}. The only condition on $G$ is to be strongly connected.
Note there are known explicit formulas for uniform random walk search
\cite{noh2004random}. However, in Chapter~\ref{sec:complex} we will derive and
present an alternative formula based on quantities used for quantum search
\cite{chakraborty2020optimality}.

Note that for the random walk search for almost all vertices $\langle
T_{v}\rangle =\Omega(|V|)$, since within $k$ steps at most $k$ different
vertices can be visited. This does not mean that some vertices cannot be found
in shorter time. For example for star graph, i.e. a tree graph with single
vertex connected to all the others (see Fig.~\ref{fig:star-graph}), the walker
either stops at step 0 or at step 1.

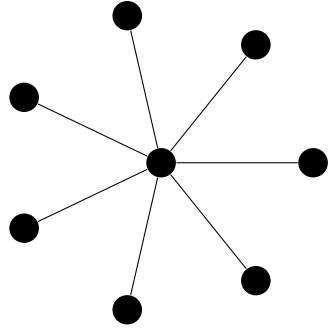
\begin{figure}\centering
	\begin{tikzpicture}
	\node[circle,fill=black] at (360:0mm) (center) {};
	\foreach \n in {1,...,7}{
		\node[circle,fill=black] at ({\n*360/7}:2cm) (n\n) {};
		\draw (center)--(n\n);}
%		\node at (0,-2*1.5) {$K_{1,7}$}; % delete line to remove label
	\end{tikzpicture}
	\caption{Star graph with 8 vertices.} \label{fig:star-graph}
\end{figure}

For the quantum walk we cannot measure the system too often, due to destructive
behavior of the measurement. Instead it is necessary to understand the graph
structure, and determine the optimal $T_{\rm opt}$ time, after which the quantum
state is measured. The choice of $T_{\rm opt}$ has to take into account the
success probability of checking the state after given evolution time. Since quantum evolution is quasi-periodic, choosing too large measurement time may also results in $\approx 0$ success probability, the same way as it happens for Grover search \cite{grover1996fast}.

In the scope of the dissertation we will focus on the original CTQW-based
quantum search algorithm. Let $G=(V,E)$ be an undirected graph and $w\in V$ be a
marked node. The Hamiltonian defining the evolution takes the form
\begin{equation}
H = - \gamma H_G - \ketbra{w}{w},
\end{equation}
and originally the evolution starts in uniform superposition $\ket
s=\frac{1}{\sqrt n}\sum_{v\in V}\ket v$. Here $H_G$ is a symmetric graph matrix
and $\gamma$ is a jumping rate which has to be determined before running the
algorithm.

Let us consider a complete graph $K_n$, \ie graph with all $n$ vertices being
connected. Since the graphs is vertex-transitive, and thus regular,u the choice
of graph matrix and marked vertex is not relevant. Note that $A+\Id = n
\ketbra{s}$. Hence, despite the fact the evolution takes places in
$n$-dimensional space, effectively it runs on a subspace spanned by $\{\ket s,
\ket w\}$. The initial state lies in this subspace. Let $\ket{\bar
	s} = \frac{1}{\sqrt {n-1}}\left ( \sqrt n \ket s - \ket w\right )$. The
Hamiltonian defined as $H_G=A$, spanned by $\ket{\bar s}$ and $ \ket w$,
equivalent to the above one takes the form
\begin{equation}
\tilde H= -\begin{bmatrix}
\gamma (n-2) & \gamma \sqrt {n-1}\\
\gamma\sqrt {n-1}& 1 \\ 
\end{bmatrix}.
\end{equation}
%\begin{equation}
%\bra{\bar s} A \ket {\bar s} = \bra{\bar s} (n\ketbra s - \Id) \ket {\bar s} = n \braket{\bar s}{s}^2 - 1 = \frac{n}{(n-1)n}(n-1)^2-1=n-2
%\end{equation}
Note for $\gamma = \frac{1}{n-2}$ the values on the diagonal equals hence is
irrelevant. For such choice of the $\gamma$, the simplified Hamiltonian takes
the form
\begin{equation}
\tilde H= -\begin{bmatrix}
0 & \frac{\sqrt {n-1}}{n-2}\\
\frac{\sqrt {n-1}}{n-2}& 0 \\ 
\end{bmatrix}.
\end{equation}
For $T_{\rm opt}\approx \pi\sqrt n /2$ the Hamiltonian transforms
state $\ket{\bar s}$ to $\approx \ket w$. Since $\braket{s}{\bar s}=1-o(1)$, the
same Hamiltonian transforms $\ket s $ to $\approx \ket w$ as well. Hence after
$\Theta(\sqrt n)$ evolution time there is $1-o(1)$ probability of finding the
marked node $w$. Alternative derivation can be also found in
\cite{wong2015nonlinear}.

Note that after $\pi \sqrt n$ evolution time the Hamiltonian will transform
$\ket s$ to $\approx \ket s$, which gives $\order{ 1/n}$ probability of finding
the node. Hence determining the measurement time in complexity is not enough to
guarantee high success probability. The situation is even worse for jumping
rate, as only $\gamma(1+\order{1/\sqrt n})$ jumping rates would give the same
result. Otherwise the diagonal elements will disturb the evolution
\cite{novo2015systematic,chakraborty2016spatial,chakraborty2020optimality}.

To determine the actual complexity of quantum search algorithms, the cost of
measurement and preparing the initial state \cite{magniez2007quantum,
	chakraborty2020optimality} should be taken into account. In this dissertation we
assume that the time required for state preparation and measurement is
significantly smaller compared to the evolution time. This approach is
frequently used in the literature
\cite{childs2004spatial,glos2018optimal,glos2018vertices,philipp2016continuous,chakraborty2020optimality,chakraborty2020finding,chakraborty2016spatial}. It is also common to choose different initial state \cite{chakraborty2020optimality}, which may depend on a chosen graph matrix $H_G$, but not marked vertex $w$.

Since we focus on the complexity of the search, we will be satisfied
with obtaining $\Theta(1)$ success probability. Note that repeating the
procedure of quantum search exponentially decrease the probability of failure.

Finally let us recall recent results concerning quantum search on general
graphs. In \cite{chakraborty2020finding} authors proposed alternative
continuous-time walk search, which was faster compared to any discrete random
search. However, the proposed method required quadratically large linear space
compared to the proposed CTQW-based search \cite{childs2004spatial}. Finally in
\cite{chakraborty2020optimality} authors proposed very general results concerning
the optimality of the CTQW-based quantum search. However in this case full
eigendecomposition of $H_G$ has to be known in order to apply their results,
which in general is a computationally difficult problem compared to quantum or
even random search. Finally, the case of \ER graph was considered in \cite{chakraborty2016spatial,chakraborty2017optimal,cattaneo2018quantum}

\section{Numerical analysis and tools}

In this section we describe a numerical procedure for determining exponent $\alpha$ for function $f=\Theta(n^\alpha)$. Furthermore, in order to simplify the reproduction of our numerical results, we described tools used in our numerical experiment and provide the link where the code can be found.

\subsection{Exponent estimation}\label{sec:exponent-estimation}

Suppose $f(t) = Ct^{\alpha \pm \varepsilon}$. Then we have
\begin{equation}
\log(f(t)) = (\alpha \pm \varepsilon)\log(t)+\log(C+o(1)).
\end{equation}
Note $\alpha $ is a slope of $\log(\mu_2(t))$ versus $\log(t)$. Several
approaches can be proposed in order to estimate $\alpha$. First, one can
estimate the slope of linear regression of pairs $(\log t_i,\log f(t_i))$. This
approach was satisfactory in case of walk search analysis, however for
propagation the values for small $t$ disturbed the actual value.

In case of walk propagation, the estimation of $\alpha$ goes as follows. For
time-points $t_1,\dots,t_k$ we calculate $f(t_1),\dots,f(t_k)$. We choose batch
size $l \ll k$ and calculate the slope $\alpha_i$ based on $(t_i,f(t_i)),
\ldots, (t_{i+l-1},f(t_{i+l-1}))$ in a way described in previous paragraph. Thus
we obtained $k-l+1$ approximations of $\alpha$. It is expected that the larger
the values of time-points are, the better the estimation of $\alpha$ is. For
convenience we choose $(t_{i+l-1}+t_i)/2$ to be the time-point corresponding to
the estimated $\alpha_i$ when plotting the results.

\subsection{Software used} 

Numerical analysis presented in this dissertation was generated with Julia programming
language version 1.5.2~\cite{bezanson2012julia}. The simulation was done using
in particular Expokit.jl \cite{expokit}, LightGraphs.jl \cite{lightgraphs} and
QSWalk.jl \cite{glos2019qswalk}. The latter package is the package co-developed
by the author of this dissertation. The code is available on GitHub under
\url{https://doi.org/10.5281/zenodo.4548423}.

%\subsection{Random and quantum walk comparison}
%\todo[inline]{general/where to find}
%\subsubsection{Ballistic propagation}
%\subsubsection{Quantum search}

%%%%%%%%%%%%%%%%%%%%%%%%%%%%%%%%%%%%%%%%%%%%%%%%%%%%%%%%%%%%%%%%%%%%%%%%%%%
%%%%%%%%%%%%%%%%%%%%%%%%%%%%%% QSW %%%%%%%%%%%%%%%%%%%%%%%%%%%%%%%%%%%%%%%%
%%%%%%%%%%%%%%%%%%%%%%%%%%%%%%%%%%%%%%%%%%%%%%%%%%%%%%%%%%%%%%%%%%%%%%%%%%%

%\include{propagation_QSW/propagation_qsw}

% !TeX spellcheck = en_US
\newcommand{\nonmoral}[1]{\underline{#1}}
\chapter{Non-moralizing Quantum Stochastic Walk} \label{sec:nonmoralizing-qsw}
\chaptermark{Non-moralizing QSW}

% !TeX spellcheck = en_US
%\chapter{Propagation of Quantum Stochastic Walk}\label{sec:prop-qsw}
Despite the evolution formula for CTRW and CTQW are very similar, the stochastic
and quantum evolution exhibits very different properties. For random walks,
based on the  Perron-Frobenious theorem, the evolution converges to a fixed
distributions for connected undirected graphs. In the case of quantum walk, we
observe quasi-periodic evolution.
\begin{lemma}[\cite{montanaro2007quantum}]
Let $\ket{\psi}\in \C^X$ be vector of a finite-dimensional linear space and let
$U\in\C^{X\times X}$ be a unitary matrix. For any $\varepsilon>0$ there exists
$n\geq 1$ such that
\begin{equation}
	|\bra \psi U^n\ket \psi |>1-\varepsilon.
\end{equation}
\end{lemma}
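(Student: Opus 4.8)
The plan is to use the spectral decomposition of the unitary matrix $U$ together with a simultaneous Diophantine approximation / pigeonhole argument. Since $U$ is unitary on the finite-dimensional space $\C^X$, it is normal, so by the eigendecomposition stated in the preliminaries we may write $U = \sum_{j=1}^{d} \ee^{\ii \theta_j} \ketbra{\lambda_j}$ with $\theta_j \in \RR$ and $d = |X|$. Expanding $\ket\psi = \sum_j c_j \ket{\lambda_j}$ with $\sum_j |c_j|^2 = 1$, we get
\begin{equation}
\bra\psi U^n \ket\psi = \sum_{j=1}^{d} |c_j|^2 \ee^{\ii n\theta_j}.
\end{equation}
The modulus of this quantity is at least $\operatorname{Re}\sum_j |c_j|^2 \ee^{\ii n \theta_j} = \sum_j |c_j|^2 \cos(n\theta_j)$, so it suffices to find $n \geq 1$ such that every $n\theta_j$ is simultaneously within a small angle of a multiple of $2\pi$; then each $\cos(n\theta_j) > 1 - \varepsilon'$ for a suitable $\varepsilon'$, and the weighted average exceeds $1-\varepsilon$.

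First I would reduce to the simultaneous approximation statement: given $\theta_1,\dots,\theta_d$, for every $\delta > 0$ there exists $n \geq 1$ with $\| n\theta_j / (2\pi) \| < \delta$ for all $j$, where $\|\cdot\|$ denotes distance to the nearest integer. This is the classical pigeonhole argument of Dirichlet: consider the points $x_m = (\{m\theta_1/(2\pi)\}, \dots, \{m\theta_d/(2\pi)\}) \in [0,1)^d$ for $m = 0, 1, \dots, N$ where $N = \lceil 1/\delta \rceil^d$. Partitioning $[0,1)^d$ into $\lceil 1/\delta\rceil^d$ subcubes of side $\delta$, two of these $N+1$ points, say $x_{m_1}$ and $x_{m_2}$ with $m_1 < m_2$, land in the same subcube, so $n := m_2 - m_1 \geq 1$ satisfies $\|n\theta_j/(2\pi)\| < \delta$ for all $j$. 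Then choose $\delta$ small enough that $\cos(2\pi t) > 1-\varepsilon$ whenever $\|t\| < \delta$, and the bound
\begin{equation}
|\bra\psi U^n \ket\psi| \geq \sum_{j=1}^d |c_j|^2 \cos(n\theta_j) > (1-\varepsilon)\sum_{j=1}^d |c_j|^2 = 1-\varepsilon
\end{equation}
follows.

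The main obstacle is essentially bookkeeping rather than a deep difficulty: one must be careful that the approximation is \emph{simultaneous} across all eigenphases (a single bad $\theta_j$ would spoil the estimate), which is exactly why the multidimensional pigeonhole is needed rather than a one-dimensional argument, and one must handle the weighting by $|c_j|^2$ correctly — but since the $|c_j|^2$ form a probability vector, the convex combination of terms each exceeding $1-\varepsilon$ automatically exceeds $1-\varepsilon$. A minor point to note is that the statement only claims existence of \emph{one} such $n$; there is no need to prove the stronger fact that such $n$ occur with bounded gaps, though the same argument in fact yields infinitely many such $n$ by rerunning it on the tail.
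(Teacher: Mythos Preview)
Your proof is correct. Note, however, that the paper does not actually prove this lemma: it is quoted from \cite{montanaro2007quantum}, and the paper only remarks that ``the proof of the lemma was based on the theorem provided in \cite{bocchieri1957quantum} for continuous-time evolution.'' Your argument---spectral decomposition of $U$ followed by Dirichlet's simultaneous approximation via pigeonhole on the torus $[0,1)^d$---is precisely the standard route to the quantum recurrence theorem of Bocchieri--Loinger, so you have reconstructed the intended proof rather than found an alternative one.
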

Based on the lemma we can see that by choosing proper $n\geq 1$ we can be
arbitrarily close to the initial state. The proof of the lemma was based on the
theorem provided in \cite{bocchieri1957quantum} for continuous-time evolution.

The lemma shows why it is complicated to define a quantum walk formula for the
directed graphs. Traditionally the space of the walk is splitted into orthogonal
subspaces, each attached to a different vertex. However based on the lemma
above, if we start at the sink vertex $v$, then based on the graph topology we
cannot move outside the sink because of the graph topology. Algebraically, it
means that $\bra v U \ket v=1$. However this means that $\bra v U \ket w=0$ for
any choice of $w\neq v$. From this we can see that one cannot amplify amplitude
on a sink vertex with the unitary evolution. For continuous-time evolution,
another obstacle is that Hamiltonian has to be a Hermitian matrix.

In this chapter we overcome the limitation be utilizing stochastic quantum
walks.  However, this kind of walk does not obey ballistic
propagation (see Fig.~\ref{fig:local-vs-global-interaction-global}). Hence
despite the fact that the LQSW may be a well-defined evolution preserving the
digraph topology (which in fact we will prove in the next Chapter), it is does
not obey super-diffusive regime. However, the quantum stochastic walk is  not
limited to Lindblad operators of the form $\ketbra{v}{w}$
\cite{whitfield2010quantum}.

\begin{figure}\centering
	\subfloat[\label{fig:local-vs-global-interaction-local}LQSW]{\includegraphics{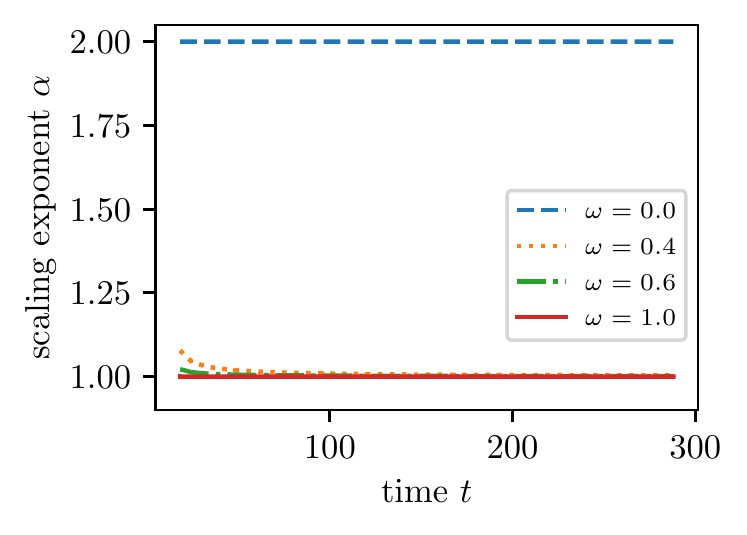}}
	\subfloat[\label{fig:local-vs-global-interaction-global}GQSW]{\includegraphics{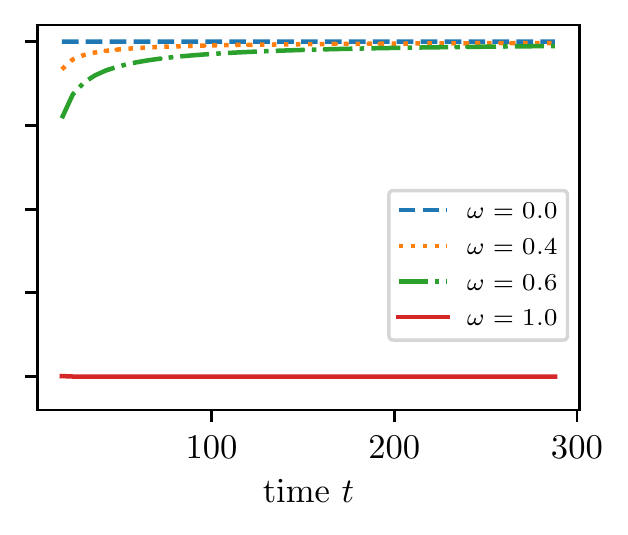}}
	
	\caption{Numerical investigation of the propagation for \protect\subref{fig:local-vs-global-interaction-local} interpolated standard GQSW  and \protect\subref{fig:local-vs-global-interaction-global} interpolated standard LQSW. We can se that for $\omega <0.75$ scaling exponent approaches 2, while decreases in time. The calculations were made for interpolating parameters $\omega=0,0.4,0.6,1$ for path graph with time-points $t=6,12,\ldots,300$. }\label{fig:local-vs-global-interaction}
\end{figure}

\begin{definition}
Let $\vec G = (V, \vec E)$ be a digraph and $G=(V,E)$ be its underlying graph.
Let $H\in \C^{V\times V}$ be a hermitian operator, and $\mathcal L =
\{L_1,\dots,L_k\} $ with  $L_i\in\C^{V\times V}$ of the form
\begin{gather}
H= \sum_{\{v,w\}\in E} c_{\{v,w\}} \ketbra{v}{w} + \bar c_{\{v,w\}} \ketbra{v}{w}, \\
L_i = \sum_{(v,w)\in\vec E} c_{(v,w),i} \ketbra{w}{v},
\end{gather}
where $c_{\{v,w\}}, c_{(w,v),i}\in \C_{\neq0}$. Then we call a GKSL master
equation with Hamiltonian $H$ and proposed Lindblad operators a \emph{global
	environment interaction QSW} (GQSW).
\end{definition}

We propose equivalent standard and interpolated GQSW. If $H$ is the adjacency
matrix of $G$ and $L$ is the adjacency matrix of $\vec G$, then we call a GQSW
with $H$ and $\{L\}$ a \emph{standard GQSW}. We define an interpolated GQSW
analogically to an interpolated LQSW.

GQSW is a quantum walk based on open system evolution, but yielding different
evolution than LQSW (see Fig.~\ref{fig:qsw-evolution}). As we will see it 
obeys a ballistic propagation. However this is achieved at cost of introducing extra edges not
existing even in the underlying graph. Thus in this chapter we start with
analytical justification of why the ballistic propagation for the GQSW. Next we
propose a modification which preserves at least superdiffusive propagation and
removed the undesired edges. The new quantum walk model will be called
\emph{nonmoralizing quantum stochastic walk}.

\begin{figure}[t]\centering 
	\includegraphics{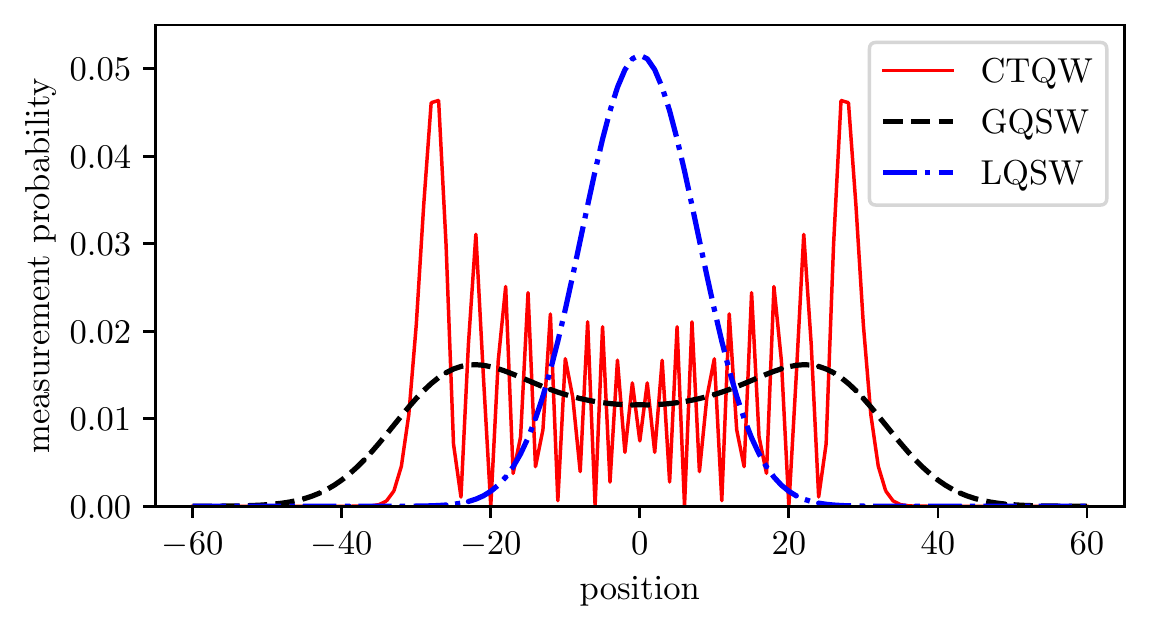}
\caption{Probability distribution of CTQW with Hamiltonian being adjacency
	matrix (red line), standard LQSW (black dashed line), and standard GQSW (blue
	dash-dotted line). We can see that the LQSW is highly concentrated around 0,
	contrary to other models. The computation were made for evolution time $t=15$
	and path graph with 121 vertices. The evolution started in state
	$\ketbra{0}$.}\label{fig:qsw-evolution}
\end{figure}

\section{Ballistic propagation for GQSW} \label{sec:prop-qsw}

In this section we will show that the GQSW model obeys ballistic propagation.
Let us consider the GQSW on path graph $P_n=([n],E_n)$, where for $i,j\in [n]$
we have $i,j\in E_n\iff |i-j| =1$. Let $A$ be an adjacency matrix of $P_n$. Then
the GQSW takes the form.
\begin{equation}
\frac{\dd \varrho}{\dd t} = -\ii (1-\omega) [A, \varrho] + \omega \left( A \varrho 
A^\dagger - 
\frac12 \{A ^\dagger A, \varrho\} \right). \label{eq:stochastic-transition}
\end{equation}
We start the evolution in $\varrho(0) = \ketbra{0}$. Based on numerical
results presented in Fig.~\ref{fig:local-vs-global-interaction-global}, we
can see that for $\omega \neq 1$ the scaling exponent approaches~2.

Below we will present the analytical derivation confirming our numerical
investigations. Let $\mu_m(t)$ be the $m$-th central moment of the probability
distribution of the computational measurement of $\varrho(t)$. We will derive
$\alpha$ for which $\mu_2(t)/t^\alpha\to c\neq 0$.

The proofs consists of several parts. First we derive the probability
distribution $p_t$ fo $\varrho(t)$ defined over path graph $P_n$. Since the adjacency matrix of the path graph is a Toeplitz matrix, its eigendecomposition is known, which in turns gives us the following theorem.
\begin{theorem}\label{theorem:prob-on-path}
	Given a interpolated standard GQSW on a path graph of order $n$  with initial state 
	$\varrho(0)=\dyad{l}$ for some $l\in\{1,\dots,n\}$, the quantum state $\varrho(t)$ at time $t$ satisfies  
	\begin{equation}\small
	\begin{split}
	\bra{k}\varrho(t)\ket{k}=&\left(\frac{2}{n+1}\right)^2 \sum\limits_{i,j=1}^n 
	\sin\left(\frac{ki\pi}{n+1}\right)
	\sin\left(\frac{kj\pi}{n+1}\right)\sin\left(\frac{li\pi}{n+1}\right)
	\sin\left(\frac{lj\pi}{n+1}\right)\times\\
	&\times 
	\exp\left[- \frac t2 \omega(\lambda_i - \lambda_j)^2\right]
	\exp\left[- \ii t (1-\omega) (\lambda_i-\lambda_j) 	\right],
	\end{split}
	\end{equation}
	where
	\begin{equation}
	\lambda_i = 2\cos\left( \frac{i\pi}{n+1} \right).
	\end{equation}
\end{theorem}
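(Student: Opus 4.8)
The plan is to solve the vectorized GKSL equation \eqref{eq:integrated-qsw}--\eqref{eq:diff-qsw-operator} explicitly in the eigenbasis of the path-graph adjacency matrix $A$, and then read off the diagonal entries of $\varrho(t)$. Since $A=A^\dagger$ is the adjacency matrix of $P_n$, a Toeplitz tridiagonal matrix, its eigendecomposition is classical: $A=\sum_{i=1}^n \lambda_i \dyad{\lambda_i}$ with $\lambda_i = 2\cos\!\big(\tfrac{i\pi}{n+1}\big)$ and $\braket{k}{\lambda_i} = \sqrt{\tfrac{2}{n+1}}\sin\!\big(\tfrac{ki\pi}{n+1}\big)$. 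First I would substitute $H=A$ and the single Lindblad operator $L=A$ (the standard GQSW) into Eq.~\eqref{eq:diff-qsw-operator}, scaling the coherent part by $(1-\omega)$ and the dissipative part by $\omega$ as in Eq.~\eqref{eq:stochastic-transition}. Using $A=\bar A$ (the matrix is real) the generator becomes
\begin{equation}
S = -\ii(1-\omega)\,(A\kron \Id - \Id \kron A) + \omega\Big( A\kron A - \tfrac12 A^2\kron \Id - \tfrac12 \Id \kron A^2\Big).
\end{equation}

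The key observation is that $\{\ket{\lambda_i}\kron\ket{\lambda_j}\}_{i,j}$ is a complete eigenbasis of $S$, because every term in $S$ is a tensor product of (powers of) $A$ acting on one factor or the other. Applying $S$ to $\ket{\lambda_i}\kron\ket{\lambda_j}$ gives eigenvalue
\begin{equation}
s_{ij} = -\ii(1-\omega)(\lambda_i - \lambda_j) + \omega\Big(\lambda_i\lambda_j - \tfrac12 \lambda_i^2 - \tfrac12 \lambda_j^2\Big) = -\ii(1-\omega)(\lambda_i-\lambda_j) - \tfrac{\omega}{2}(\lambda_i - \lambda_j)^2,
\end{equation}
which already contains exactly the two exponential factors appearing in the theorem. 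Then $\vecc{\varrho(t)} = \exp(St)\vecc{\varrho(0)}$ is obtained by expanding $\vecc{\varrho(0)} = \vecc{\dyad{l}}$ in this eigenbasis: writing $\ket{l} = \sum_i \braket{\lambda_i}{l}\,\ket{\lambda_i}$ gives $\vecc{\dyad{l}} = \sum_{i,j}\braket{\lambda_i}{l}\overline{\braket{\lambda_j}{l}}\,\ket{\lambda_i}\kron\ket{\lambda_j}$ (using that the coordinates are real, $\overline{\braket{\lambda_j}{l}} = \braket{\lambda_j}{l}$). Hence
\begin{equation}
\varrho(t) = \sum_{i,j=1}^n \braket{\lambda_i}{l}\braket{\lambda_j}{l}\,\ee^{s_{ij}t}\,\ketbra{\lambda_i}{\lambda_j},
\end{equation}
and taking $\bra{k}\cdot\ket{k}$ and substituting the sine formulas for the overlaps $\braket{\lambda_i}{k}$, $\braket{\lambda_i}{l}$ yields precisely the claimed expression, with $(2/(n+1))^2$ coming from the product of the two normalization constants.

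I would present this in three short steps: (i) recall/state the Toeplitz eigendecomposition of $A$; (ii) verify that $\ket{\lambda_i}\kron\ket{\lambda_j}$ diagonalizes $S$ and compute $s_{ij}$; (iii) expand the initial state and collect terms. The only mild subtlety — and the step I would check most carefully — is the bookkeeping with conjugates and transposes in Eq.~\eqref{eq:diff-qsw-operator}: one must use that $A$ is real and symmetric so that $\bar A = A^\top = A$, which collapses $A\kron\bar A$ to $A\kron A$ and $\Id\kron A^\top\bar A$ to $\Id\kron A^2$; without this the eigenvalue computation would not close. Everything else is routine substitution, so no genuine obstacle arises beyond keeping the indices straight.
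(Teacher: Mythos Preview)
Your proposal is correct and follows essentially the same route as the paper's proof: both diagonalize the vectorized generator $S$ in the tensor basis $\ket{\lambda_i}\kron\ket{\lambda_j}$ using the Toeplitz eigendecomposition of $A$, compute the eigenvalues $s_{ij}$, and then expand $\vecc{\dyad{l}}$ in that basis to read off $\bra{k}\varrho(t)\ket{k}$. The only cosmetic difference is that the paper phrases the diagonalizability via the commutation $[L\kron L, L^m\kron\Id]=0$ with $L=H=A$, whereas you argue it directly term-by-term; the content is identical.
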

The proof of the theorem can be found in Appendix~\ref{sec:prop-qsw-prob-dist}

By increasing the order of the graph we obtain the limit of the probability
distribution. Note that in order to consider $n\to\infty$, we consider path
graph with $2n+1$ vertices labeled by $-n,\dots,n$, and we start at vertex $0$.
we determine the probability distribution of the interpolated standard GQSW on
infinite path by taking $n\to\infty$. The limiting probability distribution in
$n\to\infty $ takes the following form.
\begin{theorem}\label{theorem:prob-on-line}
Given an interpolated standard GQSW on an infinite path with initial state
$\varrho(0)=\dyad{0}$, the quantum state $\varrho(t)$ at time $t$ satisfies
\begin{equation}
\begin{split}
\bra{k}\varrho(t)\ket{k}&=\frac{1}{4\pi^2}\int_{-\pi}^\pi\int_{-\pi}^\pi 
\cos(kx)\cos(ky)\exp\left[-2\omega t (\cos(x)-\cos(y))^2\right] \\ 
&\phantom{=\ } \times \exp\left[ 2\ii t (1-\omega)(\cos(x)-\cos(y))\right] \dd 
x\dd y.
\end{split}
\end{equation}
\end{theorem}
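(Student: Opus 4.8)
The plan is to pass from the finite-path formula in Theorem~\ref{theorem:prob-on-path} to the infinite-path formula by a careful limiting argument, treating the expression for $\bra{k}\varrho(t)\ket{k}$ as a Riemann sum. First I would re-center the path: instead of a path with vertices $\{1,\dots,n\}$ and initial vertex $l$, I would take a path of order $2N+1$ with vertices labelled $-N,\dots,N$ and initial vertex $0$, which amounts to the substitution $n+1 = 2N+2$, $l = N+1$, and shifting $k \mapsto k+N+1$. Under this relabelling the angles $\tfrac{i\pi}{n+1}$ become equally spaced points in $(0,\pi)$, and I would introduce the variables $x_i = \tfrac{i\pi}{n+1}$, $y_j = \tfrac{j\pi}{n+1}$, noting that $\lambda_i = 2\cos x_i$ and $\lambda_j = 2\cos y_j$.

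Next I would rewrite the four sines using product-to-sum identities so that the dependence on the (shifted) initial vertex becomes transparent. With $l$ at the center, $\sin\!\big(\tfrac{li\pi}{n+1}\big)\sin\!\big(\tfrac{lj\pi}{n+1}\big)$ collapses nicely, and $\sin\!\big(\tfrac{ki\pi}{n+1}\big)\sin\!\big(\tfrac{kj\pi}{n+1}\big)$ for the shifted $k$ produces, after using $\sin^2$ and cross terms, factors that tend to $\cos(kx)\cos(ky)$ in the limit (the terms of the form $\cos(k(x+y))$ average out or combine appropriately once one sums over the symmetric index range). The key point is that $\big(\tfrac{2}{n+1}\big)^2 \sum_{i,j}$ becomes $\tfrac{1}{\pi^2}\big(\tfrac{\pi}{n+1}\big)^2\sum_{i,j} \to \tfrac{1}{\pi^2}\int_0^\pi\int_0^\pi$, and by evenness of the integrand in $x$ and $y$ this equals $\tfrac{1}{4\pi^2}\int_{-\pi}^\pi\int_{-\pi}^\pi$. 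The exponential factors $\exp[-\tfrac t2\omega(\lambda_i-\lambda_j)^2]$ and $\exp[-\ii t(1-\omega)(\lambda_i-\lambda_j)]$ go directly to $\exp[-2\omega t(\cos x-\cos y)^2]$ and $\exp[2\ii t(1-\omega)(\cos x-\cos y)]$ since $\lambda_i - \lambda_j = 2(\cos x_i - \cos y_j)$.

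To make the Riemann-sum passage rigorous I would invoke dominated convergence: the integrand is a product of bounded continuous functions (the trigonometric factors are bounded by $1$, and $|\exp[-\tfrac t2\omega(\cdots)^2]\exp[-\ii t(1-\omega)(\cdots)]| \le 1$), so the summands are uniformly bounded and the sum $\big(\tfrac{\pi}{n+1}\big)^2\sum_{i,j}$ over the grid converges to the double integral of the continuous limiting integrand. One technical subtlety is that the vertex label $k$ is fixed while $n\to\infty$, so the shifted index $k+N+1$ stays in the "bulk" of the path and the relevant sine values behave like $\sin((k+N+1)x_i)$; expanding $\sin((k+N+1)x_i) = \sin((N+1)x_i)\cos(kx_i) + \cos((N+1)x_i)\sin(kx_i)$ and similarly for $y_j$, then multiplying out and using that $\sin((N+1)x_i)\sin((N+1)y_j)$ etc. combined with the center-vertex factors simplify, I would isolate the term that survives in the limit as $\cos(kx)\cos(ky)$ and show the remaining oscillatory terms vanish after integration. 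I expect this bookkeeping — tracking exactly which product of sines and cosines survives the symmetrization and the limit, and confirming the cross terms integrate to zero — to be the main obstacle; the convergence of the Riemann sum itself is routine once the integrand has been put in the clean form above.
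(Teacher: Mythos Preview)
Your approach is essentially the paper's: recenter to a path of odd length with the walk started at the middle vertex, recognise the double sum as a Riemann sum, and pass to the integral. The one thing you are missing, which dissolves the ``main obstacle'' you anticipate, is that with $l=N+1$ and $x_i=\tfrac{i\pi}{2N+2}$ one has $(N+1)x_i=\tfrac{i\pi}{2}$ \emph{exactly}; hence $\sin((N+1)x_i)$ vanishes for even $i$ and $\cos((N+1)x_i)$ vanishes for odd $i$, so in your angle-addition expansion there are no cross terms to average out at all --- the even-$i$ (and even-$j$) contributions drop identically and the surviving odd-index terms already carry the factor $\cos(kx_i)\cos(ky_j)$. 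After that the Riemann-sum limit and the symmetric extension to $[-\pi,\pi]^2$ go through exactly as you describe.
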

The proof can be found in Appendix~\ref{sec:probability-infinite-path}.

Despite we have not found a simple analytical formula for the integral above, we
were able to find its Taylor expansion of the formula above. By proper summing
it turns out that the moments are simply a polynomials in $t$, which provides
the values of their scaling exponent directly. The results are concluded with
following theorem.
\begin{theorem}\label{theorem:stochastic-mixed-moments}
	Given a interpolated standard GQSW on an infinite path with initial state
	$\varrho(0)=\dyad{0}$, for odd $m$ the $m$-th moment equals zero. For even $m$
	we have
	\begin{equation}
	\mu_m(t) = \begin{cases}
	\binom{m}{\frac{m}{2}}(\omega-1)^m (1+o(1))t^m , &\omega \in [0,1) ,\\
	\frac{m!}{(m/2)!\,8^{m/2}}\binom{m}{m/2}(1+o(1)) t^{m/2}, &\omega \in 1. \\
	\end{cases}
	\end{equation}
\end{theorem}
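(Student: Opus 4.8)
The plan is to start from the closed-form probability distribution on the infinite path given in Theorem~\ref{theorem:prob-on-line} and compute the moments $\mu_m(t) = \sum_{k\in\ZZ} k^m \bra{k}\varrho(t)\ket{k}$ by carefully interchanging the sum over $k$ with the double integral over $[-\pi,\pi]^2$. Writing $f(x,y,t) = \exp[-2\omega t(\cos x-\cos y)^2]\exp[2\ii t(1-\omega)(\cos x-\cos y)]$, the formula reads $\bra k \varrho(t)\ket k = \frac{1}{4\pi^2}\int\!\!\int \cos(kx)\cos(ky) f(x,y,t)\,\dd x\,\dd y$. The key observation is that $\sum_k k^m \cos(kx)\cos(ky)$ is, in the distributional sense, a combination of derivatives of Dirac deltas supported at $x=\pm y$, so the moment collapses the double integral onto the diagonals $x=y$ and $x=-y$. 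Concretely, I would use $\cos(kx)\cos(ky) = \tfrac12[\cos(k(x-y)) + \cos(k(x+y))]$ and the identity that $\sum_k k^m \cos(k\theta)$ acts as $2\pi (-1)^{m/2}\delta^{(m)}(\theta)$ (up to the usual $2\pi$-periodic copies, which do not matter here because the relevant singularities are isolated in $[-\pi,\pi]^2$). This reduces $\mu_m(t)$ to a one-dimensional integral of an $m$-th derivative of $f$ evaluated on the diagonal.

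The odd-$m$ case then follows immediately: on the diagonal $x=y$ one has $\cos x - \cos y = 0$ and $f\equiv 1$, and the symmetry $x\mapsto -x$ forces all odd derivatives to integrate to zero, while the $x=-y$ diagonal contributes analogously by parity; hence $\mu_m(t)=0$ for odd $m$. For even $m$, I would perform the $m$-fold differentiation using Faà di Bruno / the chain rule on $f$, noting that near the diagonal $\cos x - \cos y$ vanishes to first order, so the dominant contribution as $t\to\infty$ comes from the lowest-order terms in the Taylor expansion of the exponent. Setting $u = x-y$ small, $\cos x - \cos y \approx -u\sin\left(\frac{x+y}{2}\right)$ to leading order, so the exponent behaves like $-2\omega t\, u^2 \sin^2(\cdots) + 2\ii t(1-\omega)(-u\sin(\cdots)) + \dots$. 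For $\omega\in[0,1)$ the linear-in-$t$ imaginary part dominates: the $m$-th derivative produces a leading term $\sim (2t(1-\omega))^m$ times a combinatorial factor, and integrating the remaining Gaussian-type factor over the diagonal variable gives the constant. This is where the binomial coefficient $\binom{m}{m/2}$ should emerge, from expanding $(\text{imaginary linear part})^m$ and keeping the term that survives the diagonal integration. For $\omega = 1$ the linear term is absent, only the Gaussian $\exp[-2t u^2 \sin^2(\cdots)]$ survives, and scaling $u \sim t^{-1/2}$ shows $\mu_m(t) = \Theta(t^{m/2})$; the explicit constant $\frac{m!}{(m/2)!\,8^{m/2}}\binom{m}{m/2}$ comes from the Gaussian moment $\int u^{m}\exp[-c u^2]\,\dd u$ combined with the average of $\sin^{-m}$-type factors over the diagonal (which should reduce to another $\binom{m}{m/2}$ via $\frac{1}{2\pi}\int_{-\pi}^\pi \sin^{-m}\!\big(\tfrac{s}{2}\big)\cdots$, handled by residues or by a beta-integral).

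The main obstacle I anticipate is making the distributional/delta-function manipulation rigorous: the sum $\sum_k k^m\cos(k\theta)$ does not converge in the ordinary sense, so I would instead justify the interchange by a regularization — e.g. multiply by a convergence factor $r^{|k|}$ with $r\uparrow 1$, or integrate by parts $m$ times in $x$ and $y$ first (legitimate since $f$ is smooth and $2\pi$-periodic) to turn $k^m\cos(kx)$ into honest Fourier coefficients of $\partial_x^m f$, and only then sum. A clean alternative, which may be what the appendix does, is to expand $f(x,y,t)$ itself in a double Fourier series $\sum_{a,b} c_{a,b}(t)\cos(ax)\cos(bx)\dots$ so that orthogonality picks out $c_{k,k}(t)+c_{k,-k}(t)$ directly as $\bra k\varrho(t)\ket k$, then recognize the generating function of the $c_{a,b}(t)$ and read off that the moments are exact polynomials (resp. square-root-power) in $t$ with the stated leading coefficients. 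Either route reduces the theorem to a finite, if tedious, combinatorial identity; the asymptotic claims ($1+o(1)$) are then automatic once one checks that the subleading terms in the Taylor expansion of the exponent contribute strictly lower powers of $t$.
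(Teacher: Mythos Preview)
Your approach is sound in outline and can be made rigorous, but it is genuinely different from the route the paper takes. The paper never touches distributions or delta functions. Instead, for each fixed $\omega$, it Taylor-expands the integrand of Theorem~\ref{theorem:prob-on-line} in powers of $t$ and integrates term by term: using the elementary closed form for $\int_{-\pi}^{\pi}\cos(kx)\cos^{l}(x)\,\dd x$ (Lemma~\ref{lem:cosIntegral}), one obtains $\bra{k}\varrho(t)\ket{k}=\sum_{n}A_{n,k}\,t^{n}/n!$ with explicit binomial expressions for $A_{n,k}$ (resp.\ $B_{n,k}$ when $\omega<1$). The moment then becomes $\mu_m(t)=\sum_{n}\frac{t^{n}}{n!}\sum_{k}k^{m}A_{n,k}$, and the inner sum is recognised, after a shift, as $\sum_{k=0}^{2n}(-1)^{k}(k-n)^{m}\binom{2n}{k}$, which by the classical identity (Lemma~\ref{lem:sumOfPowerBinomial}) vanishes whenever $2n>m$ and equals $(-1)^{m}m!$ when $2n=m$. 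This immediately shows $\mu_m(t)$ is an \emph{exact polynomial} in $t$ of degree $m/2$ (resp.\ $m$ for $\omega<1$) and reads off the leading coefficient with no asymptotic estimates at all.

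What your route buys is a clearer picture of \emph{why} the two regimes differ: the localisation to the diagonal $x=y$, combined with the fact that the exponent $2\ii(1-\omega)t(\cos x-\cos y)$ is linear in the transverse variable while $-2\omega t(\cos x-\cos y)^{2}$ is quadratic, makes the $t^{m}$ versus $t^{m/2}$ scaling transparent. The paper's route, by contrast, is entirely elementary and self-contained (no tempered distributions, no regularisation), and it yields the stronger statement that $\mu_m$ is a polynomial with all coefficients computable, not merely an asymptotic. Your ``clean alternative'' at the end, expanding $f$ in a double Fourier series and reading off $\bra{k}\varrho(t)\ket{k}$ by orthogonality, is essentially what the appendix does once you Taylor-expand in $t$ first. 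Two minor remarks: the odd-$m$ case follows immediately from $p_{k}=p_{-k}$ (evident from the integral formula), so the diagonal argument is unnecessary there; and the case $\omega=0$ is not re-derived in the paper but cited from Konno's limit theorem.
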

The case $\omega >0$ is proven in Appendix~\ref{sec:prop-qsw-scaling-exp}, the $\omega=0$ is a standard CTQW evolution and the proof can be found in \cite{konno2005limit}. Formally in \cite{konno2005limit} author shown that for even $m=2k$
\begin{equation}
\lim_{t\to\infty}\mu_m(t)/t^m\to \frac{(2k-1)!!}{(2k)!!} = \frac{(2k)!}{((2k)!!)^2} = \binom{2k}{k}\frac{1}{2^{2k}}= \binom{m}{m/2}\frac{1}{2^{m}}.
\end{equation}
The difference in factor $1/2^m$ results from time rescaling $t\leftarrow \frac{t}{2}$.

As we mentioned before, the $m$-th moments are in fact polynomials in $t$ and we were able to find closed formula. The formula can be used to determine the precise form of $\mu_m(t)$ for any $\omega$. In particular for $\omega \in (0,1)$ we have $\mu_2(t) = 2\omega t +2(1-\omega)^2t^2$.

Thanks to the theorem above we can confirm the ballistic propagation of
interpolated standard GQSW with the following theorem, which is the main results
of this section.
\begin{theorem}
	Let $\alpha(\omega)$ be the scaling exponent of interpolated standard GQSW on
infinite path with initial state $\varrho(0)=\dyad{0}$, given the transition
parameter $\omega\in[0,1]$. Then
	\begin{equation}
	\alpha(\omega) = \begin{cases}
	2, & \omega \in [0,1),\\
	1, & \omega = 1.
	\end{cases}
	\end{equation}
\end{theorem}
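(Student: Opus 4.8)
The plan is to deduce the scaling exponent $\alpha(\omega)$ directly from Theorem~\ref{theorem:stochastic-mixed-moments}, which already gives the precise asymptotics of the second central moment $\mu_2(t)$. Recall that the scaling exponent is defined as the value $\alpha$ for which $\mu_2(t) = \Theta(t^\alpha)$, equivalently $\mu_2(t)/t^\alpha \to c \ne 0$. So the whole argument amounts to reading off the power of $t$ in the leading term of $\mu_2(t)$ in the two regimes $\omega \in [0,1)$ and $\omega = 1$.

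First I would specialize Theorem~\ref{theorem:stochastic-mixed-moments} to $m=2$. For $\omega \in [0,1)$ this yields
\begin{equation}
\mu_2(t) = \binom{2}{1}(\omega-1)^2(1+o(1))\,t^2 = 2(1-\omega)^2(1+o(1))\,t^2,
\end{equation}
and since $\omega \ne 1$ the constant $2(1-\omega)^2$ is strictly positive, so $\mu_2(t) = \Theta(t^2)$ and hence $\alpha(\omega) = 2$. (One may alternatively invoke the exact formula $\mu_2(t) = 2\omega t + 2(1-\omega)^2 t^2$ quoted after the theorem, whose leading term is again $2(1-\omega)^2 t^2$ for $\omega \in (0,1)$, and note that the $\omega = 0$ case is the classical CTQW ballistic result of~\cite{konno2005limit}.) For $\omega = 1$ the theorem gives
\begin{equation}
\mu_2(t) = \frac{2!}{1!\cdot 8^{1}}\binom{2}{1}(1+o(1))\,t^{1} = \frac{1}{2}(1+o(1))\,t,
\end{equation}
so $\mu_2(t) = \Theta(t)$ and $\alpha(1) = 1$.

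Finally I would note the interpretation: $\alpha(\omega) = 2$ for all $\omega \in [0,1)$ means the interpolated standard GQSW stays in the ballistic diffusion regime for any nonzero quantum component, while at $\omega = 1$ the evolution reduces to the classical continuous-time random walk on the path and one recovers normal diffusion $\alpha = 1$. There is essentially no obstacle here beyond bookkeeping: all the analytic work — the eigendecomposition of the path Toeplitz matrix, passing to the infinite-path integral, and extracting the Taylor coefficients to show the moments are polynomials in $t$ — has already been done in Theorems~\ref{theorem:prob-on-path}, \ref{theorem:prob-on-line} and~\ref{theorem:stochastic-mixed-moments}. The only mild subtlety worth a sentence is confirming that the leading coefficient $2(1-\omega)^2$ does not vanish on $[0,1)$ (which is clear) so that the $\Theta$ bound, and not merely an $O$ bound, genuinely holds; this is what pins $\alpha$ to exactly $2$ rather than something smaller.
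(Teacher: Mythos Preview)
Your proposal is correct and takes essentially the same approach as the paper: the theorem is stated there as an immediate consequence of Theorem~\ref{theorem:stochastic-mixed-moments} (``Thanks to the theorem above we can confirm the ballistic propagation\ldots''), and your argument is precisely this specialization to $m=2$, including the observation that $2(1-\omega)^2>0$ on $[0,1)$ is what pins down $\alpha=2$ rather than merely $\alpha\leq 2$.
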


\section{Spontaneous moralization in GQSW}

In previous section we considered a GQSW and we shown, that
despite of the presence of dissipation evolution, we observe the ballistic
propagation. At the same time Lindblad operators do not need to be symmetric, thus GKSL master equation is a natural choice for evolution model of
fast quantum walk definable on arbitrary directed graph.

Let $\vec G= (V,\vec E)$ be a directed graph and $G=(V,E)$ be its underlying
graph. Let $\vec A$ be an adjacency matrix of the digraph and $A$ be an
adjacency matrix of the underlying graph. Let us start with the interpolated
standard GQSW of the form
\begin{equation}
\frac{\dd \varrho}{\dd t} = -\ii (1-\omega) [A, \varrho] + \omega \left( \vec A \varrho
\vec A^\dagger - \frac12 \{\vec A ^\dagger \vec A, \varrho\} \right).
\end{equation}
Note that for $\omega\in(0,1)$, we expect to observe the backward propagation.
This fact is unavoidable if we plan to utilize the coherent evolution for fast
propagation. Unexpectedly, this is not the only effect that can be observed.

Let us consider the graph presented in
Fig.~\ref{fig:minimal-example-moralized-graph} with interpolating parameter
$\omega=1$ case, i.e. evolution with dissipative part only. The Lindblad
operator takes the form
\begin{equation}
L=\vec A = \dyad{v_3}{v_1}+\dyad{v_3}{v_2} = \ket{v_3}\left( \bra{v_1} + \bra {v_2}\right).
\end{equation}
Let us consider the evolution starting at state $\varrho(0) = \dyad{v_1}$. Since
Hamiltonian is not present in the evolution and we expect that the Lindblad
operator will follow the digraph topology, one should expect that $\bra{v_2}
\varrho(t)\ket{v_2} =0$ independently of chosen $t \geq0$. This is not the case,
as calculations shows
\begin{equation}
\begin{split}
\varrho(t) &=  \frac{1}{4}  \left(e^{-t}+1\right)^2 \ketbra{v_1}{v_1}
+\frac{1}{4} \left(e^{-2t}-1\right) \ketbra{v_1}{v_2} \\
&\phantom{\ =}+\frac{1}{4} \left(e^{-2 t}-1\right) \ketbra{v_2}{v_1} + 
\frac{1}{4}\left(e^{-t}-1\right )^2\ketbra{v_2}{v_2} \\
&\phantom{\ =}+  e^{-t} \sinh (t) \ketbra{v_3}{v_3}.
\end{split}
\end{equation}
In a time limit we have
\begin{equation}
\lim\limits_{t\to\infty}\varrho(t) = \frac{1}{4}\ketbra{v_1}{v_1}  - 
\ketbra{v_1}{v_2} - \ketbra{v_2}{v_1} +\ketbra{v_2}{v_2}) + \frac{1}{2} 
\ketbra{v_3}{v_3},
\end{equation}
which gives as 1/4 probability for measuring the vertex $v_2$. 

Two explanations of the phenomena can be proposed. First, let us recall the
evolution operator $S$ given in Eq.~\eqref{eq:diff-qsw-operator}
\begin{equation}
S = -\ii \left(H \kron \Id - \Id \kron \bar H \right) +
\sum_{L} \left ( L \kron \bar L - \frac{1}{2} L^\dagger L \kron \Id - \Id
\kron L^\top \bar L \right ).
\end{equation}
Let us consider propagation from $\ketbra{a}{\alpha}$ to $\ketbra{b}{\beta}$. We
have
\begin{equation}
\begin{split}
\bra{b\beta}S\ket{a\alpha} &= \delta_{\alpha\beta} \bra a\left ( -\ii H -\frac12\sum_ L   L^\dagger L \right)\ket b + \delta_{ab} \bra \beta\left ( \ii H - \frac12 \sum_L L^\top \bar L\right )\ket \alpha \\
&\phantom{\ =}+ \bra b L \ket a \bra \beta \bar L \ket{\alpha}.
\end{split}
\end{equation}
In our case we consider only real valued operator, thus conjugation can be added or remove without change on the operator. Thus we replace $L^T$ with $L^\dagger$ and $\bar L$ with $L$ to get 
\begin{equation}
\begin{split}
\bra{b\beta}S\ket{a\alpha} &= \delta_{\alpha\beta} \bra a\left ( -\ii H -\frac12\sum_ L   L^\dagger L \right)\ket b \\
&\phantom{\ =}+ \delta_{ab} \bra \beta\left ( \ii H - \frac12 \sum_L L^\dagger L\right )\ket \alpha \\
&\phantom{\ =}+ \sum_L \bra b L \ket a \bra \beta L \ket{\alpha}.
\end{split}
\end{equation}
Let us simplify the equation part by ignoring the parts that fit the graph
topology. In particular note that $\bra b L \ket a \bra \beta L \ket{\alpha}$ is
nonzero if $(\alpha,\beta)\in \vec E$ and $(a,b)\in \vec E$ which fits the graph
topology. Furthermore, Hamiltonian has nonzero impact iff $\alpha = \beta$ and
$\{a,b\}\in E$. This introduces the backward propagation if only one direction
is allowed of complex graph. However, Hamiltonian part introduces ballistic
propagation as was shown in Sec.~\ref{sec:prop-qsw}, thus we allow such
not-along the graph propagation. Moreover, this does not explains our example,
where the Hamiltonian was absent.

Let us consider the remaining part
\begin{equation}
-\frac12\delta_{\alpha\beta}\sum_ L   \bra a  L^\dagger L \ket b + \frac12 \delta_{ab} \sum_L\bra \beta L^\dagger L\ket \alpha.
\end{equation}
We will consider only the first addend, the second can be considered
analogically. Note that $\delta _{\alpha\beta}$ implies that the problem appears only
for propagation from $\dyad{a}{\alpha}$ to $\dyad{b}{\alpha}$ (or
$\dyad{a}{\alpha}$ to $\dyad{a}{\beta}$). On the other hand for standard GQSW we
have
\begin{equation}
\begin{split}
\sum_ L   \bra a  L^\dagger L \ket b  &= \bra a \vec A^\dagger \vec A \ket b = \sum_{v\in V} \bra a \vec A^\dagger \ket v \bra v \vec A \ket b  \\
&= |\{v: (a,v)\in \vec E ,(b,v)\in \vec E\}|.
\end{split}
\end{equation}
Note that this is precisely the situation observed in our example, visualized in
Fig.~\ref{fig:minimal-example-moralized-graph}. Here $a=v_1$, $b=v_2$ and
$\alpha=\beta=v_3$. Thus an additional connection is introduced between every
vertices that have common child. While this would be acceptable in case of
undirected edges, as we would introduce extra connection within radius 2, it is
not acceptable in the case of directed graphs. Due to the similarities between
graph moralization that occurs in machine
learning~\cite{cowell2006probabilistic}, we call our phenomena \emph{spontaneous
	moralization} or simply moralization.

\begin{figure}[t]
	\centering
	\subfloat[Original  graph\label{fig:intuitive-graph}]{
		\begin{tikzpicture}[node distance=2.5cm,thick]
		\tikzset{nodeStyle/.style = {circle,draw,minimum size=2.5em}}
		
		\node[nodeStyle] (A)  {$v_1$};
		\node[nodeStyle] (C) [below right of=A] {$v_3$};
		\node[nodeStyle] (B) [above right of=C] {$v_2$};
		
		\tikzset{EdgeStyle/.style   = {thick,-triangle 45}}
		\draw[EdgeStyle] (A) to (C);
		\draw[EdgeStyle] (B) to (C);
		
		\end{tikzpicture}}
	\hspace{1cm}
	\subfloat[Moral graph\label{fig:true-graph}]{
		\begin{tikzpicture}[node distance=2.5cm,thick]
		\tikzset{nodeStyle/.style = {circle,draw,minimum size=2.5em}}
		
		\node[nodeStyle] (A)  {$v_1$};
		\node[nodeStyle] (C) [below right of=A] {$v_3$};
		\node[nodeStyle] (B) [above right of=C] {$v_2$};
		
		\tikzset{EdgeStyle/.style   = {thick,-triangle 45}}
		\draw[EdgeStyle] (A) to (C);
		\draw[EdgeStyle] (B) to (C);
		
		\tikzset{additionalEdgeStyle/.style = {}}
		\draw[additionalEdgeStyle] (A) -- (B);
		%  \draw[additionalEdgeStyle] (A) to [out=90,in=150,looseness=8] (A);
		%  \draw[additionalEdgeStyle] (B) to [out=90,in=30,looseness=8] (B);
		\end{tikzpicture}}
	
	\caption{\label{fig:minimal-example-moralized-graph}Visualization of \protect\subref{fig:intuitive-graph} a directed graph  and
		its \protect\subref{fig:true-graph} moral graph.}
\end{figure}

The first explanation provides the sufficient and necessary condition for when
the moralization occurs. Second, simpler explanation suggests how to correct
this effect. Let as consider again the example provided in
Fig.~\ref{fig:intuitive-graph}. The Lindblad operator takes the form
\begin{equation}
L =  \ket{v_3}\left(\bra {v_1} + \bra {v_2}\right).
\end{equation}
Note that arbitrary state of the form $\alpha \ket{v_1} + \beta \ket{v_2}$ have
a decomposition in the basis  $\{ \frac{1}{\sqrt 2} (\ket {v_1}  +\ket
{v_2}),\frac{1}{\sqrt 2} (\ket {v_1}  -\ket {v_2})\}$. Thus the Lindblad
operator projects that part related to $\frac{1}{\sqrt 2} (\ket {v_1}  +\ket
{v_2})$ onto $\ket{v_3}$, while leaves $\frac{1}{\sqrt 2} (\ket {v_1}  -\ket
{v_2})$ unchanged. We can observe this in a quantum state as well
\begin{equation}
\begin{split}
\varrho(t) &=  \frac{1}{2} \frac{\ket{v_1}-\ket{v_2}}{\sqrt 2}\frac{\bra{v_1}-\bra{v_2}}{\sqrt 2} +\frac{e^{- 2t}}{2}  \frac{\ket{v_1}+\ket{v_2}}{\sqrt 2}\frac{\bra{v_1}+\bra{v_2}}{\sqrt 2}\\
&\phantom{\ =} +\frac{1}{2}e^{-t} \dyad {v_1} + \frac{1}{2}e^{-t} \dyad {v_2} +  e^{-t} \sinh (t) \dyad{v_3}.
\end{split}
\end{equation}
We can see that the only part spanned by $\ket{v_1},\ket{v_2}$ that remains unchanged is the part connected to $\frac{1}{\sqrt{2}}(\ket{v_1}-\ket{v_2})$ as predicted. This suggests that we need to change the operator in such a way that whole subspace spanned by the parents for each vertex is projected on the child subspace.

\subsection{Spontaneous moralization removal}

In this section we provide new QSW model called \emph{nonmoralizing global interaction
	 QSW} (NGQSW). Similarly to GQSW we construct single Lindblad
operator which describes the structure of directed graph. However in this model
we will remove the undesired moralization effect.

Let $\vec G=(V,\vec E)$ be a directed graph. We will construct new directed
graph $\vec {\nonmoral G}=( \nonmoral V, \vec{ \nonmoral{E}})$ homomorphic to
$\vec G$.  For consistency every graph object or operator that is connected to
$\vec {\nonmoral{G}} $, and thus nonmoralizing evolution, will be underlined as $\nonmoral{(\cdot )}$.

Let for each $v\in V$ define $ \nonmoral {V}_v=\{ \nonmoral v^0\dots,  \nonmoral v^{\indeg (v)-1}\}$ iff $\indeg(v)>0$. If $\indeg (v) =0$, then let $\nonmoral V_v=\{\nonmoral v^0\}$. We choose
\begin{equation}
\nonmoral V = \bigcup_{v\in V}\nonmoral V_{v}.
\end{equation} 
Furthermore, let $\vec{\nonmoral E}_{vw} = \{(\nonmoral v, \nonmoral w): \nonmoral v \in \nonmoral V_v,\nonmoral w \in \nonmoral V_w \}$. Then 
\begin{equation}
\vec {\nonmoral E} = \bigcup_{(v,w)\in E} \nonmoral {\vec E}_{vw}.
\end{equation}
Note that $f:\nonmoral V \to V$ of the form $f:\nonmoral v \mapsto v$ is a
proper homomorphism from $\vec {\nonmoral G}$ to $\vec G$. We  call
$\vec{\nonmoral G}$ \emph{demoralizing graph} and $f$ a \emph{natural
	homomorphism} from $\nonmoral{\vec G}$ to $\vec G$. Note that $f$ is also a
proper homomorphism from  underlying graphs of $\nonmoral{G}$ to $G$.

Let $\nonmoral\varrho\in \C^{\nonmoral{V}\times \nonmoral{V}}$ be a mixed state.
We define a natural measurement in terms of the vertices of the original graph $\vec
G$ as
\begin{equation}
p(v) = \sum_{\nonmoral v\in \nonmoral V_v} \bra {\nonmoral v} \nonmoral\varrho \ket{\nonmoral v}.
\end{equation}

Our goal is to define QSW on $\vec {\nonmoral G}$ that will simulate the
nonmoralizing evolution on $\vec G$. We cannot use simply an adjacency matrix of
$\vec {\nonmoral G}$ as it will lead to exactly the same moralization effect.
The solution is to choose the orthogonal vectors for columns $\nonmoral L$ in
such a way that $\bra {\nonmoral w} \nonmoral L^\dagger \nonmoral L
\ket{\nonmoral v}=0$ and $\bra{\nonmoral w} \nonmoral L \ket{\nonmoral v}=0$ for
$(v,w),(w,v)\not\in \vec E$.

\begin{lemma} \label{theorem:nonmoral-lindblad-construction}
	Let $\vec G=(V,\vec E)$ be a directed graph and $\vec {\nonmoral G}$ be its
demoralizing graph. Let $L_v\in \C^{\nonmoral V_v\times P(v)}$ be such a matrix
that columns are orthogonal, i.e. $\bra {u}L_v^\dagger L_v \ket {w} =0$ for
$u\neq w$, where $u,w\in P(v)$. Let
	\begin{equation}
		\bra{\nonmoral v^k}\nonmoral L\ket{\nonmoral w^l} \coloneqq \begin{cases}
			\bra {\nonmoral{v}^k} L_v \ket{w}, & (w,v)\in \vec E,\\
			0, & \rm otherwise.
		\end{cases}
	\end{equation}
	Then for arbitrary $\nonmoral v, \nonmoral w$ such that $v\neq w$ we have $\bra{\nonmoral v}\nonmoral L^\dagger \nonmoral L \ket{\nonmoral w}=0$.
\end{lemma}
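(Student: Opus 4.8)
The plan is to expand the matrix element $\bra{\nonmoral v}\nonmoral L^\dagger \nonmoral L\ket{\nonmoral w}$ directly in the computational basis of $\C^{\nonmoral V}$, and to observe that the only surviving terms correspond to basis vectors lying over a common child of $v$ and $w$ in $\vec G$; for each such common child the contribution collapses to an off-diagonal entry of a matrix of the form $L_u^\dagger L_u$, which vanishes by the assumed column-orthogonality.

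Concretely, I would first write
\begin{equation}
\bra{\nonmoral v}\nonmoral L^\dagger \nonmoral L\ket{\nonmoral w} = \sum_{\nonmoral u\in\nonmoral V}\overline{\bra{\nonmoral u}\nonmoral L\ket{\nonmoral v}}\,\bra{\nonmoral u}\nonmoral L\ket{\nonmoral w},
\end{equation}
and then inspect the definition of $\nonmoral L$: if $\nonmoral u\in\nonmoral V_u$, the factor $\bra{\nonmoral u}\nonmoral L\ket{\nonmoral v}$ is nonzero only when $(v,u)\in\vec E$, i.e. $v\in P(u)$, and similarly $\bra{\nonmoral u}\nonmoral L\ket{\nonmoral w}\neq 0$ forces $w\in P(u)$. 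Hence only indices $\nonmoral u$ sitting over a vertex $u$ that is a common child of $v$ and $w$ contribute to the sum.

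Next I would group the sum according to the original vertex $u$, and for each such $u$ use the definition $\bra{\nonmoral u^k}\nonmoral L\ket{\nonmoral v}=\bra{\nonmoral u^k}L_u\ket v$ (legitimate precisely because $v\in P(u)$) to obtain
\begin{equation}
\sum_{\nonmoral u^k\in\nonmoral V_u}\overline{\bra{\nonmoral u^k}L_u\ket v}\,\bra{\nonmoral u^k}L_u\ket w = \bra v L_u^\dagger L_u\ket w.
\end{equation}
Since $v\neq w$ and both $v$ and $w$ are among the columns of $L_u$ (indexed by $P(u)$), the column-orthogonality hypothesis on $L_u$ gives $\bra v L_u^\dagger L_u\ket w = 0$. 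Summing over all common children $u$ then yields $\bra{\nonmoral v}\nonmoral L^\dagger \nonmoral L\ket{\nonmoral w}=0$.

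The argument is essentially bookkeeping, and the only point needing care is to keep the index sets straight: which block $\nonmoral V_u$ a given basis vector $\nonmoral u$ belongs to, and the observation that $v,w\in P(u)$ is exactly the condition making the $(v,w)$-entry of $L_u^\dagger L_u$ well-defined so that orthogonality applies. The degenerate case $\indeg(u)=0$ never occurs here because any contributing $u$ has both $v$ and $w$ as parents; and if $v$ or $w$ is itself a source, the corresponding rows of $\nonmoral L$ are identically zero and the claim is immediate. No genuine analytic obstacle is expected.
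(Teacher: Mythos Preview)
Your proof is correct and follows essentially the same route as the paper's: insert a resolution of the identity over $\nonmoral V$, observe that only basis vectors $\nonmoral u$ lying over a common child $u\in C(v)\cap C(w)$ survive, collapse each block to $\bra v L_u^\dagger L_u\ket w$, and invoke column orthogonality of $L_u$ together with $v\neq w$. The paper's version is the same computation with slightly different indexing notation.
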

\begin{proof}
	Let $v,w\in V$ be arbitrary different vertices and $\nonmoral v^k$ and $\nonmoral w^l$ respective vertices from $\vec{\nonmoral G}$. We have
	\begin{equation}{\label{eq::mor}}
	\begin{split}
	\bra{\nonmoral v^k}\nonmoral L^\dagger \nonmoral L\ket{\nonmoral w^l} &= \sum_{\nonmoral u\in \nonmoral
	V} \bra{\nonmoral v^k}\nonmoral L^\dagger  \ket {\nonmoral u}\bra{\nonmoral u}\nonmoral
L\ket{\nonmoral w^l}\\ 
	&=\sum_{u \in V} \sum_{\ \nonmoral u^i\in \nonmoral V_u}\overline
{\bra{\nonmoral u^i}\nonmoral 	L\ket{\nonmoral v^k}} \bra{\nonmoral u^i}\nonmoral L\ket{\nonmoral
	w^l}\\ 
	&= \sum_{u\in C(v)\cap C(w)}  \  \sum_{\ \nonmoral u^i \in
	\nonmoral V_u} \overline {\bra {\nonmoral u^i} L_{u}\ket v} \bra {\nonmoral u^i} L_{u} \ket w.\\ 
	&=
\sum_{u\in C(v)\cap C(w)}\  \sum_{\ \nonmoral u^i \in \nonmoral V_u}\ 
\bra v L_{u}^\dagger\ket {\nonmoral u^i} \bra {\nonmoral u^i} L_{u} \ket w.\\ 
	&= \sum_{u\in C(v)\cap C(w)} \bra v L_{u}^\dagger
L_{u} \ket w.\\
	\end{split}
	\end{equation}
	Since $v\neq w$ we have $\bra {v}L_u^\dagger L_u \ket w=0$, which ends the
proof.
\end{proof}
%Note that $L_v$ is still a square matrix for $|P(v)|\geq1$.

A simple remark of the lemma is that the $\nonmoral L^\dagger \nonmoral L$
impact on the evolution between the vertices is removed. Note that $\nonmoral
L^\dagger \nonmoral L$ part still have impact on the internal evolution. However,
this does not imply that the amplitude is transferred not along the digraph, only within the space defined for a single vertex in $\vec G$.

Let us recall the example the example given in Fig.~\ref{fig:intuitive-graph}.
Its nonmoralizing version is presented in Fig.~\ref{fig:nonmoral-example-graph}.
We chose $L_v$ to be Fourier matrices $F_{|\nonmoral{V}_v|}\in
\C^{|\nonmoral{V}_v|\times |\nonmoral{V}_v|}$, \ie
\begin{equation}
\bra j F_{|\nonmoral{V}_v|} \ket k  = \omega_{|\nonmoral{V}_v|}^{jk},
\end{equation}
where $\omega_n = \exp(2\pi \ii / n)$. The Lindblad operator takes the form
\begin{equation}
\nonmoral L = \ketbra{\nonmoral v_3^0}{\nonmoral v_1^0}+\ketbra{\nonmoral v_3^1}{\nonmoral
	v_1^0}+\ketbra{\nonmoral v_3^0}{\nonmoral v_2^0}- \ketbra{\nonmoral v_3^1}{\nonmoral v_2^0}.
\end{equation}
One can verify that, if $\nonmoral \varrho(0)=\dyad{\nonmoral v_1^0}$, then the
state takes the form
\begin{equation}
\nonmoral \varrho(t) = e^{-2t} \dyad{\nonmoral v_1^0} + \frac{1}{2} 
\left(1-e^{-2 t}\right)  (\ket{\nonmoral v_3^0}+\ket{\nonmoral v_3^1})(\bra{\nonmoral
	v_3^0}+\bra{\nonmoral v_3^1}).
\end{equation}
Note that for arbitrary $t \geq 0$ we have $\bra{\nonmoral{v}_2^0} \nonmoral
\varrho(t) \ket {\nonmoral v_2^0}=0$, which confirms that our correction scheme
fixes the moralization effect.

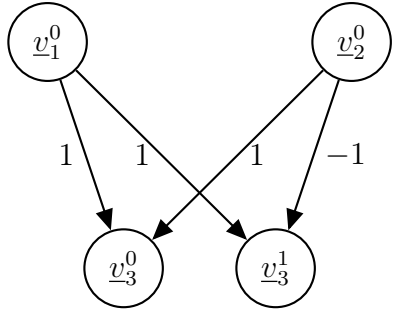
\begin{figure}
	\centering
	
	\begin{tikzpicture}[node distance=3.5cm,thick]
	\tikzset{nodeStyle/.style = {circle,draw,minimum size=2.5em}}
	
	\node[nodeStyle] (A)  at (0,0) {$\nonmoral v_1^0$} ;
	\node[nodeStyle] (C0) at (1, -3) {$\nonmoral v_3^0$};
	\node[nodeStyle] (C1) at (3,-3) {$\nonmoral v_3^1$};
	\node[nodeStyle] (B) at (4,0) [] {$\nonmoral v_2^0$};
	
	\tikzset{EdgeStyle/.style   = {thick,-triangle 45}}
	\draw[EdgeStyle] (A) to node[left] {1} (C0);
	\draw[EdgeStyle] (B) to node[right] {1}  (C0);
	\draw[EdgeStyle] (A) to node[left] {1} (C1);
	\draw[EdgeStyle] (B) to node[right] {$-1$} (C1);
	
	\end{tikzpicture}
	\caption{Visualization of the Lindblad operator in new model. The original
		graph is presented in
		Fig.~\ref{fig:intuitive-graph}.}\label{fig:nonmoral-example-graph}
\end{figure}

\subsection{Premature localization} Through numerical analysis of  the
introduced model we noticed undesired phenomenon: premature localization. For
example, let us consider the graph presented in
Fig.~\ref{fig:premature-localization-example-original}. In our model the
Lindblad operator will represent the graph presented in
Fig.~\ref{fig:premature-localization-example-increased}. Its Lindblad operator
$\nonmoral L$ has the form
\begin{equation}
\nonmoral L = \begin{bmatrix}
0 & 1 & 1 & 0 & 0 & 1 & 1\\
1 & 0 & 1 & 0 & 1 & 0 & 1\\
1 & 1 & 0 & 0 & 1 & 1 & 0\\
1 & 0 & 0 & 0 & 1 & 0 & 0\\
0 & 1 & -1 & 0 & 0 & 1 & -1\\
1 & 0 & -1 & 0 & 1 & 0 & -1\\
1 & -1 & 0 & 0 & 1 & -1 & 0
\end{bmatrix},
\end{equation} 
with order $\nonmoral v_1^0, \nonmoral v_2^0, \nonmoral v_3^0, \nonmoral v_4^0,
\nonmoral v_1^1, \nonmoral v_2^1,\nonmoral v_3^1$. It is expected that starting
from arbitrary proper mixed state (at least in a vertex, \ie{} $\nonmoral
\varrho(0)=\ketbra{\nonmoral v_i^j}{\nonmoral v_i^j}$), we should obtain
$\varrho(\infty)\coloneqq\lim_{t\to\infty}\nonmoral \varrho(t) =
\ketbra{\nonmoral v_4^0}{\nonmoral v_4^0}$. Oppositely, through numerical
simulation one can verify that
\begin{equation}
\nonmoral\varrho(\infty)= \frac{1}{16}\begin{bmatrix}
5 & 1 & 1 & 0 & -5 & -1 & -1\\
1 & 1 & 1 & 0 & -1 & -1 & -1\\
1 & 1 & 1 & 0 & -1 & -1 & -1\\
0 & 0 & 0 & 2 & 0 & 0 & 0\\
-5 & -1 & -1 & 0 & 5 & 1 & 1\\
-1 & -1 & -1 & 0 & 1 & 1 & 1\\
-1 & -1 & -1 & 0 & 1 & 1 & 1
\end{bmatrix},
\end{equation}
is a stationary state of the evolution, when starting from a state
$\nonmoral \varrho(0) = \ketbra{\nonmoral v_1^0}{\nonmoral v_1^0}$.

\begin{figure}
	\centering
	\subfloat[\label{fig:premature-localization-example-original}]{
		\begin{minipage}[b]{.35\linewidth}
			\centering
			\begin{tikzpicture}[node distance=2.5cm,thick]
			\tikzset{nodeStyle/.style = {circle,draw,minimum size=2.5em}}
			
			\node[nodeStyle] (A)  {$v_1$};
			\node[nodeStyle] (C) [above right of=A] {$v_3$};
			\node[nodeStyle] (D) [below right of=A] {$v_4$};
			\node[nodeStyle] (B) [above right of=D] {$v_2$};

			\tikzset{EdgeStyle/.style   = {thick,-triangle 45}}
			\draw[EdgeStyle] (A) to (D);
			%\draw[EdgeStyle] (B) to (D);
			\draw[thick] (A) to (C);
			\draw[thick] (A) to (B);
			\draw[thick] (B) to (C);
			\end{tikzpicture}
			
	\end{minipage}}
	\hspace{1cm}
	\subfloat[\label{fig:premature-localization-example-increased}]
	{\begin{tikzpicture}[node distance=2.5cm,thick]
			\tikzset{nodeStyle/.style = {circle,draw,minimum size=2.5em}}
			
			%zmieniać konsekwencję nazwy, to połączenia sam się ułożą
			\node[nodeStyle] (A0)  {$\nonmoral v_1^0$};
			\node[nodeStyle] (A1) [above of=A0]  {$\nonmoral v_1^1$};
			\node[nodeStyle] (C0) [above right of=A1] {$\nonmoral v_3^0$};
			\node[nodeStyle] (C1) [right of=C0] {$\nonmoral v_3^1$};
			\node[nodeStyle] (D) [below right of=A0] {$\nonmoral v_4^0$};
			
			\node[nodeStyle] (B1) [below right of=C1] {$\nonmoral v_2^1$};
			\node[nodeStyle] (B0) [below of=B1] {$\nonmoral v_2^0$};

			\tikzset{EdgeStyle/.style   = {thick,-triangle 45}};
			
			\draw[thick] (A0) to (B0);
			\draw[thick] (A0) to (C0);
			\draw[EdgeStyle] (A0) to (D);
			\draw[thick] (A0) to (B1);
			\draw[thick] (A0) to (C1);
			\draw[thick] (B0) to (C0);
			\draw[thick] (B0) to (A1);
			\draw[thick] (B0) to (C1);
			\draw[thick] (C0) to (A1);
			\draw[thick] (C0) to (B1);
			\draw[EdgeStyle] (A1) to (D);
			\draw[thick] (A1) to (B1);
			\draw[thick] (A1) to (C1);
			\draw[thick] (B1) to (C1);
			
			\end{tikzpicture}} 
		\caption{\label{fig:premature-localization-example}The original graph with the premature localization
		\ref{fig:premature-localization-example-original}, and  the graph $\nonmoral
		G$ based on the original graph
		\ref{fig:premature-localization-example-increased}. Starting from
		$\dyad{\nonmoral v_1^0}$ we obtain a stationary state which is
		not fully localised in the vertex $\nonmoral v_4^0$.}
\end{figure}
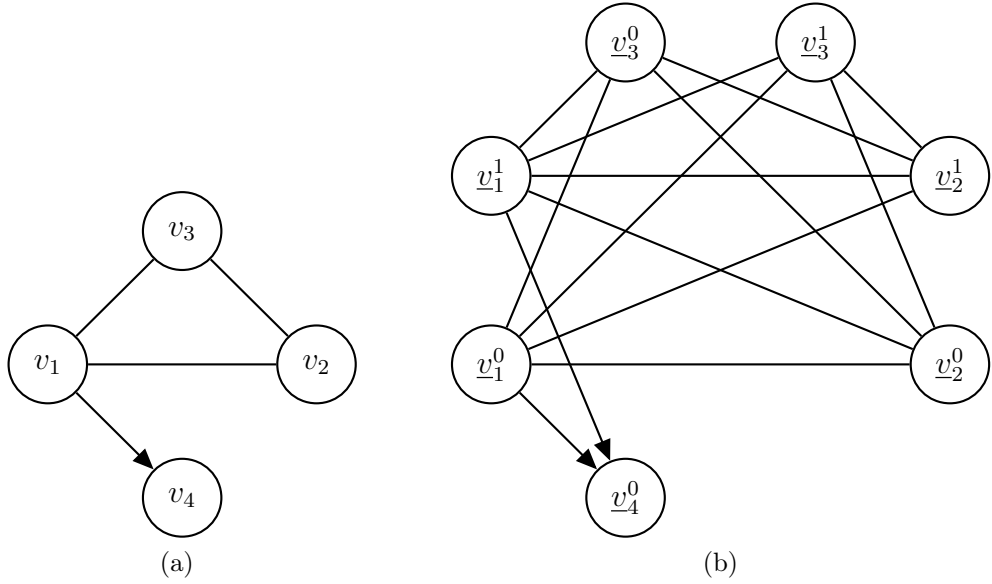

To correct this problem, we propose to add the Hamiltonian $\nonmoral
H_{\textrm{rot}}\in \C^{\nonmoral{V}\times\nonmoral{V}}$ which changes the state within the subspace corresponding to
single vertex, Let $H_{\textrm{rot},v}\in \C^{\nonmoral{V}_v\times \nonmoral{V}_v}$ be arbitrary Hamiltonian. Then $H_{\rm rot}$ will be a $\nonmoral{V}$-block diagonal operator
\begin{equation}
H_{\rm rot} = \left [\begin{array}{c|c|c|c}
H_{\textrm{rot},v_1} & & & \\\hline
& H_{\textrm{rot},v_2} & & \\\hline
&  & \ddots &  \\\hline
& & & H_{\textrm{rot},v_{|V|}}
\end{array}\right].
\end{equation} 
Now the evolution takes the form
\begin{equation}
\frac{\dd}{\dd t}\nonmoral \varrho =-\ii [\nonmoral H_{\textrm{rot}},\nonmoral  
\varrho] + 
\nonmoral L 
\nonmoral \varrho 
\nonmoral L^\dagger - \frac12 \{\nonmoral L^\dagger\nonmoral  L, \nonmoral \varrho\}.
\label{eq:GKSL-premature-localization}
\end{equation}
We call this Hamiltonian the \emph{locally rotating Hamiltonian}, since it acts
only locally on the subspaces corresponding to single vertex. We have verified
numerically that the appropriate Hamiltonian corrects the premature
localisation. In particular, if we choose the locally rotating Hamiltonian based
on
\begin{equation}
\bra{\nonmoral v^k}\nonmoral H_{\textrm{rot},v}\ket{\nonmoral v^l} = 
\begin{cases}
\ii, & l=k+1\mod \indeg (v_i), \\
-\ii, & l=k-1\mod \indeg (v_i), \\
0, & \textrm{otherwise,}
\end{cases}\label{eq:example_ham_rot}
\end{equation}
the evolution on the graph presented in Fig.
\ref{fig:premature-localization-example-increased} results in a unique
stationary state $\ketbra{\nonmoral v_4^0}{\nonmoral v_4^0}$.

Note that the quantum state of the form $\ketbra{\nonmoral v_4^0}{\nonmoral
	v_4^0}$ is a stationary state for any choice of locally-rotating Hamiltonian. This can be shown by
calculating $\frac{\dd \varrho}{\dd t}$ for the given state (we will elaborate
more on this problem in Chapter~\ref{sec:convergence-qsw}). However in case of a locally-rotating Hamiltonian being a zero matrix, we show that also other mixed states are proper stationary states of the evolution. 

To exemplify the impact of the Hamiltonians on the convergence, we analyzed the
spectrum of the evolution generator. Since there is always at least one
stationary state, it means that the multiplicity of 0 eigenvalue is at
least 1. Then the second smallest eigenvalue (ordered in absolute value)
gives us insight into the convergence property of the walk. Based on the results
 presented in Fig.~\ref{fig:loc-ham-analysis} we see that both random real- and
complex-valued rotating Hamiltonian with high probability produce high
separation between smallest eigenvalues. However, in the case of complex-valued
Hamiltonians far larger eigenvalues are obtained. Thus a random GUE Hamiltonian
may be a good candidate to be a block for the locally rotating Hamiltonian.

\begin{figure}\centering
\includegraphics{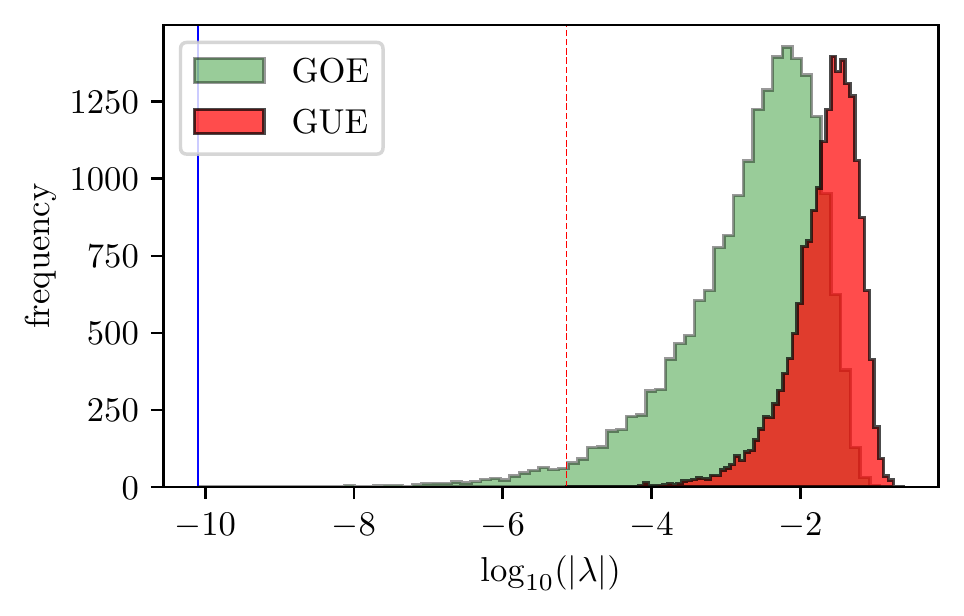}
\caption{\label{fig:loc-ham-analysis} The analysis of impact of the form of
	locally rotating Hamiltonians on the uniqueness of the stationary state for
	graph presented in Fig.~\ref{fig:premature-localization-example}. GOE (GUE)
	denotes evolution where each block of the Hamiltonian was sampled independently
	from GOE (GUE) distribution. The $\log_{10}(|\lambda|)$ the logarithm of the
	absolute value of the second smallest (in absolute value) eigenvalue of the
	evolution generator. For each GOE and GUE we sampled \num{20000} Hamiltonians.
	Vertical blue solid (red dashed) line denotes the minimum obtained value for GOE
	(GUE).}
\end{figure}

\subsection{Final model and correction cost} 
\label{sec:nonmoralizing-correction-cost}

Let us now sum up all considered corrections and define a nonmoralizing QSW. We
start with introducing a formal definitions of nonmoralizing operators. Let
$\vec G = (V, \vec E)$ be an arbitrary directed graph. Then
$\vec{\nonmoral{G}}=(\nonmoral{V}, \vec{\nonmoral{E}})$ will be a demoralizing
graph of $G$. Furthermore $G$ and $\nonmoral{G}=(\nonmoral{V}, \nonmoral{E})$
will be the underlying graphs of $\vec G$ and $\vec {\nonmoral G}$. Let $f$ will
be a natural homomorphism  from $\vec{\nonmoral{G}}$ to $\vec G$.

\begin{definition}
	Let $\vec G = (V, \vec E)$ be an arbitrary directed graph. Hermitian operator $\nonmoral{H}\in \C^{\nonmoral{V}\times \nonmoral{V}}$ is nonmoralizing Hamiltonian of QSW on $\vec G$ if for all $\nonmoral{v},\nonmoral{w}\in \nonmoral{V}$ we have
	\begin{equation}
	\{\nonmoral{v}, \nonmoral{w}\}\not\in \nonmoral E \iff \bra{\nonmoral{w}} \nonmoral H \ket{\nonmoral{v}} = 0.
	\end{equation}
\end{definition}
In other words, if vertices are not connected in the underlying graph of
$\nonmoral G$, then the corresponding  Hamiltonian element vanishes. Note that
amplitude transfer along the $\{v,w\}$ edge in $G$ corresponds to a transfer
between $|\nonmoral{V}_v|$-dimensional and $|\nonmoral{V}_w|$-dimensional space. Thus, for each edge we have $2|\nonmoral{V}_v|\cdot |\nonmoral{V}_w|$ real
degrees of freedom for each edge $e\in E$. For LQSW and GQSW we have two real
degree of freedom per edge. We say that a nonmoralizing Hamiltonian is standard if
$\{\nonmoral{v},\nonmoral{w}\}\in \nonmoral{E}$ implies $\bra{\nonmoral
	w}H\ket{\nonmoral{v}} =1$. Note that in $\nonmoral{V}$-block representation, if
blocks correspond to non-connected vertices in $V$, then the block is a zero
matrix.

\begin{definition}
	Let $\vec G = (V, \vec E)$ be arbitrary directed graph. An operator
$\nonmoral{L}\in \C^{\nonmoral{V}\times \nonmoral{V}}$ is nonmoralizing
Lindblad operator of QSW on $\vec G$ if for all $\nonmoral{v},\nonmoral{w}\in
\nonmoral{V}$ we have
	\begin{equation}
	(\nonmoral{v}, \nonmoral{w})\not\in \nonmoral {\vec E} \iff \bra{\nonmoral{w}} \nonmoral L \ket{\nonmoral{v}} = 0,
	\end{equation}
	and for any $\nonmoral v,\nonmoral w\in \nonmoral{V}$ satisfying $f(\nonmoral{v})\neq f(\nonmoral{w})$ we have
	\begin{equation}
	\bra v \bar L^\dagger \bar L\ket w=0.	
	\end{equation}
\end{definition}
Note that Lemma~\ref{theorem:nonmoral-lindblad-construction} provides a simple
construction method. The method requires a mapping $v\mapsto L_v$, where
each $L_v$ is a matrix with pairwise orthogonal columns. While the number of
degrees of freedom depends on the choice of $\vec G$, clearly the free-parameter
space is larger comparing to LQSW and GQSW. For LQSW we have only one real
degree of freedom, for the GQSW for the Lindblad operator collection $\mathcal
L_{\rm GQSW}$ we have $|\mathbb L_{\rm GQSW}|$ degrees of freedom. We say that
the nonmoralizing Lindblad operator is standard if it is constructed according
to Lemma~\ref{theorem:nonmoral-lindblad-construction} with $L_v$ being a Fourier
matrix. Similarly as is for nonmoralizing Hamiltonian, in $\nonmoral{V}$-block
representation if there is no $(v,w)$ arc, then the block corresponding to this
arc is a zero matrix.

\begin{definition}
	Let $\vec G = (V, \vec E)$ be an arbitrary directed graph. Hermitian operator $\nonmoral{H}_{\rm rot}\in \C^{\nonmoral{V}\times \nonmoral{V}}$ is locally rotating Hamiltonian of QSW on $\vec G$ if we have
	\begin{equation}
	f(\nonmoral{v}) \neq f(\nonmoral{w}) \implies \bra{\nonmoral{w}} \nonmoral H_{\rm rot} \ket{\nonmoral{v}} = 0
	\end{equation}
	for all $\nonmoral{v},\nonmoral{w}\in \nonmoral{V}$.
\end{definition}
Note that locally rotating Hamiltonian introduces
$2|\nonmoral{V}_v|^2-|\nonmoral{V}_v|$ real degrees of freedom for each vertex
$v\in V$. We say that locally rotating Hamiltonian is standard iff block are
defined as in Eq.~\eqref{eq:example_ham_rot}.

Finally we define a nonmoralizing global interaction QSW as follows.
\begin{definition}
Let $\vec G = (V, \vec E)$ be an arbitrary directed graphs. Let $\nonmoral H$ be
a nonmoralizing Hamiltonian, $\nonmoral {\mathbb L}$ be a collection of
nonmoralizing Lindblad operators and $\nonmoral H_{\rm rot}$ be a locally
rotating Hamiltonian. Then the evolution
\begin{equation}
\frac{\textrm{d}}{\textrm{d}t}\nonmoral \varrho =-\ii [\nonmoral H +\nonmoral{H}_{\rm rot} ,\nonmoral \varrho]
+\sum_{\nonmoral L \in
	\nonmoral {\mathcal L}}\left (\nonmoral L \nonmoral \varrho \nonmoral L^\dagger -
\frac12 \{\nonmoral L^\dagger\nonmoral  L, \nonmoral \varrho\} \right ), \label{eq:nonmoral-qsw-def}
\end{equation}
is called a nonmoralizing global interaction QSW (NGQSW).
\end{definition}
 We say that a NGQSW is standard if all operators defining the evolution are standard.
 
In the previous section we introduced a interpolating parameter $\omega$, which was
responsible for adjusting a relative strength of closed- and open-system
evolution. Note that for the model defined above locally rotating Hamiltonian
should be introduced only in the presence of open-system evolution. Hence an
interpolated  NGQSW with the parameter $\omega$ will be of the form
\begin{equation}
\frac{\textrm{d}}{\textrm{d}t}\nonmoral \varrho =-\ii [ (1-\omega )\nonmoral H
+\omega\nonmoral{H}_{\rm rot} ,\nonmoral \varrho] +\omega \sum_{\nonmoral L \in
	\nonmoral {\mathbb L}}\left (\nonmoral L \nonmoral \varrho \nonmoral L^\dagger -
\frac12 \{\nonmoral L^\dagger\nonmoral  L, \nonmoral \varrho\} \right ).\label{eq:nonmoral-qsw-def-weigthed}
\end{equation}

The presented correction scheme enlarges the Hilbert space used. We can bound
from above the dimension of the constructed space. If the original graphs
consists of $n$ vertices, with indegree $\indeg(v)$ for vertex $v$, then the
dimension of enlarged Hilbert space equals $\sum_{v\in
	V}\indeg(v)+|\{v:\textrm{indegree}(v)=0\}|=|\vec
E|+|\{v:\textrm{indegree}(v)=0\}|$. Note that in the worst case scenario of a
complete digraph, the dimension of the enlarged Hilbert space is
$\order{|V|^2}$. In the term of number of qubits the additional qubit number is
$O(\log n)$, hence in our opinion the correction scheme is efficient. Comparing
to other models \cite{taketani2016physical}, where for each vertex there is
corresponding qubit, size of our Hilbert space is still small.

%Furthermore, our corrections scheme preserves the geometry of the graph, since
%we do not create long range interactions: $\nonmoral v_i^k$ and $\nonmoral
%v_j^k$ are connected iff $v_i$ and $v_j$ are connected or $v_i=v_j$. Hence from
%the geometry point of view, our model may not be much harder to implement
%comparing to original global environment interaction case.

\section{Propagation of standard NGQSW}

In Chapter~\ref{sec:prop-qsw} we have shown that standard GQSW yields a
ballistic propagation. However, because of the spontaneous moralization, the
graph which was actually analysed was an undirected line with additional edges
between every two vertices as in  Fig.~\ref{fig:line}. Hence, in this section we
analyze NMQSW, to verify if the fast propagation recorded in moralizing quantum
stochastic walk for the global interaction case is due to the additional
amplitude transitions or due to the global interactions.

\subsection{Lack of symmetry on infinite path} \label{sec:nonmoralizing-symmetry}

\begin{figure}
	\centering 
	\subfloat[\label{fig:nonsymmetry-line} before symmetrization]{
		\includegraphics{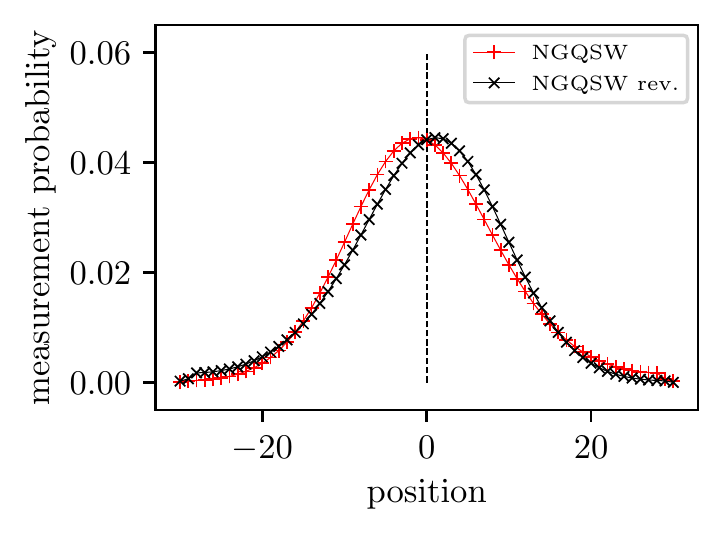}}
	\subfloat[\label{fig:symmetry-line} after symmetrization]{\includegraphics{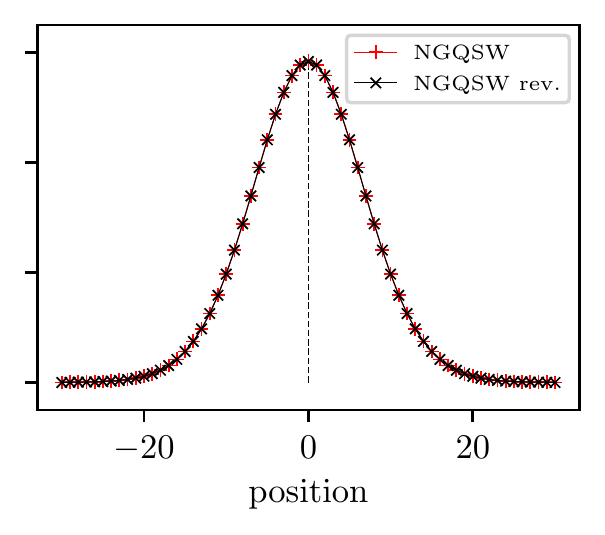}}
	
	\caption{ Probability distribution of the measurement in the standard basis and
	its reflection by the initial position for $t=100$ on the line segment of
	length 61, \protect\subref{fig:nonsymmetry-line}  before procedure application
	and \protect\subref{fig:symmetry-line} after procedure application described in
	Sec.~\ref{sec:nonmoralizing-symmetry}. In both cases we start in
	$\frac{1}{2}(\ketbra{\nonmoral v_0^0}{\nonmoral v_0^0}+\ketbra{\nonmoral
		v_0^1}{\nonmoral v_0^1})$. }
\end{figure}

Let us analyse the standard NGQSW on undirected path graph. A further undesired
effect  has occurred for some symmetric graphs, where we observe the lack of
symmetry of the probability distribution.  In Fig.~\ref{fig:nonsymmetry-line} we
present the probability distribution of the position measurement and the
reflection of distribution according to the initial position. We observe that
probability distribution is not symmetric with respect to the initial position.

Removing the locally rotating Hamiltonian does not remove the asymmetry and the
nonmoralizing Hamiltonian is a symmetric operator. Hence, the lack of symmetry
comes from asymmetry of nonmoralizing Lindblad operators. Since by construction
the columns of matrices $L_{v}$ need to be orthogonal, the matrices $L_v$ will
not be symmetric in ge`neral.

We propose to add another global interaction Lindblad operator, with different
$L_v$ matrices which will remove the side-effect. In the case of undirected
segment, we choose $\nonmoral {\mathbb L}=\{\nonmoral{L}_1,\nonmoral{L}_2\}$
defined through matrices $L_{v}^{(1)}$ and $L_{v}^{(2)}$ for the vertex $v$. For
$\nonmoral{L}_1$ we choose for each vertex
\begin{equation}
L_v ^{(1)}=\begin{bmatrix}
1 & 1 \\1 & -1
\end{bmatrix}
\end{equation}
and for $\nonmoral{L}_2$ we choose for each vertex 
\begin{equation}
L_v ^{(2)}=\begin{bmatrix}
1 & 1 \\ -1 & 1
\end{bmatrix}.
\end{equation}
The evolution takes the form
\begin{equation}
\frac{\textrm{d}}{\textrm{d}t}\nonmoral \varrho =-\ii [\nonmoral 
H_{\textrm{rot}},\nonmoral  \varrho] + 
\sum_{\nonmoral L \in \{\nonmoral L_1,\nonmoral L_2\}}\left (\nonmoral L 
\nonmoral \varrho 
\nonmoral L^\dagger - \frac12 \{\nonmoral L^\dagger\nonmoral  L, \nonmoral \varrho\} 
\right ).
\label{eq:GKSL-symmetric-segment}
\end{equation}
Note that the evolution operator is defined through two asymmetric, correcting each other Lindblad operators. Thus, the state has to be symmetric for the whole evolution. Numerical analysis confirms this conclusion (see Fig.~\ref{fig:symmetry-line}).

%It is worth noting that for some graphs the symmetrization procedure is not
%necessary. For example, if we evolve quantum stochastic walks on undirected
%perfect $k$-ary tree\todo{figure?}, with the initial state localized in the root,
%we can observe that the distribution is symmetric due to arbitrary graph automorphism. \todo{define automorphism}

\begin{figure}[t]
	
	\begin{tikzpicture}[node distance=1.5cm,thick,scale=0.9,transform shape]

	\node[nodeStyle] (A0)   {$-3$};
	\node[nodeStyle] (A1) [right of=A0] {$-2$};
	\node[nodeStyle] (A2) [right of=A1] {$-1$};
	\node[nodeStyle] (A3) [right of=A2] {$0$};
	\node[nodeStyle] (A4) [right of=A3] {$1$};
	\node[nodeStyle] (A5) [right of=A4] {$2$};
	\node[nodeStyle] (A6) [right of=A5] {$3$};
	\node[] (Eright)   [right=0.5cm of A6] {$\cdots$};
	\node[] (Eleft)   [left=0.5cm of A0] {$\cdots$};
	\draw (A6) -- (Eright) ;
	\draw (A0) -- (Eleft) ;
	
	\node (Am1) [left of=A0] {};
	\node (Am2) [left of=Am1] {};
	\node (Ap1) [right of=A6] {};
	\node (Ap2) [right of=Ap1] {};

	\draw (A0) -- (A1);
	\draw (A1) -- (A2);
	\draw (A2) -- (A3);
	\draw (A3) -- (A4);
	\draw (A4) -- (A5);
	\draw (A5) -- (A6);
	
	\draw[dashed] (A0) to [out=60,in=120,looseness=1] (A2);
	\draw[dashed] (A2) to [out=60,in=120,looseness=1] (A4);
	\draw[dashed] (A4) to [out=60,in=120,looseness=1] (A6);
	\draw[dashed] (A1) to [out=-60,in=-120,looseness=1] (A3);
	\draw[dashed] (A3) to [out=-60,in=-120,looseness=1] (A5);
	
	\draw[dashed,shorten >=1cm] (A1) to [out=-120,in=-60,looseness=1] (Am1);
	\draw[dashed,shorten >=1cm] (A0) to [out=120,in=60,looseness=1] (Am2);
	\draw[dashed,shorten >=1cm] (A6) to [out=60,in=120,looseness=1] (Ap2);
	\draw[dashed,shorten >=1cm] (A5) to [out=-60,in=-120,looseness=1] (Ap1);
	\end{tikzpicture}
	
	\caption{Line graph. Dashed lines correspond to additional amplitude 
		transitions every two nodes coming from the GKSL model.\label{fig:line}}
\end{figure}
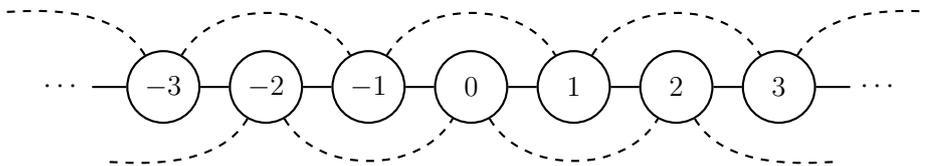

\subsection{Propagation analysis}
In this section we present a numerical analysis of interpolated standard NGQSW. To do so we analyse use the scaling exponent $\alpha$ of the variance $\mu_2(t)= \Theta(n^\alpha)$. In Section~\ref{sec:prop-qsw} we have analytically shown that the scaling exponent in time limit equals 2.

We consider the model based on the symmetrized quantum stochastic walk given by
Eq.~\eqref{eq:GKSL-symmetric-segment}. We use a standard nonmoralizing 
Hamiltonian $\nonmoral{H}$ and locally rotating Hamiltonian $\nonmoral H_{\rm
	rot}$. We choose $\nonmoral{\mathbb L}=\{L_1,L_2\}$ to be a collection of two Lindblad
operator defined in previous section. The evolution takes the form
\begin{equation}
\begin{split}
\frac{\textrm{d}}{\textrm{d}t}\nonmoral \varrho &=-\ii (1-\omega)[\nonmoral H
,\nonmoral \varrho] +\omega\left (\ii[\nonmoral H_{\textrm{rot}},\nonmoral 
\varrho] + \sum_{\nonmoral L \in \{\nonmoral L_1,\nonmoral L_2\}}\left
(\nonmoral L \nonmoral \varrho \nonmoral L^\dagger - \frac12 \{\nonmoral
L^\dagger\nonmoral  L, \nonmoral \varrho\} \right )\right )
\label{eq:weighted-line-GKSL-model}.
\end{split}
\end{equation}

\begin{figure}[t]
	\centering
	\includegraphics{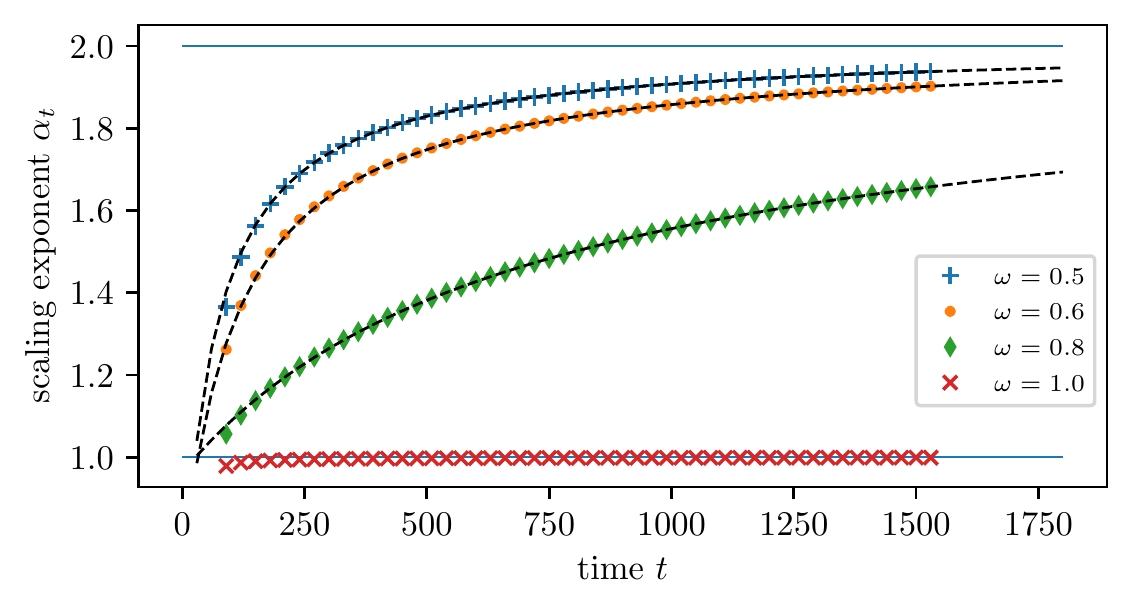}
	
	\caption{Slope of the local regression line for various values of $\omega\leq1$
	for interpolated standard NGQSW. The scaling plots were computed for timepoints
	$t=30,60,\ldots, 1590$ and we chose batch size $l=5$, according to the method
	presented in Sec.~\ref{sec:exponent-estimation}. Vertical solid lines represents
	values $\alpha=1,2$. Dashed lines for $\omega <1$ presents fit line according
	to model $f(t;p)\coloneqq p_1 - p_2\frac{1}{(t-p_3)^{p_4}}$. The fit was
	calculated using Levenger-Marquardt algorithm implemented in \texttt{LsqFit.jl}
	based on 30 estimations $\alpha_t$ for largest values of~$t$.}
\label{fig:hurst-on-line}
	
\end{figure}

To determinate the scaling exponent we used a method presented in
Sec.~\ref{sec:exponent-estimation}.  The scaling exponents were derived for
$\omega = 0.5, 0.6, 0.8, 1.0$. Results are shown in
Fig.~\ref{fig:hurst-on-line}. We can see that for $\omega=1$ the scaling
exponent converged to $1$. On the other hand for $\omega< 1$ the slope increases
in time and exceeds~$1$, which is the upper bound for classical propagation. To determine the limiting value of $\alpha_t$, we fitted pairs $(t,\alpha_t)$ to model function $f(t;p)\coloneqq p_1 - p_2\frac{1}{(t-p_3)^{p_4}}$. Note that $\lim_{t\to\infty} f(t;p) =p_1$ for positive values of $p_4$, which was assumed during the optimization. For value $\omega=0.5,0.6,0.8$ we obtained values $p_1=2.00,2.01, 2.04$, which shows that the propagation is in fact ballistic.

The results confirm that the fast propagation (ballistic or at least
super-diffusive) is the property of global interactions present in quantum
stochastic walks and not from the fact that the original model allows additional
transitions not according to the graph structure. It remains an open question,
whether the evolution is convergent to a subspace corresponding to sink
vertices. In the next chapter, we will show that this convergence occurs, and in
fact, it is stronger compared to LQSW.

% !TeX spellcheck = en_US
\chapter{Convergence of quantum stochastic walks}\label{sec:convergence-qsw}
\chaptermark{Convergence of QSW}

In chapter~\ref{sec:nonmoralizing-qsw} we proposed a new model of quantum stochastic
walk. We showed that in the case of absence of the Hamiltonian, which defines
the walk on the underlying graph, the structure of the directed graph is
perfectly preserved. However in order to introduce superdiffusive propagation,
one need to introduce the Hamiltonian. Thus the model in fact may present a
trade-off between the preservation of the arcs direction and the propagation
speed. Furthermore NGQSW may have different converging property comparing to
LQSW and GQSW.

In this chapter we investigate the directed-graph preservation for various QSW
models. We analyze the direction preservation for standard LQSW, standard
GQSW, and standard NGQSW. We achieve this by analyzing the limiting behavior of
the quantum walks.

Various convergence classes can be considered. We will say that the evolution is
convergent, if for arbitrary initial state $\varrho_0$ there exists stationary
state $\varrho_\infty$ such that $\varrho_0$ will converge to $\varrho_\infty$
according to the evolution. Otherwise we say that the model is not convergent.
Note that any evolution based on a time-independent GKSL master equation has at
least one stationary state \cite{spohn1977algebraic}. Hence, it is not possible
to define an evolution that does not converge for any initial state.

Evolution may have a special property called \emph{relaxing property}. It is
defined as an evolution which has a unique stationary state. Uniqueness of the
stationary states implies that the evolution is convergent for any choice of
initial state~\cite{schirmer2010stabilizing}.

All this properties can be checked without the simulation of the GKSL master
equation. Let us recall the GKSL master equation presented in
Eq.~\eqref{eq:gksl-master-equation}
\begin{equation}
\frac{\dd \varrho}{\dd t} = -\hbar [H, \varrho] + \sum_{L\in \mathbb L} \left(
\gamma_L L \varrho L^\dagger - \frac12 \gamma_L \{L ^\dagger L, \varrho\}
\right).
\end{equation}
Since we consider time-independent Hamiltonian and Lindblad operators, the
solution to the equation above takes the form
\begin{equation}
\vecc{\varrho_t} = \exp(St) \vecc{\varrho_0},
\end{equation}
where $S$ is an evolution generator of the form presented in
Eq.~\eqref{eq:diff-qsw-operator}. It can be shown that the eigenvalues $\lambda$
of $S$ satisfies $\Re\lambda \leq 0$. Since there exists at least one stationary
state, $S$ possess at least one zero eigenvalue. Furthermore, $S$ has single
zero eigenvalue iff the evolution represented by $S$ is relaxing
\cite{schirmer2010stabilizing}. Finally, if $S$ possess purely imaginary
eigenvalues, then one can suspect the existence of an initial state that result
in periodic or quasi-periodic evolution.

\section{Convergence of LQSW} \label{sec:local-convergence}

\paragraph{Strongly connected digraphs and single sink condensation graphs}

The local environment interaction case is relaxing for all connected undirected
graphs and arbitrary Hamiltonian \cite{liu2016continuous}. The proof presented
therein is based on the Spohn theorem~\cite{spohn1977algebraic}, which requires
the self-adjointess of the set of Lindblad operators $\mathbb L$, hence its
applications is limited to the undirected graphs case. Nevertheless we show,
that the result can be extended to strongly connected digraphs and weakly
connected graphs with single sink vertex. Our proofs utilize Conditions 2. and
3. from \cite{schirmer2010stabilizing}, recalled below as
Lemma~\ref{lemma:condition2} and \ref{lemma:condition3}. By the interior we mean set
of density matrices with full rank.

\begin{lemma}[\cite{schirmer2010stabilizing}] \label{lemma:condition2}
	Let $\mathcal H$ be a Hilbert space. If there is no proper subspace
$S\subsetneq\mathcal H$, that is invariant under all Lindblad generators
$L\in\mathbb L$  then the system has a unique steady state in the interior.
\end{lemma}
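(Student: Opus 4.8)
The plan is to break the statement into three pieces: existence of a stationary state, the claim that \emph{every} stationary state has full rank (hence lies in the interior), and uniqueness. Existence is already granted in the text above via \cite{spohn1977algebraic}, so the content is in the last two pieces, and uniqueness will turn out to be a formal consequence of the full-rank claim together with convexity of the set of density matrices. Throughout I would work with the generator in the form of Eq.~\eqref{eq:gksl-master-equation}.

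For the full-rank claim I would argue by contradiction. Suppose $\varrho_\infty$ is a stationary state with $\ker\varrho_\infty\neq\{0\}$, and let $P$ be the orthogonal projection onto $\ker\varrho_\infty$, so that $P\varrho_\infty=\varrho_\infty P=0$. Sandwiching the right-hand side of Eq.~\eqref{eq:gksl-master-equation} between $P$ and $P$ and using $\dd\varrho/\dd t=0$ at $\varrho_\infty$, the commutator term $P[H,\varrho_\infty]P$ and each anticommutator term $P\{L^\dagger L,\varrho_\infty\}P$ vanish, since each of their summands carries a factor $P\varrho_\infty$ or $\varrho_\infty P$. What remains is
\begin{equation}
0 \;=\; \sum_{L\in\mathbb L} P L\varrho_\infty L^\dagger P \;=\; \sum_{L\in\mathbb L} (PL)\,\varrho_\infty\,(PL)^\dagger .
\end{equation}
Every summand is positive semidefinite, so each vanishes individually: $(PL)\varrho_\infty(PL)^\dagger=0$ for every $L\in\mathbb L$. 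Writing $\varrho_\infty=\sum_j p_j\dyad{\phi_j}$ with all $p_j>0$ and $\{\ket{\phi_j}\}$ an orthonormal basis of $S:=(\ker\varrho_\infty)^\perp=\operatorname{supp}\varrho_\infty$, positivity forces $PL\ket{\phi_j}=0$ for all $j$, i.e. $L\ket{\phi_j}\in\ker P=S$. Hence $S$ is a proper subspace invariant under every $L\in\mathbb L$, contradicting the hypothesis; therefore $\varrho_\infty$ has full rank.

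For uniqueness I would exploit that the set of stationary density matrices is convex and, by the previous step, contains only full-rank elements. If $\varrho_1\neq\varrho_2$ were two stationary states, then by linearity of the generator the whole affine line $\varrho(s)=\varrho_1+s(\varrho_2-\varrho_1)$ consists of Hermitian, trace-one, stationary matrices, and $\varrho(s)\geq 0$ holds exactly on a closed interval $[a,b]\supseteq[0,1]$ which is bounded because $\varrho_2-\varrho_1$ is a nonzero traceless Hermitian matrix, hence indefinite, so positivity fails for $|s|$ large. At the endpoint $s=b$ the matrix $\varrho(b)$ is positive semidefinite but sits on the boundary of the positive cone (else $b$ would not be maximal), hence is rank-deficient; yet it is a stationary density matrix, contradicting the full-rank claim. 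So the stationary state is unique.

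The step I expect to be the real crux is extracting $(PL)\varrho_\infty(PL)^\dagger=0$ from the vanishing sum and then reading off invariance of $S$: this is exactly where positivity of the individual dissipative terms is indispensable, and where one must use that $\operatorname{supp}\varrho_\infty$ equals $S$ so that $PL$ annihilating $\varrho_\infty$ is the same as $PL$ annihilating all of $S$. The remaining parts are routine linear algebra and a convexity argument.
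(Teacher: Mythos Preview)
Your argument is correct and is essentially the standard proof of this fact. However, note that the paper does not supply its own proof of this lemma: it is quoted verbatim from \cite{schirmer2010stabilizing} and used as a black box to derive Theorems~\ref{theorem:local-strongly} and~\ref{theorem:local-one-sink}. So there is no in-paper proof to compare against; your write-up stands on its own as a self-contained justification of the cited result, and the key steps (sandwiching by the kernel projector to isolate the positive dissipative terms, then the affine-line/boundary argument for uniqueness) are exactly the ones used in the original reference.
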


\begin{lemma}[\cite{schirmer2010stabilizing}] \label{lemma:condition3}
	If there do not exists two orthogonal proper subspaces of $\mathcal H$ that are
simultaneously invariant under all Lindblad generators $L\in\mathbb L$, then
the system has unique fixed point, either at the boundary or in the interior.
\end{lemma}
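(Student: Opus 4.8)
The plan is to recast the statement as an existence-and-uniqueness claim for the stationary density operators of the semigroup $T_t = \exp(St)$, settle existence by a compactness argument, and settle uniqueness by contraposition from the hypothesis on invariant subspaces. Existence is the easy half: $T_t$ is completely positive and trace preserving, hence it maps the compact convex set of density operators on $\mathcal H$ into itself, so the Markov--Kakutani fixed-point theorem produces a stationary state --- this is exactly the input already recalled above through \cite{spohn1977algebraic}. It therefore remains to rule out the coexistence of two distinct stationary states under the hypothesis.

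The computational heart is the \emph{enclosure property}: if $\varrho$ is any stationary state with support projection $P$, then $L(\operatorname{ran} P) \subseteq \operatorname{ran} P$ for every $L \in \mathbb L$. To obtain it I would write the stationarity condition
\begin{equation}
-\ii[H,\varrho] + \sum_{L\in\mathbb L}\left(L\varrho L^\dagger - \tfrac{1}{2}\{L^\dagger L,\varrho\}\right) = 0
\end{equation}
and conjugate it by $P^\perp = \Id - P$ on both sides. Since $P^\perp \varrho = \varrho P^\perp = 0$, the commutator and anticommutator terms vanish and one is left with $\sum_{L} P^\perp L \varrho L^\dagger P^\perp = 0$; each summand equals $(P^\perp L \varrho^{1/2})(P^\perp L \varrho^{1/2})^\dagger$ and is therefore positive semidefinite, so each vanishes, forcing $P^\perp L P = 0$ --- the claimed invariance. (The same computation underlies the companion Lemma~\ref{lemma:condition2}.)

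To conclude, suppose $\varrho_1 \neq \varrho_2$ were both stationary. The stationary set is compact and convex, hence has at least two distinct extreme points, i.e.\ two distinct minimal stationary states. Now I would invoke the block structure of the fixed-point set of a GKSL semigroup: $\mathcal H$ decomposes into a transient summand plus blocks $\mathcal H_k^{(1)} \otimes \mathcal H_k^{(2)}$, each of which is an enclosure, and on which the minimal stationary states take the form $\dyad{\psi} \otimes \omega_k$ with $\psi \in \mathcal H_k^{(1)}$ while every Lindblad operator acts as $\Id_{\mathcal H_k^{(1)}} \otimes (\cdot)$. Non-uniqueness then forces either at least two blocks --- whose subspaces are orthogonal, proper, and (being enclosures) invariant under all $L$ --- or a single block with $\dim \mathcal H_k^{(1)} \ge 2$, in which case $\C u \otimes \mathcal H_k^{(2)}$ and $\C w \otimes \mathcal H_k^{(2)}$ for orthonormal $u,w \in \mathcal H_k^{(1)}$ are orthogonal, proper, and invariant under all $L$ (because $L$ is the identity on the first factor). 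Either alternative contradicts the hypothesis, so the stationary state is unique; it is faithful --- in the interior --- exactly when the transient summand is empty, and lies on the boundary otherwise.

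The main obstacle is the block structure theorem used in the final step: verifying that the Lindblad operators act trivially on the decoherence-free tensor factor and that the minimal stationary states organise into mutually orthogonal blocks is standard but genuinely nontrivial operator theory of quantum dynamical semigroups. Existence and the positivity manipulation behind the enclosure property, by contrast, are routine.
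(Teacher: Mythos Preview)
The paper does not prove this lemma at all: it is quoted verbatim as Condition~3 from \cite{schirmer2010stabilizing} and used as a black box in the proof of Theorem~\ref{theorem:local-one-sink}. There is therefore no ``paper's own proof'' to compare against.

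That said, your sketch is a correct route to the result. The enclosure computation is clean and exactly right, and the contrapositive reduction to the Baumgartner--Narnhofer block decomposition of the fixed-point set is a standard and valid way to finish. You are also honest about where the real weight lies: the structure theorem for the stationary manifold of a GKSL semigroup is itself a substantial result, arguably heavier than the statement you are trying to prove, so invoking it makes the argument somewhat non-self-contained. The original proof in \cite{schirmer2010stabilizing} proceeds more directly from properties of the generator and the Lindblad--Kossakowski canonical form, without importing the full block decomposition; if you want a leaner argument you could follow that reference instead.
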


\begin{theorem} \label{theorem:local-strongly}
Let $\vec G=(V,\vec E)$ be a strongly connected digraph and let $\mathbb L=
\{L_{vw}=c_{(v,w)}\ketbra{w}{v}\colon (v,w)\in \vec E,c_{(v,w)}>0\}$. Then the LQSW
with $\mathcal{L}$ is relaxing for arbitrary Hamiltonian $H$ with stationary
state in the interior.
\end{theorem}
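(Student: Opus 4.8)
The plan is to invoke Lemma~\ref{lemma:condition2}: it suffices to show that the Hilbert space $\C^V$ admits no nonzero proper subspace invariant under all Lindblad operators $L_{vw}=c_{(v,w)}\ketbra{w}{v}$, $(v,w)\in\vec E$. Indeed, Lemma~\ref{lemma:condition2} then gives a unique steady state lying in the interior, i.e. of full rank; uniqueness of the stationary state is exactly the relaxing property, and being in the interior is the ``stationary state in the interior'' assertion. Note that the Hamiltonian $H$ plays no role, which is consistent with Lemma~\ref{lemma:condition2} constraining only the Lindblad generators.

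So I would argue as follows. Suppose $S\subseteq\C^V$ is a nonzero subspace with $L_{vw}S\subseteq S$ for every arc $(v,w)\in\vec E$. Pick $\ket\psi=\sum_{v\in V}\alpha_v\ket v\in S\setminus\{0\}$ and a vertex $v_0$ with $\alpha_{v_0}\neq 0$. For any child $w\in C(v_0)$ we have $L_{v_0 w}\ket\psi=c_{(v_0,w)}\alpha_{v_0}\ket w$, which is a nonzero element of $S$ since $c_{(v_0,w)}>0$; hence $\ket w\in S$. More generally, whenever $\ket u\in S$ and $(u,u')\in\vec E$, then $\ket{u'}=c_{(u,u')}^{-1}L_{uu'}\ket u\in S$. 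Because $\vec G$ is strongly connected, every vertex of $V$ is reachable from $v_0$ by a directed path, so an induction on path length shows $\ket u\in S$ for all $u\in V$. Therefore $S=\C^V$, contradicting properness. Hence no nonzero proper invariant subspace exists, and Lemma~\ref{lemma:condition2} finishes the proof.

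I do not expect a genuine obstacle: the entire analytic content is packaged in Lemma~\ref{lemma:condition2}, and the remaining step is a routine ``walk along directed arcs'' induction. The only point requiring a little care is matching conventions --- $L_{vw}=c_{(v,w)}\ketbra{w}{v}$ sends $\ket v\mapsto\ket w$, so it transfers amplitude along the arc $(v,w)$ in the same orientation used in the notion of strong connectivity, and this is what makes the reachability argument line up. (The degenerate case $|V|=1$ is trivial: then $\vec E=\emptyset$, $H$ acts as a scalar, and the evolution is constant with the unique full-rank state $\varrho=1$.)
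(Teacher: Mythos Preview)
Your proof is correct and follows essentially the same route as the paper: invoke Lemma~\ref{lemma:condition2}, take a nonzero vector in an invariant subspace, and compose Lindblad operators along directed paths (using strong connectivity) to force every basis vector $\ket u$ into $S$. If anything, your version is slightly cleaner, since you explicitly choose the starting vertex $v_0$ with $\alpha_{v_0}\neq 0$, whereas the paper's phrasing ``let $v\in V$ be arbitrary'' glosses over that point.
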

\begin{proof}
Let $\mathcal H$ be a Hilbert space  spanned by $\{\ket v: v\in V\}$ and $S\neq
\{0\}$ be arbitrary subspace of $\mathcal H$ invariant under $\mathbb L$. We
will show that $S=\mathcal H$, which by Lemma~\ref{lemma:condition2} will end
the proof.
	
Let $v\in V$ be arbitrary vertex. Let $\ket \psi\in S$ be a nonzero vector.
Since $G$ is strongly connected, for each $w$ there is a directed path
$P_{vw}=(v_1=v,v_2,\dots,v_k,w)$. Then
\begin{equation}
c_w \ket w =L_{v_kw} L_{v_{k-1}v_k}\cdots L_{v_2,v_1}\ket{\psi}
\end{equation}
for some $c_w\in\C_{\neq 0}$. Hence we have $\ket{w}\in S$ for all $w\in V$.
Hence $S\supseteq\operatorname{span}(\{\ket{w} : w\in V \})=\mathcal H$ and by
this, $S=\mathcal H$.
\end{proof}

Let $\vec  G=(V,\vec E)$ be an arbitrary weekly connected digraph. Let
$V'\subset V$ be such a set that for each $v,w\in V'$ there exist paths from $v$
to $w$ and from $w$ to $v$. Let any superset of $V'$ does not have this
property. Then we call induced subgraph $\vec G'=( V',\vec E')$ a strongly
connected component.

We define a \emph{condensation graph} $\vec G^c=(V^c,\vec E^c)$ as follows. Let
$V^c=\{V_1,\dots,V_k\}$ be a partition of $V$ such that each $V_i$ constructs a
strongly connected component of $\vec G$. Furthermore let $(V_i,V_j)\in \vec E^c
$ iff there exist $v_i\in V_i$ and $v_j\in V_j$ such that $(v_i,v_j)\in \vec E$.
Then we call $\vec G^c$ a condensation graph of $\vec G$. Note that $\vec G^c$
is a directed acyclic graph. Let us denote $L(\vec G)$ the collection of leaves
(sinks) of a digraph $\vec G$.

\begin{theorem} \label{theorem:local-one-sink}
	Let $\vec G=(V,\vec E)$ be a weakly connected digraph such that
	$|L(\vec G^c)|=1$ and let $\mathbb L= \{c_{(v,w)}\ketbra{w}{v}\colon
	(v,w)\in\vec  E\}$ for some $c_{(v,w)}>0$. Then the LQSW with $\mathbb{L}$ is relaxing for arbitrary Hamiltonian $H$.
\end{theorem}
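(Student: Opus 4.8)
The plan is to invoke Lemma~\ref{lemma:condition3}: it suffices to show that there do not exist two orthogonal nonzero subspaces of the Hilbert space $\mathcal H=\spann\{\ket v:v\in V\}$ that are simultaneously invariant under all Lindblad operators $L_{vw}=c_{(v,w)}\ketbra{w}{v}$. Observe that this criterion does not involve $H$ at all, which is precisely why the statement can hold for an arbitrary Hamiltonian; and unlike Theorem~\ref{theorem:local-strongly}, here the unique stationary state typically sits on the boundary (for a path $1\to 2\to 3$ it is $\ketbra{3}{3}$), so Lemma~\ref{lemma:condition3} rather than Lemma~\ref{lemma:condition2} is the tool to use.

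First I would record the structural consequence of the hypothesis $|L(\vec G^c)|=1$. Since $\vec G$ is weakly connected, $\vec G^c$ is a weakly connected directed acyclic graph; in a finite acyclic graph every maximal directed path terminates at a sink, so having a \emph{unique} sink $V_{\rm sink}\in V^c$ forces every vertex of $\vec G^c$ to admit a directed path to $V_{\rm sink}$. Lifting such a path to $\vec G$ — travelling freely inside each strongly connected component (each of which is strongly connected by definition) and crossing between components along an arc of $\vec G$ witnessing the corresponding arc of $\vec G^c$ — shows that from every vertex $v\in V$ there is a directed path in $\vec G$ ending at some vertex of $V_{\rm sink}$.

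Next I would prove that every nonzero subspace $S$ invariant under $\mathbb L$ contains $\spann\{\ket w:w\in V_{\rm sink}\}$. Given $\ket\psi\in S\setminus\{0\}$, pick $v$ with $\braket{v}{\psi}\neq 0$ and a directed path $v=v_0\to v_1\to\dots\to v_k$ with $v_k\in V_{\rm sink}$. Applying $L_{v_{k-1}v_k}\cdots L_{v_0v_1}$ to $\ket\psi$ yields a nonzero scalar multiple of $\ket{v_k}$: the first factor gives $c_{(v_0,v_1)}\braket{v_0}{\psi}\ket{v_1}\neq 0$, and each subsequent factor $L_{v_{i-1}v_i}$ acts on a nonzero multiple of $\ket{v_{i-1}}$ and returns a nonzero multiple of $\ket{v_i}$. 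Hence $\ket{v_k}\in S$ by invariance. If $|V_{\rm sink}|=1$ this already gives the claim; otherwise $V_{\rm sink}$ is strongly connected and the argument from the proof of Theorem~\ref{theorem:local-strongly}, applied within $V_{\rm sink}$, propagates membership to $\ket w\in S$ for every $w\in V_{\rm sink}$.

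Finally, two orthogonal nonzero invariant subspaces would both contain the nonzero subspace $\spann\{\ket w:w\in V_{\rm sink}\}$, a contradiction; so no such pair exists and Lemma~\ref{lemma:condition3} yields a unique fixed point, i.e.\ the LQSW with $\mathbb L$ is relaxing. The only genuinely delicate point I anticipate is the path-lifting step combined with verifying that no scalar collapses to zero along the composition of Lindblad operators; the rest is routine bookkeeping, and the reduction of the single-sink-condensation case to the strongly connected case of Theorem~\ref{theorem:local-strongly} is the natural way to avoid repeating work.
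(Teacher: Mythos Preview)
Your proof is correct and follows essentially the same approach as the paper: both invoke Lemma~\ref{lemma:condition3}, use the single-sink hypothesis to guarantee a directed path from any vertex into the sink component, and then compose Lindblad operators along that path to place a common basis vector $\ket w$ (with $w\in V_{\rm sink}$) into every nonzero invariant subspace, contradicting orthogonality. Your version is more explicit about the path-lifting through the condensation graph and about why the scalars do not vanish, whereas the paper compresses this into ``similarly to the method in Theorem~\ref{theorem:local-strongly}''; conceptually the arguments coincide.
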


\begin{proof}

Suppose $S_1\neq \{0\},S_2\neq \{0\}$ are two subspaces of $\mathcal H$ and let
$\ket{\psi_1}\in S_1,\ket{\psi_2}\in S_2$. Let $w$ be an element from the sink
vertex of $\vec G^c$. Similarly to method in
Theorem~\ref{theorem:local-strongly} one can show that there exist
$L_1^1,\dots,L_k^1\in \mathbb L$ and $L_1^2,\dots,L_{k'}^2\in \mathbb L$ such
that $c_w^1\ket{w} = L_k^1L_{k-1}^1\dots L_1^1 \ket{\psi_1}$ and $c_w^2\ket{w} =
L_{k'}^2L_{k'-1}^2\dots L_1^2 \ket{\psi_2}$ for some $c_w^1,c_w^2\in \C_{\neq
	0}$. Hence $S_1$ and $S_2$ are not orthogonal and by
Lemma~\ref{lemma:condition3} the theorem holds.
\end{proof}
Several interesting things can be pointed. First note that since strongly
connected digraphs satisfies the assumptions of
Theorem~\ref{theorem:local-one-sink}, the Theorem~\ref{theorem:local-strongly}
can be considered as a special case of the former one. However
Theorem~\ref{theorem:local-strongly} provides that the stationary state has full
rank. This is not achievable in general for directed graphs that are not
strongly connected and such with $L(\vec G^c)=\{V_i\}$ being a singleton. Let us
consider an evolution with no Hamiltonian. Then the evolution is a CTRW, and the
stationary state is spanned by vertices from the sink from the condensation
graph.

Note that the Hamiltonian has no influence on the for of the convergence.
However it does have an impact on a form of stationary state. Let us consider a
directed path graph $P_2 = (\{1,2\}, \{(1,2)\})$. Let us take a Lindblad
operator collection $\mathbb L = \{\ketbra{2}{1}\}$. If there is no Hamiltonian,
then
$\left [\begin{smallmatrix}
0 & 0 \\ 0 & 1
\end{smallmatrix}
\right ]$
is the unique stationary state. However if we apply Hamiltonian $H = \frac{1}{2}\left [\begin{smallmatrix}
0 & 1 \\ 1 & 0
\end{smallmatrix}\right ]$, then the stationary state changes into
$\frac{1}{3}\left [\begin{smallmatrix}
1 & -\ii \\ \ii & 2
\end{smallmatrix} \right ]$.

\paragraph{Multi-sink condensation graphs}
\begin{figure}
	\centering
%	\subfloat[][\label{fig:directed-k12}directed $K_{1,2}$ graph]{
		\begin{tikzpicture}[node distance=2cm]
		\tikzset{nodeStyle/.style = {circle,draw,minimum size=2.5em}}
		
		\node[nodeStyle] (A)  {1};
		\node[] (C) [below of=A] {};
		\node[nodeStyle] (B) [left of=C] {2};
		\node[nodeStyle] (D) [right of=C] {3};
		
		\tikzset{EdgeStyle/.style   = {->,>=latex}}
		\draw[EdgeStyle] (A) to (B);
		\draw[EdgeStyle] (A) to (D);
		
		\end{tikzpicture}
%	}\hspace{1cm}
%	\subfloat[][\label{fig:directed-k13}directed $K_{1,3}$ graph]{
%		\begin{tikzpicture}[node distance=2cm]
%		\tikzset{nodeStyle/.style = {circle,draw,minimum size=2.5em}}
%		
%		\node[nodeStyle] (A)  {1};
%		\node[nodeStyle] (C) [below of=A] {3};
%		\node[nodeStyle] (B) [left of=C] {2};
%		\node[nodeStyle] (D) [right of=C] {4};
%		
%		\tikzset{EdgeStyle/.style   = {->,>=latex}}
%		\draw[EdgeStyle] (A) to (B);
%		\draw[EdgeStyle] (A) to (C);
%		\draw[EdgeStyle] (A) to (D);
%		
%		\end{tikzpicture}}
	\caption{\label{fig:local-graphs-multisink} An oriented $K_{1,2}$.}
\end{figure}
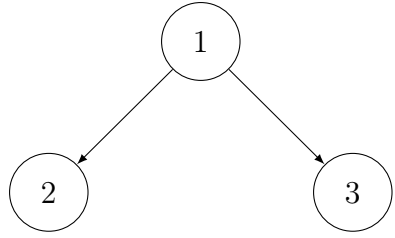

The remaining class of weakly connected graphs is those for which $|L(\vec
G^c)|>1$. Note that in this case one cannot expect relaxing property for a
general Hamiltonian. In particular, let as consider a purely classical CTRW on an
oriented $K_{1,2}$ as in Fig.~\ref{fig:local-graphs-multisink}. Note that both
$\ketbra{2}$ and $\ketbra{3}$ are a proper stationary states.

Let $\vec G=(V, \vec E)$ be digraph. Let us consider a CTRW with Lindblad
operators $\mathcal L=\{\ketbra{w}{v}: (v,w)\in \vec E\}$ and Hamiltonian $H$
being an adjacency matrix of its underlying graph $G$. Let us consider an
interpolated LQSW with interpolating parameter $\omega\in[0,1]$. Note that for
$\omega=0$ we obtain a Schr\"odinger equation, thus if $H\neq 0$ we have a
non-convergent evolution. For $\omega = 1$ we have a CTRW, hence states
localized in different strongly connected components which are sinks will
converge to two different stationary states. However the evolution will be in
general convergent.

%For $\omega \in (0,1)$ the behavior of LQSW on a graph with multi-sink
%condensation graph depends  on its topology. Let $S$ be evolution generator and
%$P_\omega(x) = \det(S-x\Id)$ be its characteristic polynomial. If $S$ is of
%order $n^2\times n^2$, then the polynomial takes the form
%\begin{equation}
%P_\omega(x) = \sum_{i=0}^{n^2}P^{(i)}(\omega)x^i.
%\end{equation}
%Note that $P^{(0)}\equiv 0$, since $S$ has at least one zero-eigenvalue. Furthermore, $P^{(i)}(\omega)$ is a polynomial in $\omega$.
%
%The evolution with fixed $\omega$ is relaxing iff $P^{(1)}(\omega) \neq 0$. Let us
%first consider the directed $K_{1,2}$ presented in Fig.~\ref{fig:directed-k12}. Then
%we have
%\begin{equation}
%\begin{split}
%P^{(1)}(\omega) &= -\frac{425}{16} \omega^8 + 135 \omega^7 -\frac{591}{2}\omega^6 + 356 \omega^5 -249 \omega^4 + 96 \omega^3 - 16 \omega^2.
%\end{split}
%\end{equation}
%The only real root is 0, which means $P^{(1)}(\omega) \neq 0$ for any $\omega \in
%(0,1)$. This means, that for $\omega \in (0,1)$ the evolution is relaxing. At
%the same time for directed graph $K_{1,3}$ graph presented in
%Fig.~\ref{fig:directed-k13}. we have $P^{(1)}  \equiv 0$. Hence, the
%evolution is no longer relaxing. The results above were generated with
%\texttt{SymPy}, a Python library for symbolic
%computations~\cite{meurer2017sympy}. This confirms that in the case of multi-sink condensation graph, the convergence property depends on the chosen topology, even for intermediate values of scaling parameter $\omega$.

We conclude our analysis with numerical investigation. We have analyzed a standard
LQSW on various random graph models in context of its convergence properties. We
have chosen only graphs with multi-sink condensation graph. The results of the
numerical experiment can be found in Fig.~\ref{fig:convergence-local}. The red
bar presents an amount of graphs which yield a relaxing evolution. Black bar
presents an amount of graphs which does not yield a relaxing evolution, but were
still convergent. Finally blue bar (not present in given figure) yielded the
graph for which evolution generator had purely imaginary eigenvalues.

\begin{figure}[t]
	\captionsetup[subfigure]{oneside,margin={0.8cm,0cm}}
	\subfloat[$\mathcal G_n^{\rm BA}(1)$ \label{fig:convergence-local-ba1}] {\includegraphics{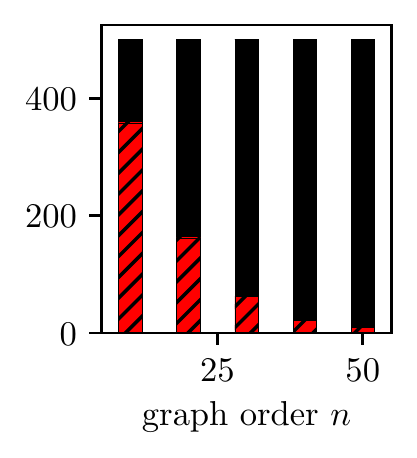}}
	\captionsetup[subfigure]{oneside,margin={0.2cm,0cm}}
	\subfloat[$\mathcal G_n^{\rm BA}(3)$ \label{fig:convergence-local-ba3}] {\includegraphics{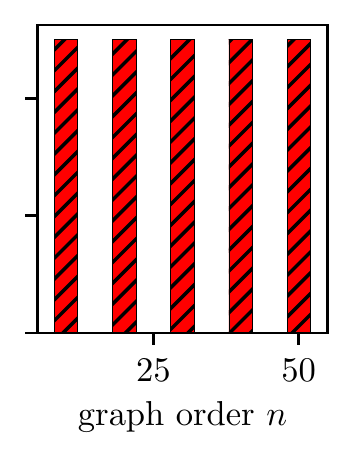}}
	\captionsetup[subfigure]{oneside,margin={-2.2cm,0cm}}
	\subfloat[$\vec {\mathcal G}_n^{\rm ER}(0.4)$ \label{fig:convergence-local-er}] {\includegraphics{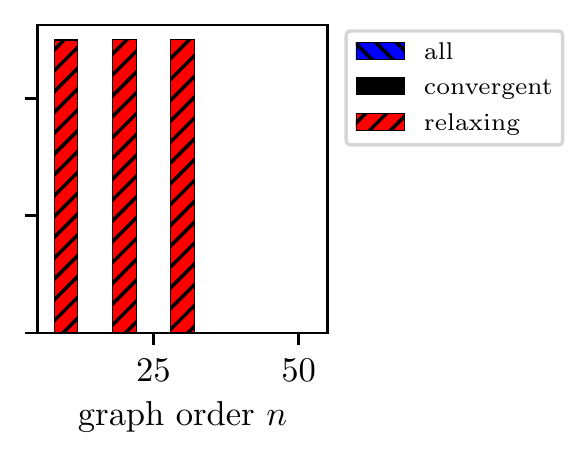}}
	\caption{\label{fig:convergence-local}The convergence statistics of LQSW for
		various directed random graph models. For each $\randgn[BA](m_0)$
		models a sample $G$ was chosen, then its random orientation $\vec G$ was
		chosen. For randomly  directed \ER graphs only weakly connected were considered.
		We have chose at random 500 graphs for each model and $n$ s.t. the corresponding condensation graph has at least two sinks.
		Since it was extremely difficult to find such for \ER graphs, numerical results
		are limited to at most $30$ nodes. The analysis was done through analysis of
		eigenvalues of evolution generator $S$. We considered eigenvalue $\lambda$ to
		be 0 if $|\lambda|<10^{-10}$. We assumed $\lambda$ to be purely imaginary if
		$|\real \lambda|<10^{-10}$ and $|\imaginary \lambda| > 10^{-10}$.}
\end{figure}

We haven't found a single graph for which the evolution operator had a purely
imaginary eigenvalues, which would suggest quasi-periodic evolution. For randomly oriented $\randgn[BA]$ models and $\randdgn[ER](0.4)$ all graphs
yield relaxing property. Contrary, for randomly oriented trees the number of
graphs yielding relaxing evolution decreases with the order of the graph.

\section{Convergence of GQSW} \label{sec:global}

\subsubsection{Undirected graphs} We start this section with providing the general
result for the commuting operators.

\begin{proposition}\label{theorem:commuting}
	Let us consider GKSL master equation in the case of commuting Lindbladian
operators $\mathbb L$ and Hamiltonian $H$. Then the evolution operation is of
the form
	\begin{equation}
	(U\kron \bar U)D_{S}(U\kron \bar U)^\dagger,
	\end{equation}
	where
	\begin{equation}
	D_{S}= -\ii (D_H\kron \Id -\Id\kron D_H) + \sum_{L\in\mathcal L}\left (D_L\kron
\bar D_L - \frac{1}{2} \bar D_L D_L\kron \Id-\frac{1}{2} \Id \kron D_L \bar 
D_L\right )  \label{eq:diagonal-commuting}
	\end{equation}
	is a diagonal matrix. Here we assume that $U$ is a unitary operator and $D_H,D_L$ are
	diagonal operators such that $H=UD_HU^\dagger$ and $L=UD_LU^\dagger$.
\end{proposition}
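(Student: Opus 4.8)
The plan is to simultaneously diagonalize all the operators appearing in the GKSL equation, then observe that the superoperator $S$ in Eq.~\eqref{eq:diff-qsw-operator} is conjugated into a diagonal superoperator by the tensor-square of the diagonalizing unitary. Since $H$ and all $L\in\mathbb L$ commute with each other, and $H$ is Hermitian (hence normal) while we may assume each $L$ is normal (or at least that the whole commuting family is simultaneously unitarily triangularizable, which for a commuting family containing a Hermitian element with distinct-enough structure actually forces simultaneous diagonalizability), there is a single unitary $U$ with $H=UD_HU^\dagger$ and $L=UD_LU^\dagger$ for every $L$, where $D_H$, $D_L$ are diagonal. This is the one genuine input and also the main obstacle: strictly speaking a commuting family of matrices need not be simultaneously diagonalizable unless the members are individually diagonalizable, so one should either restrict to normal Lindblad operators or state the simultaneous-diagonalizability as a hypothesis; I would flag this and proceed under the natural reading that $U$ diagonalizing all operators exists (as the proposition's phrasing already presupposes).

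\textbf{Key steps.} First I would substitute $H=UD_HU^\dagger$ and each $L=UD_LU^\dagger$ directly into the formula
\begin{equation}
S = -\ii \left(H \kron \Id - \Id \kron \bar H \right) +
\sum_{L \in \mathbb L } \left ( L \kron \bar L - \frac{1}{2} L^\dagger L \kron \Id - \frac{1}{2} \Id
\kron L^\top \bar L \right ),
\end{equation}
using the elementary identities $\overline{U D_H U^\dagger} = \bar U \bar D_H \bar U^\dagger$ and, for a tensor product, $(AB)\kron(CD) = (A\kron C)(B\kron D)$. Each term then picks up a left factor $U\kron\bar U$ and a right factor $(U\kron\bar U)^\dagger = U^\dagger \kron \bar U^\dagger$, e.g.
\begin{equation}
H\kron\Id = (U D_H U^\dagger)\kron(\Id) = (U\kron\bar U)\,(D_H\kron\Id)\,(U^\dagger\kron\bar U^\dagger),
\end{equation}
where I used $\Id = \bar U \bar U^\dagger$ to insert the conjugate factors. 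The same manipulation applied to $\Id\kron\bar H$, to $L\kron\bar L$, and to $L^\dagger L\kron\Id$ and $\Id\kron L^\top\bar L$ (noting $L^\dagger L = U D_H^{\!*}D_L U^\dagger$ type expressions diagonalize as $\bar D_L D_L$, and $L^\top\bar L$ similarly) shows that $S = (U\kron\bar U)\,D_S\,(U\kron\bar U)^\dagger$ with $D_S$ given by Eq.~\eqref{eq:diagonal-commuting}.

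\textbf{Finishing.} Finally I would observe that $D_S$ is diagonal because it is a linear combination of tensor products of diagonal matrices: $D_H\kron\Id$, $\Id\kron D_H$, $D_L\kron\bar D_L$, $\bar D_L D_L\kron\Id$, and $\Id\kron D_L\bar D_L$ are each diagonal (tensor products and products of diagonal matrices are diagonal), so their sum is diagonal. This both proves the claimed factorized form and makes the eigenvalues of $S$ explicit, which is presumably the point: the diagonal entries of $D_S$ indexed by a pair of eigenvalue-labels $(i,j)$ are $-\ii((D_H)_{ii}-(D_H)_{jj}) + \sum_{L}\big((D_L)_{ii}\overline{(D_L)_{jj}} - \tfrac12|(D_L)_{ii}|^2 - \tfrac12|(D_L)_{jj}|^2\big)$. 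The only subtlety worth a sentence in the write-up is confirming that the conjugate/transpose operations interact correctly with $U$ — i.e. $\bar H = \bar U D_{\bar H}\bar U^\dagger$ with $D_{\bar H}=\bar D_H$, and likewise $L^\top\bar L = \bar U(\bar D_L D_L)\bar U^\dagger$ — which is routine once one writes $U^\top = \bar U^\dagger$ in the chosen orthonormal basis.
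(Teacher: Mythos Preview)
Your proposal is correct and follows exactly the approach the paper takes; the paper's own proof is a one-sentence sketch (``use the common eigendecomposition and the result follows''), and you have simply written out the substitution and tensor-product factoring that the paper leaves implicit. Your caveat that simultaneous \emph{unitary} diagonalizability requires the Lindblad operators to be normal (not merely commuting) is a legitimate technical point that the paper glosses over, though in the application of interest (Theorem~\ref{theorem:undirected}, where $L=H$ is Hermitian) it is automatically satisfied.
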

\begin{proof}
	The proof comes directly from the eigendecompositions of the operators. Since
all operators commute, it is possible to find common eigendecomposition with
the same unitary matrix. By this we can easily find the result.
\end{proof}
The standard GQSW on undirected graphs is a special case of the evolution
described in the theorem above, where we choose only single Lindbladian operator
$L=H$.

\begin{theorem} \label{theorem:undirected}
	The stationary states of the standard interpolated GQSW are precisely the
stationary states of the CTQW.  The evolution is convergent for
$\omega\in(0,1]$, but not relaxing iff the system size is greater than one.
\end{theorem}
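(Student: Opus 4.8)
The plan is to apply Proposition~\ref{theorem:commuting}: for the standard interpolated GQSW on an undirected graph both the Hamiltonian and the single Lindblad operator equal the real symmetric adjacency matrix $\adj$, so the operators commute and are simultaneously diagonalizable. Substituting $H=\adj$, $L=\adj$ into the generator of Eq.~\eqref{eq:diff-qsw-operator}, weighting the coherent and dissipative parts by $1-\omega$ and $\omega$ respectively as in Eq.~\eqref{eq:stochastic-transition}, and writing $B\coloneqq \adj\kron\Id-\Id\kron\adj$, the identity $B^2=\adj^2\kron\Id-2\,\adj\kron\adj+\Id\kron\adj^2$ collapses the dissipative part to $-\frac{1}{2}B^2$, so that
\begin{equation}
S_\omega = -\ii(1-\omega)\,B - \frac{\omega}{2}\,B^2 .
\end{equation}
Since $\adj$ is Hermitian, $B$ is Hermitian with real spectrum $\{\lambda_i(\adj)-\lambda_j(\adj)\}$, and $S_\omega$, being a polynomial in $B$, is normal and hence diagonalizable, with eigenvalues $\mu(b)=-\ii(1-\omega)b-\frac{\omega}{2}b^2$ as $b$ ranges over the spectrum of $B$.

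First I would identify $\ker S_\omega$. For every $\omega\in[0,1]$ we have $\mu(b)=0$ iff $b=0$: if $b\neq 0$ and $\omega>0$ then $\operatorname{Re}\mu(b)=-\frac{\omega}{2}b^2<0$, while if $\omega=0$ then $\mu(b)=-\ii b\neq 0$. Hence $\ker S_\omega=\ker B$ independently of $\omega$; in particular $\ker S_\omega=\ker S_0$, the kernel of the CTQW generator $-\ii B$. Because the stationary states of a GKSL master equation are precisely the density operators in the kernel of its generator, the stationary states of the standard interpolated GQSW coincide with those of the CTQW, proving the first assertion.

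For convergence with $\omega\in(0,1]$: every nonzero eigenvalue of $S_\omega$ satisfies $\operatorname{Re}\mu(b)=-\frac{\omega}{2}b^2<0$, so $S_\omega$ has no nonzero purely imaginary eigenvalue; being diagonalizable, $\exp(S_\omega t)\to P$ as $t\to\infty$, where $P$ is the spectral projection onto $\ker S_\omega$. Thus $\varrho(t)=\exp(S_\omega t)\varrho(0)\to P\varrho(0)$, and the limit is again a density operator (a limit of states of a trace-preserving flow) lying in $\ker S_\omega$, i.e.\ a stationary state, so the evolution is convergent. (For $\omega=0$ the generator $-\ii B$ has purely imaginary spectrum, the quasi-periodic Schr\"odinger dynamics, which is not convergent.) Finally, the walk is relaxing iff $\dim\ker S_\omega=\dim\ker B=1$; writing $m_1,m_2,\dots$ for the multiplicities of the eigenvalues of $\adj$, we have $\dim\ker B=\sum_k m_k^2\ge\sum_k m_k=n$, so for $n>1$ the kernel has dimension at least $2$ and the evolution is not relaxing, whereas for $n=1$ the single available density operator makes it trivially relaxing; hence it fails to be relaxing exactly when the system size exceeds one.

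The whole argument hinges on the one algebraic observation that $S_\omega$ is a quadratic polynomial in the Hermitian operator $B$; accordingly the only step that needs real care is the book-keeping that rewrites the dissipative part of Eq.~\eqref{eq:diff-qsw-operator} as $-\frac{1}{2}B^2$ (a minor accompanying point being that $P\varrho(0)$ is genuinely a state, which follows from closedness of the set of density matrices under the flow). Beyond this I do not expect a genuine obstacle.
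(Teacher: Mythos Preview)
Your proposal is correct and follows essentially the same route as the paper: both invoke Proposition~\ref{theorem:commuting}, compute the eigenvalues of $S_\omega$ as $-\ii(1-\omega)(d_i-d_j)-\frac{\omega}{2}(d_i-d_j)^2$, and read off the kernel, the absence of purely imaginary eigenvalues for $\omega>0$, and the multiplicity of the zero eigenvalue. Your packaging via the single Hermitian operator $B=A\kron\Id-\Id\kron A$ with $S_\omega=-\ii(1-\omega)B-\tfrac{\omega}{2}B^2$ is a slightly cleaner bookkeeping device than the paper's direct diagonal computation, and your count $\dim\ker B=\sum_k m_k^2\geq n$ makes explicit what the paper states as ``the set of stationary states of CTQW has at least $n$ elements'', but the substance of the two arguments is identical.
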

\begin{proof} By the model construction  we can choose $\mathbb  L=\{\sqrt\omega  A\}$ and $H=(1-\omega )A$ and apply the Theorem~\ref{theorem:commuting}. The 
diagonal matrix takes the form
	\begin{equation}
	D_{S_{\omega}}=-\ii (1-\omega) (D\kron \Id -\Id\kron D) + \omega \left (D \kron D -
\frac{1}{2} D^2\kron \Id-\frac{1}{2} \Id \kron D ^2\right).
	\end{equation}
	Here we assume $A=UDU^\dagger$. Since $A$ is hermitian, operator $D$ is a
	real-valued diagonal matrix. The diagonal entries of operator $D_{S_{\omega}}$
	are eigenvalues which characterize the evolution. Let $d_i \coloneqq \bra i D \ket i$. Then we have
	\begin{equation}
	\begin{split}
	\bra {i,j} D_{S_{\omega}}\ket {i,j} &=  -\ii(1-\omega) (d_i- d_j)+\omega \left
( d_id_j  - \frac{1}{2}d_i^2 -\frac{1}{2}d_j^2 \right) \\ &= -\ii(1-\omega)
(d_i- d_j)- \frac\omega 2 (d_i - d_j)^2.
	\end{split}
	\end{equation}

	Here $-\ii (1-\omega ) (d_i-d_j)$ corresponds to
purely Hamiltonian evolution, and hence to CTQW. Since 0-eigenvalues of
$S_\omega$ correspond to 0-eigenvalues of Hamiltonian part of the system,
which furthermore correspond to the stationary states of the CTQW, we obtained
the first part of the theorem.
	
	Note that $S_\omega$ does not have purely imaginary eigenvalues for $\omega>0$.
Hence, we have that the evolution is convergent. Since the set of stationary
states of CTQW for graph with $n$ vertices has at least $n$ elements, we obtain
that the presented evolution is never relaxing.
\end{proof}
The result from the above theorem implies that we can generate the
stationary states of the CTQW by adding proper Lindbladian operator.

\subsubsection{Directed graphs} In this section we provide an example of standard
GQSW on a directed graph for which the evolution is no longer convergent.

\begin{theorem}\label{theorem:global-directed}
	There exist an infinite number of digraphs $\vec G$ with corresponding initial
states $\varrho_0$ for which the interpolated standard GQSW is non-convergent
for an arbitrary value of the smoothing parameter $\omega$.
\end{theorem}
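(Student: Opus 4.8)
The plan is to prove the statement by exhibiting an explicit family of digraphs together with initial states whose trajectories are quasi-periodic. Non-convergence is cleanest to certify at the level of the evolution generator. Specializing Eq.~\eqref{eq:integrated-qsw}--\eqref{eq:diff-qsw-operator} to the standard choice $H=A(G)$, $L=\vec A:=A(\vec G)$ with interpolation weight $\omega$, the interpolated standard GQSW on $\vec G$ is generated by
\[
S_\omega = -\ii(1-\omega)\bigl(A(G)\kron\Id-\Id\kron A(G)\bigr)+\omega\,\vec A\kron\vec A-\tfrac{\omega}{2}\,\vec A^\dagger\vec A\kron\Id-\tfrac{\omega}{2}\,\Id\kron\vec A^\dagger\vec A ,
\]
where $G$ is the underlying graph. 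As recalled in Chapter~\ref{sec:convergence-qsw}, every eigenvalue of $S_\omega$ has nonpositive real part and $0$ is always an eigenvalue, so the evolution fails to converge as soon as $S_\omega$ has a \emph{nonzero} purely imaginary eigenvalue $\ii\theta$: given such an eigenmode $\vecc X$, one obtains a legitimate initial state by adding a small real multiple of a Hermitian part of $X$ to a stationary state, and its image under $\exp(S_\omega t)$ never settles. So the core of the proof is to build a digraph whose generator carries such an eigenvalue \emph{uniformly} in $\omega\in[0,1]$.

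First I would fix a small ``gadget'' digraph $\vec G_0$ on a handful of vertices, compute the spectrum of $S_\omega$ symbolically in $\omega$, and look for a two-dimensional $S_\omega$-invariant subspace of coherence operators on which the dissipative part acts as a pure rotation while the Hamiltonian part supplies the frequency; the extreme cases $\omega=0$ (quasi-periodic CTQW) and $\omega=1$ (dissipator only) are handled in passing. One has to respect a genuine constraint: if $\vec G_0$ is an orientation of a simple graph then $A(G)=\vec A+\vec A^\dagger$, and a short computation shows that any eigenvector $\ket u$ of $A(G)$ lying in $\ker\vec A$ satisfies $\langle u|\vec A^\dagger|u\rangle=\overline{\langle u|\vec A|u\rangle}=0$, so its eigenvalue is $0$ and it lies in $\ker\vec A\cap\ker\vec A^\dagger$; hence a naive ``decoherence-free'' subspace carrying an oscillation cannot exist, and $\vec G_0$ must instead produce a coherence that is only \emph{partially} dark, or else contain a $2$-cycle (for which $A(G)\neq\vec A+\vec A^\dagger$ and the obstruction disappears). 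Once a suitable $\vec G_0$ is found and the purely imaginary eigenvalue verified for all $\omega$, infinitely many examples follow at once: the digraph $\vec G_k$ consisting of $\vec G_0$ together with $k$ additional isolated vertices has a generator whose spectrum contains that of the $\vec G_0$-sector, so the offending eigenvalue persists and $\vec G_k$ is non-convergent for every $\omega$; letting $k$ range over $\ZZ_{\ge 0}$ gives the infinite family (a parametric gadget-replacement family would serve equally well).

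The main obstacle is precisely this gadget-hunting step. The ``no-jump'' effective Hamiltonian $A(G)-\tfrac{\ii}{2}\vec A^\dagger\vec A$ generically damps every mode -- a single coherence $\ketbra{a}{b}$ built from eigenvectors of $\vec A^\dagger\vec A$ with $L\ket b=0$ picks up the strictly negative real part $-\tfrac12\gamma_a$ whenever $\gamma_a\neq0$ -- so the digraph has to be engineered so that some coherence undergoes an exact cancellation between loss and gain that survives for all mixing parameters $\omega$; the role of the $2$-cycle (or of a non-eigenvector coherence) is exactly to make this possible without violating the $\ker\vec A\cap\ker\vec A^\dagger$ obstruction above. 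After the gadget is in place, the remaining work -- writing down $\varrho_0$, checking positivity and unit trace, and confirming that $\varrho(t)$ does not converge -- is routine.
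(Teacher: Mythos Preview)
Your proposal is a strategy outline, not a proof: you explicitly leave the ``gadget-hunting step'' undone, and everything else you write (adding isolated vertices, building $\varrho_0$ from an imaginary eigenmode) is contingent on having that gadget in hand. As written, nothing establishes the existence of even one digraph with the required purely imaginary eigenvalue for all $\omega$, so the theorem is not proved.

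The paper's construction bypasses the search you describe entirely. It takes a circulant digraph of order $4k$ (the $4k$-cycle with an additional directed arc every two vertices). Because both $H=A(G)$ and $L=\vec A$ are circulant, they commute, and Proposition~\ref{theorem:commuting} diagonalizes $S_\omega$ explicitly in the Fourier basis: the $(i,j)$ diagonal entry is $-\ii(1-\omega)(d_i^H-d_j^H)+\omega\bigl(d_i^L\bar d_j^L-\tfrac12|d_i^L|^2-\tfrac12|d_j^L|^2\bigr)$. The dissipative part vanishes precisely when $d_i^L=d_j^L$, so one only needs two Fourier modes with the \emph{same} $\vec A$-eigenvalue but \emph{different} $A(G)$-eigenvalues. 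For $i=k$, $j=2k$ this yields the eigenvalue $2(1-\omega)\ii$, and the pure state $\tfrac12(\ket{C_k}+\ket{C_{2k}})(\bra{C_k}+\bra{C_{2k}})$ then evolves periodically for every $\omega\in[0,1)$ (the case $\omega=0$ is the standard CTQW quasi-periodicity). No delicate gain/loss cancellation is engineered; the mechanism is simply a degeneracy of $L$ broken by $H$ within a common eigenbasis. The infinite family is intrinsic ($k>1$), not obtained by padding with isolated vertices, and each example is in fact strongly connected.

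Your discussion of the $\ker\vec A\cap\ker\vec A^\dagger$ obstruction for orientations is correct but beside the point: the paper's digraphs are not orientations (the cycle edges are present in both directions), and the relevant coherences are not built from $\ker\vec A$ at all but from a degenerate $L$-eigenspace.
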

\begin{proof}
	Case $\omega =0$ comes directly from the properties of continuous-time quantum evolution. Let us consider $\omega>0$.
	\begin{figure}
		\centering
		\begin{tikzpicture}
		
		\def \n {8}
		\def \radius {2.5cm}
		\def \margin {8} % margin in angles, depends on the radius
		
		\foreach \s in {7,...,0}
		{
			\node[draw, circle] (\s) at ({360/\n * (\s - 1)}:\radius) {$\s$};
			\draw[] ({360/\n * (\s - 1)+\margin}:\radius)
			arc ({360/\n * (\s - 1)+\margin}:{360/\n * (\s)-\margin}:\radius);
		}
		\draw [<-,>=latex] (0)  edge (2);
		\draw [<-,>=latex] (1)  edge (3);
		\draw [<-,>=latex] (2)  edge (4);
		\draw [<-,>=latex] (3)  edge (5);
		\draw [<-,>=latex] (4)  edge (6);
		\draw [<-,>=latex] (5)  edge (7);
		\draw [<-,>=latex] (6)  edge (0);
		\draw [<-,>=latex] (7)  edge (1);

		\end{tikzpicture} 
		\caption{An example of strongly connected directed graph, for which the global interaction case evolution is not convergent.}\label{fig:example-small-directed}
	\end{figure}
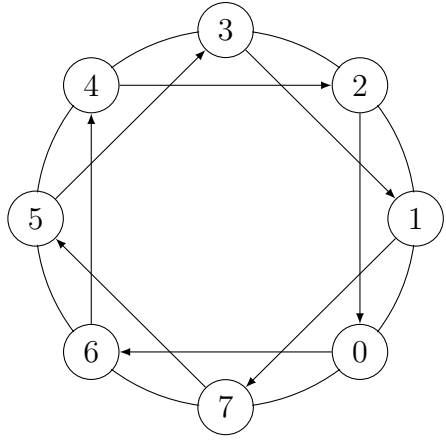
	We choose a circulant graph of size $4k$ for $k>1$ and with extra jump every
two vertices. An example for $k=2$ is presented in
Fig.~\ref{fig:example-small-directed}. The graph and its underlying graph are
circulant matrices. Therefore, we can use Eq.~\eqref{eq:diagonal-commuting} to
find out that there exists an eigenvalue of the form $2(1-\omega)\ii$ with
corresponding eigenvector $\ket{C_{k}}\overline {\ket{C_{2k}}}$, where $\ket
{C_i}$  is the $i$-th eigenvector of a circulant matrix of the
form~\cite{gray2006toeplitz}
	\begin{equation}
	\ket {C_i}= \frac{1}{2\sqrt k}\sum_{j=0}^{4k} \exp \left(\frac{2\pi \ii 
		ij}{4k-1}\right) \ket i.
	\end{equation} 
	The initial state takes the form
	\begin{equation}
	\begin{split}
	\varrho(0)  &= \frac{1}{2}(\ket{C_k}+\ket{C_{2k}})(\bra{C_k}+\bra{C_{2k}}),
	\end{split}
	\end{equation}
	and the $\varrho(t)$ takes the form
	\begin{equation}
	\varrho(t) = \frac{1}{2}(\ketbra{C_k}{C_k} + \ketbra{C_{2k}}{C_{2k}} + 
	e^{2\ii(1-\omega)t}\ketbra{C_k}{C_{2k}}+ e^{-2\ii(1-\omega)t}\ketbra{C_{2k}}{C_k}).
	\end{equation}
	Since $\varrho(t)$ is periodic with period $\frac{\pi}{(1-\omega)}$, we obtain 
	the result.
\end{proof}
Note, that for different $t$ we can obtain different state in the sense of
possible measurement output. For example we have $\bra 0 \varrho(0)\ket 0 =
\frac{1}{2k}$, but at the same time we have $\bra 0 \varrho
(\frac{\pi}{2(1-\omega)}) \ket 0 = 0$.

Circulant graphs provide an infinite collection of directed graphs for which the
convergence does not hold. Note that the example used in the proof of
Theorem~\ref{theorem:global-directed} is a strongly connected directed graph.
This shows that the convergence in the local interaction case does not imply
the convergence in the global interaction case.

We finalize our analysis of standard GQSW with numerical investigations of
random digraphs. We have sampled $500$ weakly connected directed graphs for each
model and order of the graph. The statistics are presented in
Fig.~\ref{fig:convergence-global}. As in LQSW, none of sample graphs had a
purely imaginary eigenvalue, although based on the theorem above we know such
graphs exist. Almost all $\randdgn[BA](3)$ and  $\randdgn[ER](0.4)$ graphs
yielded relaxing evolution. For $\randdgn[BA](1)$ the number of relaxing GQSW
decreased with the graph order, as it was in LQSW model. Thus GQSW and LQSW have
statistically  similar convergence properties.

\begin{figure}
	\captionsetup[subfigure]{oneside,margin={0.8cm,0cm}}
	\subfloat[$\vec {\mathcal G}_n^{\rm BA}(1)$ \label{fig:convergence-global-ba1}] {\includegraphics[]{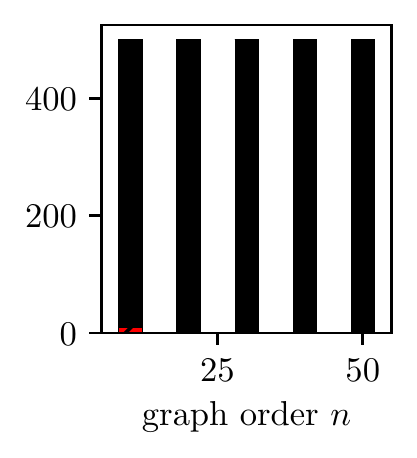}}
	\captionsetup[subfigure]{oneside,margin={0.2cm,0cm}}
	\subfloat[$\vec {\mathcal G}_n^{\rm BA}(3)$ \label{fig:convergence-global-ba3}] {\includegraphics[]{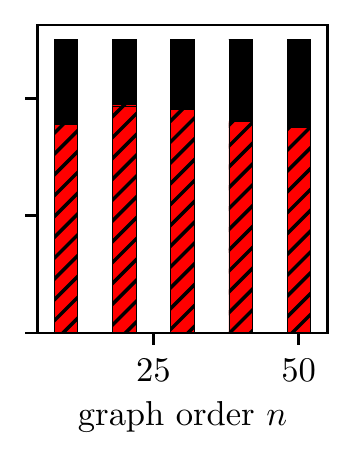}}
	\captionsetup[subfigure]{oneside,margin={-2.2cm,0cm}}
	\subfloat[$\vec {\mathcal G}_n^{\rm ER}(0.4)$ \label{fig:convergence-global-er}] {\includegraphics[]{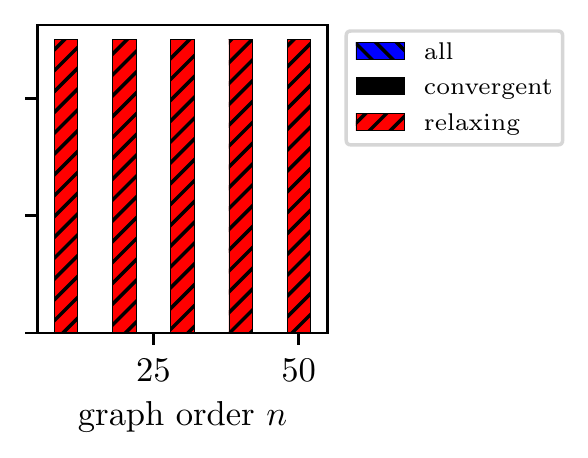}}
	\caption{\label{fig:convergence-global}The convergence statistics of GQSW for
	various directed random graph models. Only weakly connected \ER graphs were considered. We applied the same conditions for eigenvalues as in Fig.~\ref{fig:convergence-local}.}
\end{figure}

\section{Convergence of standard NGQSW}\label{sec:ngqsw-convergence}

Contrary to previous results, NGQSW is nonconvergent evolution even for an undirected graphs. 
\begin{theorem} \label{th:global-nonmoralizing}
	Let us consider the standard NGQSW. Then there exists a digraph
$\vec G$ and initial state $\varrho(0)$ for which the evolution is periodic in
time for an arbitrary value of the smoothing parameter $\omega\in(0,1]$.
\end{theorem}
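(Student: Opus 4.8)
The plan is to produce an explicit undirected graph $G$ — I will use the star $K_{1,3}$, i.e.\ a centre $c$ of degree $3$ with three leaves, regarded as a symmetric digraph $\vec G$ — on which the standard NGQSW has an invariant subspace of operators where the dissipative part of the generator vanishes identically, so that the restricted dynamics is pure unitary conjugation by the locally rotating Hamiltonian; then I choose an initial state supported on that subspace on which this conjugation is a genuine oscillation between two eigenspaces, hence periodic. Any graph with a vertex of degree $\geq 3$ would work, but $K_{1,3}$ keeps the computation minimal.

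The core is a structural observation valid for an arbitrary undirected $G$. Let $Z=\bigoplus_{v\in V}Z_v\subseteq\C^{\nonmoral V}$, where $Z_v$ is the zero-sum subspace of $\C^{\nonmoral V_v}$ spanned by the differences $\ket{\nonmoral v^k}-\ket{\nonmoral v^{k'}}$, and let $P_Z$ be the orthogonal projection onto $Z$. From the demoralization construction in Section~\ref{sec:nonmoralizing-correction-cost} I would check three facts, each an immediate calculation. First, $Z\subseteq\ker\nonmoral L$ for the standard nonmoralizing Lindblad operator: by Lemma~\ref{theorem:nonmoral-lindblad-construction}, $\nonmoral L\ket{\nonmoral v^k}=\sum_{u\in C(v)}\sum_l\bra{\nonmoral u^l}L_u\ket{v}\,\ket{\nonmoral u^l}$, and the right-hand side does not depend on $k$, so $\nonmoral L(\ket{\nonmoral v^k}-\ket{\nonmoral v^{k'}})=0$. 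Second, $\nonmoral H|_Z=0$ for the standard nonmoralizing Hamiltonian, since likewise $\nonmoral H\ket{\nonmoral v^k}=\sum_{\{v,w\}\in E}\sum_l\ket{\nonmoral w^l}$, again independent of $k$. Third, the standard locally rotating Hamiltonian $\nonmoral H_{\rm rot}$ is block-diagonal with each block $\nonmoral H_{\rm rot,v}$ a Hermitian circulant matrix on $\C^{\nonmoral V_v}$, hence diagonalised by the Fourier modes of $\nonmoral V_v$; the mode $j=0$ spans the orthogonal complement of $Z_v$ inside $\C^{\nonmoral V_v}$, so $\nonmoral H_{\rm rot,v}$ preserves $Z_v$. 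Consequently $\nonmoral{\tilde H}:=(1-\omega)\nonmoral H+\omega\nonmoral H_{\rm rot}$ maps $Z$ into $Z$ and coincides with $\omega\nonmoral H_{\rm rot}$ on $Z$. Feeding these into the vectorised generator $S$ of Eq.~\eqref{eq:diff-qsw-operator} — using that $\nonmoral L$ is real, so $\bar{\nonmoral L}=\nonmoral L$ and $\nonmoral L^{\top}=\nonmoral L^{\dagger}$, and that $\nonmoral L P_Z=0$ — one gets that the operator space $\mathcal B(Z)=\{P_Z X P_Z\}$ is $S$-invariant and that $S$ restricted to it is simply $X\mapsto-\ii\omega[\nonmoral H_{\rm rot}|_Z,X]$; the whole dissipator annihilates $\mathcal B(Z)$ because $\nonmoral L P_Z=0$.

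For $\vec G=K_{1,3}$ one has $\nonmoral V_c=\{\nonmoral c^0,\nonmoral c^1,\nonmoral c^2\}$, the circulant block $\nonmoral H_{\rm rot,c}$ has eigenvalues $-2\sin(2\pi j/3)$ for $j=0,1,2$, and its restriction to the two-dimensional $Z_c$ has the two distinct nonzero eigenvalues $\pm\sqrt3$, with eigenvectors the nontrivial Fourier modes $\ket{f_1},\ket{f_2}$. I then take
\begin{equation}
\varrho(0)=\tfrac12(\ket{f_1}+\ket{f_2})(\bra{f_1}+\bra{f_2})=\tfrac16\bigl(2\ket{\nonmoral c^0}-\ket{\nonmoral c^1}-\ket{\nonmoral c^2}\bigr)\bigl(2\bra{\nonmoral c^0}-\bra{\nonmoral c^1}-\bra{\nonmoral c^2}\bigr),
\end{equation}
a bona fide rank-one density matrix supported in $\mathcal B(Z_c)\subseteq\mathcal B(Z)$, so that the solution is
\begin{equation}
\varrho(t)=\tfrac12\Bigl(\dyad{f_1}+\dyad{f_2}+e^{2\ii\sqrt3\,\omega t}\ketbra{f_1}{f_2}+e^{-2\ii\sqrt3\,\omega t}\ketbra{f_2}{f_1}\Bigr),
\end{equation}
which is periodic with period $\pi/(\sqrt3\,\omega)$ for every $\omega\in(0,1]$. (For $\omega=0$ one would instead have $\nonmoral{\tilde H}|_Z=\nonmoral H|_Z=0$ and $\varrho(0)$ stationary, consistent with the theorem's hypothesis $\omega>0$.)

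I do not expect a genuine obstacle. The delicate points are purely bookkeeping: verifying the three facts above directly from the (somewhat involved) demoralization definitions, and checking that $\mathcal B(Z)$ is truly $S$-invariant rather than merely a subspace on which the dissipator happens to vanish pointwise — this needs $\nonmoral{\tilde H}(Z)\subseteq Z$, which comes from $\nonmoral H|_Z=0$ together with the block structure of $\nonmoral H_{\rm rot}$. The only place a hypothesis really bites is the degree condition: for a degree-$2$ vertex $Z_v$ is one-dimensional and $\nonmoral H_{\rm rot,v}|_{Z_v}=0$, so one obtains only stationary states on $Z$; hence $K_{1,2}$ or a path would not suffice, and a vertex of degree at least $3$ is needed.
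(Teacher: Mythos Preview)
Your proof is correct and, in fact, cleaner than the paper's. Both proofs rest on the same mechanism: find a subspace of $\C^{\nonmoral V}$ that is annihilated by both the standard nonmoralizing Lindblad operator and the standard nonmoralizing Hamiltonian, yet on which the locally rotating Hamiltonian acts nontrivially; then any density operator supported there evolves by pure unitary conjugation with $\omega\nonmoral H_{\rm rot}$ and hence oscillates. The paper carries this out on a six-vertex graph (a star $K_{1,5}$ with an extra edge $\{v_4,v_5\}$), writes down two explicit eigenvectors of $\nonmoral H_{\rm rot,v_0}$ supported on the five copies of the centre, and asserts by direct computation that the resulting four tensor products are eigenvectors of the full generator with eigenvalues $0,\pm 2\ii\sqrt{3}\omega,0$, giving the same period $\pi/(\sqrt{3}\omega)$ you obtain.

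What you do differently is to isolate the structural reason: the zero-sum subspace $Z=\bigoplus_v Z_v$ lies in $\ker\nonmoral L\cap\ker\nonmoral H$ for the standard operators (because the columns $\nonmoral L\ket{\nonmoral v^k}$ and $\nonmoral H\ket{\nonmoral v^k}$ do not depend on $k$), and is preserved by the block-circulant $\nonmoral H_{\rm rot}$. This makes the invariance of $\mathcal B(Z)$ and the vanishing of the dissipator there transparent, and shows at once that any vertex of degree $\geq 3$ suffices. You then specialise to the minimal such graph $K_{1,3}$, where the $3\times 3$ rotating block has eigenvalues $0,\pm\sqrt{3}$, exactly reproducing the paper's spectrum. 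The paper's proof is a verification on a particular (unnecessarily large) example; yours explains the phenomenon and could be applied verbatim to the paper's graph as well, since its eigenvectors $\ket{\varphi},\ket{\psi}$ visibly lie in $Z_{v_0}$. Your remark that degree $2$ fails (because $\nonmoral H_{\rm rot,v}\vert_{Z_v}=0$ there) is also a nice sharpening absent from the paper.
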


\begin{proof}
	\begin{figure}
		\centering
		\begin{tikzpicture}
		\node[draw,circle] (0) {$v_0$};
		\def \n {5}
		\def \radius {2cm}
		\def \margin {8} % margin in angles, depends on the radius
		
		\def \s {1}
		\node[draw, circle] (\s) at ({360/\n * (\s - 1)}:\radius) {$v_\s$};
		\def \s {2}
		\node[draw, circle] (\s) at ({360/\n * (\s - 1)}:\radius) {$v_\s$};
		\def \s {3}
		\node[draw, circle] (\s) at ({360/\n * (\s - 1)}:\radius) {$v_3$};
		\def \s {4}
		\node[draw, circle] (\s) at ({360/\n * (\s - 1)}:\radius) {$v_4$};
		\def \s {5}
		\node[draw, circle] (\s) at ({360/\n * (\s - 1)}:\radius) {$v_\s$};

		\draw [-,>=latex] (0)  edge (1);
		\draw [-,>=latex] (0)  edge (2);
		\draw [-,>=latex] (0)  edge (3);
		\draw [-,>=latex] (0)  edge (4);
		\draw [-,>=latex] (0)  edge (5);
		\draw [-,>=latex] (5)  edge (4);
		\end{tikzpicture} 
		\caption{An example of graph for which non-moralizing global interaction evolution is not convergent.}\label{fig:example-big-directed}
	\end{figure}
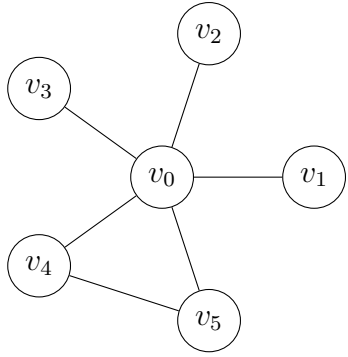
	
Let us consider a graph presented in Fig.~\ref{fig:example-big-directed}. Using
the scheme presented in Chapter~\ref{sec:nonmoralizing-qsw}, new graph will
consist of 5 copies of vertex $v_0$, two copies of vertices $v_4$ and $v_5$, and
single copy of other vertices. Let us consider standard NGQSW
	
	Let us choose two eigenvectors of the standard rotating Hamiltonian
	\begin{gather}
	\ket{\varphi} = \frac{1}{2\sqrt 3}\ket{v_0^0} 
	-\frac{\ii}{2}\ket{v_0^1}- \frac{1}{\sqrt 3}\ket{v_0^2} + \frac{\ii}{2} 
	\ket{v_0^3}+\frac{1}{2\sqrt 3}\ket{v_0^4},\\
	\ket{\psi}= \frac{1}{2\sqrt 3}\ket{v_0^0} 
	-\frac{\ii}{2}\ket{v_0^1}- \frac{1}{\sqrt 3}\ket{v_0^2} + \frac{\ii}{2} 
	\ket{v_0^3}+\frac{1}{2\sqrt 3}\ket{v_0^4}.
	\end{gather}
	One can show that the vectors
$\ket{\varphi}\overline{\ket{\varphi}}$, $\ket {\varphi} \overline{\ket{\psi}}$, $\ket {\psi} \overline{\ket{\varphi}}$, $\ket {\psi} \overline{\ket{\psi}}$
	are eigenvectors of the increased evolution operator $\tilde S_{t,\omega}$ for
arbitrary $\omega\in(0,1]$. Corresponding eigenvalues are respectively
$0,-2\ii\sqrt 3\omega,2\ii\sqrt 3\omega,0 $. Similarly to the example presented
in the previous section, the state
	\begin{equation}
	\tilde \varrho_0 = \frac{1}{2}\left (\ket{\varphi}+\ket{\psi}\right )\left (\bra{\varphi}+\bra{\psi}\right )
	\end{equation}
	is the required initial state. The state after time $t$ takes the form
	\begin{equation}
	\begin{split}
	\tilde \varrho_t &= \frac{1}{2}\left (\ket{\varphi}\bra{\varphi}+e^{-2\ii t\sqrt{3}\omega}\ket{\varphi}\bra{\psi}+ e^{2\ii t\sqrt{3}\omega}\ket{\psi}\bra{\varphi}+\ket{\psi}\bra{\psi}\right ).
	\end{split}
	\end{equation}
	The function $\tilde\varrho_t$ is periodic with period 
	$\frac{\pi}{\sqrt3\omega}$, hence we obtained the result.
\end{proof}

\begin{figure}
	\captionsetup[subfigure]{oneside,margin={0.8cm,0cm}}
	\subfloat[$\mathcal G_n^{\rm BA}(1)$ \label{fig:convergence-nonmoral-ba1}] {\includegraphics{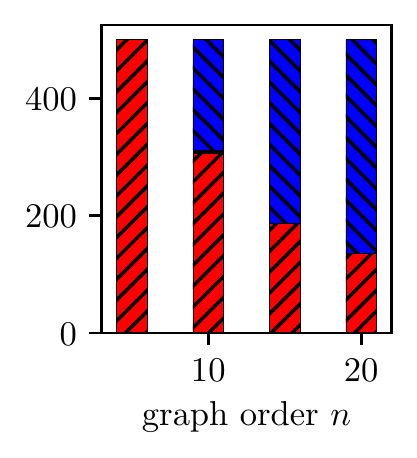}}
	\captionsetup[subfigure]{oneside,margin={0.2cm,0cm}}
	\subfloat[$\vec{\mathcal G}_n^{\rm BA}(3)$ \label{fig:convergence-nonmoral-ba3}] {\includegraphics{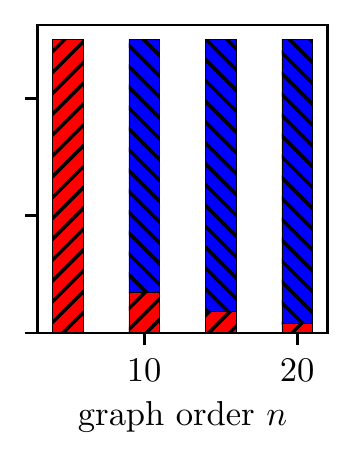}}
	\captionsetup[subfigure]{oneside,margin={-2.2cm,0cm}}
	\subfloat[$\vec{\mathcal G}_n^{\rm ER}(0.4)$ \label{fig:convergence-nonmoral-er}] {\includegraphics{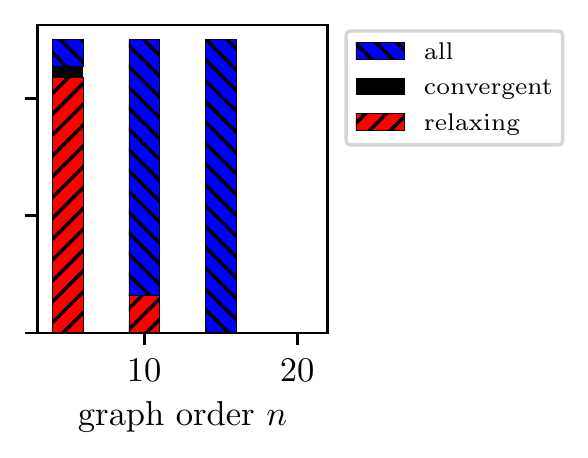}}
	\caption{\label{fig:convergence-nonmoral}The convergence statistics of NGQSW for
		various directed random graph models. Only weakly connected \ER graphs were considered. We applied the same conditions for eigenvalues as in Fig.~\ref{fig:convergence-local}. We were not able to perform the statistics for the \ER model for $n=20$, due to the size of the evolution generator.}
\end{figure}

Contrary to the LQSW and GQSW, it seems that such situation may occur quite
frequently for standard NGQSW, see Fig.~\ref{fig:convergence-nonmoral}. It turns
out that for majority of graphs the evolution generator have a purely imaginary
eigenvalues which suggest that the evolution will be periodic. However, provided
there is no imaginary eigenvalues, the evolution turned out to be relaxing.

Note that in the example above the probability distribution coming from the
measurement in canonical basis in the enlarged Hilbert space will differ.
However, independently on the chosen measurement time, the probability
distribution coming from the natural measurement of NGQSW remains unchanged.
This suggests that different measure of convergence has to be chosen.

Let $p(t;\nonmoral \varrho)$ be a probability distribution of measurement of the
NGQSW with initial state $\nonmoral \varrho$ after evolution time $t$,
according to its natural measurement. We will be interested, whether given
initial state, its probability distribution will converge. Formally, we are
interested whether there exists $p(\infty;\nonmoral \varrho)$ s.t.
\begin{equation}
\lim_{t\to\infty} \|p(t;\nonmoral \varrho)-p(\infty;\nonmoral \varrho)\| = 0.
\end{equation}
The spectral analysis is no longer useful here, as imaginary
eigenvalues may imply local evolution within subspace attached to $\nonmoral
V_{v}$. Instead, we made numerical analysis for a special choice of input
state of the form
\begin{equation}
\varrho = \frac{1}{|V|}\sum_{v\in V} \frac{1}{|\nonmoral V_v|}\sum_{\nonmoral{v} \in \nonmoral V_{v}} \ketbra{\nonmoral v}. \label{eq:nmgqsw-init-state}
\end{equation}
It turns out that difference between $p(t;\nonmoral{\varrho})$ and
$p(\num{10000};\nonmoral{\varrho})$ was almost monotonically decreasing as $t$
approached \num{10000}, see Fig.~\ref{fig:nonmoral-prob}. Hence we conclude that
at least for the proposed initial state the evolution was convergent in
probability.

\begin{figure}\centering
	\includegraphics{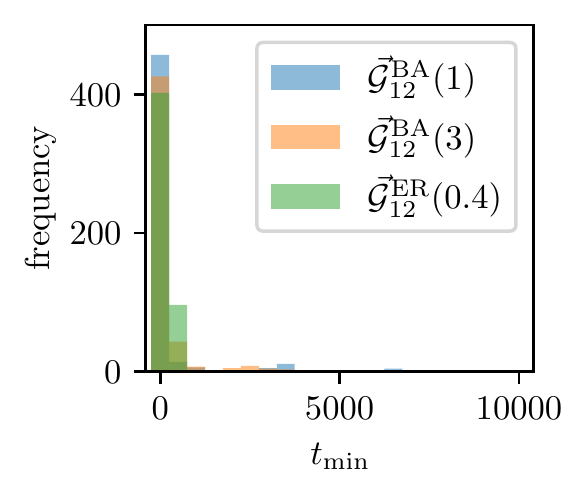}\ 		\includegraphics{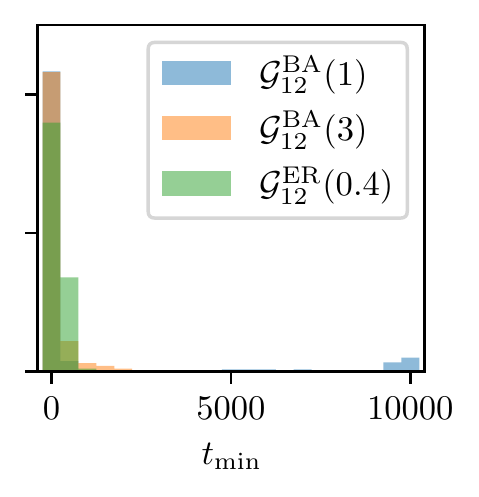}
	\caption{\label{fig:nonmoral-prob} Convergence for various directed random graph models for standard NGQSW. For (directed) \ER models only (weekly) connected graphs were chosen. For $t=0,100,\ldots,\num{10000}$ we calculated $p(t,\varrho)$ with $\varrho$ define as in Eq.~\eqref{eq:nmgqsw-init-state}. Then for given $p$ looked for minimal $t_{\min}$ such that for all $t'\geq t_{\min}$ we have $\|p(t'+100,\varrho)-p(\num{10000},\varrho)\| \leq \|p(t',\varrho)-p(\num{10000},\varrho)\|$. We repeated the procedure for 500 graphs for each graph model. Note that for some graphs we observed that the convergence were monotonic starting from very large values of $t$. However, for this samples the difference in norms for last 30 pairs of timepoints were (except single case) below $10^{-10}$. Hence, we claim that this deviations are due to a numerical error of estimating $p(t,\varrho)$.}
\end{figure}

\begin{figure}
	\centering
	\subfloat[graph with different limit distribution \label{fig:spacial-nonmoral-graph}] {\centering
\begin{tikzpicture}
%[0 0 0 0 1 1 0; 
% 0 0 0 0 1 0 0; 
% 0 0 0 0 1 1 0; 
% 0 0 0 0 1 1 0; 
% 1 1 1 1 0 1 0; 
% 1 0 1 1 1 0 1; 
% 0 0 0 0 0 1 0]
\def \n {7}
\def \radius {2cm}
\def \margin {8} % margin in angles, depends on the radius

\foreach \s in {7,...,1}
{
	\node[draw, circle] (\s) at ({360/\n * (\s )}:\radius) {$\s$};
	
}
\draw  (1)  edge (5); 
\draw  (1)  edge (6); 
\draw  (2)  edge (5);
\draw  (3)  edge (6);
\draw  (4)  edge (6);
\draw  (5)  edge (6);
\draw  (6)  edge (7);
\end{tikzpicture}	
}\hspace{.5cm}
	\subfloat[the probability of measuring vertex 5 for different inital state] {\includegraphics{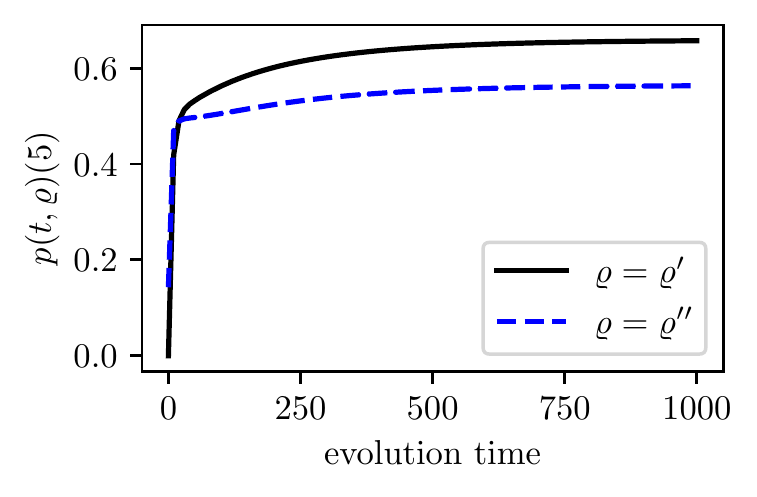}}

	\caption{Graph for which there exists two different stationary states in the
	sense of the natural measurement for NGQSW. The states can be obtained by
	starting in the state $\varrho'=\frac{1}{3}\sum_{\nonmoral v \in \nonmoral{V}_5} \ketbra{\nonmoral v}$ and the $\varrho''$ defined in Eq.~\eqref{eq:nmgqsw-init-state}.}\label{fig:different-measurements}
%\end{figure}
%\begin{figure}[th!]
	\centering
	\includegraphics[width=0.8\textwidth]{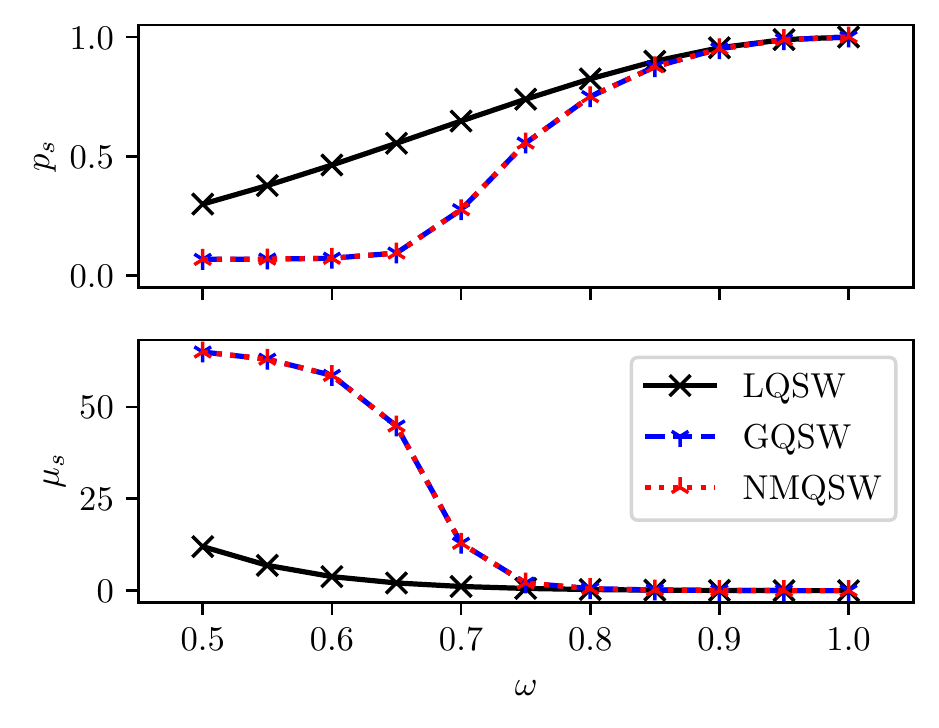}
	
	\caption{\label{fig:structure-observance-path} Measures $p_s$ and $\mu_s$ in term of $\omega=.5,.55,\ldots,1.$ for directed path with 15 vertices. The evolution starts in the initial state described in Eq.~\eqref{eq:nmgqsw-init-state}, and the evolution time equals \num{10000}. Note that the plots for GQSW and NGQSW coincides -- this comes from the fact that for each vertex in directed path the indegree is at most 1, hence the standard GQSW and NGQSW are indistinguishable
	}
\end{figure}

The limit probability distribution depends in general on the initial
state. Let us analyse the graph presented in
Fig.~\ref{fig:different-measurements}. For the standard NGQSW and two initial
states, we see that the limiting probability distribution differ.

\section{Digraph structure observance} \label{sec:digraph-structure-observance}

Let us consider standard interpolated QSW models with $\omega = 1$, \ie evolution defined for a digraph. Let $\vec G=(V,\vec E)$ be a digraph and let $v\in V$ be a sink vertex. Independently of chosen QSW model, mixed state defined over the space of $v$ is a stationary state. For LQSW we have 
\begin{equation}
\begin{split}
\frac{\dd \ketbra v}{\dd t} &= \sum_{(i,j)\in \vec E} \left (|c_{i,j}|^2\ketbra{j}{i} \cdot \ketbra{v}{v} \cdot \ketbra{i}{j} - \frac{1}{2}|c_{i,j}|^2 \{ \ketbra{i}{j} \cdot \ketbra{j}{i}, \ketbra{v}{v}\} \right ) \\
&=\sum_{(v,j)\in \vec E} \left (|c_{v,j}|^2\ketbra{j}{j} - |c_{v,j}|^2  \ketbra{v} \right ) =0,
\end{split}
\end{equation}
because there is no arc of the form $(v,j)$. Similarly, for GQSW with set $\mathbb L$ of Lindblad operators we have
\begin{equation}
\begin{split}
\frac{\dd \ketbra v}{\dd t} &= \sum_{L\in\mathbb L} \left (L \ketbra{v}{v} L^\dagger - \frac{1}{2} \{ L^\dagger L, \ketbra{v}{v}\} \right ) = 0,
\end{split}
\end{equation}
because $L\ket v$ is a zero vector.

The case of NGQSW is more complicated because the subspace connected to the sink
vertex is $\indeg(v)$-dimensional. Hence, based on the results from
Sec.~\ref{sec:ngqsw-convergence}, we should allow the state to evolve within the
subspace $S_v$ attached to $\nonmoral{V}_v$. Let $\nonmoral{v}\in
\nonmoral{V}_v$. Note that $H_{\rm rot}\ket{\nonmoral{v}}$ is a vector spanned
by $S_v$, and for any nonmoralizing Lindblad operator $L$ we have
$L\ket{\nonmoral v}=0$. This means, by linearity, that for any mixed state
defined over $S_v$ is evolving within the space spanned by $\nonmoral V_v$,
hence the probability distribution coming from the natural measurement is
stationary.

However, as it was shown in Sec.~\ref{sec:local-convergence}, even for a very
simple directed path $(\{1,2\}, \{(1,2)\})$ the amplitude for stationary state
may be localized outside the sink vertices in the presence of the Hamiltonian.
Still we expect, that as $\omega \to 1$, the more amplitude should be localized
in the subspaces attached to the sink vertices.

Let $\vec G$ be a directed graph and let $\vec G^c=(V^c,\vec E^c)$ be its
condensation graph with unique sink vertex $V^c_s \in V^c$. We propose two
measures of how much the state is localized in the sink vertex or its
neighborhood. First, we can determine the probability of being at any vertex
from $V^c_s$, \ie
\begin{equation}
p_s(t;\varrho) = \sum_{v\in V^c_s} p(t; \varrho_0)(v).
\end{equation}
Similarly we proposed measure based on the second moment. Let $w\in V$ and $v\in
V^c_s$. Let $d(w,V^c_s)= \min_{v\in V^c_s} d(w,v)$. Note that the function $d$
is well-defined if there is a unique sink vertex in the digraph $\vec G^c$.
\begin{equation}
\mu_{s}(t;\varrho) = \sum_{v\in V} d^2(v,w) p(t; \varrho_0)(v).
\end{equation}

For the evolution preserving the digraph structure, we expect
$p_s(\infty;\varrho)=1$ and $\mu_s(\infty;\varrho)=0$. In the case of QSW walk
we expect $p_s\to 1$ and $\mu_s \to 0$ as $\omega \to 1$. As we
can observe on Fig.~\ref{fig:structure-observance-path}, the measures converge
to proper values for all QSW models as $\omega\to1$.

We repeated the experiment for $ \randdgn[BA](1)$ and $ \randdgn[BA](2)$, see
Fig.~\ref{fig:structure-observance-ba}.  For GQSW model, independently of
$\omega$ the values of $p_s$ and $\mu_s$ were far from their optimal values $1$
and $0$. This is expected, as even for the undirected graphs the model is
projecting the initial state to stationary state of the unitary evolution. Note
that as $\omega\to 1$, LSQW model acquire $p_s=1$ and thus $\mu_1=1$. However,
for $\omega<1$ there is a clear gap between obtained and limit value. For NGQSW we
observe that independently of chosen $\omega$ the model converged almost fully
to the vertices from sink of the condensation graph. However, for 3 out of 50
graphs the value $p_s$ was below .99. This may be due to invalid choice of the
rotating Hamiltonian.

\begin{figure}[t!]
	\centering
\includegraphics{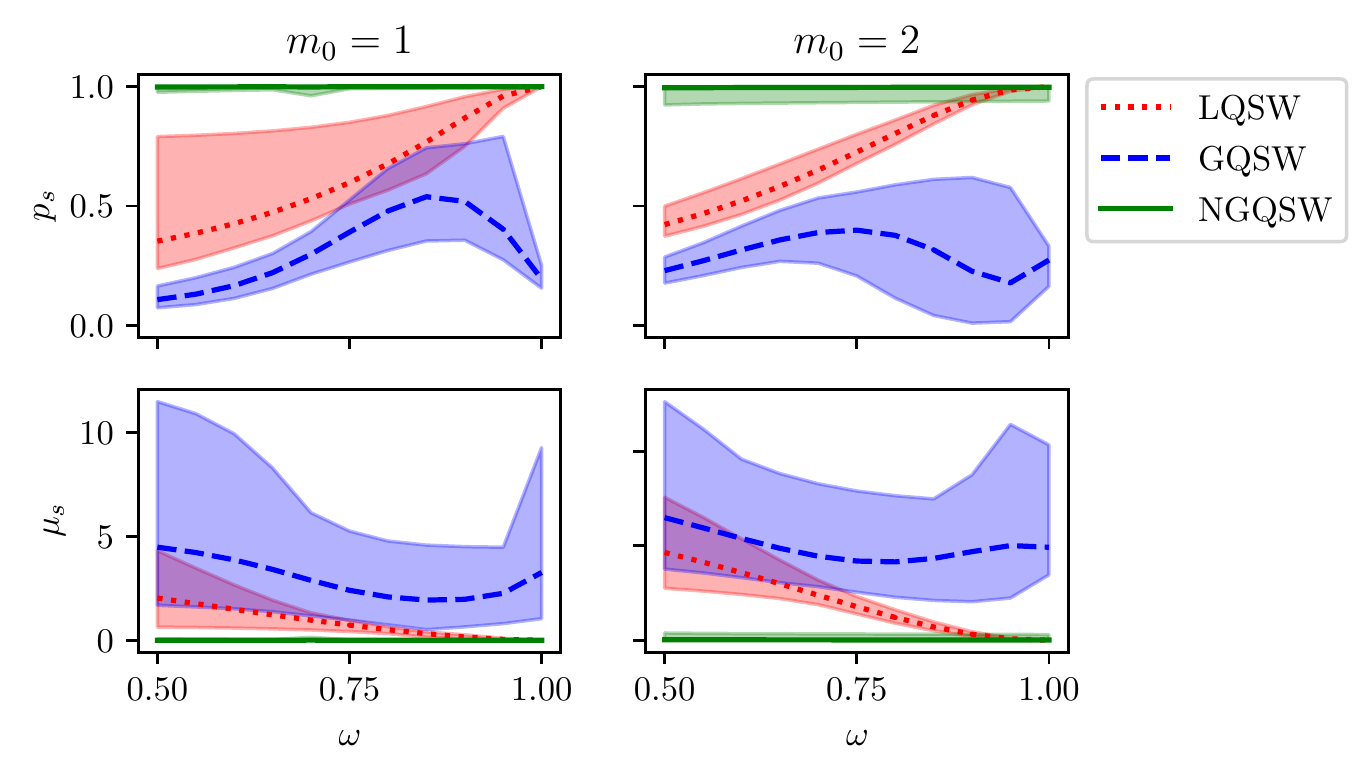}
	
	\caption{\label{fig:structure-observance-ba}Measures $p_s$ and $\mu_s$ in term of $\omega=.5,.55,\ldots,1.$ for \BA random digraphs with 15 vertices for $m_0=1,2$. For each value of parameter $m_0$ and each QSW model we sampled 50 graphs. The evolution starts in the initial state described in Eq.~\eqref{eq:nmgqsw-init-state}, and the evolution time equals \num{10000}. For NGQSW we chosen random rotating Hamiltonian, s.t. for each block we sampled independently $X+X^\top + \ii (Y-Y^T)$, where $X,Y$ are random matrices with entry sampled independently according to uniform distribution over $[0,1]$.
	}
\end{figure}

\newcommand{\erfc}{\operatorname{erfc}}

\chapter{Hiding vertices for quantum spatial search} \label{sec:hiding}
\chaptermark{Hiding vertices for quantum\ldots}

In Sec.~\ref{sec:preliminaries-propagation} we analyzed a complete graph in the context of
efficiency of quantum search. The analysis was simple, because the procedure
does not depend on the marked node. This comes from the fact that complete
graphs are vertex-transitive, \ie the vertex can be distinguished only by its
label.

However, in general one could expect that the transition rate and measurement
time may depend not only on the chosen graph, but also on the marked vertex. Let
us consider an adjacency matrix of a star graph $K_{n-1,1}$, with vertex 0 being
connected to all the other vertices. The eigenvalues of the adjacency matrix of the graph are
$-\sqrt{n-1}$, $0$, $\sqrt{n-1}$ with multiplicity $1$, $n-2$, and $1$
respectively check numbers. The eigenvectors corresponding to $-\sqrt{n-1}$ and $\sqrt{n-1}$ are:
\begin{gather}
\ket{-\sqrt{n-1}} = \frac{1}{\sqrt 2} \ket{0} - \frac{1}{\sqrt{2(n-1)}} \sum_{v=1}^n \ket{i}, \\
\ket{\sqrt{n-1}} = \frac{1}{\sqrt 2} \ket{0} + \frac{1}{\sqrt{2(n-1)}} \sum_{v=1}^n \ket{i} .
\end{gather}
Let us consider an initial state $\ket{\psi_0}= \ket{\sqrt{n-1}}$. If $w=0$ is the marked vertex, then the success probability at time $t=0$ of $w$ is $1/2$. This shows that the optimal choice is to not move at all, which gives the complexity $\order{1}$. 

Since the full eigendecomposition is known, one can estimate manually the proper measurement time to find any of the sink vertex. However, it is possible to use a lemma proved in \cite{chakraborty2016spatial} and improved in \cite{chakraborty2020optimality} instead.
\begin{lemma}[\cite{chakraborty2016spatial,chakraborty2020optimality}] \label{lem:the-search-lemma}
	Let $H$ be a Hamiltonian with eigenvalues $\lambda_1\geq\dots\geq\lambda_n$
satisfying $\lambda_1=1$ and $c\coloneqq \max_{i\geq 2}|\lambda_i| <1$ for all
$i>1$ with corresponding eigenvectors $\ket
{\lambda_1},\ket{\lambda_2},\dots,\ket{\lambda_n}$ and let $\ket{w}$ be another
state lying in the same quantum system. For an appropriate choice of
$r\in[-\frac{c}{1+c},\frac{c}{1-c}]$, the starting state $\ket{\lambda_1}$
evolves by the Schr\"{o}dinger's equation with the Hamiltonian $(1+r)H
+\ketbra{w}{w}$ for time $t = \Theta(\frac{1}{\braket{\lambda_1}{w}})$ into the
state $\ket f $ satisfying $|\braket{w}{f}|^2\geq \frac{1-c}{1+c}+o(1)$.
\end{lemma}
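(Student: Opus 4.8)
The plan is to run the standard resolvent (``secular equation'') analysis of continuous-time quantum search. Write $H'=(1+r)H+\ketbra{w}{w}$ and $\ket s=\ket{\lambda_1}$, and expand $\ket w=\sum_i w_i\ket{\lambda_i}$ in the eigenbasis of $H$, putting $a:=\braket{\lambda_1}{w}=w_1$, which (after fixing the global phase of $\ket w$) we may take real and positive, and which tends to $0$, so the claimed estimates are to be read asymptotically in $a$. For any $E$ not equal to a shifted eigenvalue $(1+r)\lambda_i$, the eigenequation $H'\ket\psi=E\ket\psi$ forces $\psi_i=w_i\braket{w}{\psi}/(E-(1+r)\lambda_i)$; taking the overlap with $\ket w$ gives the characteristic equation $\sum_i \frac{|w_i|^2}{E-(1+r)\lambda_i}=1$, and the (unnormalized) eigenvector is $\ket{\widetilde E}=\sum_i\frac{w_i}{E-(1+r)\lambda_i}\ket{\lambda_i}$ with $\braket{w}{\widetilde E}=1$ (by the characteristic equation) and $\|\widetilde E\|^2=\sum_i\frac{|w_i|^2}{(E-(1+r)\lambda_i)^2}=:N_E^2$.

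The second step is to fix $r$ and isolate the two resonant eigenvalues. I would choose $r$ so that $1+r=\sum_{i\ge2}\frac{|w_i|^2}{1-\lambda_i}$; since $1-\lambda_i\in[1-c,1+c]$ for $i\ge2$ and $\sum_{i\ge2}|w_i|^2=1-a^2$, this forces $r\in[-\tfrac{c+a^2}{1+c},\tfrac{c-a^2}{1-c}]$, which is within $o(1)$ of the interval $[-\tfrac{c}{1+c},\tfrac{c}{1-c}]$ named in the lemma. Writing $x=E-(1+r)$ and Taylor-expanding the characteristic equation about $x=0$ — the $i=1$ term contributes $a^2/x$, and the remaining terms contribute $g(x)$ with $g(0)=1$ (this is exactly the above choice of $r$) and $g'(0)=-\beta$, where $\beta:=\sum_{i\ge2}\frac{|w_i|^2}{((1+r)(1-\lambda_i))^2}$ — the equation collapses to $\beta x^2=a^2(1+o(1))$, so there are eigenvalues $E_\pm=(1+r)\pm\delta$ with $\delta=\frac{a}{\sqrt\beta}(1+o(1))$. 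Every other shifted pole $(1+r)\lambda_i$ sits at distance $\ge(1+r)(1-c)=\Theta(1)$ from $E_\pm$, which both legitimizes the eigenvector formula at $E_\pm$ and makes the expansions uniform in the $\lambda_i$.

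Third, I would read off the overlaps: $\braket{\lambda_1}{\widetilde{E_\pm}}=a/(\pm\delta)=\pm\sqrt\beta(1+o(1))$ and $N_{E_\pm}^2=a^2/\delta^2+\sum_{i\ge2}\frac{|w_i|^2}{(\pm\delta+(1+r)(1-\lambda_i))^2}=2\beta(1+o(1))$, so $|\braket{\lambda_1}{E_\pm}|^2=\tfrac12(1+o(1))$ and $|\braket{w}{E_\pm}|^2=1/N_{E_\pm}^2=\tfrac{1}{2\beta}(1+o(1))$. Hence $\ket{\lambda_1}=\tfrac{1}{\sqrt2}(\ket{E_+}-\ket{E_-})+\ket{\mathrm{tail}}$ with $\|\mathrm{tail}\|=o(1)$ (the relative sign comes from $\braket{\lambda_1}{E_\pm}=\pm\sqrt\beta/N_{E_\pm}$). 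Evolving for time $t$, and using that both $\braket{w}{E_\pm}=1/N_{E_\pm}$ are positive and equal to leading order, $\braket{w}{e^{-\ii H't}\lambda_1}=\tfrac{1}{\sqrt2}\cdot\tfrac{1}{\sqrt{2\beta}}(1+o(1))(e^{-\ii E_+t}-e^{-\ii E_-t})+o(1)$; choosing $t=\pi/(E_+-E_-)=\tfrac{\pi\sqrt\beta}{2a}(1+o(1))=\Theta(1/\braket{\lambda_1}{w})$ makes the sine factor equal $1$, giving $|\braket{w}{f}|^2=\tfrac1\beta(1+o(1))+o(1)$.

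The last step is to bound $\beta$ from above: since $1-\lambda_i\ge1-c$ we have $\frac{1}{(1-\lambda_i)^2}\le\frac{1}{(1-c)(1-\lambda_i)}$, so $\beta\le\frac{1}{(1+r)^2(1-c)}\sum_{i\ge2}\frac{|w_i|^2}{1-\lambda_i}=\frac{1}{(1+r)(1-c)}$, and $1+r\ge\frac{1-a^2}{1+c}$ then yields $\beta\le\frac{1+c}{(1-c)(1-a^2)}$, i.e. $\tfrac1\beta\ge\frac{1-c}{1+c}(1-a^2)$. Combining with the previous step gives $|\braket{w}{f}|^2\ge\frac{1-c}{1+c}+o(1)$, as claimed (the same bounds also show $\beta=\Theta(1)$, confirming $t=\Theta(1/\braket{\lambda_1}{w})$). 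The genuinely delicate part is the bookkeeping inside the second and third steps: proving that the ``tail'' of $\ket{\lambda_1}$ (its components along all eigenvectors of $H'$ other than $\ket{E_\pm}$), and its image under $e^{-\ii H't}$, are $o(1)$ uniformly, and that all the $(1+o(1))$ factors are controlled no matter how the $\lambda_i$ cluster inside $[-c,c]$ — this goes through precisely because every shifted pole stays $\Theta(1)$-far from $E_\pm$, ruling out any near-degeneracy with the resonant pair.
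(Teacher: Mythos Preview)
The paper does not prove this lemma; it is quoted verbatim from \cite{chakraborty2016spatial,chakraborty2020optimality} and used as a black box throughout Chapters~\ref{sec:hiding} and~\ref{sec:complex}. Your argument is precisely the resolvent/secular-equation analysis carried out in those references: fix $r$ so that the characteristic equation $\sum_i|w_i|^2/(E-(1+r)\lambda_i)=1$ has $E=1+r$ as an approximate root, extract the two resonant eigenvalues $E_\pm=(1+r)\pm a/\sqrt\beta(1+o(1))$, read off the overlaps from the unnormalised eigenvectors, and bound $1/\beta$ via $1-\lambda_i\ge 1-c$.

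The only cosmetic mismatch you already flag: your choice of $r$ gives the lower endpoint $-\tfrac{c+a^2}{1+c}$, which can sit $a^2/(1+c)$ below the stated $-\tfrac{c}{1+c}$. The original references state the interval with the same asymptotic understanding, so this is not a gap. Your tail bound is also sound: normalisation of $\ket{\lambda_1}$ together with $|\braket{\lambda_1}{E_+}|^2+|\braket{\lambda_1}{E_-}|^2=1+o(1)$ forces the remaining mass to be $o(1)$, and Cauchy--Schwarz then controls its contribution to $\braket{w}{f}$.
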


The adjacency matrix $A= A(K_{n-1,1})$ does not fulfill the requirement of the
lemma, because the largest eigenvalue is not equal to 1. In order to satisfy
$\lambda_1=1$ one can simply take $\frac{1}{\sqrt{n-1}} A$, however still one
does not have a separation between $\lambda_1$ and $\lambda_i$, since
$\lambda_{n}=-1$ and by this $c=1$. However, adding a scaled identity matrix does
not change the quantum evolution. By this,  transformation
$\frac{2}{3\sqrt{n-1}}(A+ \frac{\sqrt{n-1}}{2} \Id)$ maps eigenvalues
$-\sqrt{n-1}$, $0$, $\sqrt{n-1}$ to $-1/3$, $1/3$, $1$ giving $c=1/3$. In
general, applying a shifting and rescaling transformation
\begin{equation}
H_G \mapsto H_G'\coloneqq \frac{H_G - \frac{\lambda_2(H_G)+\lambda_n(H_G)}{2}\Id}{\lambda_1(H_G) - \frac{\lambda_2(H_G)+\lambda_n(H_G)}{2}}
\end{equation}
transforms $H_G$ to a new Hermitian operator with $|\lambda_2(H_G')|=|\lambda_n(H_G')|$ and $\lambda_1(H_G')=1$.

Using this fact we can finally show the optimality of the star graph for leaves.
Using the shifting and rescaling transformation we have $c=1/3$. Based on
Lemma~\ref{lem:the-search-lemma} after time $T=\Theta(2\sqrt{n-1}) =
\Theta(\sqrt{n})$ we obtain a state with the probability of measuring the marked
state at least $|\braket{w}{f}|^2\geq \frac{1-c}{1+c}+o(1) = \frac{1}{2}+o(1)$.

The star graph is an example of a graph where all vertices can be found within the
time $\order{ \sqrt{N}}$, except the single vertex which can be found in
$\order{1}$ time. Note that it does not violate the $\Omega(\sqrt{N})$ bound for
quantum search \cite{boyer1998tight,grover1996fast}, as there is only $1=o(n)$
vertex with the time complexity below the bound. On the other hand, the result
from \cite{boyer1998tight} is applicable only for uniformly random chosen
vertex, and in such case for the star graph, the vertices can still be found in expected time
$\Theta(\sqrt{n})$.

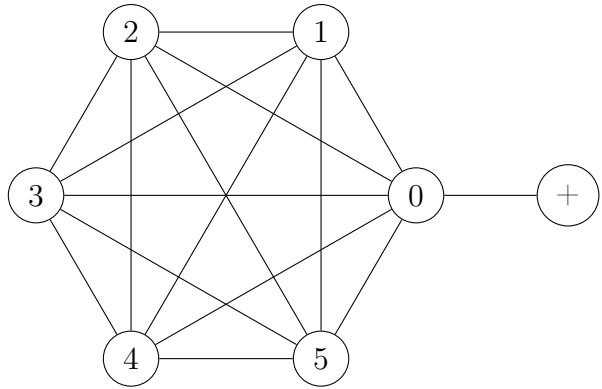
\begin{figure}\centering
	\begin{tikzpicture}
	\def \n {6}
	\def \radius {2.5cm}
	\def \margin {8} % margin in angles, depends on the radius
	
	\foreach \s in {5,...,0}
	{
		\node[draw, circle] (\s) at ({360/\n * (\s )}:\radius) {$\s$};
	}
	\draw [] (0)  edge (1);
	\draw [] (0)  edge (2);
	\draw [] (0)  edge (3);
	\draw [] (0)  edge (4);
	\draw [] (0)  edge (5);
	\draw [] (1)  edge (2);
	\draw [] (1)  edge (3);
	\draw [] (1)  edge (4);
	\draw [] (1)  edge (5);
	\draw [] (2)  edge (3);
	\draw [] (2)  edge (4);
	\draw [] (2)  edge (5);
	\draw [] (3)  edge (4);
	\draw [] (3)  edge (5);
	\draw [] (4)  edge (5);
	
	\node[draw,circle] (+) at (4.5, 0) {+};
	\draw [] (0) edge (+);
	\end{tikzpicture} 
	\caption{\label{fig:graph-hiding} An example of a graph for all vertices except `$+$' can be found in $\Theta(\sqrt n)$ time. The vertex `$+$' requires $\Theta(n)$ time.}
\end{figure}

It is also possible to find an opposite example, where some of the vertices
require significantly more time to be found. Let us consider a complete graph
$K_n^+$ with an extra leaf as in Fig.~\ref{fig:graph-hiding}. The
eigenvalues of the normalized Laplacian matrix $\mathcal L$ are
\begin{equation}
0, \frac{n-1}{n-2},\frac{1}{n-2}\left(2n-3 \pm \frac{\sqrt{23 -19n + 4n^2}}{\sqrt{n-1}}\right),
\end{equation}
with multiplicity 1, $n-3$, 1. The eigenvalues converge to 0, 1, and 2,
respectively. By this we have a constant spectral gap between 0 and 1 for operator $\Id -\mathcal L$, hence by
applying the shift and rescaling transformation we have that the overlap
$\braket{\lambda_1}{\omega}$ gives the required time in complexity. For
the graph matrix $\Id -\mathcal L$, the the eigenvector corresponding to the largest eigenvalue
takes the form
\begin{equation}
\ket{\lambda_1} = \sqrt\frac{2}{|E|}\sum_{v\in V} \sqrt{\deg(v)}\ket{v}.
\end{equation}
Note that for $K^+_n$ we have $|E| = \binom{n-1}{2}  + 1= \Theta(n^2)$.
Furthermore, for all vertices except  vertex  `$+$', the degree is $\Theta(n)$. Using 
Lemma~\ref{lem:the-search-lemma} for these vertices we have computational
complexity $T=\Theta(\sqrt{n})$. For the vertex `$+$' the complexity is
$\Theta(n)$. This gives us the opposite situation compared to the star graph. Another example where some vertices require more time compared to others can be found in \cite{philipp2016continuous}.

These simple examples show what we can expect when considering random graphs. In
\cite{chakraborty2016spatial}, the authors show that for almost all \ER graphs we can
find a vertex in optimal $\Theta(\sqrt n)$ time. However, one could expect that
even for a simple \ER model, some vertices may require significantly more or less
time compared to the typical scenario. In the following sections we focus on the
\ER model to show that this is not the case. However, we propose that
instead of using an adjacency matrix, which is far more robust, it seems to be more
convenient to use the Laplacian matrix.

\section{Adjacency matrix} 

\subsection{Issues found in the paper of Chakraborty et al.} 

In this section, we start by pointing the issues found in paper
\cite{chakraborty2016spatial} regarding the efficiency of quantum spatial search on
random \ER graphs. The authors showed three results.
First, they demonstrate that the quantum spatial search considered in this dissertation
is optimal on random \ER graphs. Then they show the application for creating Bell
pairs and state transfer on the same graphs. Our comments concern the first
part of the results. We would like to emphasize that the comments concern mostly
the quality aspects instead of the conceptual aspect, and do not diminish the results
given in \cite{chakraborty2016spatial}.

Let us start with the results. In the paper, the authors claim that
\begin{quote}\it 
CTQW is almost surely optimal as long as $p\geq \log^{3/2}(n)/n$. Consequently, we show that quantum spatial search is in fact optimal for almost all graphs, meaning that the fraction of graphs of $n$ vertices for which this optimality holds tends to one in the asymptotic limit.
\end{quote}
The authors show it through a simplified version of Lemma~\ref{lem:the-search-lemma}, which we recall below
\begin{lemma}\label{lem:the-search-lemma-orig}
		Let $H_1$ be a Hamiltonian with eigenvalues $\lambda_1,\geq\lambda_2\geq \dots \lambda_n$ (satisfying $\lambda_1=1$ and $|\lambda_i| \leq c <1$ for all $i>1$) and eigenvectors $\ket{v_1}=\ket{s}$, $\ket{v_2},\dots,\ket{v_n}$, and let $H_2 = \ketbra{w}{w}$ with $|\braket{w}{s}| = \varepsilon$. For an appropriate choice of $r=\order{ 1}$, applying the Hamiltonian  $(1+r)H_1+H_2$ to the starting state $\ket{s}$ for time $\Theta(1/\varepsilon)$ results in a state $\ket{f}$ with $|\braket{w}{f}|^2 \geq \frac{1-c}{1+c}+o(1)$.
\end{lemma}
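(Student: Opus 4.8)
The plan is to follow the standard secular-equation analysis of continuous-time quantum search (as in Childs--Goldstone), reducing the $n$-dimensional Schr\"odinger dynamics under $H=(1+r)H_1+\ketbra{w}{w}$ to an effective two-level problem. Put $a_j=\braket{v_j}{w}$, so that, after fixing phases, $a_1=\varepsilon$ and $\sum_j|a_j|^2=1$; inside each degenerate eigenspace of $H_1$ I would rotate the basis so $\ket w$ has a nonzero component along at most one basis vector, the others being eigenvectors of $H$ with unchanged eigenvalue and orthogonal to $\ket s,\ket w$, hence irrelevant (this is harmless since $\lambda_1=1$ is a simple eigenvalue). Any eigenvalue $E$ of $H$ outside the spectrum of $(1+r)H_1$ has eigenvector proportional to $(E-(1+r)H_1)^{-1}\ket w=\sum_j\frac{a_j}{E-(1+r)\lambda_j}\ket{v_j}$, and pairing with $\bra w$ yields the secular equation $F(E):=\sum_j\frac{|a_j|^2}{E-(1+r)\lambda_j}=1$.

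First I would fix $r$ by the requirement $\sum_{j\ge2}\frac{|a_j|^2}{(1+r)(1-\lambda_j)}=1$, equivalently $1+r=\sum_{j\ge2}\frac{|a_j|^2}{1-\lambda_j}\in\bigl[\tfrac{1-\varepsilon^2}{1+c},\tfrac{1-\varepsilon^2}{1-c}\bigr]$; this lies in the range of Lemma~\ref{lem:the-search-lemma} up to $o(1)$ and gives $r=\order{1}$. Then I would locate the two eigenvalues of $H$ that cluster at the rightmost pole $E=1+r$ of $F$ (rightmost because $\lambda_1=1>\lambda_j$ and $1+r>0$). Writing $E=1+r+\delta$, the $j=1$ term contributes $\varepsilon^2/\delta$ and the remaining sum expands as $1-\delta K+O(\delta^2)$ with $K:=\sum_{j\ge2}\frac{|a_j|^2}{((1+r)(1-\lambda_j))^2}$, all of whose denominators are $\Theta(1)$ since $(1+r)(1-\lambda_j)\ge(1+r)(1-c)=\Theta(1)$. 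Hence $\varepsilon^2=\delta^2K+O(\delta^3)$, so the two roots are $\delta_\pm=\pm\varepsilon/\sqrt K\,(1+o(1))$, giving $E_\pm=1+r\pm\varepsilon/\sqrt K\,(1+o(1))$ and spectral gap $E_+-E_-=\Theta(\varepsilon)$; monotonicity of $F$ between consecutive poles confirms these are exactly the two relevant eigenvalues, and no other eigenvalue sits within $o(\varepsilon)$ of $1+r$ because $(1+r)(1-\lambda_2)=\Theta(1)\gg\varepsilon$.

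Next I would read off the eigenvectors: the $\ket s$-component of $\ket{E_\pm}$ is $\varepsilon/\delta_\pm=\pm\sqrt K\,(1+o(1))$, the $j\ge2$ components are $\approx a_j/((1+r)(1-\lambda_j))$, so $\|\ket{E_\pm}\|^2=2K(1+o(1))$; after normalization $\braket{s}{E_\pm}=\pm\tfrac1{\sqrt2}(1+o(1))$, while $\braket{w}{E_\pm}$ equals $F(E_\pm)=1$ before normalization and hence $\tfrac1{\sqrt{2K}}(1+o(1))$. Since $|\braket{s}{E_+}|^2+|\braket{s}{E_-}|^2=1-o(1)$, the initial state $\ket s$ lies up to $o(1)$ in norm in $\operatorname{span}\{\ket{E_+},\ket{E_-}\}$, so $e^{-\ii Ht}\ket s=\tfrac1{\sqrt2}e^{-\ii E_+t}\ket{E_+}-\tfrac1{\sqrt2}e^{-\ii E_-t}\ket{E_-}+o(1)$; taking $t=\pi/(E_+-E_-)=\Theta(1/\varepsilon)$ so that $e^{-\ii(E_+-E_-)t}=-1$, the two amplitudes combine coherently and $|\braket{w}{f}|^2=\tfrac1K(1+o(1))$. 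Finally I would bound $K$: with weights $b_j=|a_j|^2/(1-\varepsilon^2)$ and $X=1/(1-\lambda_j)\in[\tfrac1{1+c},\tfrac1{1-c}]$ one has $K=\tfrac1{1-\varepsilon^2}\cdot\tfrac{\mathbb E_b[X^2]}{(\mathbb E_b[X])^2}$, and the elementary bound $X^2\le(m+M)X-mM$ on $[m,M]$ yields $\tfrac{\mathbb E[X^2]}{(\mathbb E[X])^2}\le\tfrac{(m+M)^2}{4mM}=\tfrac1{1-c^2}$, so $|\braket{w}{f}|^2\ge(1-\varepsilon^2)(1-c^2)=1-c^2-o(1)\ge\tfrac{1-c}{1+c}-o(1)$.

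The main obstacle is not the algebra but making the perturbative $o(1)$-estimates rigorous and uniform in the regime $\varepsilon\to0$ with $c$ fixed (or merely bounded away from $1$): one has to control the Taylor remainders in the secular equation, the eigenvector normalizations, and the ``leakage'' of $\ket s$ and of $e^{-\ii Ht}\ket s$ out of $\operatorname{span}\{\ket{E_+},\ket{E_-}\}$, and to rule out a spurious eigenvalue interfering near $1+r$ --- all of which ultimately rest on the gap estimate $(1+r)(1-\lambda_j)=\Theta(1)$ for $j\ge2$.
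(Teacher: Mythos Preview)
The paper does not prove this lemma; it is quoted verbatim from \cite{chakraborty2016spatial} as a known result (and the refined version, Lemma~\ref{lem:the-search-lemma}, is likewise only cited), so there is no in-paper argument to compare against. Your secular-equation analysis is precisely the standard route behind that reference: fix $r$ so that the $j\ge2$ part of $F$ equals $1$ at the top pole $E=1+r$, expand to locate $E_\pm=1+r\pm\varepsilon/\sqrt{K}\,(1+o(1))$, read off the overlaps $\braket{s}{E_\pm}=\pm\tfrac{1}{\sqrt2}(1+o(1))$ and $\braket{w}{E_\pm}=\tfrac{1}{\sqrt{2K}}(1+o(1))$, and evolve for a half-period. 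One small remark: your Kantorovich inequality at the end is sharper than required and yields $|\braket{w}{f}|^2\ge(1-c^2)(1-o(1))$; the cruder bound $\mathbb E_b[X^2]\le M\,\mathbb E_b[X]$ together with $\mathbb E_b[X]\ge m$ already gives $\mathbb E_b[X^2]/(\mathbb E_b[X])^2\le M/m=\tfrac{1+c}{1-c}$ and hence exactly the stated $\tfrac{1-c}{1+c}$ bound. The uniform control you flag at the end (Taylor remainders, normalizations, no stray eigenvalue near $1+r$) indeed all reduces to the single gap estimate $(1+r)(1-\lambda_j)\ge(1+r)(1-c)=\Theta(1)$ for $j\ge2$, so the sketch is sound.
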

Here, $\ket{s}$ denotes the superposition of states in canonical basis. From now we assume that $\lambda_i \geq \lambda_j$ for $j<i$. The main
difference between Lemmas~\ref{lem:the-search-lemma}
and~\ref{lem:the-search-lemma-orig} is the form of the principal eigenvector. In
the latter the eigenvector has to be an uniform superposition, while in the
former it can be an arbitrary vector. While both lemmas are correct, the authors overused them when applying to random graphs.

The authors presented a proof suggesting $|\braket{\lambda_1}{s}| = 1-o(1)$ provided that $p\geq \log^{3/2}n/n$. The authors claimed that based on this it is enough to show the optimality of the CTQW. However, based on the example from the introduction of this chapter, $K^+$ graph, we can see that a large overlap is not a guarantee of optimal search for \emph{all} nodes. In fact, one can show that at most  $n(1-o(n))$ nodes can be found optimally.
\begin{proposition}[\cite{kukulski2020comment}]\label{proposition:convergence-vectors}
	Let $\ket{\varphi_n}=\sum_{i=0}^{n-1}a_{i,n}\ket{i}\in \RR^{\ZZ_{\geq0}}$ and $\ket{s_n} = \frac{1}{\sqrt n} \sum_{i=0}^{n-1}\ket i$. Suppose $\braket{\varphi_n}{s_n}\to 1-o(1)$. Then there exists $I_n\subseteq \{1,\dots,n\}$ such that $|I_n| = n(1-o(1))$ and
	\begin{equation}
	\max_{i\in I_n} |\sqrt n a_{i,n}-1| = o(1).
	\end{equation}
	Furthermore $|I_n| = n(1-o(1))$ is tight in the worst case scenario.
\end{proposition}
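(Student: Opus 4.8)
The plan is to convert the hypothesis on the overlap $\braket{\varphi_n}{s_n}\to 1$ into a bound on the squared $\ell^2$-distance between $\ket{\varphi_n}$ and $\ket{s_n}$, and then extract a large ``good'' coordinate set by a Markov-type argument. Concretely, since $\ket{\varphi_n}$ and $\ket{s_n}$ are unit vectors, putting $\delta_n \coloneqq 1-\braket{\varphi_n}{s_n}=o(1)$ gives
\begin{equation}
\sum_{i=0}^{n-1}\bigl(\sqrt n\, a_{i,n}-1\bigr)^2 \;=\; n\,\|\ket{\varphi_n}-\ket{s_n}\|^2 \;=\; n\bigl(2-2\braket{\varphi_n}{s_n}\bigr) \;=\; 2n\delta_n,
\end{equation}
so the total ``deviation mass'' $\sum_i(\sqrt n\,a_{i,n}-1)^2$ is $o(n)$. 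By Markov's inequality, for any threshold $\varepsilon>0$ the number of indices with $|\sqrt n\,a_{i,n}-1|\ge\varepsilon$ is at most $2n\delta_n/\varepsilon^2$.

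The next step is the choice of threshold: we need $\varepsilon=\varepsilon_n$ that simultaneously tends to $0$ and makes $2n\delta_n/\varepsilon_n^2=o(n)$. Taking $\varepsilon_n=\delta_n^{1/4}$ works, yielding a ``bad'' set of size at most $2n\delta_n^{1/2}=o(n)$. Then I would set $I_n\coloneqq\{\,i: |\sqrt n\, a_{i,n}-1|<\delta_n^{1/4}\,\}$; its complement has size $o(n)$, so $|I_n|=n(1-o(1))$, and by construction $\max_{i\in I_n}|\sqrt n\, a_{i,n}-1|\le\delta_n^{1/4}=o(1)$, which is exactly the claim.

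For the tightness assertion I would give an explicit family. Fix any $f\colon\N\to\N$ with $f(n)\to\infty$ and $f(n)=o(n)$, and let $\ket{\varphi_n}$ vanish on the first $f(n)$ coordinates and equal $1/\sqrt{n-f(n)}$ on the remaining ones. Then $\braket{\varphi_n}{s_n}=\sqrt{1-f(n)/n}\to 1$, while for $i<f(n)$ one has $|\sqrt n\,a_{i,n}-1|=1$ and for $i\ge f(n)$ one has $|\sqrt n\,a_{i,n}-1|=\bigl|(1-f(n)/n)^{-1/2}-1\bigr|=o(1)$ uniformly. Hence any index set $I_n$ on which $\max_{i\in I_n}|\sqrt n\,a_{i,n}-1|\to 0$ must, for all large $n$, avoid every $i<f(n)$, so $n-|I_n|\ge f(n)\to\infty$. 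Since $f$ is arbitrary, one cannot upgrade $|I_n|=n(1-o(1))$ to $|I_n|=n-\order{1}$ (nor to $n-o(n\,h(n))$ for any prescribed $h=o(1)$), establishing tightness.

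The only genuinely delicate point is the balanced choice of $\varepsilon_n$ in the Markov step: it must shrink slowly enough that the bad set stays $o(n)$, yet fast enough that $\varepsilon_n=o(1)$. Any rate strictly between the $\sqrt{\delta_n}$-scale of the bad-set bound and a constant works, and $\delta_n^{1/4}$ is a convenient concrete choice; the remaining computations are routine.
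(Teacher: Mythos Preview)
Your proof is correct and follows essentially the same strategy as the paper's. The paper writes $\ket{\varphi_n}=\alpha_n\ket{s_n}+\beta_n\ket{s_n^\perp}$ and runs the Markov argument on the coordinates of $\ket{s_n^\perp}$ with threshold $\sqrt{\beta_n}$, then uses a triangle inequality to pass from $\alpha_n$ to $1$; you compute $\sum_i(\sqrt n\,a_{i,n}-1)^2=2n\delta_n$ directly and take threshold $\delta_n^{1/4}$, which is the same choice up to constants since $\beta_n^2=2\delta_n-\delta_n^2$. The tightness example is identical to the paper's.
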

\begin{proof}
Let $\ket{\varphi_n} = \alpha_n\ket{s_n} + \beta_n \ket{s_n^\perp}$, with $\ket{s_n^\perp}=\sum_{i=0}^{n-1} b_{i,n} \ket{i}$ being a normed vector. Let $I^c_\varepsilon(n) \coloneqq \{i\in \{1,\dots,n\} \colon | \sqrt n  a_{i,n}-\alpha_n| > \varepsilon\}$. Since $\sqrt n a_{i,n}-\alpha_n = \sqrt n \beta_n b_{i,n}$, we have 
\begin{equation}
1 =|\braket{s_n^\perp}|^2 = \sum_{i=0}^{n-1} |b_{i,n}|^2 \geq \sum_{i\in I^c_\varepsilon(n)} |b_{i,n}|^2 > \frac{\varepsilon^2}{n \beta_n^2} |I^c_\varepsilon(n)|, 
\end{equation}
hence $|I^c_\varepsilon(n)|<  n \beta_n^2/\varepsilon^2$. Let $I_n \coloneqq \{1,\dots,n\} \setminus I^c_{\sqrt{\beta_n}}(n)$. Then since $\beta_n =o(1)$, we have $I_n = n(1-o(1))$
\begin{equation}
\begin{split}
\max_{i\in I_n} | \sqrt{n}a_i - 1| &\leq \max_{i\in I_n} | \sqrt{n}a_i - \alpha_n| + |1-\alpha_n| \\
&\leq \sqrt{\beta_n } +|1-\alpha_n | = o(1).
\end{split}
\end{equation}

Let us now show that $|I_n|=n(1-o(1))$ is tight. Let $f(n)=o(n)$ and let $\ket{\varphi_n} = \frac{1}{\sqrt{n-f(n)}} \sum_{i=0}^{n-f(n)-1} \ket{i}$. Vector $\ket{\varphi_n}$ satisfies the assumptions of the theorem, yet the maximal $|I_n|$ is of order $n-f(n)$.
\end{proof}
A simple example for which the scenario described by the proposition above occurs is a graph over $n$ vertices, where $n-f(n)$ vertices form a complete graph, and the remaining $f(n)$ vertices are isolated. For such graphs, $n-f(n)$ vertices can be still found optimally in time $\sqrt{n-f(n)} = \sqrt{n}(1-o(1))$, while the isolated vertices need $\Theta(n)$ time to be found.

Furthermore, the authors incorrectly derived the condition on $p$. In the paper, they used Theorem~1.4 from \cite{vu2007spectral} and the result from \cite{furedi1981eigenvalues}, which states that
\begin{equation}
\max_{i>1}|\lambda_i| \leq 2\sqrt{p(1-p)n}(1+o(1)),
\end{equation}
and for sufficiently large $n$
\begin{equation}
\lambda_1/(np) \sim \mathcal{N}\left (1,n\sqrt{\frac{1-p}{p}} \right ),
\end{equation}
where $\mathcal{N}(\mu,\sigma)$ is a Gaussian distribution with mean $\mu$ and variance $\sigma^2$. However, the first one requires $p(1-p)=\Omega(\log^4n/n)$, while the latter requires $p\in(0,1]$ to be a fixed number. Hence, the proof presented in \cite{chakraborty2016spatial} was correct only for the fixed nonzero $p$. 

Finally, the authors falsely approximated transition rate $\gamma$ by $\frac{1+r}{\lambda_1} \approx 1/(np)$ to \ER graphs. They took the Hamiltonian of the form
\begin{equation}
- \ketbra{\lambda_1} - \ketbra{w}{w} + \frac{1}{\lambda_1} \tilde A,
\end{equation}
where $\tilde A + \lambda_1\ketbra{\lambda_1}= A$, hence the transition rate $\gamma$ equals $\frac{1}{\lambda_1}$. Provided with high probability $|\lambda_1/(np)-1| \leq \delta = \frac{1}{\sqrt{n}}$, using perturbation theory similarly as in \cite{chakraborty2016spatial} one can have
\begin{equation}
p(t) = \frac{1}{1+n\delta^2/4} \sin^2\left( \frac{\sqrt{\delta^2/4+1/n}}{2}t\right).
\end{equation}
The formula implies that by choosing proper $T=\Theta(\sqrt{n})$ we can achieve the constant success probability. This derivation was proven by approximating $\ketbra{\lambda_1}$ with $\ketbra{s}$. However, approximation $\ket{\lambda_1}=\alpha\ket{s}+\beta\ket{s^\perp}$ allows only the approximation of the form  $\ketbra{\lambda_1}= |\alpha|^2\ketbra{s} + o(1)H_{\lambda_1}$, where $\|H_{\lambda_1}\|=\order{1}$. Hence, the new transition rate would equal $\frac{1}{|\alpha|^2np}$, and it is not obvious how good is the approximation of $\alpha$. Furthermore, part $o(1)H_{\lambda_1}$ should have the spectral norm of order $\order{ 1/\sqrt n}$ so that we could use the perturbation theory.

%prohibits usage of perturbation theory, because the transition rate should be changed into $\gamma'=(1+r)\frac{1}{np}(1+\order{1/\sqrt n}(1+o(1)) = \frac{(1+r)}{np}(1+o(1))$. However $(1+o(1))$ is insufficient for to allow an efficient search, which was already shown even for a very simple cases \cite{novo}.

For these reasons, we will try to provide similar results concerning the optimality of quantum spatial search on random \ER graphs, by using the adjacency matrix. We will consider the efficiency in two contexts:
\begin{enumerate}
	\item When can the quantum search find almost all nodes optimally?
	\item When can the quantum search find \emph{all} nodes optimally (no-hiding property)?
%	\item when (almost) all nodes can be found in time $T=\pi\sqrt n/2$ with probability $1-o(1)$
\end{enumerate}

The first objective requires $|\braket{s}{\lambda_1}|=1-o(1)$, and
$\max_i|\lambda_i| = o(\lambda_1)$. Then by using
Lemma~\ref{lem:the-search-lemma}  and
Proposition~\ref{proposition:convergence-vectors} we can find $n(1-o(1))$ nodes
in $\Theta(\sqrt n)$ with $\Theta(1)$ success probability. The second objective
requires also $\| \ket s - \ket{\lambda_1}\|_\infty =o(1/\sqrt n)$. Here, the application of
Lemma~\ref{lem:the-search-lemma} is straightforward.

%The third and last objective requires all previous appropriate mentioned requirements and $\|\lambda_1/(np) -1| = o(1/\sqrt n)$. We will show a derivation confirming that the condition is sufficient. Here we choose $\gamma=(1+r)/(np)$ to find node in $\pi\sqrt n/2$ with $1-o(1)$ success probability.

\subsection{Quantum search is almost always optimal} \label{sec:adj-er}
First, let us show the convergence of eigenvalues. We will use the theorems from \cite{chung2011spectra}, originally defined for Chung-Lu model $\randg_n^{\rm CL}(\omega)$, written here for $\omega$ being all-$np$ vector.
\begin{theorem}[\cite{chung2011spectra}]\label{theorem:chung_eigs}
	Let $A$ be an adjacency matrix of a random \ER graph with parameter $p$. If $p>\frac{8}{9}\ln(\sqrt 2n)/n$, then, with the probability at least $1-1/n$ we have
	\begin{gather}
	|\lambda_1(A) - np | \leq \sqrt{8np\ln(\sqrt 2n)},\\
	\max_{i\geq 2} |\lambda_i(A) | \leq \sqrt{8np\ln(\sqrt 2n)}.
	\end{gather}
\end{theorem}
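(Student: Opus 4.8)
The plan is to split $A$ into a deterministic part plus a mean-zero random perturbation, control the spectrum of the deterministic part exactly, and then bound the perturbation in operator norm via a matrix concentration inequality. Since the off-diagonal entries of $A$ are independent $\mathrm{Bernoulli}(p)$ and there are no loops, $\EE A = p(J-\Id)$ with $J$ the all-ones matrix; its eigenvalues are $p(n-1)$ (once, with eigenvector $\tfrac{1}{\sqrt n}\mathbf 1$) and $-p$ with multiplicity $n-1$. Writing $E \coloneqq A - \EE A$, Weyl's perturbation inequality gives $|\lambda_i(A) - \lambda_i(\EE A)| \le \|E\|$ for every $i$, so both asserted bounds follow (up to an additive $p$ from $|p(n-1)-np|$, which is lower order and can be absorbed, or equivalently by comparing against $pJ$ after accounting for the diagonal shift) once we show $\|E\|\le \sqrt{8np\ln(\sqrt 2 n)}$ with probability at least $1-1/n$.

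To bound $\|E\|$, I would write $E = \sum_{i<j}\xi_{ij}\big(\ketbra{i}{j}+\ketbra{j}{i}\big)$, where $\xi_{ij} = a_{ij}-p$ are independent, mean-zero, and bounded by $1$ in absolute value. Each summand has operator norm $1$, and $\big(\ketbra{i}{j}+\ketbra{j}{i}\big)^2 = \ketbra{i}{i}+\ketbra{j}{j}$, so the matrix variance proxy is $\big\|\sum_{i<j}\EE\xi_{ij}^2(\ketbra{i}{i}+\ketbra{j}{j})\big\| = p(1-p)(n-1)\le np$. The matrix Bernstein inequality then yields, for $t>0$,
\begin{equation}
\PP\big(\|E\|\ge t\big) \le 2n\exp\!\left(-\frac{t^2/2}{np + t/3}\right).
\end{equation}

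Choosing $t = \sqrt{8np\ln(\sqrt 2 n)}$ makes $t^2/2 = 4np\ln(\sqrt 2 n)$; under the hypothesis $p>\tfrac{8}{9}\ln(\sqrt 2 n)/n$ one has $8\ln(\sqrt 2 n)\le 9np$, i.e.\ $t\le 3np$, hence $np+t/3\le 2np$ and the exponent is at least $2\ln(\sqrt 2 n) = \ln(2n^2)$. The right-hand side is then at most $2n/(2n^2)=1/n$, which is exactly the claimed probability. Combining with the Weyl bound finishes the proof.

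The main obstacle is the bookkeeping of constants: the $8$ under the square root, the $\sqrt 2$ inside the logarithm, and the $8/9$ threshold on $p$ are tuned precisely so that the lower-order term $t/3$ in the Bernstein denominator is dominated by $np$ and the union-bound factor $2n$ is swallowed, so the whole argument hinges on verifying $t\le 3np$—which is where the hypothesis on $p$ is consumed. A secondary care point is the discrepancy between $\EE A = p(J-\Id)$ and the rank-one matrix $pJ$ with its clean eigenvalues $np$ and $0$: this is a norm-$p$ correction, which only perturbs the bounds by an additive $p = o(\sqrt{np\ln n})$, but it should be stated explicitly rather than glossed over. (An alternative route via the trace/moment method of Füredi–Komlós type estimates would also give a bound of this shape, but less transparently and with worse control of constants.)
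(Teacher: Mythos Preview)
The dissertation does not supply its own proof of this statement: Theorem~\ref{theorem:chung_eigs} is quoted from \cite{chung2011spectra} and used as a black box. So there is no in-paper argument to compare against.

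That said, your proof is the right one and is essentially the argument in the cited source. Chung and Radcliffe bound $\|A-\bar A\|$ (with $\bar A_{ij}=w_iw_j/\|w\|_1$, which in the \ER specialization is $pJ$) via a matrix Bernstein/Chernoff inequality, and then read off the eigenvalue bounds by Weyl. Your decomposition into rank-two summands $\xi_{ij}(\ketbra{i}{j}+\ketbra{j}{i})$, your variance computation $\sigma^2=p(1-p)(n-1)\le np$, and your tuning of $t=\sqrt{8np\ln(\sqrt 2 n)}$ so that the hypothesis $p>\tfrac{8}{9}\ln(\sqrt 2 n)/n$ forces $t\le 3np$ and hence swallows both the $t/3$ term and the $2n$ union-bound prefactor, all reproduce their calculation exactly. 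The constants are not an accident: they are precisely the ones Chung obtains.

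Your caveat about the additive $p$ is well observed. In Chung's Chung--Lu setup the reference matrix is $pJ$ (eigenvalues $np$ and $0$), so no correction appears; for the loopless \ER model with $\EE A=p(J-\Id)$ there is indeed an extra $p$, which the dissertation's ``negligible impact'' remark about removing self-loops is tacitly absorbing. This is a discrepancy in the \emph{statement} as transcribed, not a gap in your argument.
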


Before proving $|\braket{\lambda_1}{s}|=1-o(1)$, we require another technical lemma.
\begin{lemma}[\cite{chung2011spectra}]\label{lemma:chung_norm}
	Let $A$ be an adjacency matrix of a random \ER graph with parameter $p$. Let for $n$ sufficiently large $p >
\frac{8}{9}\ln(n)/n$. Then, with the probability at least $1-o(1/n)$, for $n$
sufficiently large  we have
	\begin{equation}
	\|A - \mathbb E A\| \leq \sqrt{8 np \ln(n)}.
	\end{equation}
\end{lemma}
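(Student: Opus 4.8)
The statement is a non-asymptotic matrix concentration inequality for the centred adjacency matrix, so the plan is to write $A-\mathbb E A$ as a sum of independent, mean-zero, symmetric random matrices indexed by the potential edges and apply a matrix Bernstein / Freedman-type tail bound. Concretely, let $a_{ij}$, $1\le i<j\le n$, be i.i.d.\ Bernoulli$(p)$ random variables encoding whether $\{i,j\}\in E$, and set
\begin{equation}
Z_{ij} = (a_{ij}-p)\bigl(\ketbra{i}{j}+\ketbra{j}{i}\bigr),
\qquad
A-\mathbb E A = \sum_{1\le i<j\le n} Z_{ij} .
\end{equation}
Each $Z_{ij}$ is Hermitian with $\mathbb E Z_{ij}=0$, the $Z_{ij}$ are independent, and (no self-loops) the diagonal of the sum vanishes, as it must.

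Two quantities feed the matrix Bernstein inequality. First, a uniform bound: $\|Z_{ij}\| = |a_{ij}-p| \le \max(p,1-p)\le 1 =: R$. Second, the matrix variance proxy: since $\bigl(\ketbra{i}{j}+\ketbra{j}{i}\bigr)^2=\ketbra{i}{i}+\ketbra{j}{j}$ and $\mathbb E(a_{ij}-p)^2=p(1-p)$, one gets $\mathbb E Z_{ij}^2 = p(1-p)\bigl(\ketbra{i}{i}+\ketbra{j}{j}\bigr)$, hence
\begin{equation}
\sigma^2 := \Bigl\| \sum_{i<j}\mathbb E Z_{ij}^2 \Bigr\| = \bigl\| (n-1)p(1-p)\,\Id \bigr\| = (n-1)p(1-p)\le np .
\end{equation}
The matrix Bernstein bound then reads $\PP\bigl(\|A-\mathbb E A\|\ge t\bigr)\le 2n\exp\!\bigl(-\tfrac{t^2/2}{\sigma^2+Rt/3}\bigr)$.

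Next I would plug in $t=\sqrt{8np\ln n}$. The hypothesis $p>\tfrac89\ln n/n$ is exactly calibrated for this step: it gives $np\ge\tfrac89\ln n$, i.e.\ $8np\ln n\le 9n^2p^2$, so $Rt/3=\tfrac13\sqrt{8np\ln n}\le np$; therefore $\sigma^2+Rt/3\le 2np$ and the exponent is at least $\tfrac{t^2/2}{2np}=2\ln n$. This already yields failure probability $\le 2n^{-1}$, and the slightly sharper $o(1/n)$ claimed in the statement follows from the finer version of the inequality in \cite{chung2011spectra}, which for $n$ large retains a $(1-o(1))$ improvement in the exponent — using the exact $\sigma^2=(n-1)p(1-p)$ rather than $np$ and absorbing the lower-order $Rt/3$ term into the ``$n$ sufficiently large'' slack.

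I expect the only real obstacle to be this calibration of constants: producing the matrix Bernstein inequality itself (via the Laplace-transform / Golden--Thompson route, or via an $\varepsilon$-net reduction to the scalar Bernstein bound for $\bra x(A-\mathbb E A)\ket x$) is routine, but one must track the numerical constants carefully so that the precise threshold $\sqrt{8np\ln n}$ together with the $\tfrac89$ in the hypothesis drop out. Since the lemma is quoted verbatim from \cite{chung2011spectra} — it is their general spectral-norm bound for $\mathcal G_n^{\rm CL}(\omega)$ specialised to the constant weight vector $\omega=np\cdot(1,\dots,1)$, for which $\mathbb E A = p(J-\Id)$ — the cleanest route in the dissertation is simply to invoke that result, the argument above being the underlying mechanism.
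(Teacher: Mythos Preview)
Your proposal is correct and matches the underlying approach. The dissertation does not actually prove this lemma: it simply states that it ``comes from the proof of Theorem~1 in \cite{chung2011spectra} by choosing $\varepsilon=2/n$ therein.'' That theorem is precisely the matrix-Bernstein bound for the Chung--Lu model that you sketched, specialised to constant weights, and your computation recovers exactly the constants: plugging $\varepsilon=2/n$ into Chung--Radcliffe's bound $\|A-\mathbb E A\|\le 2\sqrt{\Delta\ln(2n/\varepsilon)}$ with $\Delta\approx np$ gives $2\sqrt{np\cdot 2\ln n}=\sqrt{8np\ln n}$, and the hypothesis $\Delta\ge\tfrac49\ln(2n/\varepsilon)$ becomes $np\ge\tfrac89\ln n$.

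One small remark: with $\varepsilon=2/n$ the failure probability is exactly $2/n$, i.e.\ $O(1/n)$ rather than the $o(1/n)$ stated; your instinct that the sharper claim needs some extra slack is right, but in fact the discrepancy appears to be loose writing in the dissertation rather than a substantive gap you need to close.
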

The lemma comes from the proof of Theorem~1 \cite{chung2011spectra} by choosing $\varepsilon=2/n$ therein.

\begin{lemma}
	Let $A$ be an adjacency matrix of a random \ER graph with parameter $p$. Provided $p=\omega(\log n/n)$, we have asymptotically almost surely $|\braket{\lambda_1(A)}{s}| =1-o(1)$.
\end{lemma}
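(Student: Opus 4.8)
The plan is to work in the eigenbasis of $A$ and exploit the concentration of $A$ around $\EE A$. Write $J$ for the all-ones matrix, so that $\EE A = p(J-\Id) = p(n\dyad{s}-\Id)$ and in particular $\bra s \EE A \ket s = p(n-1)$. First I would check that the hypothesis $p=\omega(\log n/n)$ makes both cited results applicable: for $n$ large it implies $p>\tfrac{8}{9}\ln(\sqrt 2 n)/n$ and $p>\tfrac{8}{9}\ln(n)/n$, so Theorem~\ref{theorem:chung_eigs} and Lemma~\ref{lemma:chung_norm} apply, and a union bound shows that with probability $1-o(1)$ we simultaneously have
\begin{gather}
\lambda_1(A)=np(1+o(1)),\qquad \max_{i\geq 2}|\lambda_i(A)|\leq \sqrt{8np\ln(\sqrt 2 n)},\\
\|A-\EE A\|\leq \sqrt{8np\ln n}.
\end{gather}
The point, used repeatedly, is that $\sqrt{np\ln n}=o(np)$, which is exactly the statement $np/\log n\to\infty$, \ie the hypothesis on $p$. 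From now on we condition on this event; note that $\lambda_1(A)$ is then a simple eigenvalue for $n$ large, so $\ket{\lambda_1(A)}$ is well defined up to a phase.

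Next I would expand $\ket s=\sum_{i=1}^n c_i\ket{\lambda_i(A)}$ with $\sum_i|c_i|^2=1$, choosing the phase of $\ket{\lambda_1(A)}$ so that $c_1=|\braket{\lambda_1(A)}{s}|\geq 0$. On one hand, the norm estimate gives
\begin{equation}
\bra s A\ket s=\bra s\EE A\ket s+\order{\|A-\EE A\|}=p(n-1)+\order{\sqrt{np\ln n}}=np(1+o(1)),
\end{equation}
since $\|\ket s\|=1$. On the other hand $\bra s A\ket s=\sum_i|c_i|^2\lambda_i(A)=c_1^2\lambda_1(A)+\sum_{i\geq 2}|c_i|^2\lambda_i(A)$, and the tail is controlled by $\big|\sum_{i\geq 2}|c_i|^2\lambda_i(A)\big|\leq \max_{i\geq 2}|\lambda_i(A)|\,\sum_{i\geq 2}|c_i|^2\leq\sqrt{8np\ln(\sqrt 2 n)}=o(np)$. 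Hence $c_1^2\lambda_1(A)=np(1+o(1))$; dividing by $\lambda_1(A)=np(1+o(1))>0$ gives $c_1^2=1+o(1)$, and since $c_1^2\leq 1$ this forces $c_1^2=1-o(1)$, \ie $|\braket{\lambda_1(A)}{s}|=1-o(1)$.

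The argument is essentially routine; the only place demanding care is keeping the $o(1)$'s honest, \ie verifying $\sqrt{np\ln n}/(np)\to 0$, which is precisely the assumption $p=\omega(\log n/n)$, and observing that the spectral-norm bound of Lemma~\ref{lemma:chung_norm} does control the quadratic form $\bra s(A-\EE A)\ket s$ because $\ket s$ is a unit vector. No sharper estimate on $|\alpha|$ or on the structure of the perturbation is needed for this particular statement.
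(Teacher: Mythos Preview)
Your proof is correct. The ingredients are the same as the paper's (Theorem~\ref{theorem:chung_eigs} and Lemma~\ref{lemma:chung_norm}), but the way you combine them is different and, if anything, a bit cleaner.

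The paper applies the perturbation to the eigenvector: it computes $(A-\EE A)\ket{\lambda_1}=\lambda_1\ket{\lambda_1}-np\alpha\ket{s}$, bounds the norm of this vector from above by $\|A-\EE A\|$ and from below by $|\lambda_1-np\alpha^2|$, and solves for $\alpha^2$. You instead apply the perturbation to the target vector, evaluating the quadratic form $\bra s A\ket s$ two ways and using the spectral gap $\max_{i\ge2}|\lambda_i|=o(np)$ to kill the tail $\sum_{i\ge2}|c_i|^2\lambda_i$. This is essentially a Rayleigh-quotient argument: $\bra s A\ket s\approx\lambda_1$ forces $\ket s$ to be almost aligned with $\ket{\lambda_1}$. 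Your route avoids the mildly awkward step in the paper where the vector is expanded in a non-orthogonal pair $\{\ket{\lambda_1},\ket{s^\perp}\}$ (the norm computation there still works, but one has to check it). The paper's route, on the other hand, is closer in spirit to Davis--Kahan eigenvector perturbation and generalises more readily when one wants a quantitative bound on $\beta=\sqrt{1-\alpha^2}$, as used later in Proposition~\ref{proposition:eigevector_ER}. For the bare statement $|\braket{\lambda_1}{s}|=1-o(1)$, either approach is fine.
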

\begin{proof}
	The proof goes similar as in \cite{chakraborty2016spatial}. Let $\ket{\lambda_1} = \alpha \ket s + \beta \ket{s^\perp}$. it is sufficient to show that $\alpha\geq1-o(1)$. Note that we have
	\begin{equation}
	\begin{split}
	(A-\EE A) \ket{\lambda_1} &= \sum_i \lambda_i\ketbra{\lambda_i} - np\ketbra{s}{s} = \lambda_1 \ket{\lambda_1} - np\alpha \ket{s} \\
	&= (\lambda_1-np\alpha^2)\ket{\lambda_1} - np\alpha\beta \ket{s^\perp},
	\end{split}
	\end{equation}
	and by this $\|(A-\EE A) \ket{\lambda_1}\|^2 \geq (\lambda_1-np\alpha^2)^2$. Since $\|B \ket{\varphi}\| \leq \|B\|$ for any choice of $B$ and $\ket\varphi$, by Lemma~\ref{lemma:chung_norm} we have
		\begin{equation}
		(\lambda_1-np\alpha^2)^2 \leq 8 np \ln(n),
		\end{equation}
 and by this
	\begin{equation}
	\alpha^2 \geq \frac{2\sqrt{ np \ln(n)}}{np} - \frac{\lambda_1}{np} = 2\sqrt{\frac{\ln n}{np}} - \frac{np(1-o(1))}{np} = 1-o(1),
	\end{equation}
	where the transformations are valid a.a.s., and the first inequality comes from Lemma~\ref{theorem:chung_eigs}. By $\alpha^2=1-o(1)$ we have $\alpha = 1-o(1)$.
\end{proof}

Taking all into account, by Lemma~\ref{lem:the-search-lemma} and  Proposition~\ref{proposition:convergence-vectors}, provided $p = \omega(\log(n)/n)$, almost all nodes can be found optimally in time $\Theta(\sqrt n)$ with constant success probability. Note that since $|\braket{\lambda_1}{s}| = 1-o(1)$, we can start the evolution in $\ket{s}$ instead of $\ket{\lambda_1}$.

\subsection{No-hiding theorem}  \label{sec:adj-er-nh} In this section we will show that if $p = \omega( \log^3(n)/(n\log^2\log n))$, the quantum search is optimal for \emph{all} nodes for
almost all graphs. We will show this by proving $\|\ket{\lambda_1}-\ket{s}\| = o(1/\sqrt n)$ a.a.s. Then, by the direct application of the Lemma~\ref{lem:the-search-lemma} we obtain the result.

Here we follow the proof shown by Mitra \cite{mitra2009entrywise} which assumed $p\geq \log ^6 n/n$.  The proof of the following proposition can be found in App.~\ref{app:er_proof_principal_eigenvector}.
\begin{proposition}[\cite{glos2018vertices}] \label{proposition:eigevector_ER}
	Let $\ket{\lambda_{1}}$ be a principal eigenvector of an adjacency matrix of a random \ER graph with parameter $p$. For the probability $p = \omega \left( \log^3(n)/(n\log^2\log n) \right)$ and some constant $c>0$ we have
	\begin{equation}
	\Vert \ket{\lambda_1} -\ket{s} \Vert_{\infty} \leq c \frac{1}{\sqrt{n}}\frac{\ln^{3/2}(n)}{\sqrt{np} \ln(np)}
	\end{equation}
	with probability $1-o(1)$.
\end{proposition}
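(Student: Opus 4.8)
The plan is to follow the entrywise eigenvector analysis of Mitra~\cite{mitra2009entrywise}, sharpening the estimates so that the condition $p \geq \log^6 n/n$ used there is relaxed to $p = \omega(\log^3 n/(n\log^2\log n))$. Throughout I write $A = \EE A + E$, where $\EE A = np\ketbra{s}{s} - pI$ and $E = A - \EE A$ is a symmetric random matrix with independent, centred, bounded entries. Since shifting by $-pI$ does not affect eigenvectors, the unperturbed operator $\EE A$ has a rank-one part with eigenvalue $\approx np$ and spectral gap $\approx np$. From the facts already established in this section I may use: (i) $\lambda_1 = np(1+o(1))$ and $\max_{i\geq 2}|\lambda_i| = \order{\sqrt{np\ln n}}$ (Theorem~\ref{theorem:chung_eigs}); (ii) $\|E\| \leq \sqrt{8np\ln n}$ a.a.s.\ (Lemma~\ref{lemma:chung_norm}); and (iii) $\braket{\lambda_1}{s} = 1-o(1)$, so, fixing the sign of $\ket{\lambda_1}$ so that $\braket{s}{\lambda_1}>0$, the eigenvalue equation rearranges into the self-consistent identity
\begin{equation}
\ket{\lambda_1} = \frac{np\,\braket{s}{\lambda_1}}{\lambda_1+p}\,\ket{s} + \frac{1}{\lambda_1+p}\,E\ket{\lambda_1}.
\end{equation}
The scalar prefactor equals $1 + \order{\sqrt{\ln n/(np)}}$, so $\bigl(\tfrac{np\,\braket{s}{\lambda_1}}{\lambda_1+p}-1\bigr)\ket s$ contributes at most $\order{n^{-1/2}\sqrt{\ln n/(np)}}$ in $\ell_\infty$, which is dominated by the claimed bound; it therefore suffices to control $\|E\ket{\lambda_1}\|_\infty$.

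First I would handle the ``zeroth order'' term $\|E\ket s\|_\infty$. Its $v$-th entry is $n^{-1/2}\sum_{w\neq v}(A_{vw}-p)$, a sum of $n-1$ independent centred summands bounded by $n^{-1/2}$ with total variance at most $p$; Bernstein's inequality together with a union bound over the $n$ vertices gives $\|E\ket s\|_\infty = \order{\sqrt{p\ln n}}$ a.a.s., hence $(\lambda_1+p)^{-1}\|E\ket s\|_\infty = \order{n^{-1/2}\sqrt{\ln n/(np)}}$.

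The hard part is that one cannot bound $\|E\ket{\lambda_1}\|_\infty$ by $\|E\|_{\infty\to\infty}\,\|\ket{\lambda_1}\|_\infty$, since $\|E\|_{\infty\to\infty}\approx np$ makes the self-consistent inequality vacuous, and one cannot apply concentration directly to $E\ket{\lambda_1}$ because $\ket{\lambda_1}$ depends on $E$. The resolution, following Mitra, is a \emph{leave-one-out decoupling}: for each vertex $v$ let $\ket{\phi^{(v)}}$ be the principal eigenvector of the matrix obtained from $A$ by deleting row and column $v$; this vector is a function only of the edges not incident to $v$, hence independent of $\{A_{vw}\}_w$, and a rank-two perturbation estimate (using the $\approx np$ spectral gap) controls $\|\ket{\lambda_1}-\ket{\phi^{(v)}}\|$. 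Writing $\langle v|\lambda_1\rangle = \lambda_1^{-1}\sum_{w\neq v}A_{vw}\langle w|\lambda_1\rangle$ and substituting $\langle w|\phi^{(v)}\rangle$ for $\langle w|\lambda_1\rangle$ turns the leading part into a sum of independent random variables amenable to Bernstein, while the correction is governed by the perturbation bound. Iterating this step — equivalently, tracking the power iteration started from $\ket s$ — for $k \sim \ln n/\ln(np)$ rounds makes the $\ell_2$ error contract below $n^{-1/2}$ (the per-round contraction rate is $\|E\|/\lambda_1 \sim \sqrt{\ln n/(np)}$), and summing the $k$ per-round $\ell_\infty$ increments, each of order $n^{-1/2}\sqrt{\ln n/(np)}$, produces the extra factor $\ln(n)/\ln(np)$ and hence the stated bound $\order{n^{-1/2}\ln^{3/2}(n)/(\sqrt{np}\,\ln(np))}$. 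The threshold $p = \omega(\log^3 n/(n\log^2\log n))$ is precisely the condition under which this quantity is $o(n^{-1/2})$, which is what the subsequent application of Lemma~\ref{lem:the-search-lemma} requires.

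The main obstacle is exactly this decoupling-and-iteration step: making the leave-one-out independence precise, obtaining the rank-two perturbation estimate with usable constants, and — most delicately — bookkeeping the accumulated $\ell_\infty$ error over the $\sim \ln n/\ln(np)$ iterations tightly enough that the powers of $\log$ come out as claimed rather than worse. All the remaining ingredients (Bernstein bounds, Chernoff bounds on vertex degrees, the spectral estimates (i)--(iii)) are routine.
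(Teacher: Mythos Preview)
Your route is different from the paper's (and from Mitra's, whose argument you mislabel as leave-one-out). The paper avoids any decoupling by exploiting the \emph{nonnegativity} of $A$. Since $A_{vw}\in\{0,1\}$, if a vector $\psi$ has all entries in $[d/\sqrt n,\,u/\sqrt n]$ then $(A\psi)_v=\sum_wA_{vw}\psi_w\in[\,d\deg(v)/\sqrt n,\;u\deg(v)/\sqrt n\,]$; combined with the uniform bound $\deg(v)/\lambda_1\in[d,u]$ (from degree concentration and the estimate on $\lambda_1$ in Theorem~\ref{theorem:chung_eigs}), one application of $A/\lambda_1$ sends the box $[d/\sqrt n,u/\sqrt n]^n$ into $[d^2/\sqrt n,u^2/\sqrt n]^n$. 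By induction $(A/\lambda_1)^l\ket s$ has every entry in $[d^l/\sqrt n,\,u^l/\sqrt n]$. Separately, the spectral expansion gives $(A/\lambda_1)^l\ket s=\alpha\ket{\lambda_1}+\beta\sum_{i\ge2}(\lambda_i/\lambda_1)^l\gamma_i\ket{\lambda_i}$, and choosing $l\asymp\ln n/\ln\sqrt{np/\ln n}$ makes the residue at most $n^{-1}$ in $\ell_\infty$ (bounding $\ell_\infty$ by $\ell_2$). Equating the two descriptions of the same vector pins down each $\alpha\braket{v}{\lambda_1}$, and since $1-d,\,u-1\asymp\sqrt{\ln n/(np)}$ one has $1-d^l,\,u^l-1\approx l\sqrt{\ln n/(np)}$, which is the stated bound. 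This needs only the per-vertex degree bound --- no Bernstein against dependent eigenvector entries, no leave-one-out surgery, no rank-two perturbation. Your leave-one-out plan would also succeed and has the merit of not requiring entrywise positivity (so it generalises beyond adjacency matrices), but here it is the more laborious path; and note that power iteration and leave-one-out are not ``equivalent'' --- they are distinct mechanisms that happen to produce the same $\log$-accounting.
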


\subsection{Conclusions for adjacency matrix}

In Sections~\ref{sec:adj-er} and \ref{sec:adj-er-nh} we have shown two significant thresholds for $p$ which
determine the known behaviour of quantum search based on CTQW for adjacency
matrices. For $p=\omega(\log n/n)$, almost all nodes can be found in
optimal time with constant success probability. Under stronger condition
$p=\omega(\log^3(n)/(n\log^2\log(n)))$, all vertices can be found in optimal
$\Theta(\sqrt n)$ time.

One could ask what happens below these two thresholds. \ER graph is a very special model, with the connectivity threshold at $ \ln n / n $. This means that for any $\varepsilon>0$, for $p<(1-\varepsilon)\ln(n)/n$ there is almost surely at least one isolated vertex. For such a vertex, the amplitude does not change, hence the success probability is $1/n$ independently on the evolution time. Furthermore, if $np<1$  then all connected components are of order at most $\order{\log(n)}$, hence one cannot expect the success probability better than $\order{\log(n)/n}$.

Still, when $p>(1+\varepsilon)\ln n/n$, the graphs are almost surely connected. Hence,
there is a  gap between the derived threshold and the connectivity threshold.
Unfortunately, for the adjacency matrix we have to take care of all three parameters
required in Lemma~\ref{lem:the-search-lemma}: largest eigenvalue, spectral gap,
and principal eigenvector. In the next section, we will provide a better result
using the Laplacian matrix instead of the adjacency matrix.

\section{Laplacian matrix}

Let us consider the Laplacian matrix $L=D-A$. Provided the graph $G$ is connected, its Laplacian is a nonnegative matrix with a single zero eigenvalue $\lambda_n=0$, with the corresponding eigenvector being an uniform superposition $\ket{s}$. Hence, not only $\braket{\lambda_n(L)}{s}=1$, but also $\|\ket{\lambda_n(L)}- \ket s\|_\infty=0$. Therefore, the conditions for optimality for almost all nodes in fact already imply the optimality for all nodes, provided the graph is almost surely connected.

Note that for the Laplacian matrix we will consider $H_G = \Id - \gamma L$ matrix, in order to provide a spectral gap next to the largest eigenvalue as required by Lemma~\ref{lem:the-search-lemma}. Since $- L$ is a nonpositive matrix, a shifting and rescaling procedure will always be required.

Let us now show that all nodes can be found in $\Theta(\sqrt n)$ time using the Laplacian matrix as long as $p=\Omega(\log n/n)$. We will demonstrate it in two parts, first assuming $p=\omega(\log n/n)$, then for $p=p_0 \ln n/n$ for $p_0\in\RR_{\geq 0}$.

\subsection{Case $ p=\omega(\log n/n)$}
 Under the condition $ p=\omega(\log n/n)$ we have $\lambda_2(L/(np))=1+\order{\sqrt{\frac{\log n}{np}}} \sim1$ \cite{kolokolnikov2014algebraic}. To show that $H_G=-\gamma L$ for a proper choice of $\gamma$ satisfies Lemma~\ref{lem:the-search-lemma-orig}, it is enough to show that $\lambda_n/(np) \to 1$ as well.
 By Theorem~1.5 from \cite{bryc2006spectral}, if $\tilde L$ is a symmetric
matrix whose off-diagonal elements have two-points distribution with mean 0 and
variance $p(1-p)$ and $\tilde{L}_{ii} =  \sum_{j \neq i} \tilde{L}_{ij}$, then
\begin{equation}
\lim_{n\to\infty} \frac{\|
	\tilde L\|}{\sqrt{2np(1-p)\log n}}=1. \label{eq:max-eigenvalue-laplacian}
\end{equation}
Note that $p$ may depend on $n$. Hence, we can extend
the Corollary 1.6 from the same paper.

Let $L= \tilde L + \EE L$, where $\EE L$ is an expectation of a
random Erd\H{o}s-R\'enyi Laplacian matrix. $\EE L$ has a single 0 eigenvalue and
all of the others equal $np$. By this we have $ \|\EE L\| = np$.
Then we have
\begin{equation}
\left|\frac{\lambda_1(L)}{np}-\frac{\|\EE L\|}{np}\right| \leq \frac{\|L-\EE L\|}{np} = \frac{\|\tilde L\|}{np}\to 0
\end{equation}
where the limit comes from Eq.~\eqref{eq:max-eigenvalue-laplacian}, assuming
$p=\omega(\log n/n)$.

We have shown that $\lambda_1(L) \sim np$, $\lambda_{n-1}(L) \sim np$, and
$\lambda_n=0$. Note that for $H_G=\Id-L/(np)$, we have $\lambda_1(H_G)=1$,
$\lambda_2(H_G)  =o(1)$, and $\lambda_n =o(1)$. Since the principal eigenvector
of $H_G$ is a uniform superposition, we have that for the combinatorial Laplacian
\emph{all vertices} can be found in optimal $\Theta(\sqrt n)$ time with $1+o(1)$
success probability.

\subsection{Case $ p=p_0\ln n/n$} 

Suppose $G$ is a random graph chosen according to
$\randgn[ER](p_0\frac{\ln(n)}{n})$  distribution, for $p_0>1$ being a constant.
Let $\delta_{\min}$ ($\delta_{\max}$) be a minimal (maximal) degree of a sampled
graph. Based on \cite{kolokolnikov2014algebraic} we can show that the algebraic
connectivity $\lambda_{n-1}(L)\sim \delta_{\min}$. Below we show similar results
for the largest eigenvalue. The proof can be found in
App.~\ref{app:er_proof_largest_eigenvalue}.

\begin{theorem}
Let $G$ be a random graph chosen according to $\randgn[ER](p)$. Let $p_0>0$ be
such that $np\geq p_0\ln(n)$. Let $\delta_{\max}\sim cnp$ for some $c>0$ almost
surely. Then almost surely $\lambda_1(L(G)) \sim cnp$ .
\end{theorem}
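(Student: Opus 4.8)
\emph{Proof plan.} The idea is to sandwich $\lambda_1(L(G))$ between $(1-o(1))cnp$ and $(1+o(1))cnp$, using the assumption $\delta_{\max}\sim cnp$ on both ends, so that all the real work is in one spectral concentration estimate. For the lower bound I would use that the Laplacian is additive over edge-disjoint subgraphs: if $v^*$ is a vertex of maximum degree and $S$ is the star consisting of the $\delta_{\max}$ edges incident to $v^*$, then $L(G)=L(S)+\sum_{e\notin E(S)}L_e\succeq L(S)$, and $L(S)$ (viewed as an $n\times n$ matrix) has largest eigenvalue $\delta_{\max}+1$. Hence $\lambda_1(L(G))\ge\delta_{\max}+1$, which together with $\delta_{\max}\sim cnp$ and $cnp\to\infty$ gives $\lambda_1(L(G))\ge(1-o(1))\,cnp$ a.a.s.

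For the upper bound I would write $L=D-A$ and apply Weyl's inequality, $\lambda_1(L)\le\lambda_1(D)+\lambda_1(-A)=\delta_{\max}+|\lambda_n(A)|$, where $\lambda_n(A)\le0$ because $\operatorname{tr}A=0$ and $A\neq0$. To bound $|\lambda_n(A)|$ I would compare $A$ with its mean $\EE A=p(J-\Id)$ ($J$ the all-ones matrix), whose eigenvalues are $p(n-1)$ on $\ket{s}$ and $-p$ on $\ket{s}^{\perp}$; Weyl again gives $\lambda_n(A)\ge\lambda_n(\EE A)-\|A-\EE A\|=-p-\|A-\EE A\|$. The one nontrivial probabilistic input is then the sharp concentration bound
\begin{equation}
\|A-\EE A\|=\order{\sqrt{np}}\quad\text{a.a.s.}
\end{equation}
valid for $np\ge p_0\ln n$. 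Granting it, $|\lambda_n(A)|\le p+\order{\sqrt{np}}=\order{\sqrt{np}}$, so $\lambda_1(L)\le\delta_{\max}+\order{\sqrt{np}}=\delta_{\max}\bigl(1+\order{1/\sqrt{np}}\bigr)=(1+o(1))\,\delta_{\max}$, since $np\to\infty$. Intersecting the (a.a.s.) events $\delta_{\max}\sim cnp$ and $\|A-\EE A\|=\order{\sqrt{np}}$ then yields $\lambda_1(L(G))\sim cnp$.

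I expect the whole difficulty to be that displayed concentration estimate, and in particular its sharpness. When $np=\omega(\log n)$, Lemma~\ref{lemma:chung_norm} already gives $\|A-\EE A\|\le\sqrt{8np\ln n}=o(np)$, which suffices; the delicate regime is $np=\Theta(\log n)$, where $\sqrt{np\ln n}=\Theta(np)$ and the crude bound is off by a $\sqrt{\ln n/np}$ factor, so the coarser tools (the signless-Laplacian bound $L\preceq2D$, or the Anderson--Morley/Merris bounds) only give $\delta_{\max}+\order{np}$, not $(1+o(1))\delta_{\max}$. What rescues the argument is precisely the hypothesis $np\ge p_0\ln n$: it forces $\delta_{\max}=\order{np}$, hence every vertex, even a high-degree one, contributes only an $\order{\sqrt{np}}$ spectral outlier to $A-\EE A$, so a Feige--Ofek-type decomposition of the graph into a well-behaved bulk plus a sparse high-degree part yields $\|A-\EE A\|=\order{\sqrt{np}}$. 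This is the estimate I would carry out in App.~\ref{app:er_proof_largest_eigenvalue}; the rest of the proof above is routine.
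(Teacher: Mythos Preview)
Your proof is correct and follows essentially the same approach as the paper's: sandwich $\lambda_1(L)$ between $\delta_{\max}$ and $\delta_{\max}+O(\sqrt{np})$, invoking the Feige--Ofek spectral bound for the $O(\sqrt{np})$ term in the regime $np\ge p_0\ln n$. The paper's argument is slightly leaner on both ends: for the lower bound it simply observes $\lambda_1(L)\ge\bra{v^*}L\ket{v^*}=\delta_{\max}$ from the Rayleigh quotient at a coordinate vector (no star subgraph needed), and for the upper bound it works directly with the variational formula over $\ket{s}^\perp$, bounding $\max_{\phi\perp s}|\bra{\phi}A\ket{\phi}|\le C\sqrt{np}$ via Feige--Ofek rather than routing through $\|A-\EE A\|$ and Weyl twice. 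Your detour through $\|A-\EE A\|$ is harmless, and your identification of the Feige--Ofek bound as the crucial input (and why the $\sqrt{np\ln n}$ bound is too weak at $np=\Theta(\ln n)$) is exactly right.
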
 

Based on \cite{bollobas2001random} it can be shown that 
\begin{equation}
\delta_{\min} \sim (1-p_0) \left(W_{-1}\left  (\frac{1-p_0}{\ee p_0}\right)\right)^{-1}\ln(n)
\end{equation}
and
\begin{equation}
\delta_{\max} \sim (1-p_0) \left(W_{0}\left  (\frac{1-p_0}{\ee p_0}\right)\right)^{-1}\ln(n).
\end{equation}
 Here $W_{-1}$ and $W_0$ are Lambert W functions. 

Note that $\lambda_{1}(L) = \Theta(\lambda_{n-1}(L))$, which is sufficient to
show the optimality of the search. Indeed, let us consider graph matrix $H_G=\Id
- L/\lambda_1(L)$. Its largest eigenvalue equals 1, and the smallest one equals 0.
Since $\lambda_1(L)$ and $\lambda_{n-1}(L)$ grows with the same complexity, the
spectral gap for $H_G$ equals $1-\lambda_{n-1}(L)/\lambda_1(L)$, which is
constant. Applying Lemma~\ref{lem:the-search-lemma-orig} we obtain a lower bound for
success probability
\begin{equation}
\sim \frac{1-  \frac{\lambda_{n-1}(L)}{\lambda_1(L)}}{1+  \frac{\lambda_{n-1}(L)}{\lambda_1(L)}} = \frac{\lambda_1(L)-\lambda_{n-1}(L)}{\lambda_1(L)+\lambda_{n-1}(L)} \sim \frac{W_{-1}\left  (\frac{1-p_0}{\ee p_0}\right)-W_{0}\left  (\frac{1-p_0}{\ee p_0}\right)}{W_{-1}\left  (\frac{1-p_0}{\ee p_0}\right)+W_{0}\left  (\frac{1-p_0}{\ee p_0}\right)}.
\end{equation}
Note that the lower bound can be improved. Adding the identity matrix has no
impact on the quantum evolution, and rescaling of the form $H/b$ can be
compensated by the proper transformation of transition rate $\gamma$.

Without loss of generality, let $H'$ be a Hamiltonian with the largest
eigenvalue equal to 1. Let $H(a) = (H'+a \Id)/b$. We will search for such
$a,b\in \R$ that $H(a)$ will have a spectral gap as required in
Lemma~\ref{lem:the-search-lemma}, and so that $\frac{1-c}{1+c}$ is maximized.
Note that $b>0$ so that the order of eigenvalue is preserved. Furthermore,
$b=1+a$, as $H(a)$ should still have the largest eigenvalue equal to 1.

Let $\lambda_2,\lambda_n$ be the second largest and smallest eigenvalues of
$H'$. Note that the $c$ of $H(a)$ is defined as $c(a) \coloneqq \frac{\max
	\{|\lambda_2+a|,|\lambda_n+a|\}}{1+a}$. Since the success probability
$\frac{1-c}{1+c} =  \frac{2}{1+c}-1$ decreases in $c$, maximizing the
lower bound on the success probability is equivalent to minimizing $c$. Note that for
$a_{\rm thr}=-\frac{\lambda_2+\lambda_n}{2}$ we have $c(a)=
\frac{\lambda_2-\lambda_n}{2-\lambda_2-\lambda_n}$. For $a>a_{\rm thr}$, the
$\lambda_2+a$ plays a dominant role in the maximum in $c(a)$, hence
\begin{equation}
c(a) = \frac{\lambda_{2}+a}{1+a} = 1 + \frac{\lambda_2-1}{1+a},
\end{equation}
is a function decreasing in $a$ because $\lambda_2-1<0$. For $a< a_{\rm thr}$, the $\lambda_n+a$ plays a dominant role and we have
\begin{equation}
c(a) = -\frac{\lambda_{n}+a}{1+a} = -1 + \frac{\lambda_n-1}{1+a},
\end{equation}
which increases in $a$. This confirms that $c(a)$ has the global maximum at
$a_{\rm thr}$, which gives the optimal shifting and rescaling parameter. Note that
for optimal $a$ we have the success probability
\begin{equation}
\sim \frac{1-c(a_{\rm thr})}{1-c(a_{\rm thr})} = \frac{1-\frac{\lambda_2-\lambda_n}{2-\lambda_2-\lambda_n}}{1+\frac{\lambda_2-\lambda_n}{2-\lambda_2-\lambda_n}}= \frac{1-\lambda_2}{1-\lambda_n}.
\end{equation} 

Let us consider the Laplacian of \ER graphs, where $\lambda_{n-1}(L) \sim
(1-p_0) \left(W_{-1} \left(\frac{1-p_0}{\ee p_0}\right)\right)^{-1}\ln n$ and
$\lambda_{1}(L) \sim (1-p_0) \left(W_{0} \left(\frac{1-p_0}{\ee
	p_0}\right)\right)^{-1}\ln n $. Then for  $H_G=\Id -L/\lambda_1 $ we have
\begin{gather}
\lambda_1(H_G) = 1, \\
\lambda_2(H_G) \sim 1 - \frac{W_{0} \left(\frac{1-p_0}{\ee p_0}\right)}{W_{-1} \left(\frac{1-p_0}{\ee p_0}\right)},\\
\lambda_n(H_G) = 0.
\end{gather}
Using the optimal shift-and-rescaling technique we have that the success probability
can be upperbounded almost surely by
\begin{equation}
\sim \frac{1-c(a)}{1+c(a)}  = \frac{1-\lambda_2(H_G)}{1-\lambda_n(H_G)} =
\frac{W_{0} \left(\frac{1-p_0}{\ee p_0}\right)}{W_{-1} \left(\frac{1-p_0}{\ee
		p_0}\right)}.
\end{equation}
The function changes smoothly from 0 for $p_0=1$ to 1 for $p_0\to
\infty$, see Fig~\ref{fig:p0-theoretical}. This coincides with the intuition
behind the random \ER graphs. For probability $p<\frac{(1-\varepsilon)\ln n
}{n}$, the  graphs are almost surely disconnected, hence the Laplacian has multiple zero
eigenvalues, giving no spectral gap. On the other hand, when $p_0$
approaches $\infty$,  the value of $p$ becomes `closer' to $\omega(\log n/n)$ case which
was proved to attain full success probability. Thus $p_0\frac{\ln n}{n}$ is a
smooth transition case for the Laplacian matrix.

 \begin{figure}
	\centering
	\includegraphics{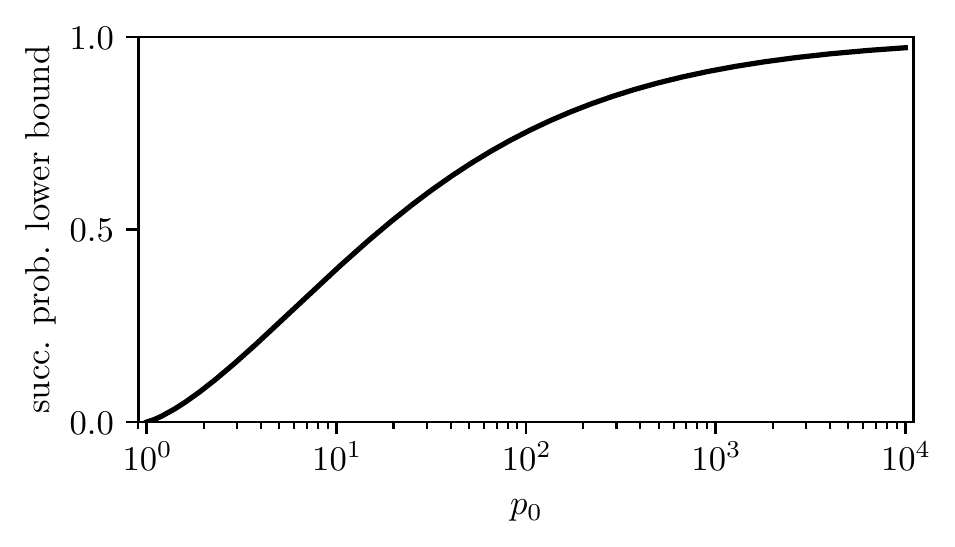}
	\caption{\label{fig:p0-theoretical} The limit lower bound of success probability of finding nodes for \ER graphs with parameter $p=p_0\frac{\ln n}{n}$. Note that the function changes smoothly from 0 for $p_0=1$ to 1 for $p_0\to \infty$. }
	\vspace{1cm}
	\centering
	\includegraphics{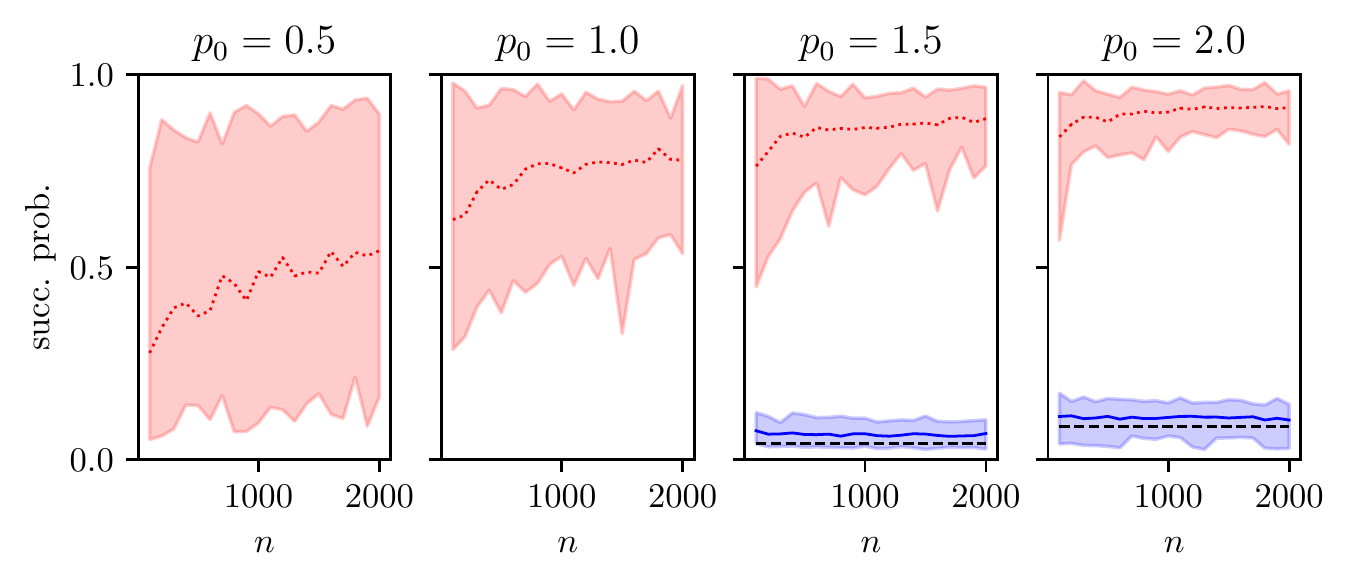}
	\caption{\label{fig:p0-statistics} The success probability and its lower bound for \ER graphs with $p=p_0\ln n/n$ with $H_G = \Id - L/\mu_1$. Dotted red line denote the mean success probability with $\gamma = \sum_{i\geq 2} \frac{|\braket{w}{\lambda_1}|^2}{1-\lambda_i} \Big / \sum_{i\geq 2} |\braket{w}{\lambda_1}|^2$, which is a transition rate proposed in \cite{chakraborty2016spatial}. The evolution time equals $\pi\sqrt n/2$. The blue solid line denotes $\frac{1-c}{1+c}$ where $c$ is the second maximal eigenvalue in absolute value of the optimally shifted and rescaled Laplacian. The dashed black line denotes the limit lower bound on the success probability computed according to the formula $W_{0} \left(\frac{1-p_0}{\ee p_0}\right)/W_{-1} \left(\frac{1-p_0}{\ee p_0}\right)$. Note that it is well-defined only for $p_0>1$. For each $p_0$ and $n=100,200,\dots,2000$, we sampled 50 graphs and its largest giant component was chosen. The areas span the minimum and maximum values obtained.}
\end{figure}

The derived value $\frac{W_{0} \left(\frac{1-p_0}{\ee p_0}\right)}{W_{-1}
	\left(\frac{1-p_0}{\ee p_0}\right)}$ is only a lower bound. The analysis of
$p=p_0\frac{\ln n}{n}$ for $p_0=0.5,1,1.5,2$ shows that not only for $p_0>1$ the
success probability is high, but even the nodes from a giant component retain
high success probability, see Fig.~\ref{fig:p0-statistics}. However, for $p_0
\leq 1$ almost surely there are isolated vertices which cannot be found in
$o(n)$ time.

\section{Conclusions} In this section we analyzed the efficiency of CTQW spatial
search for random \ER graphs. Our analysis considered both adjacency matrix and
the Laplacian matrix, however the obtained results favour the latter graph matrix. For
adjacency matrix, $p =\omega(\log^3(n)/n)$ is required at the moment to guarantee
that all vertices can be found in optimal $\Theta(\sqrt n)$ time with the constant
success probability. Relaxing the condition on $p$ to $p=\omega(\log n/n)$, we can
still find most of them.

For the Laplacian matrix, we can achieve constant success probability for \emph{all}
vertices already for $p>(1+\varepsilon)\ln n /n$ for any $\varepsilon>0$. This
is tight as for $p<(1-\varepsilon)\ln n/n$ sampled graphs almost surely have
isolated vertices which cannot be found by any reasonable quantum-walk based
search. While the results obtained for the adjacency matrix give only sufficient conditions and
could be theoretically improved, our consideration shows considerable advantage
of applying the Laplacian matrix over the adjacency matrix.

% !TeX spellcheck = en_GB
\chapter{Quantum spatial search on heterogeneous graphs} \label{sec:complex}
%intro to heterogeneous graphs
\chaptermark{Quantum spatial search on\ldots}

At this moment the state-of-the-art results concerning Childs and Goldstone
quantum spatial search can be found in \cite{chakraborty2020optimality}. Let
$H_G$ be a graph matrix with eigenvalues $\lambda_1=1 \geq \lambda_2 \geq \ldots
\geq \lambda_n=0$. Let $\ket{\lambda_i}$ be an eigenvector corresponding to
eigenvalue $\lambda_i$. Let $\Delta \coloneqq \lambda_1-\lambda_2$ be a spectral
gap. Let $w$ be a marked node and $\varepsilon\coloneqq
|\braket{w}{\lambda_1}|^2$. Finally, let
\begin{equation}
S_k \coloneqq \sum_{i=2}^n \frac{|\braket{w}{\lambda_i}|^2}{(1 - \lambda_i)^k}.
\end{equation}
Based on  Theorem~2 from \cite{chakraborty2020optimality}, if
\begin{equation} 
\sqrt \varepsilon < c \min \left ( \frac{S_1S_2}{S_3}, \Delta \sqrt {S_2}\right ) \label{eq:condition-general-spatial-search}
\end{equation} 
for sufficiently small $c$, then applying Hamiltonian $S_1H_G + \ketbra{w}{w}$
for time $T=\Theta(\frac{1}{\sqrt{\varepsilon}}\frac{\sqrt{S_2}}{S_1})$
transforms the initial state $\ket{\lambda_1}$ into $\ket{f}$ satisfying
$|\braket{w}{f}|^2 = \Theta(S_1^2/S_2)$.

Note that not all graphs satisfy the condition given in
Eq.~\eqref{eq:condition-general-spatial-search}
\cite{chakraborty2020optimality}. Furthermore, the condition, estimation on $T$ and
$\gamma=S_1$ requires knowledge about the marked node based on the definition of
$S_k$. Based on the introduction from the previous chapter and
\cite{philipp2016continuous,wong2016laplacian}, the optimal measurement time and
transition rate may depend on the marked node. While the same situation occurs in Lemmas~\ref{lem:the-search-lemma} and \ref{lem:the-search-lemma-orig}, statistics $S_k$ are not required in general for estimating the optimal measurement time or to detect the validity of the condition from Eq.~\eqref{eq:condition-general-spatial-search}.

In the following sections, we consider the choice of graph matrix $H_G$ and the
measurement time $T$. In particular, we consider a normalized Laplacian, which
was not widely considered in the theory of quantum search, except for a brief recall
in \cite{chakraborty2016spatial,chakraborty2020optimality}. Finally, we propose
an educated-guess method which does not require knowledge on the value of
optimal $T$.

One should note that there exists a continuous-time quantum walk model which was
proven to attain the quadratic speed-up over the corresponding markov
random walk \cite{chakraborty2020finding}. However, the model requires
a quadratically larger quantum system. 

\section{Choice of $H_G$ for heterogeneous graphs}

Neither adjacency matrix nor the Laplacian seems to be a good choice for governing the
efficiency of quantum spatial search. While there are known results concerning
the spectral gap for the adjacency matrix, the principal eigenvector takes usually
a complicated form. The principal eigenvector of the Laplacian matrix of a connected
graph is always an equal superposition, although there is a serious issue with
the spectral gap. Note that the largest eigenvalue satisfies $\delta_{\max} \leq
\lambda_1(L)\leq 2\delta_{\max}$
\cite{fiedler1973algebraic,anderson1985eigenvalues}, thus $\lambda_1(L) =
\Theta(\delta_{\max})$. However, it is known that the second smallest eigenvalue
satisfies $\lambda_{n-1} \leq \delta_{\min}$ \cite{fiedler1973algebraic}. Thus,
for the typical choice $\Id - L/\lambda_1$, we find out that the spectral gap is
at most $\order{\frac{\lambda_{n-1}(L)}{\lambda_1(L)}}$, which is at most
$\order{\frac{\delta_{\min}}{\delta_{\max}}}$. This bound can decrease
very rapidly. For example for \BA graphs, where the minimum degree equals $m_0$
and the maximum degree grows like $\Theta(\sqrt n)$ \cite{flaxman2005high}, the
spectral gap decreases like $\order{\frac{1}{\sqrt{n}}}$. This implies that
the necessary conditions from
Lemmas~\ref{lem:the-search-lemma} and \ref{lem:the-search-lemma-orig} are not
satisfied.

There are significant problems with using the adjacency matrix and the Laplacian as a
graph matrix. Before considering the normalized Laplacian, let us consider a uniform
random walk defined through matrix $P=D^{-1}A$. If the graph is connected the
mean first hitting time to vertex $v$ equals $\langle T_v \rangle =
\frac{|E|}{\deg(v)}S_1$, see App.~\ref{app:classical_search} for derivation.
Provided the spectral gap $\lambda_1(P)-\lambda_2(P)$ is constant, we have
$\langle T_v \rangle =\Theta(\frac{|E|}{\deg v})$. This is optimal, as $\langle
T_v\rangle \geq \frac{|E|}{d_j} - \frac{1}{2}$ for any choice of a connected graph
and $v$.

Closely correlated to the stochastic matrix $P$ is the normalized Laplacian
$\mathcal L = \Id -D^{-1/2} A D^{-1/2}$. The normalized Laplacian is a
nonnegative matrix with the spectrum lying in $[0,2]$ interval. Furthermore, provided
the graph is connected, the matrix has a single 0-eigenvalue with the corresponding
eigenvector
\begin{equation}
\ket{\lambda_1} = \frac{1}{\sqrt{2 |E|}} \sum_{v\in V} \sqrt{\deg(v)} \ket {v}.
\end{equation}
Note that for a connected graph $D$ is invertible and 
\begin{equation}
D^{-1/2}\mathcal{L} D^{1/2} = \Id- D^{-1} A = \Id - P,
\end{equation}
hence the normalized Laplacian is similar to the corresponding stochastic matrix
up to transformation $x\mapsto 1-x$. Hence the spectral gap is the
	same for $\Id -\mathcal{ L}$ and $P$. Furthermore, by
Lemma~\ref{lem:the-search-lemma} one can find vertex $v$ for the normalized
Laplacian in time $\Theta(1/\sqrt{\varepsilon}) =\Theta(\sqrt{\frac{|E|}{\deg
		v}})$ which is the square root of classical hitting time.

One could expect that similarly as it was for the Laplacian graph, the constant
spectral graph occurs very rarely. In the next subsection, we show that the
constant spectral gap occurs for heterogeneous random graph models,
including the paradigmatic \BA model.

\section{Special random graph models} 

\subsection{\CL graphs} 

In this section we will provide analytical evidence why the normalized Laplacian may
be a better graph matrix compared to the adjacency matrix. Since the
parametrization of \CL graphs lies in the $n$-dimensional space for $n$-vertex
graph, in this section we will consider a special parameter class $\omega_i =
n^{a+ \frac{i}{n}b}$ for $i=1,\dots n $ and $0<a<a+b\leq 1$. Note we assume
$b>0$, hence our parametrization \emph{does not} generalize \ER model. 

Let us first consider the adjacency matrix based graph matrix $\frac{1}{\lambda_1(A)}A$. In order to apply Lemma~\ref{lem:the-search-lemma}, one has to ensure there is a constant gap between the largest eigenvalues of $H_G$ (equal to 1 in here), and the maximum in absolute value over the remaining eigenvalues. Let $\tilde d = \frac{\|\omega \|_2^2}{\|\omega\|_1}$. Given the maximum expected degree $\omega_{\max}\coloneqq\max_i \omega_i > \frac{8}{9} \ln (\sqrt 2 n)$, one can show that asymptotically almost surely \cite{chung2011spectra}
\begin{gather}
|\lambda_{1} - \tilde d | \leq \sqrt{8\omega_{\max } \ln (\sqrt 2 n)},\\
\max_{i\geq 2} |\lambda_i| \leq \sqrt {8 \omega_{\max} \ln(\sqrt 2 n)}.
\end{gather}
Given $\sqrt {\omega_{\max} \ln(n)}/\tilde d = o(1)$ one has a constant spectral gap for $\frac{1}{\lambda_1(A)}A$. It is difficult to analyse the fraction in general, since the behaviour of $\tilde d$ strongly depends on the form of $\omega$. In the case of the proposed $\omega$ parameter class, for any valid $a,b$ we have $\omega_{\max} = n^{a+b}$ and $\tilde d =\frac{n^{1+2a+2b}}{2b\log n} \frac{b\log n}{n^{1+a+b}} = \frac{1}{2}n^{a+b}= \omega(\sqrt{n^{a+b}\ln n})$, see App.~\ref{sec:cl-proofs-pnorm}. Hence, based on Lemma~\ref{lem:the-search-lemma}, the time required to maximize the success probability for vertex $w$ is of order $\Theta(1/|\braket{w}{\lambda_1}|)$.

Let $\ket{\omega} = \frac{1}{\|\omega\|_2} \sum_i \ket{\omega_i}$. Similarly as it was for \ER graphs \cite{chakraborty2016spatial}, it can be shown that provided $\sqrt {\omega_{\max} \ln(n)}/\tilde d =o(1)$ we have $\braket{\omega}{\lambda_1} = 1-o(1)$. Based on this fact one can show that in the case of the chosen parametrization, as long as $a<3b$, for almost all nodes $i$ we have $\braket{i}{\lambda_1} \sim n^{a+\frac{i}{n-1}b} \big / \|\omega\|_2$, see App.~\ref{app:cl-proofs-eigenvector} . Hence for almost all vertices we have the time complexity 
\begin{equation}
  \|\omega\|_2 \big / n^{a+\frac{i}{n-1}b} \sim \sqrt \frac{n^{1+2a+2b}}{2b\log n} \frac{1}{n^{a+\frac{i}{n}b}}= \Theta\left (\sqrt{\frac{n}{\log n}}n^{b(1-\frac{i}{n})} \right ).
 \end{equation} 
The complexity strongly depends on $i$, hence the marked vertex. For $i=n$, \ie
the node with maximal expected degree, we have complexity $\Theta(\sqrt
\frac{n}{\log n} )$. Contrary for $i=1$ we have time complexity
$\Theta(\sqrt\frac{n}{\log n} n^{b})$ (note $n^{\frac{1}{n}}\sim 1$).

In the case of the normalized Laplacian, as long as the minimum expected degree $\omega_{\min} \coloneqq \min _i \omega_i=  \omega (\log n)$, positive eigenvalues satisfy
\begin{equation}
\max_{i\geq 2}|\lambda_i -1|  \leq 3 \sqrt{\frac{6 \ln (2n)}{\omega _{\min}}} = o(1).
\end{equation}
For the considered parameter family we have $\omega_{\min}>n^a = \omega(\log n)$. Furthermore, for the normalized Laplacian the minimum eigenvalue equals 0, hence $\Id -\mathcal L$ satisfies the requirements from Lemma~\ref{lem:the-search-lemma}. Hence the complexity of finding the marked node is $\Theta(\sqrt{|E|/\deg(w)})$, where $|E|$ and $\deg(w)$ are the number of edges and the degree of $w$ of \emph{sampled} graph. However, with  probability $1-o(1)$ for the considered $\omega$ we have almost surely $|E| \sim \EE E = \frac{n^{1+a+b}}{b\log n}$ and for the degree we have almost surely $\deg(i)\sim \omega_i = n^{a+\frac{i}{n}b}$, see App.~\ref{app:cl-proofs-edges} and \ref{app:cl-proofs-degree}. Hence the complexity for the normalized Laplacian is 
\begin{equation}
\sqrt{\frac{|E|}{\deg(i)}} \sim \sqrt{\frac{n^{1+a+b}}{b\log n} \frac{1}{n^{a+\frac{i}{n}b}}} = \Theta \left(\sqrt {\frac{n}{\log n}} \sqrt{n^{b(1-\frac{i}{n})}} \right )
\end{equation}
For $i=n$ we have complexity $\Theta(\sqrt\frac{n}{\log n})$ while for $i=1$ the
complexity equals $\Theta(\sqrt\frac{n}{\log n} \sqrt{n^{b}})$ hence it is better by
$\sqrt{n^b}$ compared to the adjacency matrix. According to our
derivation presented in the previous section, the classical search is quadratically
slower compared to quantum search using the normalized Laplacian.

In Fig.~\ref{fig:cl-visualization} we present a visualization of complexities
for various values of $a$ and $b$. Note that for any node the complexity for
quantum search with the adjacency matrix is between the quantum search with the
normalized Laplacian and the classical search. However, the worst case scenario
for the normalized Laplacian is at worst equal to the optimal measurement time
of the classical search. This is not the case for the adjacency matrix. For
$b>0.5$ the hardest to find node yields the complexity worse than many nodes for
the classical search. In the extreme case for $b\approx 1$ the complexity is worse for almost half of the nodes.

This example shows that the choice of the graph matrix has a crucial impact on the
efficiency of CTQW-based quantum search. Furthermore, we found a large nonregular
family of graphs, for which the quantum search is proved to be quadratically
faster than the classical random walk search.

\begin{figure}\centering 
	\includegraphics{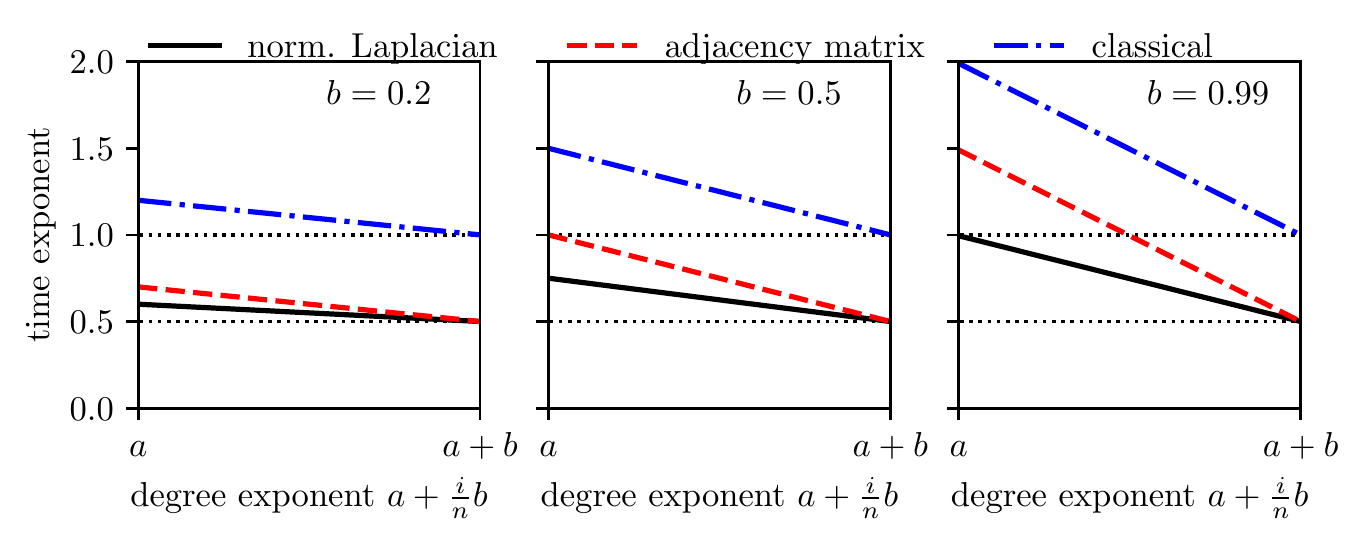}
\caption{\label{fig:cl-visualization} The visualization for time complexity
	depending on the expected degree of the vertex. The visualization is made for
	$b=0.2,0.5,0.99$, for we can choose $a=0.01$. Note that independently on the
	model chosen the time complexity decrease with the increase of the degree.
	Horizontal dotted lines are for readability only and are placed for general
	optimal quantum and classical search.}

\end{figure}

\subsection{\BA graphs} In this subsection we consider a \BA graph model which
is the paradigmatic random graph model for sampling complex networks. Let us
consider the efficiency of quantum search using the normalized Laplacian. Let us
start with proving the existence of a spectral gap. According to
\cite{durrett2006random}, we have $\frac{h^2}{2} \leq \lambda_{n-1}(\mathcal L)
\leq 2h$, with $h\geq0$ defined as in Section 6.2 in \cite{durrett2006random}.
Since $\frac{h^2}{2}\leq 2h$, we have $h\leq 4$. Hence if
$h=\Omega(1)$ we immediately have $\lambda_{n-1}=\Theta(1)$. In the case of \BA
model we have $h\geq\frac{\iota_a}{2m_0 + \iota_a}$ with $\iota_a$, see Section
6.4 in \cite{durrett2006random}. By Lemma 6.4.4 from \cite{durrett2006random} we
have $\iota_a=\Omega(1)$ for $m_0>1$. All the inequalities above show that
indeed $\lambda_{n-1}(\mathcal L)=\Theta(1)$ which gives a constant spectral
gap. This is sufficient to apply Lemma~\ref{lem:the-search-lemma}.

Now let us analyse the number of edges and degrees of nodes. By the very
construction, for any fixed $m_0$ the number of edges for $n$-vertex \BA graph
is at most $nm_0$. Since \BA graphs are connected, the number of edges is at
least $n-1$. From this we have $|E|=\Theta(n)$. The last added vertex has the degree
between 1 and $m_0$, hence for fixed $m_0$ the smallest degree is constant.
Contrary, the largest degree grows like $\Theta(\sqrt n)$ \cite{flaxman2005high}.
Hence the search complexity varies between $\Theta(\sqrt n)$ for constant degree
nodes and $\Theta (\sqrt[4] n)$ for the highest degree nodes.

In the case of \BA we obtained a complexity below $\Theta(\sqrt
n)$ lower bound for high degree vertices. However, since there are only
$\Theta(n)$ edges, then almost all nodes would have a fixed degree. Hence
better-than-optimal complexity happens only in rare cases. 

For the classical search we obtain analogical results: the time complexity
goes from $\Theta(\sqrt n)$ for high-degree nodes to $\Theta(n)$ for constant
degree nodes.

In the case of the adjacency matrix there is a lack of analytical derivation of
a spectral gap. For the Laplacian, following the reasoning presented previously we can
show that the spectral gap for $\Id -L/\lambda_1(L)$ is at most $\order{1/\sqrt
	n}$. Still it is possible to approach the matrices of this graph numerically thanks to
the recent results presented in \cite{chakraborty2020optimality}, recalled at the
beginning of this chapter. 

Let us start with investigating few examples for each graph matrix, see
Fig.~\ref{fig:ba-numerical-max} and \ref{fig:ba-numerical-min}. As we can see,
if the first node is marked then for the normalized Laplacian and the adjacency matrix, the
success probability stays roughly at $\Theta(1)$, and the time grows steadily.
For the Laplacian matrix the success probability decreases as $n$ increases.
Similarly, expected time $T/p(T)$ increases far more rapidly compared to the normalized
Laplacian and the adjacency matrix. For the last node, the success probability was
stable for all graph matrices. For both the normalized Laplacian and the Laplacian the
expected time grows similarly fast. However, for the adjacency matrix we observe
far more robust behaviour compared to any case considered so far.

\begin{figure}[ph!]
	\centering \includegraphics[scale=.8]{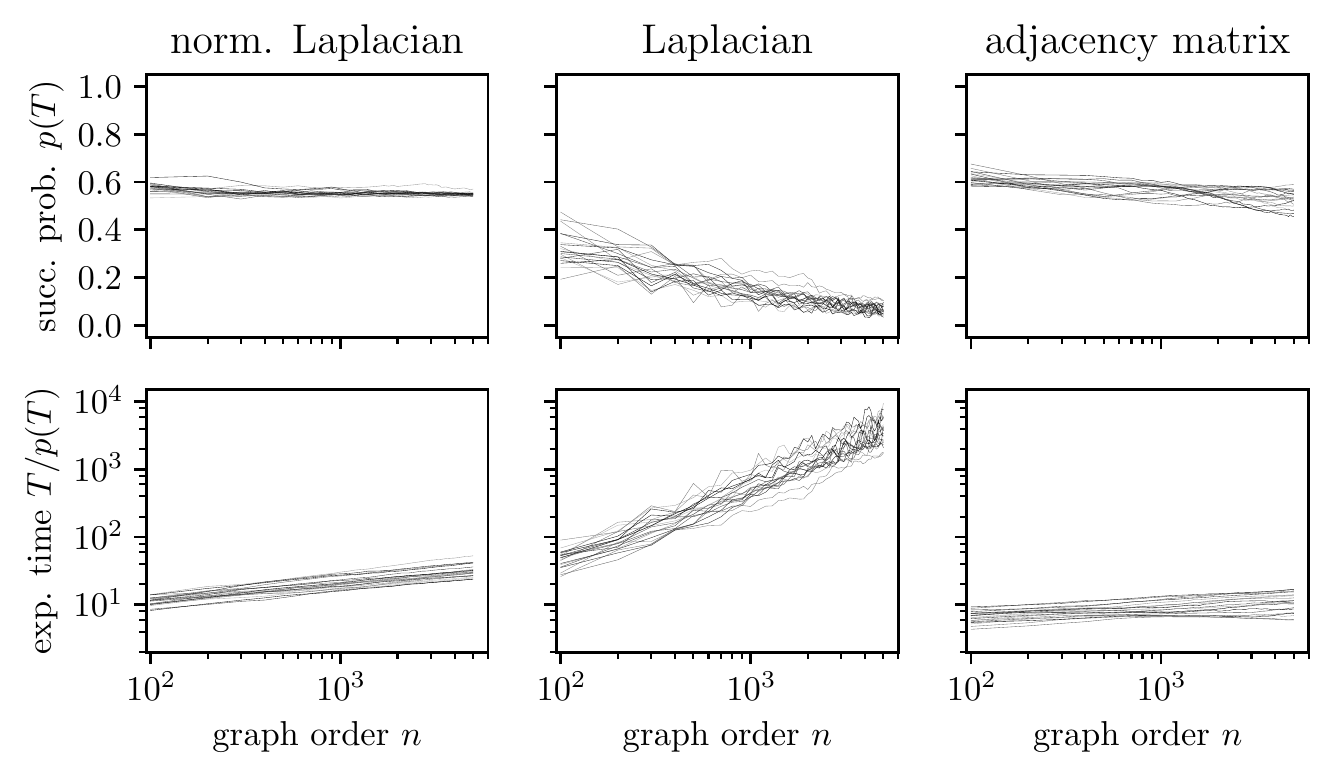}
	\caption{\label{fig:ba-numerical-max} Analysis of search efficiency of
		$\randg[BA](3)$ where the first node is marked. For each graph matrix we
		sampled 20 trajectories of graphs with orders 100, 200, \ldots, 5000. We chose
		a measurement time $T=\frac{1}{|\braket{w}{\lambda_1}}\frac{\sqrt {S_2}}{S_1}$
		and transition rate equal to $S_1$. The first row presents the trajectory of success
		probability calculated at $T$  for each trajectory of graphs. The second row
		presents the trajectory of $T/p(T)$.}
	
	\vspace{1cm}
	%\end{figure}
	%
	%\begin{figure}[t]\centering
	\includegraphics[scale=.8]{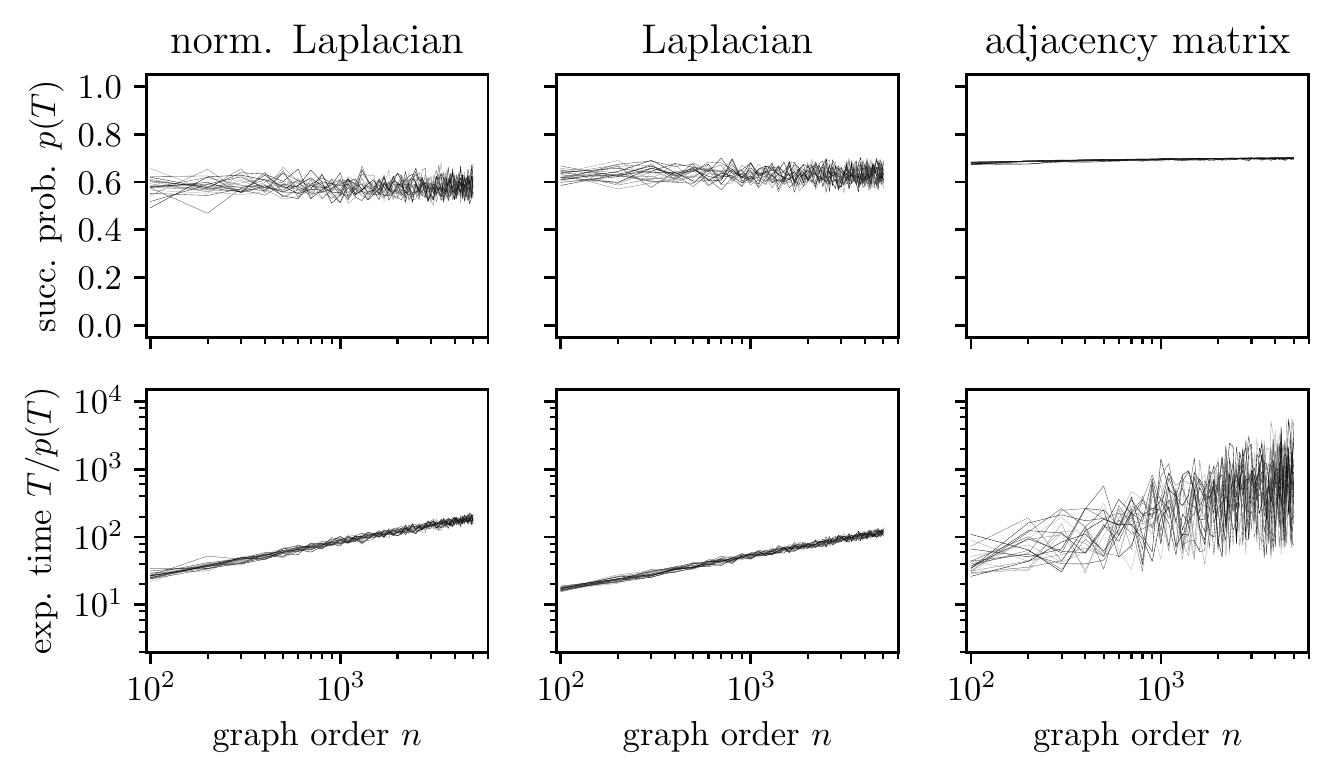}
	\caption{\label{fig:ba-numerical-min} Analysis similar to the one presented in Fig.~\ref{fig:ba-numerical-max}, except that the last node is marked. }
\end{figure}

The above observations are confirmed by the statistics of exponent $\alpha$
defined as $T/p(T) = \Theta(n^\alpha)$, see Fig.~\ref{fig:ba-numerical-max-exp}
and \ref{fig:ba-numerical-min-exp}. We can see that for both scenarios of the marked
nodes, the statistics for the normalized Laplacian reflect our predictions. In the case
of the first node being marked, the Laplacian had a complexity mostly $\Omega(n)$, which
is worse even compared to the classical procedure. We would like to emphasize
that this may be due to incorrectly chosen transition rate $\gamma$ and
measurement time $T$, as the educated guess presented in
\cite{chakraborty2020optimality} may not be proper for the considered graph matrix. Finally, for the adjacency matrix we observe that the required time is
far better even compared to the normalized Laplacian, which is in opposition to what
was observed for \CL graphs.

Finally for the last node being marked, the Laplacian matrix has the same efficiency as
the normalized Laplacian, namely roughly $\Theta(\sqrt n)$. However, for the adjacency
matrix the efficiency was between optimal for quantum search $\Theta(\sqrt{n})$ and
classically optimal $\Theta(n)$. Hence while we observed a speed-up compared to
the classical search, clearly the normalized Laplacian seems to be better in this
scenario. It is worth to note that as we observed in
Fig.~\ref{fig:ba-numerical-min} the trajectory of optimal measurement time is
very robust, which is also reflected in Fig.~\ref{fig:ba-numerical-min-exp} in
the regression quality measure. 

Finally, let us see the counter-intuitive behavior of the Laplacian matrix. For both
the normalized Laplacian and the adjacency matrix, finding the node with a higher degree
was simpler compared to finding the node with small degree. Based on our educated
guess for the normalized Laplacian, we may expect a similar property for the classical
search. However, for the Laplacian matrix it is contrary -- while small degree nodes
can be found in $\Theta(\sqrt n)$ time, it seems to be difficult to find
higher-order nodes. There may be two explanations of this phenomena.
Firstly, our choice of transition rate and optimal measurement time is not good
for the first node. Secondly, since the initial state of the Laplacian matrix is a
uniform superposition of basic states, such state may promote typical-degree
cases, which in case of \BA are finite-degree nodes.

\begin{figure}[ph!]
	\centering \includegraphics{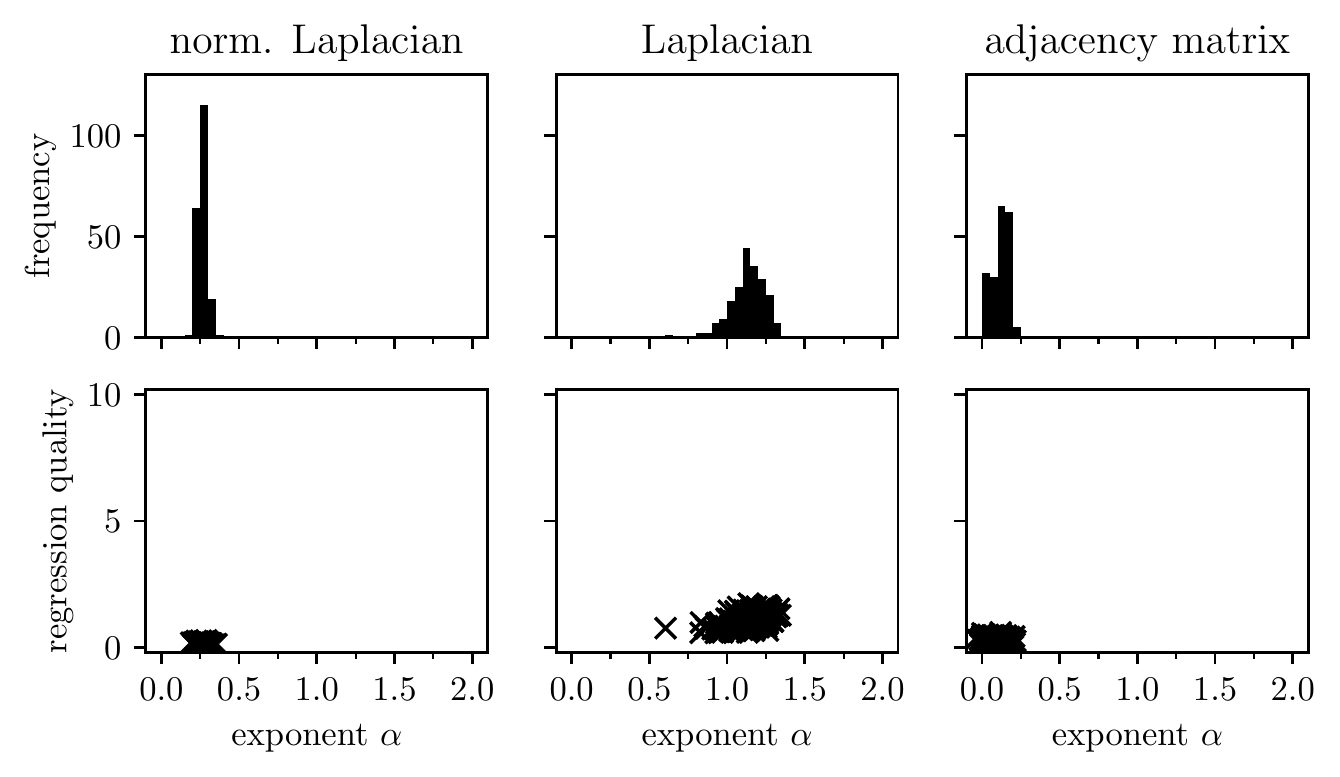}
\caption{\label{fig:ba-numerical-max-exp} Analysis of exponent $\alpha$ defined
	as $T/p(T)=\Theta(n^\alpha)$ for $\randg[BA](3)$ model. The quantum
	evolution is defined as in Fig.~\ref{fig:ba-numerical-max},
	and for each graph matrix we sampled 200 trajectories. The first row describes
	the statistics of exponents $\alpha$ calculated from linear regression fit
	$\log(T/p(T)) = \alpha \log n + \beta$. The second row presents a regression
	quality calculated based on the formula $\|\log(T) - \alpha \log n - \beta\|_2$. }
%\end{figure}
%
%\begin{figure}[t]\centering
	\includegraphics{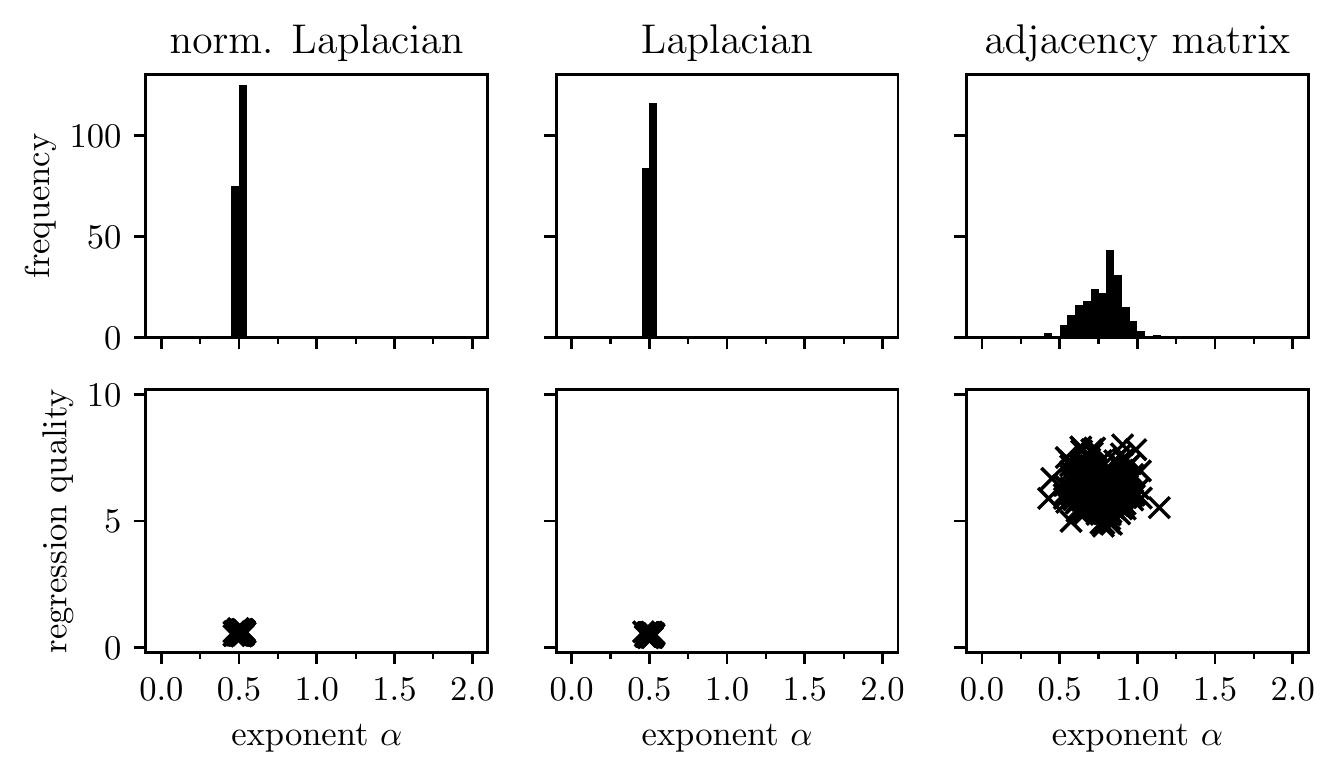}
	\caption{\label{fig:ba-numerical-min-exp} Analysis similar to the one presented in Fig.~\ref{fig:ba-numerical-max-exp}, except that the last node is marked. }
\end{figure}

\section{Optimal measurement time}

In order to make the CTQW-based search applicable one has to a priori determine
the optimal transition rate and the measurement time. However, as we have shown in
the previous section, the measurement time does not only depend on the sampled graph, but also
on the marked node. Note that the issue mentioned in the previous paragraphs is
typical for heterogeneous graphs \cite{osada2020continuous}, and was also
observed previously for very simple graphs \cite{philipp2016continuous}. For
vertex-transitive graphs, the choice of the transition rate and the measurement time does
not depend on the marked node, hence the analysis of the graphs is usually
sufficient to design the algorithm. However, in other cases it is less evident.

In this section, we will show that the knowledge about the optimal measurement time may not
be required, at least in the scenarios considered above. Suppose that the
graph-depending search procedure $\mathcal A_{G,w}(t)$ satisfies that for $t\geq
T_{\rm crit}(\mathcal A_{G,w})$ procedure $\mathcal A_{G,w}(t)$ finds the marked
node with a probability $1$. Also assume that $\mathcal A_{G,w}(t)$ runs for
time $t$. We will elaborate on the validity of this assumption for quantum search
algorithms at the end of this section.

If such procedure is encountered one can simply run the procedure for $\max_w
T_{\rm crit} (\mathcal A_{G,w})$ to find the arbitrary marked node with probability
1. However, if $T_{\rm crit}$ attains very different values (even in
complexity when increasing the number of nodes), such approach should be considered
as a waste of resources. For example based on Fig.~\ref{fig:cl-visualization},
unifying the measurement time to the most demanding node would destroy the quadratic
speed-up for almost all nodes in the case of the normalized Laplacian. For the
adjacency matrix for sufficiently large $b$ the time efficiency could be even
worse compared to the classical search for the same nodes.

Suppose we know that any node of the graph $G$ can be found by $\mathcal
A_{G,w}(t)$ for $t\in [Cn^{\beta_0},Cn^{\beta_0+\beta_1}]$, and that in
particular node $w$ can be found after time $Cn^\alpha$. Let $K\in \ZZ_{>0}$.
Then we can run $\mathcal A(t)$ for $t=Cn^{\beta_0}$,
$Cn^{\beta_0+\frac{1}{K}\beta_1}$, $\dots$, $Cn^{\beta_0+\beta_1}$. Based on our
assumptions on procedure $\mathcal A$, the marked node will be found by $\mathcal A
(Cn^{\beta_0 + \frac{k}{K}\beta_1})$ where $\beta_0 + \frac{k_w-1}{K}\beta_1 <
\alpha \leq \beta_0 + \frac{k_w}{K}\beta_1$. The whole procedure takes
\begin{equation}
\Theta\left (\sum_{k=0}^{k_w} Cn^{\beta_0 + \frac{k}{K}\beta_1} \right ) = \Theta(n^{\beta_0 + \frac{k_w}{K}\beta_1}).
\end{equation}
Instead of the optimal complexity $\Theta(n^{\alpha})$ we obtained a
complexity $\Theta(n^{k_w})$, hence the time complexity is increased by the factor $n^{\beta_1/K}$ at the worst.

We can make $n^{\beta_1/K}$ arbitrarily slow by increasing $K$.
For large, yet fixed $K$, the complexity depends only on the time required for
the longest run of $\mathcal A$, \ie $k=k_w$. However, for $n$-dependent $K$, the
overall time required for calculating $k<k_w$ may have impact on the final time
complexity.

Let us consider $ K=K' \log n$ for $K' >0$ being a real constant. The time
complexity of the whole procedure equals
\begin{equation}
\sum_{k=0}^{k_w} Cn^{\beta_0 + \frac{i}{K}\beta_1} = Cn^{\beta_0}\sum_{k=0} \left(n^{\beta_1/K}\right)^i = Cn^{\beta_0} \frac{n^{\frac{k_w}{K}\beta_1}-1}{n^{\beta_1/K}-1}.
\end{equation}
Note $n^{1/K} = \exp(1/K')$. Hence
\begin{equation}
\begin{split}
Cn^{\beta_0} \frac{n^{\frac{k_w}{K}\beta_1}-1}{n^{\beta_1/K}-1} &= Cn^{\beta_0} \frac{n^{\frac{k_w}{K}\beta_1}-1}{\exp(\beta_1/K')-1} \\
&\sim \frac{C}{\exp(\beta_1/K')-1} n^{\beta_0+\frac{k_w}{K}\beta_1} \eqqcolon f(k_w).
\end{split}
\end{equation}
By this we have
\begin{equation}
\frac{f(k_w+1)}{f(k_w)} = n^{\beta_1/K} = \exp(\beta_1/K')
\end{equation}
and $f(k_w+1) = \Theta(f(k_w))$. Furthermore, since $f(k_w-1) \leq
Cn^\alpha \leq f(k_w)$, we have that $n^\alpha = \Theta(f(k_w))$, hence our
procedure works optimally.

As we can see, the knowledge about the optimal measurement time is not required given certain
assumptions on the searching procedure $\mathcal A$. Let us now consider the
validity of the assumptions taken.  First, we assumed that the node can be
found in the interval $[Cn^{\beta_0}, Cn^{\beta_0+\beta_1}]$. Without loss of generality we
can assume $\beta_0=0$, as its value has no impact on the proof. For similar
reason, the value of $\beta_1$ is not required, although it is important to assure
that it is constant so that searching will be polynomial. At least for the normalized
Laplacian with a constant spectral gap it is guaranteed, as the probability of
measuring the the marked node $w$ at initial time $t$  equals $\frac{\deg
	w}{2|E|}\geq \frac{1}{2n^2}$. Hence, if preparation of the initial state
can be done polynomially fast, then search will take polynomial time as well.

The assumption that the success probability achieves one exactly may not be
significant. The success probability can be arbitrarily close
to one by repeating the internal procedure $\mathcal A$. In such scenario, the
probability of measuring incorrect nodes decreases exponentially. Since for
the graphs with a constant spectral gap there is a common lower bound
$\frac{1-\lambda_2}{1+\lambda_2}+o(1)$ on success probability, one can expect that at least for such
graphs the assumption is not meaningful.

Finally, we made an assumption that for $t\geq T$ the success probability is
constantly one. This assumption is not valid for two reasons: first, the
evolution is quasi-periodic which means that in large time regime the success
probability can equal zero multiple times despite the fact that the time is
greater than the optimal measurement time. This can be solved by measuring at
different measurement time $C_{\mathcal U}n^{\beta_0 + \frac{k}{K}\beta_1}$
where $C_{\mathcal U}$ follows the uniform distribution on interval  $[C -
\exp(\beta_1/K')/2, C + \exp(\beta_1/K')/2]$. Finally, for large time regimes,
the eigenvalues of small magnitude may have crucial impact on the evolution,
acting as a noise on the evolution defined by the oracle and the principal
eigenvector. This issue can be solved by choosing a sufficiently small $K'$ and
increasing the number of repeating $\mathcal A$ for each $k$.

We would like to stress out that the above consideration requires further
investigations. The first step towards verifying these conjectures would be
numerical confirmation of the proposed method. However, this is beyond the scope
of the thesis.

\section{Conclusions}

In this section we analyzed the efficiency of CTQW-based spatial search on
heterogeneous and complex graphs. We provided both analytical and numerical
evidence that the optimal measurement time depends strongly on the marked
vertex and the matrix graph. The first one is not surprising, as similar
situation is observed for the random walk search.

In the case of graph matrix, the normalized Laplacian provided the most stable,
always quadratic speed-up over the random walk search for the considered random graphs. Both the adjacency matrix and the Laplacian usually  offered the speed-up
compared to the classical search, although usually the normalized Laplacian turned out
to require even less computational resources. Moreover, for the Laplacian
matrix over \BA graph it seems that searching for a high-degree nodes takes more
time compared to small-degree nodes. We claim that this counter-intuitive result
deserves additional attention, in order to fully address the impact of the
choice of matrix graph on the efficiency of CTQW-based search.

Finally, we tackled the problem of choosing the optimal measurement time in case
it is not known even in complexity. This is particularly relevant for
heterogenuous graphs. We proposed a simple approach
which we believe can solve the problem in cases considered in this chapter.

%%%%%%%%%%%%%%%%%%%%%%%%%%%%%%%%%%%%%%%%%%%%%%%%%%%%%%%%%%%%%%%%%%%%%%%%%%%
%%%%%%%%%%%%%%%%%%%%%%%%%%%%%% ending %%%%%%%%%%%%%%%%%%%%%%%%%%%%%%%%%%%%%
%%%%%%%%%%%%%%%%%%%%%%%%%%%%%%%%%%%%%%%%%%%%%%%%%%%%%%%%%%%%%%%%%%%%%%%%%%%

% !TeX spellcheck = en_GB
\chapter{Final remarks} \label{sec:conclusions}

In this dissertation we have investigated a hypothesis claiming that there exist simple,
continuous-time quantum walk models which maintain interesting and crucial
properties of quantum walks for nontrivial graphs.

In Chapter~\ref{sec:nonmoralizing-qsw} we proposed and analyzed the
time-independent nonmoralizing quantum stochastic walk. The model was an
interpolation between two continuous-time models. It was a mixture of the
original Childs-Goldstone Continuous-Time Quantum Walk \cite{childs2004spatial}
and the non-unitary model introduced by Whitefield et al.
\cite{whitfield2010quantum}. The interpolated model was shown to be at least
superdiffusive (and likely ballistic) in the intermediate cases of
interpolation. Moreover, based on our investigation presented in
Chapter~\ref{sec:convergence-qsw}, this model still preserves the directed graph
structure well. This property is true also for the well-known local interaction
quantum stochastic walks.

There is still room for improvement for the presented results. First, the proposed model
was propagating fast, but at the cost of small amplitude transfer going in
the opposite direction than the graph structure. While based on the result
presented in Sec.~\ref{sec:digraph-structure-observance}, its significance seems to be
negligible, it is still an open question whether this transition can be removed
completely. In our opinion, it is not possible with the mapping from measurement
output to vertex set being fixed as proposed in \cite{montanaro2007quantum}. In
other words, the same measurement output would have to be interpreted as
different vertices based on time or other context.

Furthermore, it is an open question how to implement the nonmoralizing quantum
stochastic walk. Clearly, it should be possible to first transform the GKSL
evolution into standard Schr\"odinger equation, and eventually into the gate model.
Still, an effective procedure should be described and analyzed in order to
confirm that the simulation on a quantum computer is possible.
Alternatively, one could consider a physical process which directly implements the 
general quantum stochastic walk.

Finally, one could consider the algorithmic application of the introduced model. In
particular, heuristic optimization algorithms like simulated annealing or Tabu
Search are examples of algorithms which strongly rely on the concept of directed graphs. Indeed, these algorithms, defined as random walks over the objective
spaces, share the property that passing to solution with smaller objective value
(in the case of minimization procedure) is more likely than passing to solution
with higher objective value. In this context, primary task would be to
effectively encode the optimization problem into the introduced quantum walk.

In Chapters \ref{sec:hiding} and \ref{sec:complex} we considered the efficiency
of the first continuous-time quantum spatial search on non-trivial graphs. In
Chapter~\ref{sec:hiding} we analyzed \ER graphs, which was the first step to
more advanced graphs. We presented that Laplacian seems to be a valid choice for
almost-regular graphs, yielding full quadratic speed-up even close to the
connectivity threshold. It is worth to note that for both adjacency matrix and
Laplacian matrix it was in fact almost surely possible to find \emph{all} nodes
for still small value of the parameter $p$. In Chapter~\ref{sec:complex} we
showed that the normalized Laplacian is a far better choice for heterogeneous
graphs in most of the cases. Provided that the spectral gap of the graph is
constant, the normalized Laplacian provided a full quadratic speed-up over the random
walk search. Furthermore, we suggested the procedure which enable attaining the
optimal time complexity for finding the vertex even if the optimal measurement
time is not known.

It is still not evident what is the possible speed-up for other graphs for
CTQW search. In order to better understand the limitations and capabilities of
this simple model, it may be interesting to consider other random graph models. This would
simplify introducing more general theorems which hopefully would mostly depend on simpler
graph-theoretic properties. Note that currently the most general results
presented in \cite{chakraborty2020optimality} depend on the \emph{spectral} properties
of the graph matrix, which is far harder to describe for general graph
collections.

In this dissertation we have confirmed that fast quantum propagation is
possible with preserving the structure of directed graphs. Furthermore, the quantum
search defined on heterogeneous graphs like \BA or \CL graphs is still
quadratically faster with a careful choice of the graph matrix. Based on this, 
we claim that indeed simple quantum walk models maintain important properties
of quantum walks for nontrivial graph structures.

\appendix

\bibliographystyle{ieeetr}

\bibliography{rozprawa-adam}
\addcontentsline{toc}{chapter}{Bibliography}

% !TeX spellcheck = en_GB
\chapter{Proofs for Quantum Stochastic Walks} \label{sec:qsw-proofs}
\chaptermark{Proofs for QSW}

\section{Probability distributions of GQSW on finite and infinite paths}\label{sec:prop-qsw-prob-dist}
\subsection{Probability distribution for finite path}
\begin{proof}[Proof of Theorem~\ref{theorem:prob-on-path}]
	Let $L$ and $H$ be an operators defined according to the theorem.
	Using Eq.~\eqref{eq:integrated-qsw} we have
	\begin{equation}\label{eq:integrate_lind}
	\begin{split}
	M_\omega^t = \exp \left[t\omega \left( L\otimes L -\frac{1}{2} 
	L^2\otimes\Id -\frac{1}{2}\Id\otimes L^2\right) - \ii t 
	(1-\omega)\left( H \otimes \Id - \Id \otimes H \right)\right].
	\end{split}
	\end{equation}
	Now we note that in the case of the walk on a path, we have $L=H$
	and $[L\otimes L, L^m \otimes \Id] = 0$. Hence, the eigenvectors of
	$M^t_{\omega}$ are the same as the eigenvectors of $L \otimes L$. It is
	straightforward to check that
	\begin{equation}
	M_\omega^t=\sum_{i,j}\exp(-\omega \frac t2(\lambda_i-\lambda_j)^2) \exp( -\ii 
	t (1-\omega)(\lambda_i - \lambda_j)) 
	\ketbra{\lambda_i,\lambda_j}{\lambda_i,\lambda_j},
	\end{equation}
	where $\lambda_i$  and $\ket{\lambda_i}$ denote the eigenvalues and
	eigenvectors of $L$ and $H$. As $L$ is a tridiagonal Toeplitz matrix, its
	eigenvalues are given by~\cite{pasquini2006tridiagonal}
	\begin{equation}
	\lambda_j=2\cos\left (\frac{j\pi}{n+1}\right),
	\end{equation}
	where $1\leq j \leq n$. Furthermore the elements of the eigenvectors are
	\begin{equation}
	\braket{j}{\lambda_i}= \sqrt{\frac{2}{n+1}}\sin\left( 
	\frac{ij\pi}{n+1} \right) = \braket{i}{\lambda_j}.
	\end{equation}
	From this we get that the elements of $M_\omega^t$ in the computational basis 
	are
	\begin{equation}\small
	\begin{split}
	\bra{\gamma,\delta}M^t_{\omega}\ket{\kappa, \beta}&=\sum_{i,j=1}^n 
	\braket{i}{\lambda_{\kappa}}\braket{j}{\lambda_\beta}  
	\braket{i}{\lambda_\gamma} \braket{j}{\lambda_\delta} \times
	\\
	&\phantom{=\ }\times\exp(-\omega \frac{t}2(\lambda_i-\lambda_j)^2) \exp(\ii t 
	(1-\omega)(\lambda_i - \lambda_j)) \\
	&=\left(\frac{2}{n+1}\right)^2 \sum_{i,j=1}^n 
	\sin\left( \frac{\kappa i\pi}{n+1} \right)
	\sin\left( \frac{\beta j\pi}{n+1} \right)
	\sin\left( \frac{\gamma i\pi}{n+1} \right)
	\times \\
	&\phantom{=\ }\times \sin\left( \frac{\delta j\pi}{n+1} \right)\exp(-\omega \frac t2(\lambda_i-\lambda_j)^2) \exp(-\ii 
	t 
	(1-\omega)(\lambda_i - \lambda_j)).
	\end{split}
	\end{equation}
	Putting $\kappa=\beta=k$ and $\gamma=\delta=l$ we recover the desired result.
\end{proof}

\subsection{Probability distribution for infinite path} \label{sec:probability-infinite-path}
\begin{proof}[Proof of Theorem~\ref{theorem:prob-on-line}]
	In the case of a walk on a path $[-n, \ldots, n]$, the diagonal part of $\varrho(t)$ with initial state $\varrho(0)= \dyad 0$ satisfies
	\begin{equation}\small
	\begin{split}
	\bra{k}\varrho(t)\ket{k}&=\left(\frac{2}{2n+2}\right)^2 \sum\limits_{i,j=1}^{2n+1} 
	\sin\left(\frac{(k+n+1)i\pi}{2n+2}\right)
	\sin\left(\frac{(k+n+1)j\pi}{2n+2}\right)\times\\
	&\phantom{\ =}\times\sin\left(\frac{(n+1)i\pi}{2n+2}\right)
	\sin\left(\frac{(n+1)j\pi}{2n+2}\right)\times\\
	&\phantom{\ =}\times 
	\exp\left[-\frac 
	t2\omega(\lambda_i - \lambda_j)^2\right]
	\exp\left[ -\ii t (1-\omega) (\lambda_i-\lambda_j) 
	\right]\\
	&=\frac{1}{(n+1)^2} \sum\limits_{i,j=1}^{2n+1}
	\sin\left(\frac{ki\pi}{2n+2} + \frac{i\pi}{2}\right)
	\sin\left(\frac{kj\pi}{2n+2} + \frac{j\pi}{2}\right)sin\left(\frac{i\pi}{2}\right)\times\\
	&\phantom{\ =}\times
	\sin\left(\frac{j\pi}{2}\right)
	\exp\left[-\frac 
	t2\omega(\lambda_i - \lambda_j)^2\right]
	\exp\left[ -\ii t (1-\omega) (\lambda_i-\lambda_j) 
	\right].\\
	\end{split}
	\end{equation}
	Note, that for even $i$ or even $j$, the elements under the sum are equal to 
	zero. We get
	\begin{equation}
	\begin{split}
	\bra{k}\varrho(t)\ket{k}&=\frac{1}{(n+1)^2} 
	\sum\limits_{i,j=1,3,\dots,2n+1}
	\cos\left (\frac{ki\pi}{2(n+1)}\right )
	\cos\left (\frac{kj\pi}{2(n+1)}\right )
	\times \\
	&\phantom{\ =}\times \exp\left[- 2\omega t\left(\sin\frac{\pi 
		i}{2(n+1)}-\sin\frac{\pi j}{2(n+1)}\right)^2\right]\times \\
	&\phantom{\ =}\times \exp\left[-2\ii (1-\omega) t\left(\sin\frac{\pi 
		i}{2(n+1)}-\sin\frac{\pi 
		j}{2(n+1)}\right)\right].
	\end{split}
	\end{equation}
	The formula above is $1/4$ of the Riemann sum of the function
	\begin{equation}
	\begin{split}
	f(x)=&\cos\left (\frac{k\pi x}{2}\right )\cos\left (\frac{k\pi 
		y}{2}\right )\exp\left[-2\omega t\left (\sin\frac{\pi 
		x}{2}-\sin\frac{\pi 
		y}{2}\right )^2\right]\times \\
	&\times \exp\left[-2\ii(1-\omega)t\left (\sin\frac{\pi 
		x}{2}-\sin\frac{\pi 
		y}{2}\right )\right]
	\end{split}
	\end{equation}
	over the square $[0,2]\times [0,2]$ when we divide the region into equal
	squares. Hence, taking the limit $n \to \infty$ we get
	\begin{equation}
	\begin{split}
	\bra{k}\varrho(t)\ket{k} = & \frac{1}{4}\int_{0}^2\int_{0}^2 	\cos\left 
	(\frac{k\pi 
		x}{2}\right 
	)\cos\left (\frac{k\pi 
		y}{2}\right )\times\\
	&\times \exp\left[-2\omega t\left(\sin\frac{\pi 
		x}{2}-\sin\frac{\pi 
		y}{2}\right )^2\right] \times \\
	&		\times\exp\left[-2\ii (1-\omega)t\left (\sin\frac{\pi 
		x}{2}-\sin\frac{\pi 
		y}{2}\right)\right]\dd x \dd y.
	\end{split}
	\end{equation}
	After substituting $u=\frac{x\pi}{2}$ and $v=\frac{y\pi}{2}$ we have
	\begin{equation}
	\begin{split}
	\bra{k}\varrho(t)\ket{k}=&\frac{1}{\pi^2}\int_{0}^\pi\int_{0}^\pi \cos(ku)
	\cos(kv) \exp\left[-2\omega  t(\cos u-\cos v)^2\right]\times\\ & \times
	\exp\left[-2\ii (1-\omega)t(\cos u-\cos v)\right] \dd u \dd v.
	\end{split}
	\end{equation}
	By symmetry with respect to $x=0$ and $y=0$ we obtain the result.
\end{proof}
\sectionmark{Scaling exponent of interpolated standard\ldots}
\section{Scaling exponent of interpolated standard GQSW on infinite path graph}\label{sec:prop-qsw-scaling-exp}
\sectionmark{Scaling exponent of interpolated standard\ldots}

\subsection{Case $\omega=1$ }

%%%%%%%%%%%%%%%%%%%%%%%%%% 
%%%% technical lemmas %%%%
%%%%%%%%%%%%%%%%%%%%%%%%%%
\begin{lemma}[\cite{gradshteyn2014table}] \label{lem:sumOfPowerBinomial}
	For arbitrary $\alpha\in \R$, $n,m\in \N$ such that $m\leq n$  we have
	\begin{equation}
	\sum_{k=0}^{n} (-1)^k (k-\alpha)^m \binom{n}{k} = \begin{cases}
	0, & m<n,\\
	(-1)^n n!, & m = n.
	\end{cases}
	\end{equation}
\end{lemma}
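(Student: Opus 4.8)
The plan is to recognize the left-hand side as a (signed) $n$-th finite difference of a polynomial and then invoke the standard fact that the $n$-th finite difference annihilates polynomials of degree below $n$ and sends a degree-$n$ polynomial to the constant equal to $n!$ times its leading coefficient.

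First I would introduce the forward-difference operator $\Delta$ acting on functions $g\colon\mathbb{Z}\to\mathbb{R}$ by $(\Delta g)(x)=g(x+1)-g(x)$, and recall, by an easy induction on $n$, that $(\Delta^n g)(0)=\sum_{k=0}^n(-1)^{n-k}\binom{n}{k}g(k)$. Multiplying through by $(-1)^n$ gives $\sum_{k=0}^n(-1)^k\binom{n}{k}g(k)=(-1)^n(\Delta^n g)(0)$. Applying this with $g(k)=(k-\alpha)^m$, which is a polynomial in $k$ of degree $m$, reduces the lemma to computing $(\Delta^n g)(0)$ for this particular $g$.

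Second I would establish the elementary claim: if $g$ is a polynomial of degree $d$ with leading coefficient $c$, then $\Delta g$ is a polynomial of degree $d-1$ with leading coefficient $cd$ when $d\geq 1$, while $\Delta g\equiv 0$ when $d=0$. Iterating this, $\Delta^n g$ is identically zero whenever $d=m<n$, and when $d=m=n$ it is the constant $c\cdot n!=n!$, since here $g(k)=(k-\alpha)^n$ has leading coefficient $c=1$. Substituting back into the identity from the first step yields $0$ in the case $m<n$ and $(-1)^n n!$ in the case $m=n$, which is exactly the assertion.

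There is essentially no serious obstacle; the only care needed is the bookkeeping of polynomial degrees under $\Delta$ and the base case of the induction giving the finite-difference formula. As an alternative route avoiding finite differences, one can expand $(k-\alpha)^m=\sum_{j=0}^m\binom{m}{j}(-\alpha)^{m-j}k^j$, rewrite each $k^j$ in the basis of falling factorials $k^{\underline{i}}$ via Stirling numbers of the second kind, and use $\sum_{k=0}^n(-1)^k\binom{n}{k}k^{\underline{i}}=(-1)^n n!\,[\,i=n\,]$, which itself follows quickly from $\binom{n}{k}\binom{k}{i}=\binom{n}{i}\binom{n-i}{k-i}$ together with the binomial theorem; only the top falling-factorial term survives, with coefficient $1$, and precisely when $m=n$.
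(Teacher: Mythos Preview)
Your argument is correct. The finite-difference identity $(\Delta^n g)(0)=\sum_{k=0}^n(-1)^{n-k}\binom{n}{k}g(k)$ is established by the induction you describe, and the degree-lowering property of $\Delta$ on polynomials gives exactly the two cases claimed. The alternative Stirling-number route you sketch is equally valid.

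There is nothing to compare against: the paper does not prove this lemma at all but simply quotes it from Gradshteyn--Ryzhik \cite{gradshteyn2014table}. Your proof therefore supplies what the paper omits, and either of your two approaches would be a clean self-contained justification.
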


\begin{lemma}[\cite{gradshteyn2014table}]\label{lem:sumDoubleBinomial}
	For arbitrary $n,p\in\N$ such that $p\leq n$ we have
	\begin{equation}
	\sum_{k=0}^{n-p} \binom{n}{k}\binom{n}{p+k} = \binom{2n}{n-p}.
	\end{equation}
\end{lemma}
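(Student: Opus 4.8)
The plan is to recognise the identity as a disguised form of the Vandermonde convolution. First I would apply the symmetry $\binom{n}{p+k}=\binom{n}{n-p-k}$ to rewrite the left-hand side as $\sum_{k=0}^{n-p}\binom{n}{k}\binom{n}{(n-p)-k}$. Next I would observe that, because $0\le p\le n$, the summand $\binom{n}{k}\binom{n}{(n-p)-k}$ vanishes outside the range $0\le k\le n-p$, so this sum coincides with the unrestricted convolution sum $\sum_{k}\binom{n}{k}\binom{n}{(n-p)-k}$. Vandermonde's identity with upper parameters $n,n$ and lower parameter $n-p$ then yields $\binom{2n}{n-p}$, which is the claim.

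Alternatively, for a self-contained derivation I would compare coefficients in $(1+x)^{2n}=(1+x)^n(1+x)^n$: the coefficient of $x^{n-p}$ on the left is $\binom{2n}{n-p}$, while on the right the Cauchy product gives $\sum_{i+j=n-p}\binom{n}{i}\binom{n}{j}$; the substitution $i=k$, $j=n-p-k$, together with $\binom{n}{n-p-k}=\binom{n}{p+k}$, returns exactly the sum in the lemma. Either route is only a couple of lines.

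There is essentially no obstacle here; the single point requiring care is the bookkeeping of the summation bounds — verifying that restricting the convolution sum to $k\in\{0,\dots,n-p\}$ loses no nonzero terms. Since the statement is quoted from \cite{gradshteyn2014table}, a one-line proof appealing directly to Vandermonde's identity would also suffice.
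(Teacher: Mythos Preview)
Your proposal is correct: the identity is precisely Vandermonde's convolution after the symmetry $\binom{n}{p+k}=\binom{n}{n-p-k}$, and your handling of the summation bounds is fine. The paper itself does not supply a proof of this lemma but simply cites it from \cite{gradshteyn2014table}, so there is no in-paper argument to compare against; your Vandermonde/generating-function derivation is exactly the standard justification.
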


\begin{lemma}\label{lem:cosIntegral}
	For arbitrary $k,l\in \N$ we have
	\begin{multline}
	\int_{-\pi}^{\pi}\cos(kx) \left[\cos(x)\right]^l \dd x \\=\begin{cases}
	\frac{2\pi}{2^{l-k}} \binom{l-k}{\frac{l-k}{2}} \prod_{i=0}^{k-1} 
	\frac{l-i}{l+k-2i}, & l \geq k \textrm{ and }(l=k\mod 2),\\
	0, & \textrm{otherwise.}\\
	\end{cases}
	\end{multline}
\end{lemma}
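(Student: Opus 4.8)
The plan is to evaluate the integral by passing to complex exponentials and isolating the non-oscillating contributions. First I would expand $\cos(x)^l = 2^{-l}(\ee^{\ii x}+\ee^{-\ii x})^l = 2^{-l}\sum_{j=0}^{l}\binom{l}{j}\ee^{\ii(l-2j)x}$ and $\cos(kx) = \tfrac12(\ee^{\ii kx}+\ee^{-\ii kx})$, so that
\begin{equation}
\int_{-\pi}^{\pi}\cos(kx)\,\cos(x)^l\,\dd x = \frac{1}{2^{l+1}}\sum_{j=0}^{l}\binom{l}{j}\int_{-\pi}^{\pi}\left(\ee^{\ii(l-2j+k)x}+\ee^{\ii(l-2j-k)x}\right)\dd x .
\end{equation}
Since $\int_{-\pi}^{\pi}\ee^{\ii m x}\,\dd x = 2\pi\,\delta_{m,0}$ for $m\in\ZZ$, each inner integral is $2\pi$ exactly when the exponent vanishes and $0$ otherwise. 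The integer $l-2j\pm k$ can vanish only if $l\equiv k\pmod 2$, and then precisely at $j = (l\mp k)/2$, which lies in $\{0,\dots,l\}$ iff $k\le l$. This already gives the "otherwise" branch, and in the remaining case
\begin{equation}
\int_{-\pi}^{\pi}\cos(kx)\,\cos(x)^l\,\dd x = \frac{2\pi}{2^{l+1}}\left(\binom{l}{(l+k)/2}+\binom{l}{(l-k)/2}\right) = \frac{2\pi}{2^{l}}\binom{l}{(l-k)/2},
\end{equation}
using the symmetry $\binom{l}{(l+k)/2}=\binom{l}{(l-k)/2}$ (the degenerate case $k=0$ is included, with the empty product in the claimed formula equal to $1$).

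It then remains to verify the elementary identity $2^{-l}\binom{l}{(l-k)/2} = 2^{-(l-k)}\binom{l-k}{(l-k)/2}\prod_{i=0}^{k-1}\frac{l-i}{l+k-2i}$ whenever $l\equiv k\pmod 2$ and $k\le l$. I would do this by rewriting everything with factorials: $\prod_{i=0}^{k-1}(l-i)=l!/(l-k)!$, and, since $l+k$ is even, $\prod_{i=0}^{k-1}(l+k-2i) = 2^{k}\,((l+k)/2)!\big/((l-k)/2)!$, so the product equals $\dfrac{l!\,((l-k)/2)!}{2^{k}\,(l-k)!\,((l+k)/2)!}$. Substituting the factorial expressions for $\binom{l-k}{(l-k)/2}$ and $\binom{l}{(l-k)/2}$ then collapses both sides to $\dfrac{l!}{2^{l}\,((l-k)/2)!\,((l+k)/2)!}$ (up to the common factor $2\pi$), which finishes the argument. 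Note that the auxiliary identities Lemma~\ref{lem:sumOfPowerBinomial} and Lemma~\ref{lem:sumDoubleBinomial} are not needed for this lemma; they enter only in the subsequent moment computation.

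There is no genuine obstacle here. The only points requiring care are the bookkeeping of the parity condition $l\equiv k\pmod 2$ and the range constraint $k\le l$ when deciding which summation indices $j$ contribute, and keeping the factorial manipulation in the final algebraic identity consistent (in particular the reduction of the product $\prod_{i=0}^{k-1}(l+k-2i)$ to a ratio of factorials, which is where the exact parity hypothesis is used).
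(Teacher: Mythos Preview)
Your proof is correct and takes a genuinely different route from the paper. The paper applies the reduction formula $\int_{-\pi}^{\pi}[\cos x]^l\cos(kx)\,\dd x=\frac{l}{l+k}\int_{-\pi}^{\pi}[\cos x]^{l-1}\cos((k-1)x)\,\dd x$ (cited from \cite{gradshteyn2014table}) $k$ times, which manufactures the product $\prod_{i=0}^{k-1}\frac{l-i}{l+k-2i}$ one factor at a time and leaves the base integral $\int_{-\pi}^{\pi}[\cos x]^{l-k}\,\dd x$, evaluated again via tables. You instead expand everything in complex exponentials and read off the surviving Fourier coefficient, obtaining the clean closed form $\frac{2\pi}{2^{l}}\binom{l}{(l-k)/2}$ in one step, and then do a purely algebraic factorial computation to match the stated product form. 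Your approach is more self-contained (no appeal to integral tables) and actually produces a simpler expression for the integral; the paper's recursion has the advantage that the product in the statement appears naturally rather than having to be verified after the fact. Your handling of the parity constraint, the range $k\le l$, and the degenerate case $k=0$ is all fine.
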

\begin{proof}
	Using the formula \cite{gradshteyn2014table}
	\begin{equation}\small
	\int \left[\cos(x)\right]^l \cos (kx) \dd x=
	\frac{1}{l+k}\left[\left(\cos(x)\right)^l 
	\sin(kx)+l\int\left[\cos(x)\right]^{l-1}
	\cos((k-1)x)\dd x\right]
	\end{equation}
	we obtain  
	\begin{equation}\label{eq:integralRecurence}
	\int_{-\pi}^\pi \left[\cos(x)\right]^l \cos (kx)\dd x = 
	\frac{l}{l+k}\int_{-\pi}^\pi \left[\cos(x) \right]^{l-1}
	\cos((k-1)x)\dd x.
	\end{equation}
	Moreover for arbitrary $l\in \N$ we have \cite{gradshteyn2014table}
	\begin{equation}
	\int \left[\cos (x) \right]^{2l} \dd x = \frac{1}{4^l}\binom{2l}{l}x +
	\frac{2}{4^l} \sum_{k=0} ^{l-1} \binom{2l}{k}\frac{\sin ((2l-2k)x)}{2l-2k},
	\end{equation}
	which provides us the formula
	\begin{equation}\label{eq:cosPowerIntegral}
	\int_{-\pi}^\pi \left[ \cos (x)\right]^{2l} \dd x = 
	\frac{2\pi}{4^l}\binom{2l}{l}.
	\end{equation}
	
	Suppose $l<k$. Then using Eq.~\eqref{eq:integralRecurence} we have 
	\begin{equation}
	\int_{-\pi}^\pi \left[\cos(x)\right]^l \cos (kx)\dd x = \prod_{i=0}^{l-1} 
	\frac{l-i}{l+k-2i} 
	\int_{-\pi}^{\pi} \cos((k-l)x) \dd x = 0.
	\end{equation}
	If $l\geq k$, then we obtain
	\begin{equation}
	\int_{-\pi}^\pi \left[ \cos(x)\right]^l \cos(kx) = \prod_{i=0}^{k-1} 
	\frac{l-i}{l+k-2i} 
	\int_{-\pi}^{\pi} \cos^{l-k} (x) \dd x.
	\end{equation}
	If $l - k$ is odd, then the integral equals 0. Otherwise using 
	Eq.~\eqref{eq:cosPowerIntegral} we have
	\begin{equation}
	\int_{-\pi}^\pi\cos(kx)\cos^l(x) = 
	\frac{2\pi}{2^{l-k}}\binom{l-k}{\frac{l-k}{2}}\prod_{i=0}^{k-1} 
	\frac{l-i}{l+k-2i}.
	\end{equation}
\end{proof}

%%%%%%%%%%%%%%%%%%%%%%%%%%%% 
%%%% Proposition Taylor %%%%
%%%%%%%%%%%%%%%%%%%%%%%%%%%%

\begin{proposition}\label{theorem:probabilityTaylorSeries}
	For an interpolated standard GQSW on an infinite path with an initial state
	$\varrho(0) = \dyad{0}$ and $\omega =1$, the diagonal part of $\varrho(t)$ is given by
	\begin{equation}
	\bra k \varrho(t)\ket k =  \sum_{n=|k|}^\infty 
	\frac{(-1)^{n+k}}{2^n}\binom{2n}{n}\binom{2n}{n+k}\frac{t^n}{n!}.
	\end{equation}
\end{proposition}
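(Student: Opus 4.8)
The plan is to start from Theorem~\ref{theorem:prob-on-line} specialized to $\omega=1$, where the oscillatory factor disappears and we are left with
\[
\bra k \varrho(t)\ket k = \frac{1}{4\pi^2}\int_{-\pi}^{\pi}\!\!\int_{-\pi}^{\pi} \cos(kx)\cos(ky)\,\exp\bigl[-2t(\cos x-\cos y)^2\bigr]\,\dd x\,\dd y .
\]
Since $\cos(kx)=\cos(|k|x)$ and the claimed series is invariant under $k\mapsto-k$ (because $\binom{2n}{n+k}=\binom{2n}{n-k}$ and $(-1)^{n+k}=(-1)^{n-k}$), I would assume $k\ge 0$ throughout. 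The idea is then to expand $\exp\bigl[-2t(\cos x-\cos y)^2\bigr]=\sum_{n\ge0}\frac{(-2t)^n}{n!}(\cos x-\cos y)^{2n}$ and apply the binomial theorem to $(\cos x-\cos y)^{2n}=\sum_{j=0}^{2n}(-1)^j\binom{2n}{j}(\cos x)^j(\cos y)^{2n-j}$. On $[-\pi,\pi]^2$ one has $|(\cos x-\cos y)^{2n}|\le 4^n$, so the series is dominated by $\sum_n (8t)^n/n!$; this justifies interchanging the $n$-summation with the double integral, while the $j$-summation is finite.

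The problem then reduces to evaluating $\int_{-\pi}^{\pi}\cos(kx)(\cos x)^l\,\dd x$, and here Lemma~\ref{lem:cosIntegral} applies: the integral vanishes unless $l\ge k$ and $l\equiv k\pmod2$, and otherwise the product $\prod_{i=0}^{k-1}\frac{l-i}{l+k-2i}$ together with $\binom{l-k}{(l-k)/2}$ collapses to the compact form $\frac{2\pi}{2^{l}}\binom{l}{(l+k)/2}$. Substituting this for both integrals, the overall prefactor becomes $\frac{1}{4\pi^2}(2\pi)^2\frac{1}{2^{j}2^{2n-j}}=\frac{1}{4^n}$, and the parity constraints force $j=k+2s$ with $0\le s\le n-k$ (so in particular the outer sum begins at $n=|k|$, matching the statement). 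Collecting signs via $(-1)^j(-2t)^n/(4^n n!)=(-1)^{n+k}t^n/(2^n n!)$, everything comes down to the combinatorial identity
\[
\sum_{s=0}^{n-k}\binom{2n}{k+2s}\binom{k+2s}{k+s}\binom{2n-k-2s}{n-s}=\binom{2n}{n}\binom{2n}{n+k}.
\]

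To prove this identity I would rewrite the summand as the multinomial $\frac{(2n)!}{(k+s)!\,s!\,(n-s)!\,(n-k-s)!}$ and group it as $\frac{(2n)!}{n!^2}\binom{n}{k+s}\binom{n}{s}$; then Vandermonde's convolution, in the form of Lemma~\ref{lem:sumDoubleBinomial} with $p=k$, gives $\sum_{s}\binom{n}{s}\binom{n}{k+s}=\binom{2n}{n-k}=\binom{2n}{n+k}$, which yields the right-hand side after restoring the factor $\frac{(2n)!}{n!^2}=\binom{2n}{n}$. The main obstacle is not conceptual but bookkeeping: one must check that the two parity requirements ($j\equiv k$ and $2n-j\equiv k\bmod 2$) are consistent and collapse to $j\equiv k$, that $n\ge k$ is forced by $k\le j\le 2n-k$, that $\binom{l}{(l+k)/2}$ is a genuine integer binomial in the nonvanishing range, and that the $n\mapsto -n$... rather, $k\mapsto -k$ symmetry argument legitimately covers negative $k$. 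The interchange of sum and integral is then routine by dominated convergence on the compact domain.
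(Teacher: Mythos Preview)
Your proposal is correct and follows essentially the same route as the paper: start from Theorem~\ref{theorem:prob-on-line} at $\omega=1$, expand the exponential, separate the double integral via the binomial theorem, evaluate each factor with Lemma~\ref{lem:cosIntegral}, and collapse the remaining sum using Lemma~\ref{lem:sumDoubleBinomial}. The only cosmetic difference is that you simplify the cosine integral directly to $\frac{2\pi}{2^{l}}\binom{l}{(l+k)/2}$ and then reach the product $\binom{2n}{n}\binom{n}{k+s}\binom{n}{s}$ via the multinomial regrouping, whereas the paper arrives at the same product form by computing the initial term $A_{n,k,k}$ and the ratio $A_{n,k,l+2}/A_{n,k,l}$; both methods land on the identical Vandermonde sum.
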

\begin{proof}
	Since the elements $\varrho_{kk}(t)\coloneqq \bra k \varrho(t)\ket k $ are symmetric with respect 
	to $k=0$, we assume $k\geq0$. By Theorem~\ref{theorem:prob-on-line} we have
	\begin{equation}
	\varrho_{kk}(t)=\frac{1}{4\pi^2}\int_{-\pi}^\pi\int_{-\pi}^\pi 
	\cos(kx)\cos(ky)\exp\left[-2t (\cos(x)-\cos(y))^2\right]\dd x \dd y.
	\end{equation}
	Suppose we have the Taylor series representation $\varrho_{kk}(t) = 
	\sum_{n=0}^{\infty} 
	\frac{A_{n,k}}{n!}t^n$. Then $A_{n,k}$ is of the form
	\begin{equation}
	\begin{split}
	A_{n,k} &= \frac{(-1)^n2^n}{4\pi^2}\int_{-\pi}^\pi\int_{-\pi}^\pi 
	\cos(kx)\cos(ky)\left 
	(\cos(x)-\cos(y)\right)^{2n}\dd x \dd y \\
	&= \frac{(-1)^n2^n}{4\pi^2}\sum_{l=0}^{2n}  
	\binom{2n}{l}(-1)^l\int_{-\pi}^\pi\cos(kx) \left[\cos(x) \right]^l \dd 
	x \;\times \\
	&\phantom{\ =}\times \int_{-\pi}^\pi 
	\cos(ky) \left[\cos (y)\right]^{2n-l} \dd y.
	\end{split}
	\end{equation}
	Let us define for simplicity
	\begin{equation}
	A_{n,k,l} \coloneqq 
	\binom{2n}{l}(-1)^l\int_{-\pi}^\pi\cos(kx)\left[\cos(x)\right]^l \dd 
	x\int_{-\pi}^\pi 
	\cos(ky) \left[\cos(y)\right]^{2n-l} \dd y.
	\end{equation}
	By Lemma~\ref{lem:cosIntegral} we have that $A_{n,k,l}$ is non-zero when $k-l$ 
	is even and takes the form
	\begin{equation}
	A_{n,k,l}  = 
	\frac{(-1)^l 4\pi^2}{2^{2n-2k}} \binom{2n}{l} \binom{l-k}{\frac{l-k}{2}} 
	\binom{2n-l-k}{n-\frac{l+k}{2}} 
	\prod_{i=0}^{k-1} \frac{l-i}{l+k-2i} \frac{2n-l-i}{2n-l+k-2i}. 
	\end{equation}
	Furthermore from condition $2n \geq 2n-l \geq k$ for $A_{n,k,l}\neq 0$, for $n<k$ we have $A_{n,k}=0$. 
	
	Again it is straightforward to find
	\begin{equation}
	A_{n,k,k} = \frac{(-1)^k 4\pi^2}{4^{n}} \binom{2n}{n}\binom{n}{k}
	\end{equation}
	and 
	\begin{equation}
	\frac{A_{n,k,l+2}}{A_{n,k,l}} = \frac{(n - \frac{l-k}{2})(n - 
		\frac{l-k}{2})}{(\frac{l+k}{2}+1)(\frac{l-k}{2}+1)}.
	\end{equation}
	Note that we increment $l$ by two instead of one because of the assumption that
	$l-k$ is even. One can verify, that the $A_{n,k,l}$ is of the form
	\begin{equation}
	A_{n,k,l} = 
	\frac{(-1)^k4\pi^2}{4^n}\binom{2n}{n}\binom{n}{\frac{l+k}{2}} 
	\binom{n}{\frac{l-k}{2}}.
	\end{equation}
	Finally we have
	\begin{equation}
	\begin{split}
	A_{n,k} &=  \frac{(-1)^n2^n}{4\cdot\pi^2} 
	\sum_{l\in\{k,k+2,\dots,2n-k\}}A_{n,k,l} \\
	&= 
	\frac{(-1)^{n+k}}{2^n}\binom{2n}{n}\sum_{l\in\{k,k+2,\dots,2n-k\}} 
	\binom{n}{\frac{l+k}{2}}\binom{n}{\frac{l-k}{2}}\\
	&= \frac{(-1)^{n+k}}{2^n}\binom{2n}{n}\sum_{l=0}^{n-k}
	\binom{n}{l+k}\binom{n}{l}\\
	&=\frac{(-1)^{n+k}}{2^n}\binom{2n}{n} \binom{2n}{n+k},
	\end{split}
	\end{equation}
	where in the third line we change the indices range and in the last line we use 
	Lemma~\ref{lem:sumDoubleBinomial}.
\end{proof}

\begin{proposition}\label{theorem:stochastic-moments-w1}
	For an interpolated standard GQSW on an infinite path with an
	initial state $\varrho(0) = \dyad{0}$, $\omega =1$, the $m$-th central moment
	$\mu_m(t)$ is polynomial in $t$ for $m$ even, and zero otherwise. Moreover for
	even $m$ we have
	\begin{equation}
	\lim_{t\to\infty}\frac{\mu_m(t)}{t^\frac{m}{2}}= \frac{m!}{\left( \frac m2 
		\right)!2^\frac{m}{2}}\binom{m}{\frac{m}{2}}.
	\end{equation}
\end{proposition}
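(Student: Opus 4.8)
The plan is to start from the closed-form Taylor series for the diagonal of $\varrho(t)$ obtained in Proposition~\ref{theorem:probabilityTaylorSeries}, namely $p_k(t) := \bra{k}\varrho(t)\ket{k} = \sum_{n\geq |k|} c_{n,k}\,t^n$ with $c_{n,k} = \frac{(-1)^{n+k}}{2^n n!}\binom{2n}{n}\binom{2n}{n+k}$, and to plug it into $\mu_m(t) = \sum_{k\in\ZZ} k^m p_k(t)$ (the distribution is symmetric, $p_k=p_{-k}$, so the mean is $0$ and central and raw moments coincide). First I would record the crude bound $|c_{n,k}| \leq 8^n/n!$, obtained from $\binom{2n}{n},\binom{2n}{n+k}\le 4^n$. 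This shows each $p_k(t)$ is entire in $t$ and, more importantly, that $|p_k(t)| \leq \sum_{n\geq |k|} (8|t|)^n/n!$ decays faster than any power of $|k|$; hence $\sum_k |k|^m |p_k(t)|$ converges and the double series $\sum_k\sum_n |k^m c_{n,k} t^n|$ is finite. This legitimizes interchanging the two summations, giving
\begin{equation}
\mu_m(t) = \sum_{n=0}^\infty \frac{(-1)^n}{2^n n!}\binom{2n}{n}\,T_{n,m}\,t^n, \qquad T_{n,m} := \sum_{k=-n}^{n} (-1)^k k^m \binom{2n}{n+k}.
\end{equation}

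The heart of the argument is the evaluation of $T_{n,m}$. Substituting $j = n+k$ turns it into $T_{n,m} = (-1)^n \sum_{j=0}^{2n} (-1)^j (j-n)^m \binom{2n}{j}$, which is exactly the sum controlled by Lemma~\ref{lem:sumOfPowerBinomial} (with its "$n$" replaced by $2n$ and $\alpha = n$): provided $m \leq 2n$ it equals $0$ when $m < 2n$ and $(2n)!$ when $m = 2n$. Consequently $T_{n,m} = 0$ for every $n > m/2$, so the series for $\mu_m(t)$ truncates: $\mu_m(t)$ is a polynomial in $t$ of degree at most $\lfloor m/2\rfloor$. For odd $m$ the substitution $k\mapsto -k$ (using $\binom{2n}{n-k}=\binom{2n}{n+k}$) shows $T_{n,m} = -T_{n,m} = 0$ for all $n$, so $\mu_m(t)\equiv 0$, which settles that half of the statement.

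For even $m$, only the coefficient of $t^{m/2}$ can be the leading term, and it is nonzero: at $n = m/2$ we have $\binom{2n}{n} = \binom{m}{m/2}$ and $T_{m/2,m} = (-1)^{m/2} m!$ by the $m = 2n$ branch of Lemma~\ref{lem:sumOfPowerBinomial}, so the top coefficient is
\begin{equation}
\frac{(-1)^{m/2}}{2^{m/2}(m/2)!}\binom{m}{m/2}\cdot (-1)^{m/2} m! = \frac{m!}{(m/2)!\,2^{m/2}}\binom{m}{m/2},
\end{equation}
which is precisely the claimed limit of $\mu_m(t)/t^{m/2}$, since all lower-degree terms are negligible as $t\to\infty$.

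I expect the only genuinely delicate point to be the justification of the summation interchange — one must be slightly careful because the $n$-series for $p_k(t)$ is not positive-termwise — but the exponential bound $|c_{n,k}|\le 8^n/n!$ makes the absolute double sum converge, after which Fubini for series applies directly. Everything else (the index shift, the invocation of Lemma~\ref{lem:sumOfPowerBinomial}, reading off the top coefficient) is routine bookkeeping; the one thing to keep in mind is that Lemma~\ref{lem:sumOfPowerBinomial} only forces vanishing when $m\le 2n$, which is exactly the regime $n\ge m/2$ that we need, while for the finitely many indices $n<m/2$ the coefficients $T_{n,m}$ may be nonzero but are harmless, contributing only lower-order polynomial terms.
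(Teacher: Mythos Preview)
Your proof is correct and follows essentially the same route as the paper's: substitute the Taylor series from Proposition~\ref{theorem:probabilityTaylorSeries} into the moment sum, swap the order of summation, shift the index $k\mapsto j=n+k$, and invoke Lemma~\ref{lem:sumOfPowerBinomial} to truncate the series at $n=m/2$ and read off the leading coefficient. The only difference is that you explicitly justify the interchange of sums via the bound $|c_{n,k}|\le 8^n/n!$, whereas the paper performs the swap without comment; your handling of the odd-$m$ case via the antisymmetry $T_{n,m}=-T_{n,m}$ is also slightly more explicit than the paper's one-line appeal to symmetry.
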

\begin{proof}
	Note that odd moments equals 0 by symmetry of the probability distribution. 
	Suppose $m$ is even and $m>0$. Then by 
	Proposition~\ref{theorem:probabilityTaylorSeries} 
	we have
	\begin{equation}
	\begin{split}
	\mu_m(t) &= \sum_{k=-\infty}^{\infty} k^m  \sum_{n=|k|}^\infty 
	\frac{(-1)^{n+k}}{2^n}\binom{2n}{n}\binom{2n}{n+k}\frac{t^n}{n!}\\
	&= 
	\sum_{n=0}^{\infty}\frac{(-1)^n}{2^n}\binom{2n}{n}\frac{t^n}{n!}\sum_{k=-n}^{n}k^m
	(-1)^k\binom{2n}{n+k}\\
	&=\sum_{n=0}^{\infty}\frac{1}{2^n}\binom{2n}{n}\frac{t^n}{n!} 
	\sum_{k=0}^{2n}(k-n)^m   (-1)^{k}\binom{2n}{k}.
	\end{split}
	\end{equation}
	By Lemma~\ref{lem:sumOfPowerBinomial} formula above can be simplified
	\begin{equation}
	\mu_m(t) = 
	\sum_{n=1}^{\frac{m}{2}}\frac{1}{2^n}\binom{2n}{n}\frac{t^n}{n!}  
	\sum_{k=0}^{2n}(k-n)^m   (-1)^{k}\binom{2n}{k},
	\end{equation}
	hence the $m$-th central moment is a polynomial of degree $\frac{m}{2}$ with 
	respect to $t$. Moreover, the coefficient next to $t^\frac{m}{2}$ is
	\begin{equation}
	a_{\frac{m}{2}} = \frac{m!}{\left( \frac{m}{2} \right)! 
		2^\frac{m}{2}}\binom{m}{\frac{m}{2}}. 
	\end{equation}
\end{proof}

\subsection{Case $\omega<1$ }
\begin{proposition}\label{th:probabilityTaylorSeries-w01}
	For an interpolated standard GQSW on an infinite path with an initial state
	$\varrho(0) = \dyad{0}$ and $\omega \in(0,1)$, the diagonal part of
	$\varrho(t)$ is given by
	\begin{equation}
	\bra k \varrho(t)\ket k =  \sum_{n=|k|}^\infty 
	B_{n,k}\frac{t^n}{n!},
	\end{equation}
	where
	\begin{equation}
	B_{n,k}  = \frac{(-1)^{n+k}}{2^n}\sum_{l=0}^{\min 
		(\lfloor\frac{n}{2}\rfloor,n-|k|)} 
	\binom{n}{2l}
	4^{l}\omega^{n-2l}(1-\omega)^{2l}  
	(-1)^{l}
	\binom{2n-2l}{n-l}  
	\binom{2n-2l}{n-l+k}.
	\end{equation}
\end{proposition}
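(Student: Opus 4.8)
The plan is to mirror the proof of Proposition~\ref{theorem:probabilityTaylorSeries} (the $\omega=1$ case), starting from the double-integral representation of the diagonal of $\varrho(t)$ in Theorem~\ref{theorem:prob-on-line}. Writing $u=\cos(x)-\cos(y)$, the integrand carries the factor $\exp[-2\omega t u^2 + 2\ii(1-\omega)t u]$. First I would expand this as a power series in $t$: the coefficient of $t^n/n!$ is $(-2\omega u^2 + 2\ii(1-\omega)u)^n$, which by the binomial theorem equals $\sum_{j}\binom{n}{j}(-2\omega)^{j}(2\ii(1-\omega))^{n-j}u^{n+j}$. Since this series converges uniformly on the compact square $[-\pi,\pi]^2$ for each fixed $t$, the sum and the double integral can be interchanged, reducing the problem to computing $\int_{-\pi}^{\pi}\int_{-\pi}^{\pi}\cos(kx)\cos(ky)\,u^{n+j}\,\dd x\,\dd y$ for each $n$ and $j$.

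For each such monomial I would expand $u^{n+j}=(\cos(x)-\cos(y))^{n+j}$ by the binomial theorem again, so that the double integral factorizes into products of one-dimensional integrals of the form $\int_{-\pi}^{\pi}\cos(kx)\cos^{p}(x)\,\dd x$, evaluated by Lemma~\ref{lem:cosIntegral}. The parity and support conditions in that lemma do the bookkeeping: they force $n+j$ to be even, so only terms with $j\equiv n\pmod 2$ survive; reindexing $j=n-2l$ turns the power of $\omega$ into $n-2l$ and the power of $(1-\omega)$ into $2l$, while the powers of $2$, of $\ii$, and the signs produced by Lemma~\ref{lem:cosIntegral} collapse into the stated prefactor $\tfrac{(-1)^{n+k}}{2^n}\,4^l(-1)^l\binom{n}{2l}$. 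They also restrict $l$ to $0\le l\le\min(\lfloor n/2\rfloor,\,n-|k|)$, since each of the two split $\cos$-powers must be at least $|k|$ (hence $2n-2l\ge 2|k|$) and $\binom{n}{2l}$ requires $2l\le n$. The remaining inner sum over the splitting index of $(\cos(x)-\cos(y))^{2n-2l}$ is a sum of products of two binomial coefficients that telescopes via Lemma~\ref{lem:sumDoubleBinomial} (with Lemma~\ref{lem:sumOfPowerBinomial} covering any power-weighted variant), exactly as in the $\omega=1$ argument, producing $\binom{2n-2l}{n-l}\binom{2n-2l}{n-l+k}$ and completing the formula for $B_{n,k}$.

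Two sanity checks close the argument: $B_{n,k}=0$ for $n<|k|$ is immediate, since then $n-|k|<0$ makes the $l$-sum empty, consistent with the outer sum beginning at $n=|k|$; and setting $\omega=1$ recovers Proposition~\ref{theorem:probabilityTaylorSeries}, because $(1-\omega)^{2l}=0$ for $l\ge 1$ leaves only the $l=0$ term, giving $B_{n,k}=\tfrac{(-1)^{n+k}}{2^n}\binom{2n}{n}\binom{2n}{n+k}$. I expect the main obstacle to be the purely combinatorial core — carefully tracking the two nested binomial expansions, the parity/reindexing step, and the telescoping of the inner sum so that all the constants ($2^n$, $4^l$, the signs) come out precisely as stated; the analytic ingredients (uniform convergence and the Fubini interchange, and the fact that the resulting power series in $t$ is entire) are routine.
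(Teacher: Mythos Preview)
Your proposal is correct and follows essentially the same route as the paper. The paper expands the exponential in powers of $t$, applies the binomial theorem to $(-2\omega u^2+2\ii(1-\omega)u)^n$, drops the odd-index terms (it phrases this as ``$\varrho_{kk}(t)\in\R$'', which is equivalent to your parity observation from Lemma~\ref{lem:cosIntegral}), and then simply recognizes the remaining double integral $\frac{1}{4\pi^2}\iint\cos(kx)\cos(ky)(\cos x-\cos y)^{2n-2l}\,\dd x\,\dd y$ as the quantity $A_{n-l,k}$ already computed in Proposition~\ref{theorem:probabilityTaylorSeries}, substituting the closed form. Your plan re-derives that inner integral inline via Lemmas~\ref{lem:cosIntegral} and~\ref{lem:sumDoubleBinomial}, which is the same computation; the only streamlining you miss is citing $A_{n-l,k}$ directly. (Lemma~\ref{lem:sumOfPowerBinomial} is not actually needed here---it enters only later, in the moments calculation.)
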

\begin{proof}
	If we denote
	$\varrho_{kk}(t)\coloneqq \bra k \varrho(t)\ket k=\sum_{n=0}^{\infty}\frac{B_{n,k}}{n!}t^n$, then one can find that
	$B_{n,k}$ is of the form
	\begin{equation}
	\begin{split}
	B_{n,k} &=  \sum_{l=0}^{n } 
	\binom{n}{l}\frac{1}{4\pi^2}\int_{-\pi}^\pi\int_{-\pi}^\pi 
	\cos(kx)\cos(ky)(-\omega)^{n-l}2^{n-l}\times  \\
	&\phantom{\ =}\times (\cos(x)-\cos(y))^{2n-l} 2^l  \ii^l (1-\omega)^l \dd 
	x\dd y .
	\end{split}
	\end{equation}
	Since $\varrho_{kk}(t) \in \R$, we can exclude the imaginary terms and we can
	simplify the formula
	\begin{equation}
	\begin{split}
	B_{n,k} &=\sum_{l=0}^{\lfloor \frac{n}{2}\rfloor} \binom{n}{2l}
	2^{2n-l}\omega^{n-2l}(1-\omega)^{2l} 
	\times \\
	&\phantom{\ =}\times \frac{(-1)^{n-l}}{2^{n-l} 
		4\pi^2}\int_{-\pi}^\pi\int_{-\pi}^\pi 
	\cos(kx)\cos(ky)(\cos(x)-\cos(y))^{2n-2l} \dd  x\dd y \\
	&=\sum_{l=0}^{\lfloor \frac{n}{2}\rfloor} \binom{n}{2l} 
	2^{2n-l}\omega^{n-2l}(1-\omega)^{2l}
	A_{n-l,k}.
	\end{split}
	\end{equation}
	
	From the proof of Proposition~\ref{theorem:probabilityTaylorSeries} we know, that $A_{n,k}$ takes the 
	form
	\begin{equation}
	A_{n,k} = \begin{cases}
	0, & |k|>n,\\
	\frac{(-1)^{n+k}}{8^n} \binom{2n}{n}  \binom{2n}{n+k}, & |k|\leq n.
	\end{cases}
	\end{equation}
	In our case we have the condition $|k|\leq n-\frac{l}{2}\leq n$. Hence we
	conclude, that $B_{n,k}$ is of the form
	\begin{equation}
	B_{n,k}  = \frac{(-1)^{n+k}}{2^n}\sum_{l=0}^{\min 
		(\lfloor\frac{n}{2}\rfloor,n-|k|)} 
	\binom{n}{2l}
	4^{l}\omega^{n-2l}(1-\omega)^{2l}  
	(-1)^{l}
	\binom{2n-2l}{n-l}  
	\binom{2n-2l}{n-l+k}.
	\end{equation}
	
\end{proof}

\begin{proposition}\label{theorem:stochastic-moments-w01}
	For a inteprolated standard GQSW on an infinite path with an
	initial state $\varrho(0) = \dyad{0}$, $\omega \in(0,1)$, the $m$-th central moment
	$\mu_m(t)$ is polynomial in $t$ for $m$ even, and zero otherwise. Moreover for
	even $m$ we have
	\begin{equation}
	\lim_{t\to\infty}\frac{\mu_m(t)}{t^m}= \binom{m}{\frac{m}{2}} (1-\omega)^{m}.
	\end{equation}
\end{proposition}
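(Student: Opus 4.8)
The plan is to read off the moments directly from the Taylor expansion of the diagonal of $\varrho(t)$ supplied by Proposition~\ref{th:probabilityTaylorSeries-w01}. Writing $\varrho_{kk}(t) \coloneqq \bra{k}\varrho(t)\ket{k} = \sum_{n\geq |k|} B_{n,k}\,t^n/n!$, the $m$-th central moment is $\mu_m(t) = \sum_{k\in\ZZ} k^m \varrho_{kk}(t)$, and I would first exchange the order of summation to get $\mu_m(t) = \sum_{n\geq 0}\frac{t^n}{n!}\sum_{k=-n}^{n} k^m B_{n,k}$. The exchange, and the very finiteness of the moments, is justified by a crude bound on $B_{n,k}$: estimating each binomial coefficient in the defining sum by a power of $4$ or $2$ gives $|B_{n,k}| \leq C^n$ for some absolute constant $C$, whence $\sum_{n,k}|k|^m |B_{n,k}|\,t^n/n! < \infty$ for every $t$. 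For odd $m$ the inner sum vanishes because $B_{n,-k}=B_{n,k}$ (equivalently $\varrho_{kk}(t)$ is even in $k$, as already used in the proof of Theorem~\ref{theorem:prob-on-line}), so $\mu_m(t)\equiv 0$.

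For even $m$, I would substitute the formula for $B_{n,k}$, pull the sums over $l$ and $n$ outside, and evaluate the innermost sum over $k$. Re-indexing by $j = n-l+k$ turns it into
\[
\sum_{k} k^m(-1)^k\binom{2n-2l}{n-l+k} = (-1)^{n-l}\sum_{j=0}^{2(n-l)}\bigl(j-(n-l)\bigr)^m(-1)^j\binom{2(n-l)}{j},
\]
which is exactly the alternating sum of Lemma~\ref{lem:sumOfPowerBinomial} with shift $\alpha = n-l$ and binomial index $N = 2(n-l)$: it vanishes when $2(n-l)>m$ and equals $(2(n-l))!$ when $2(n-l)=m$. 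Hence only pairs $(n,l)$ with $2(n-l)\leq m$ contribute; together with $l\leq\lfloor n/2\rfloor$ this forces $n\leq m$, so $\mu_m(t)$ is a polynomial of degree at most $m$ in $t$.

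To pin down the leading behaviour I would isolate the coefficient of $t^m$, i.e.\ the $n=m$ term. Then $2(n-l)\leq m$ forces $l\geq m/2$, while $l\leq\lfloor m/2\rfloor = m/2$, so $l=m/2$ is the only survivor and there $2(n-l)=m$, the ``$=$'' case of Lemma~\ref{lem:sumOfPowerBinomial}. Substituting $n=m$, $l=m/2$ into the surviving term, using $\binom{m}{m}=1$, $4^{m/2}=2^m$ and $\omega^{\,n-2l}=1$, and collecting the several factors of $-1$ (the $(-1)^{n+k}$ in $B_{n,k}$, the $(-1)^l$, the $(-1)^{n-l}$ produced by the re-indexing, and the $(-1)^j$ inside the lemma, which altogether cancel), one obtains $\sum_k k^m B_{m,k} = \binom{m}{m/2}(1-\omega)^m\,m!$, so the coefficient of $t^m$ in $\mu_m(t)$ equals $\binom{m}{m/2}(1-\omega)^m$. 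Since this is strictly positive for $\omega\in(0,1)$, $\mu_m(t)$ has degree exactly $m$ and $\mu_m(t)/t^m\to\binom{m}{m/2}(1-\omega)^m$, as claimed.

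The hard part is not conceptual but bookkeeping: one must handle the index shift $j=n-l+k$ and the accumulated powers of $-1$ with complete care, and one must be certain that no stray $(n,l)$ pair contributes to the $t^m$ coefficient. A lesser technical point to write out cleanly is the bound on $B_{n,k}$ ensuring absolute convergence (and finiteness of all moments); alternatively one could differentiate the integral representation from Theorem~\ref{theorem:prob-on-line} under the integral sign, but given Proposition~\ref{th:probabilityTaylorSeries-w01} the coefficient-extraction route is the more economical one.
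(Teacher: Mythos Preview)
Your proposal is correct and follows essentially the same route as the paper's proof: you invoke Proposition~\ref{th:probabilityTaylorSeries-w01}, exchange the sums, re-index $j=n-l+k$, apply Lemma~\ref{lem:sumOfPowerBinomial} to kill all terms with $2(n-l)>m$ (forcing $n\le m$), and then read off the $n=m$, $l=m/2$ contribution to get the leading coefficient $\binom{m}{m/2}(1-\omega)^m$. The only additions relative to the paper are your explicit bound $|B_{n,k}|\le C^n$ to justify the exchange of summations and your explicit treatment of the odd-$m$ case via $B_{n,-k}=B_{n,k}$; the paper omits the former and handles the latter implicitly by symmetry.
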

\begin{proof}	
	Thanks to the Proposition~\ref{th:probabilityTaylorSeries-w01}, for even $m$ we have
	\begin{equation}\small
	\begin{split}
	\mu_m(t)&= \sum_{k=-\infty}^{\infty} k^m  \sum_{n=0}^\infty 
	B_{n,k}\frac{t^n}{n!}\\
	&=\sum_{n=0}^{\infty}\frac{t^n}{n!}\sum_{k=-n}^{n}k^mB_{n,k}\\
	&=\sum_{n=0}^{\infty}\frac{(-1)^nt^n}{2^nn!}\sum_{k=-n}^{n}k^m 
	(-1)^k\times\\
	&\phantom{=\ }\times\sum_{l=0}^{\min 
		(\lfloor\frac{n}{2}\rfloor,n-|k|)} 
	\binom{n}{2l}
	4^{l}\omega^{n-2l}(1-\omega)^{2l}  
	(-1)^{l}
	\binom{2n-2l}{n-l}  
	\binom{2n-2l}{n-l+k}\\
	&=\sum_{n=0}^\infty \sum_{k=-n}^{n}\sum_{l=0}^{\min 
		(\lfloor\frac{n}{2}\rfloor,n-|k|)} C_{n,l,t} \omega^{n-2l}(1-\omega)^{2l}(-1)^k 
	k^m  \binom{2n-2l}{n-l+k},\\
	&=\sum_{n=0}^\infty \sum_{l=0}^{\lfloor\frac{n}{2}\rfloor} 
	C_{n,l,t} 
	\omega^{n-2l}(1-\omega)^{2l} \sum_{k=-(n-l)}^{n-l} (-1)^k 
	k^m  \binom{2n-2l}{n-l+k},\\
	&=\sum_{n=0}^\infty \sum_{l=0}^{\lfloor\frac{n}{2}\rfloor} 
	C_{n,l,t} 
	\omega^{n-2l}(1-\omega)^{2l} (-1)^{n-l}\sum_{k=0}^{2n-2l} (-1)^k 
	(k-(n-l))^m  \binom{2n-2l}{k},\label{eq:mth-moment-mixed}\\
	\end{split}
	\end{equation}
	where 
	\begin{equation}
	C_{n,l,t}= \frac{(-1)^nt^n}{2^nn!}  4^{l}(-1)^l 
	\binom{n}{2l}\binom{2n-2l}{n-l} .
	\end{equation}
	Let us denote
	\begin{equation}
	\alpha_{2n-2l,m} = \sum_{k=0}^{2n-2l} (-1)^k 
	(k-(n-l))^m  \binom{2n-2l}{k}.
	\end{equation} 
	From Lemma~\ref{lem:sumOfPowerBinomial} $\alpha_{2n-2l,m}$ is nonzero if 
	$m\geq2n-2l\iff l\geq n-\frac{m}{2}$. Hence Eq.~\eqref{eq:mth-moment-mixed} 
	can be simplified
	\begin{equation}
	\begin{split}
	\mu_m(t) = \sum_{n=0}^\infty \sum_{l=n-\frac{m}{2}}^{\lfloor\frac{n}{2}\rfloor} 
	C_{n,l,t} (-1)^{n-l}
	\omega^{n-2l}(1-\omega)^{2l}\alpha_{2n-2l,m}
	\end{split}
	\end{equation}
	The condition $n - \frac m2 \leq l\leq\lfloor\frac{n}{2}\rfloor$ implies that for $n>m$ we have necessarily zero elements in Taylor sequence. Hence, we have
	\begin{equation}
	\begin{split}
	\mu_m(t) &= \sum_{n=0}^m \sum_{l=n-\frac{m}{2}}^{\lfloor\frac{n}{2}\rfloor} 
	C_{n,l,t} (-1)^{n-l}
	\omega^{n-2l}(1-\omega)^{2l}\alpha_{2n-2l,m}\\
	&= \sum_{n=0}^m \sum_{l=n-\frac{m}{2}}^{\lfloor\frac{n}{2}\rfloor}  
	\frac{(-1)^nt^n}{2^nn!}  4^{l}(-1)^n 
	\binom{n}{2l}\binom{2n-2l}{n-l} 
	\omega^{n-2l}(1-\omega)^{2l}\alpha_{2n-2l,m}\\
	&= \sum_{n=0}^{m}\beta_{n,\omega}t^n.\label{eq:mixed-moment-general}
	\end{split}
	\end{equation}
	Let us calculate the leading term, $\beta_{m,\omega}$. Then we have 
	$l\in\{\frac{m}{2}\}
	$, $n=m$ with even $m$.
	\begin{equation}
	\begin{split}
	\beta_{m,\omega} &= \sum_{l=m-\frac{m}{2}}^{\lfloor\frac{m}{2}\rfloor} 
	\frac{(-1)^m}{2^mm!}  4^{l}(-1)^m 
	\binom{m}{2l}\binom{2m-2l}{m-l} \omega^{m-2l}(1-\omega)^{2l}\alpha_{2m-2l,m}\\
	&= \frac{1}{2^mm!}  2^{m}
	\binom{m}{m}\binom{m}{\frac{m}{2}}(1-\omega)^{m} \alpha_{m,m}\\
	&= \frac{1}{m!} \binom{m}{\frac{m}{2}} (1-\omega)^{m} 
	(-1)^mm! =  \binom{m}{\frac{m}{2}} (1-\omega)^{m},
	\end{split}
	\end{equation}
	where we used the fact, that $\alpha_{m,m} =(-1)^mm!$ by 
	Lemma~\ref{lem:sumOfPowerBinomial}.
\end{proof}
%\todo[inline]{Remove below?}
%Note $\beta_{0,\omega}=0$, since we start with probability distribution localized at position 0. Furthermore we have
%for $m=2$ and $n=1$
%\begin{equation}
%\begin{split}
%\beta_{1,\omega} &= \sum_{l=0}^{0}  
%\frac{-1}{2}  4^{l}(-1) 
%\binom{1}{2l}\binom{2-2l}{1-l} 
%\omega^{1-2l}(1-\omega)^{2l}\alpha_{2-2l,2} \\
%&=2\omega 
%\end{split}
%\end{equation}

% !TeX spellcheck = en_GB
\chapter{Proofs for quantum search}

\section{Proofs for \ER graphs}\label{app:er_proof}

\newtheorem*{proposition*}{Proposition}
\newtheorem*{theorem*}{Theorem}

\subsection{Convergence of the principal eigenvector of adjacency matrix} \label{app:er_proof_principal_eigenvector}
\begin{proposition*}[\cite{glos2018vertices}] 
	Let $\ket{\lambda_{1}}$ be a principal eigenvector of adjacency matrix of random \ER graph with parameter $p$. For the probability $p = \omega \left( \log^3(n)/(n\log^2\log n) \right)$and some constant $c>0$ we have
	\begin{equation}
	\Vert \ket{\lambda_1} -\ket{s} \Vert_{\infty} \leq c \frac{1}{\sqrt{n}}\frac{\ln^{3/2}(n)}{\sqrt{np} \ln(np)}
	\end{equation}
	with probability $1-o(1)$.
\end{proposition*}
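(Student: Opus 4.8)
The plan is to follow the entrywise perturbation strategy of Mitra, adapted to the weaker density hypothesis $np = \omega(\log^{3}n/\log^{2}\log n)$; note first that this range lies well above the connectivity threshold, so a.a.s.\ the graph is connected and $\ket{\lambda_{1}}$, $\lambda_{1}$ are well defined, and we may fix the sign of $\ket{\lambda_{1}}$ so that $\braket{s}{\lambda_{1}}\ge 0$. Write $A=\EE A+E$ with $\EE A=p(J-\Id)$ (where $J$ is the all-ones matrix) and $E\coloneqq A-\EE A$ the centered part; then $\EE A\ket{s}=p(n-1)\ket{s}$, so $\ket{s}$ is the principal eigenvector of the deterministic part. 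Writing $\ket{\lambda_{1}}=\sum_{i}a_{i}\ket{i}$ (normalized) and $\sigma\coloneqq\sum_{i}a_{i}=\sqrt{n}\,\braket{s}{\lambda_{1}}$, the eigenvalue equation $A\ket{\lambda_{1}}=\lambda_{1}\ket{\lambda_{1}}$ rearranges entrywise into the fixed-point identity
\begin{equation}
a_{i}=\frac{p\sigma}{\lambda_{1}+p}+\frac{\bra{i}E\ket{\lambda_{1}}}{\lambda_{1}+p}.
\end{equation}
The first term is the ``signal'' (which should match $\braket{i}{s}=1/\sqrt{n}$) and the second is the ``noise''; the whole task reduces to bounding $\max_{i}|\bra{i}E\ket{\lambda_{1}}|$ and controlling the discrepancy of the signal term from $1/\sqrt{n}$.

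For the signal term, I would invoke the lemma of Section~\ref{sec:adj-er} giving $\braket{\lambda_{1}}{s}=1-o(1)$ (valid already for $p=\omega(\log n/n)$), more precisely $1-\braket{s}{\lambda_{1}}=\order{\sqrt{\ln n/np}}$ as extracted from its proof, together with the eigenvalue control $|\lambda_{1}-np|\le\sqrt{8np\ln(\sqrt2 n)}$ from Theorem~\ref{theorem:chung_eigs}. These yield $\bigl|\tfrac{p\sigma}{\lambda_{1}+p}-\tfrac1{\sqrt n}\bigr|=\order{\tfrac1{\sqrt n}\sqrt{\tfrac{\ln n}{np}}}$, which is within the claimed bound since $\ln(np)\le\ln n$. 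For the noise term, split $\ket{\lambda_{1}}=\ket{s}+\ket{\delta}$ with $\ket{\delta}\coloneqq\ket{\lambda_{1}}-\ket{s}$, so $\bra{i}E\ket{\lambda_{1}}=\bra{i}E\ket{s}+\bra{i}E\ket{\delta}$. The first piece equals $\tfrac1{\sqrt n}\bigl(\deg(i)-(n-1)p\bigr)$, a centered sum of independent Bernoulli variables; a Bernstein inequality together with a union bound over the $n$ vertices gives $\max_{i}|\deg(i)-(n-1)p|=\order{\sqrt{np\ln n}}$ a.a.s.\ (the Gaussian regime of Bernstein applies precisely because $np=\omega(\ln n)$), hence this piece contributes $\order{\tfrac1{\sqrt n}\sqrt{\tfrac{\ln n}{np}}}$ after dividing by $\lambda_{1}+p\sim np$, again inside the target.

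The remaining piece $\bra{i}E\ket{\delta}$ is the crux, and here the cheap Cauchy--Schwarz estimate $|\bra{i}E\ket{\delta}|\le\|E_{i,\cdot}\|_{2}\|\ket{\delta}\|_{2}$ combined with a generic Davis--Kahan $\ell_{2}$ bound is too lossy to reach the stated logarithmic factors. Instead, following Mitra, I would re-insert the fixed-point identity into itself: with $t\coloneqq(\lambda_{1}+p)^{-1}$ one gets the truncated Neumann-type expansion $\ket{\lambda_{1}}=\tfrac{p\sigma\sqrt n}{\lambda_{1}+p}\sum_{m=0}^{k-1}t^{m}E^{m}\ket{s}+t^{k}E^{k}\ket{\lambda_{1}}$, whose tail satisfies $\|t^{k}E^{k}\ket{\lambda_{1}}\|_{\infty}\le\|t^{k}E^{k}\ket{\lambda_{1}}\|_{2}\le(t\|E\|)^{k}\le(8\ln n/np)^{k/2}$ by Lemma~\ref{lemma:chung_norm}, while each intermediate term $t^{m}E^{m}\ket{s}$ is estimated entrywise by a combinatorial/moment (trace-type) bound on $(E^{m}\ket{s})_{i}$ that is sharper than the crude $\|E\|^{m}$. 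Optimizing the truncation order $k\asymp\ln n/\ln(np)$ balances the geometrically decaying tail against the accumulated polylogarithmic prefactor, and this optimization is exactly what produces the factor $\ln^{3/2}(n)/\ln(np)$ and, carried out with the same concentration inputs tracked down to $np=\omega(\log^{3}n/\log^{2}\log n)$, is what relaxes Mitra's hypothesis $p\ge\log^{6}n/n$ to the stated one.

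The main obstacle is the bookkeeping of this iteration in the sparse regime: one must verify that the relevant concentration statements --- the spectral-norm bound on $E$, the entrywise/moment bounds on $\bra{i}E^{m}\ket{s}$ for $m\le k$, and the degree concentration --- all hold simultaneously a.a.s.\ under the weak density assumption, and that the choice $k\asymp\ln n/\ln(np)$ makes $(t\|E\|)^{k}$ negligible without inflating the polylogarithmic constant; pinning down the exact exponent $3/2$ on $\ln n$ and the constant $c$ is the technically heaviest part. Everything else --- the signal estimate, the degree term, and the reduction via the fixed-point identity --- is routine random-graph concentration, and the final bound follows by feeding $\|\ket{\lambda_{1}}-\ket{s}\|_{\infty}\le c\,n^{-1/2}\ln^{3/2}(n)/(\sqrt{np}\ln(np))$ together with $\braket{\lambda_{1}}{s}=1-o(1)$ into Lemma~\ref{lem:the-search-lemma}.
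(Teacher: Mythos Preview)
Your outline and the paper's proof share the same high-level idea --- iterate roughly $k\asymp \ln n/\ln(np)$ times and balance a geometrically decaying remainder against an accumulating error --- but the implementations differ in one essential respect, and the paper's is substantially simpler.

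You expand in the \emph{centered} matrix $E=A-\EE A$ via a Neumann series $\ket{\lambda_1}=\tfrac{p\sigma\sqrt n}{\lambda_1+p}\sum_{m<k}t^mE^m\ket{s}+t^kE^k\ket{\lambda_1}$, and then need entrywise control of $(E^m\ket{s})_i$ for all $m<k$. As you yourself note, the crude $\|E\|^m$ bound is off by a factor $\sqrt n$, so everything hinges on the ``combinatorial/moment (trace-type) bound'' you allude to but never specify; this is genuinely the hard step, and you leave it as a black box. The paper instead iterates the \emph{nonnegative} matrix $A/\lambda_1$: from degree concentration one gets $d\le \sqrt n\,\bra v (A/\lambda_1)\ket s\le u$ with $d,u=1+O(\sqrt{\ln n/np})$, and because $A$ has nonnegative entries this propagates by a one-line induction to $d^l\le \sqrt n\,\bra v(A/\lambda_1)^l\ket s\le u^l$ for every $v$ simultaneously. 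The orthogonal-to-$\ket{\lambda_1}$ part of $(A/\lambda_1)^l\ket s$ is then killed by the spectral ratio $(\lambda_i/\lambda_1)^l$ and the $\ell_2$ bound on $\beta$, with $l=\lceil \ln n/\ln(\tfrac14\sqrt{np/\ln n})\rceil$. No entrywise moment bounds on powers of $E$ are needed at all --- positivity of $A$ replaces them.

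So your plan is plausible but leaves the crux undone, whereas the paper shows that crux is avoidable. A minor remark: your closing sentence about feeding the bound into Lemma~\ref{lem:the-search-lemma} does not belong here --- the proposition is a purely spectral statement about $\ket{\lambda_1}$, and the application to search is made separately in Section~\ref{sec:adj-er-nh}.
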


\begin{proof}
	Note $\deg(v)$ follows a binomial distribution. Using
	Lindenberg's CLT and the fact that the convergence is uniform one can show that
	\begin{equation}
	\begin{split}
	%\begin{split}
	P\left( |\deg(v)-np| \leq 2\sqrt{\ln(n)np(1-p)} \right) &
	\approx P\left(|\mathcal X| \leq 2\sqrt{\ln(n)}\right) \\
	&\geq 1- \frac{1}{\sqrt{2\pi\ln(n)} n^2},
	%\end{split}
	\end{split}
	\end{equation}
	where $\mathcal X$ is a random variable with standard normal distribution. Let
	$A=\lambda_1\ketbra{\lambda_1}{\lambda_1}+\sum_{i\geq 2}
	\lambda_i\ketbra{\lambda_i}{\lambda_i}$ and $\ket{s} = \alpha
	\ket{\lambda_1}+\beta \ket{\lambda_1^\perp}$. Assume that
	$\ket{\lambda_1}$, $\ket{\lambda_1^\perp}$, $\ket{\lambda_i}$ are normed vectors and
	$\ket{\lambda_1^\perp}=\sum_{i \geq 2} \gamma_i \ket{\lambda_i}$. By the
	Perron-Frobenius Theorem we can choose a vector $\ket{\lambda_1}$ such that
	$\braket{v}{\lambda_1}\geq 0$ for all $v$ and hence obtain $\braket{s}{\lambda_1}=\alpha>0$.
	Thus %for all $i=1, \ldots ,n$
	\begin{equation}
	\begin{split}
	\left(A-E(A)\right)\ket{\lambda_1}
	&=\left(\lambda_1\ketbra{\lambda_1}{\lambda_1}+\sum_{i\geq
		2} 
	\lambda_i\ketbra{\lambda_i}{\lambda_i}-np\ketbra{s}{s}\right)\ket{\lambda_1}\\ 
	&= (\lambda_1 -np\alpha^2)\ket{\lambda_1}-np\alpha \beta 
	\ket{\lambda_1^{\perp}}.
	\end{split}
	\end{equation}
	With probability $1-o(1)$, using Theorem~\ref{lemma:chung_norm} we have
	\begin{equation}
	\begin{split}
	(\lambda_1 -np\alpha ^2)^2+(np)^2\alpha ^2\beta ^2&=\Vert \left(A-E(A)\right)\ket{\lambda_1} \Vert^2 \leq 8np \ln(n),
	\end{split}
	\end{equation}
	and by $\beta^2=1-\alpha^2$
	\begin{equation}
	\alpha^2np(np-2 \lambda_1)+\lambda_1^2 \leq 8np \ln(n).
	\end{equation}
	Eventually, we receive 
	\begin{equation}
	\begin{split}
	1 \geq \alpha \geq \alpha^2 &\geq \frac{\lambda_1^2-8np \ln(n)}{2 \lambda_1 np-(np)^2}  \geq 1-\frac{4}{2+\sqrt{\frac{np}{8\ln(\sqrt{2}n)}}} \geq 1-\frac{16}{\sqrt{\frac{np}{\ln(n)}}},
	\label{eq:szac.a}
	\end{split}
	\end{equation}
	where the fourth inequality comes from Theorem~\ref{theorem:chung_eigs}. We know that
	$|\deg(v)-np| \leq 2 \sqrt{n \ln(n) p(1-p)}$ with probability greater than
	$1-\frac{1}{n^2}$. Thus, with probability $1-\frac{1}{n}$, the above is true for
	all $v \in V$ simultaneously. Now, since $\deg(v)= \bra{v} A \ket{\mathbf{1}}$, 
	we have
	\begin{equation}
	\begin{split}
	\frac{np - 2 \sqrt{n \ln(n) p(1-p)}}{\lambda_1} &
	\leq \frac{1}{\lambda_1}\bra{v}\texttt{}A\ket{\mathbf{1}} \leq  \frac{np + 2 \sqrt{n \ln(n) p(1-p)}}{\lambda_1}.
	\end{split}
	\end{equation}
	The lower bound can be estimated as
	\begin{equation}
	\begin{split}
	\frac{np - 2 \sqrt{n \ln(n) p(1-p)}}{\lambda_1} 
	&\geq  \frac{1-2\sqrt{\ln(n)\frac{1-p}{np}}}{1+\sqrt{8\frac{\ln(\sqrt{2}n)}{np}}} \geq 
	\frac{1-2\sqrt{\frac{\ln(n)}{np}}}{1+4\sqrt{\frac{\ln(n)}{np}}} \eqqcolon d.
	\end{split}
	\end{equation}
	Where we use abound on $\lambda_1$ from Theorem~\ref{theorem:chung_eigs}. Similarly the upper bound
	\begin{equation}
	\frac{np +  2 \sqrt{n \ln(n) p(1-p)}}{\lambda_1} \leq \frac{1+2\sqrt{\frac{\ln(n)}{np}}}{1-4\sqrt{\frac{\ln(n)}{np}}} \eqqcolon u.
	\end{equation}
	Consequently 
	\begin{equation}
	\frac{d}{\sqrt n}\leq \frac{1}{\lambda_1}\bra{v} A\ket{s} \leq \frac{u}{\sqrt n} \label{eq:13}
	\end{equation}
	for all $v \in V$. Let $l=c\frac{\ln(n)}{\ln(\sqrt{\frac{np}{\ln(n)}}/4)}$, where $c=c(n,p) \in [1, 2)$ is chosen to satisfy $l=\left\lceil  \frac{\ln(n)}{\ln(\sqrt{\frac{np}{\ln(n)}}/4)} \right\rceil$. Hence
	\begin{equation}
	\frac{d^l}{\sqrt n}\leq \bra{v} \left(\frac{A}{\lambda}  \right)^l\ket{s}\leq \frac{u^l}{\sqrt n}
	\end{equation}
	for all $v \in V$.
	On the other hand
	\begin{equation}
	\begin{split}
	\left(\frac{1}{\lambda_1}A \right)^l \left(\alpha \ket{\lambda_1}+\beta \ket{\lambda_1^\perp}\right)&=\left(\ketbra{\lambda_1}{\lambda_1}+\sum_{i\geq 2} \left(\frac{\lambda_i}{\lambda_1}\right)^l \ketbra{\lambda_i}{\lambda_i}\right)\times \\
	&\phantom{\ =}\times 
	\left(\alpha \ket{\lambda_1}+\beta \ket{\lambda_1^\perp}\right)\\ 
	&=\alpha\ket{\lambda_1}+\beta \sum_{i\geq 2} \left(\frac{\lambda_i}{\lambda_1}\right)^l \gamma_i \ket{\lambda_i}.
	\end{split}
	\label{eq:14}
	\end{equation}
	Using Theorems~\ref{theorem:chung_eigs} and \ref{lemma:chung_norm} we are able to estimate  $\frac{\lambda_i}{\lambda_1}$ by 
	\begin{equation}
	\begin{split}
	\frac{\lambda_i}{\lambda_1} &\leq
	\frac{\sqrt{8np\ln(\sqrt{2}n)}}{np-\sqrt{8np\ln(\sqrt{2}n)}} =\frac{1}{\sqrt{\frac{np}{8\ln(\sqrt{2}n)}}-1} \leq \frac{4}{\sqrt{\frac{np}{\ln(n)}}}.
	\end{split}
	\end{equation}
	Thus
	\begin{equation}
	\begin{split}
	\left\Vert \beta \sum_{i\geq 2} \left(\frac{\lambda_i}{\lambda_1}\right)^l \gamma_i \ket{\lambda_i} \right\Vert_\infty 
	&\leq |\beta| \left\Vert  \sum_{i\geq 2} \left(\frac{\lambda_i}{\lambda_1}\right)^l \gamma_i \ket{\lambda_i} \right\Vert_2\\
	&\leq |\beta| \sqrt{ \sum_{i\geq 2} \gamma_i^2\left(\frac{4}{\sqrt{\frac{np}{\ln(n)}}}\right)^{2l}} \\&= \frac{|\beta|}{\left(\frac{\sqrt{\frac{np}{\ln(n)}}}{4}\right)^{l}}
	= \frac{|\beta |}{n^c}\\
	&\leq \frac{4}{\left(\frac{np}{\ln(n)}\right)^{1/4} n},
	\end{split}
	\label{eq:16}
	\end{equation}
	where the last inequality comes from Eq.~(\ref{eq:szac.a}) and \mbox{$\|\cdot\|_2$} denotes the Euclidean norm.
	By Eq.~\eqref{eq:13} and \eqref{eq:14} we get
	\begin{equation}
	\frac{d^l}{\sqrt n}\leq \alpha\braket{v}{\lambda_1}+ \bra{v} \left( \beta \sum_{i\geq 2} \left(\frac{\lambda_i}{\lambda_1}\right)^l \gamma_i \ket{\lambda_i} \right)  \leq\frac{u^l}{\sqrt n},
	\end{equation}
	for all $v \in V$ and using Eq.~\eqref{eq:szac.a} and \eqref{eq:16} we eventually obtain
	\begin{equation}
	\frac{\frac{d^l}{\sqrt{n}}-\frac{4}{\left(\frac{np}{\ln(n)}\right)^{1/4} n}}{1} \leq \braket{v}{\lambda_1}  \leq
	\frac{\frac{u^l}{\sqrt{n}}+\frac{4}{\left(\frac{np}{\ln(n)}\right)^{1/4} n}}{1-\frac{16}{\sqrt{\frac{np}{\ln(n)}}}}
	\end{equation}
	for all $v \in V$. In order to finish the proof it is necessary to show that
	\begin{equation}
	(1-d^l)+\frac{4}{\left(\frac{np}{\ln(n)}\right)^{1/4} \sqrt{n}} = \order{\frac{\log^{3/2}(n)}{\sqrt{np} \ln(np)}}
	\label{eq:21}
	\end{equation}
	and
	\begin{equation}
	(u^l-1)+\frac{4}{\left(\frac{np}{\ln(n)}\right)^{1/4} \sqrt{n}}= \order{\frac{\log^{3/2}(n)}{\sqrt{np} \ln(np)}}.
	\end{equation}
	We need to estimate how quickly  $d^l$ converges to $1$. Using the fact that $d
	\rightarrow 1$, it is enough to observe that
	\begin{equation}
	(1-d)l = \order{\frac{\ln^{3/2}(n)}{\sqrt{np} \log\left(  \sqrt{\frac{np}{\log(n)}}/4\right)} },
	\end{equation}
	and thus 
	\begin{equation}
	1-d^l \approx 1 - e^{(d-1)l} = \order{\frac{\ln^{3/2}(n)}{\sqrt{np} \ln(np)} }.
	\end{equation}
	The second term of LHS of Eq.~(\ref{eq:21}) converges to $0$ more rapidly than
	the bound, so it completes the proof for the lower bound. The same fact for the
	upper bound can be shown analogously.
\end{proof}

\subsection{Convergence of the largest eigenvalue of Laplacian} \label{app:er_proof_largest_eigenvalue}
\begin{theorem*}
	Let $G$ be a random graph chosen according to $\randgn[ER](p)$. Let $p_0>0$ be
	such that $np\geq p_0\log(n)$. Let $\delta_{\max}\sim cnp$ for some $c>0$ almost
	surely. Then almost surely $\lambda_1(L(G)) \sim cnp$ .
\end{theorem*}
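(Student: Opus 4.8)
The plan is to bound $\lambda_1(L(G))$ both from above and from below by $cnp(1+o(1))$, using the decomposition $L = D - A$ together with the degree and spectral estimates already available in the excerpt. For the \emph{upper bound}, I would use the well-known inequality $\lambda_1(L) \le 2\delta_{\max}$ combined with a Rayleigh-quotient refinement: for any graph, $\lambda_1(L) = \max_{\|x\|=1} x^\top L x = \max_{\|x\|=1}\sum_{\{u,v\}\in E}(x_u - x_v)^2 \le \max_v \bigl(\deg(v) + \tfrac1{\deg(v)}\sum_{w\sim v}\deg(w)\bigr)$ via a standard argument, or alternatively write $L = \mathbb{E}L + (L - \mathbb{E}L)$ with $\mathbb{E}L = np\,\mathrm{Id} - np\,\ketbra{s}{s}$ on the connected part. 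Since $\|\mathbb{E}L\| = np$ and, by the Bryc–Dembo–Jiang type estimate recalled around Eq.~\eqref{eq:max-eigenvalue-laplacian}, $\|L - \mathbb{E}L\| = \|\tilde L\| = \sqrt{2np(1-p)\log n}\,(1+o(1)) = o(np)$ whenever $np \ge p_0\log n$ with $p_0$ large — but here $p_0$ is only assumed $>0$, so this crude bound gives $\lambda_1(L) \le np + O(\sqrt{np\log n})$, which is \emph{smaller} than $cnp$ when $c>1$, contradicting the hypothesis $\delta_{\max}\sim cnp$. This signals that the interesting regime is $np = \Theta(\log n)$, where $\delta_{\max}$ can genuinely exceed $np$ by a constant factor, and the $\|\tilde L\|$ bound is no longer negligible; so the upper bound must instead come from $\lambda_1(L)\le 2\delta_{\max} \sim 2cnp$ plus a sharper localization argument.

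For the \emph{lower bound}, the natural test vector is $e_{v^*}$ where $v^*$ achieves the maximum degree: $\lambda_1(L) \ge e_{v^*}^\top L e_{v^*} = \deg(v^*) = \delta_{\max} \sim cnp$. This already gives $\lambda_1(L) \ge cnp(1+o(1))$. The remaining work is therefore to show the matching upper bound $\lambda_1(L) \le cnp(1+o(1))$, i.e.\ that the maximum degree vertex is essentially the \emph{only} source of large Laplacian eigenvalue and there is no "conspiracy" of moderately high-degree vertices clustered together that would push $\lambda_1$ above $\delta_{\max}$. Concretely, I would take an eigenvector $\ket{\lambda_1}$ for $\lambda_1(L)$, and use $\lambda_1(L)\ket{\lambda_1} = D\ket{\lambda_1} - A\ket{\lambda_1}$, so that for every vertex $v$, $(\lambda_1 - \deg v)\langle v|\lambda_1\rangle = -\sum_{w\sim v}\langle w|\lambda_1\rangle$. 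If $\lambda_1 = (c+\varepsilon)np$ for some fixed $\varepsilon>0$, then for all $v$ with $\deg v = o(np)$ (which, by the Erd\H{o}s–Rényi degree concentration recalled in App.~\ref{app:er_proof_principal_eigenvector} and the Bollob\'as degree asymptotics, is \emph{almost all} vertices, since typical degree is $\sim np$ and only $o(n)$ vertices have degree bounded away from $np$), the factor $(\lambda_1 - \deg v)^{-1}$ is bounded, forcing $\|\ket{\lambda_1}\|$ to be carried by the $o(n)$ high-degree vertices; then a second iteration using $\|A - \mathbb{E}A\| = O(\sqrt{np\log n})$ (Lemma~\ref{lemma:chung_norm}) shows this is impossible unless $\varepsilon = o(1)$.

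The main obstacle I anticipate is precisely controlling the \emph{second-neighborhood structure of the extremal vertex}: one must rule out that a vertex of degree $\sim cnp$ has a neighbor (or a small cluster of neighbors) also of anomalously high degree, which would create a $2\times 2$-ish submatrix of $L$ with top eigenvalue noticeably larger than $cnp$. I would handle this with a union bound over the $o(n)$ high-degree vertices showing that, conditioned on $\deg(v^*)\sim cnp$, the induced subgraph on $N(v^*)\cup\{v^*\}$ has maximum degree $np(1+o(1))$ with probability $1-o(1)$ — this uses the independence of edges outside the conditioning and a Chernoff bound, and is where the assumption $np\ge p_0\log n$ is actually needed (it guarantees degree concentration within $N(v^*)$, whose size is $\sim cnp = \Theta(\log n)$, so each such degree is a sum of $\Theta(\log n)$ independent bits and concentrates up to $\sqrt{\log n\cdot\log\log n}$ fluctuations, negligible against $np$). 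Given that control, combining $\lambda_1(L)\le \max_v\bigl(\deg v + \sqrt{\deg v}\cdot\|\ket{\lambda_1}\|_{N(v)}\cdot(\cdots)\bigr)$ with the established localization yields $\lambda_1(L) = \delta_{\max}(1+o(1)) \sim cnp$, completing the proof.
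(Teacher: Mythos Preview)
Your lower bound is exactly the paper's: $\lambda_1(L)\ge \bra{v^*}L\ket{v^*}=\delta_{\max}\sim cnp$.

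For the upper bound, you correctly diagnose that the decomposition $L=\mathbb{E}L+\tilde L$ is useless here, because $\|\tilde L\|$ is driven by the diagonal fluctuations $D-\mathbb{E}D$, which are $\Theta(np)$ in the regime $np=\Theta(\log n)$. But you then overshoot: the eigenvector-localization and second-neighbourhood programme you outline is unnecessary, and as sketched it has gaps (e.g.\ ``$\deg v=o(np)$ for almost all $v$'' is false---typical degree is $\sim np$, not $o(np)$; and the iteration you describe does not obviously close, since summing the entrywise bound $|\langle v|\lambda_1\rangle|\le \|A-\mathbb{E}A\|/|\deg v-\lambda_1|$ over $n$ vertices gives an $\ell^2$ mass of order $n\cdot (np)^{-1}=p^{-1}$, which is not small).

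The paper's fix is a one-line observation you missed: instead of splitting $L$ around its \emph{expectation}, split it as $L=D-A$. Since the top eigenvector of $L$ is orthogonal to $\ket s$, one has
\[
\lambda_1(L)=\max_{\substack{\|\phi\|=1\\ \phi\perp s}}\bigl(\bra\phi D\ket\phi-\bra\phi A\ket\phi\bigr)\le \delta_{\max}+\max_{\substack{\|\phi\|=1\\ \phi\perp s}}|\bra\phi A\ket\phi|.
\]
The diagonal part is bounded by $\delta_{\max}$ for \emph{any} unit vector, with no localization needed. The adjacency part, restricted to $\ket s^\perp$, is $O(\sqrt{np})$ by Feige--Ofek (Theorem~2.5 in \cite{feige2005spectral}), which holds throughout $np\ge p_0\log n$. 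Since $\sqrt{np}=o(np)$, this gives $\lambda_1(L)\le \delta_{\max}+o(np)\sim cnp$ directly. The whole ``conspiracy of high-degree vertices'' worry evaporates because $\bra\phi D\ket\phi\le\delta_{\max}$ is a pointwise bound, not a spectral one.
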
 
\begin{proof}
	Note, that since the eigenvector corresponding to 0 eigenvalue is the equal superposition, we have
	\begin{equation}
	\begin{split}
	\lambda_1 &= \max_{\{\ket{\phi}\bot\ket s: \braket{\phi}{\phi}=1\}} \bra \phi L \ket \phi \\
	&= \max_{\{\ket{\phi}\bot\ket s: \braket{\phi}{\phi}=1\}} (\bra \phi D \ket \phi-\bra \phi A \ket \phi).
	\end{split}
	\end{equation}
	Note that 
	\begin{equation}
	\begin{split}
	\lambda_{1}&\leq \max_{\{\ket{\phi}\bot\ket s: \braket{\phi}{\phi}=1\}} \bra \phi D \ket \phi +\max_{\{\ket{\phi}\bot\ket s: \braket{\phi}{\phi}=1\}} |\bra \phi A \ket \phi)|\\
	&\leq \delta_{\max} + C\sqrt{np}
	\end{split}
	\end{equation}
	by Theorem 2.5 from \cite{feige2005spectral}. Furthemore, we have 
	\begin{equation}
	\begin{split}
	\lambda_{1 } = \max_{\ket{\phi}} \bra \phi L \ket \phi \geq  \max_{i=1,\dots, n} \bra i L \ket i = \delta_{\max}.
	\end{split}
	\end{equation}
	Since $\sqrt{np} = o(np) $ we have $\lambda_1 \sim cnp$.
\end{proof}

\section{Proofs for \CL graphs}
\label{sec:cl-proofs}
In this section we assume $\omega_i= n^{a+\frac{i}{n} b}$.

\subsection{Complexity of $p$-norm of $\omega$} \label{sec:cl-proofs-pnorm}

\begin{theorem} 
	Let $\alpha,\beta>0$ be fixed numbers. Let $f_{a,b}(n) = \sum_{i=1}^n n^{a+bi/n}$. Then $f(n) = \frac{n^{1+\alpha+\beta}}{\beta\log(n)}(1+o(1))$.
\end{theorem}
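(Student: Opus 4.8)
The plan is to recognize $f_{a,b}(n)$ as a geometric progression and perform careful asymptotic bookkeeping on the resulting closed form. First I would write
\begin{equation}
n^{a+\frac{i}{n}b} = n^a \left(n^{b/n}\right)^i = n^a r^i, \qquad r \coloneqq n^{b/n} = \exp\!\left(\frac{b\log n}{n}\right),
\end{equation}
so that
\begin{equation}
f_{a,b}(n) = n^a \sum_{i=1}^n r^i = n^a\, \frac{r^{n+1} - r}{r - 1},
\end{equation}
valid since $r \neq 1$ for $n \geq 2$.

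Next I would estimate the three ingredients $r$, $r-1$, and $r^{n+1}-r$ as $n \to \infty$. Since $\frac{b\log n}{n} \to 0$, we have $r = 1 + o(1)$ and, using $\ee^x - 1 = x(1+o(1))$ as $x \to 0$,
\begin{equation}
r - 1 = \exp\!\left(\frac{b\log n}{n}\right) - 1 = \frac{b\log n}{n}\,(1+o(1)).
\end{equation}
For the numerator, note $r^n = n^b$, hence $r^{n+1} - r = r\,(r^n - 1) = (1+o(1))\,n^b\,(1 - n^{-b})$; because $b > 0$ the factor $n^b$ diverges and dominates the constant, giving $r^{n+1} - r = n^b\,(1+o(1))$.

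Finally I would assemble the pieces:
\begin{equation}
f_{a,b}(n) = n^a\, \frac{n^b\,(1+o(1))}{\frac{b\log n}{n}\,(1+o(1))} = \frac{n^{1+a+b}}{b\log n}\,(1+o(1)),
\end{equation}
which is the claimed asymptotic (with the identification $a = \alpha$, $b = \beta$). I do not expect any genuine obstacle here; the only points requiring a little care are the justification that the correction term $-r$ in the numerator is asymptotically negligible against $r^{n+1} = n^b\,(1+o(1))$ (which relies on $b > 0$), and the standard expansion $\ee^x - 1 \sim x$ used for the denominator. Combining ratios of the form $1+o(1)$ is legitimate since each factor is bounded away from $0$ for large $n$.
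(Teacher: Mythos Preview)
Your proposal is correct and follows essentially the same approach as the paper: both recognize the sum as a geometric progression with ratio $r=n^{b/n}$, use the closed form, expand the denominator via $\ee^x-1\sim x$ with $x=\frac{b\log n}{n}$, and observe that the numerator is asymptotically $n^{a+b}$. Your write-up is in fact slightly more careful than the paper's, explicitly justifying why the lower-order term in the numerator is negligible and why the $(1+o(1))$ factors combine safely.
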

\begin{proof}
	Let us consider the inner sum first
	\begin{equation}
	\sum_{i=1}^n n^{\alpha+\beta\frac{i}{n}} = n^\alpha \sum_{i=1}^n \left (n^{\beta/n}\right )^i = n^\alpha n^\frac{\beta}{n} \frac{n^\beta-1}{n^\frac{\beta}{n}-1} 
	\end{equation}
	Note that $n^\alpha n^{\beta/n} (n^\beta -1) = \Theta(n^{\alpha+\beta})$, hence we only need to derive the complexity of the denominator:
	\begin{equation}
	\begin{split}
	\frac{1}{1-n^{\beta/n}} &= \frac{1}{1- \exp\left ( \frac{\beta}{n} \log(n)\right )} = \frac{n}{\beta\log(n)}\frac{\frac{\beta}{n} \log(n)}{1- \exp\left ( \frac{\beta}{n} \log(n)\right )} \\
	&= \frac{n}{\beta\log(n)} (1+o(1)),
	\end{split}
	\end{equation}
	which ends the proof.
\end{proof}	
Note that in particular $\|\omega\|_1 = f_{a,b}(n) \sim \frac{n^{1+a+b}}{b\log(n)}$ and $\|\omega\|_2 = \sqrt{f_{2a,2b}(n)} \sim \frac{n^{\frac{1}{2}+a+b}}{\sqrt{2b\log(n)}}$. Since $b$ is a constant, we can discard it with $\Theta$ notation.

\subsection{Number of edges for \CL graphs} \label{app:cl-proofs-edges}
Let $\mathcal{E}$ be a random variable denoting the number of edges of random \CL graph with the proposed $\omega$. Let $p_{ij} = \frac{\omega_{i}\omega_j}{\|\omega\|_1}$. Then
\begin{equation}
\begin{split}
\EE\mathcal E &= \sum_{i,j} p_{ij} = \sum_{i,j} \frac{\omega_{i}\omega_j}{\|\omega\|_1} = \|\omega\|_1, \\
{\rm Var} [\mathcal E]&= \sum_{i,j}p_{ij}(1-p_{ij}) = \EE\mathcal E - \sum_{i,j}p_{ij}^2 \leq \EE\mathcal E.
\end{split}
\end{equation}
By the Chebyshev inequality
\begin{equation}
\begin{split}
\PP(|\mathcal{E}-\EE\mathcal{E}| &\geq \varepsilon \EE\mathcal{E}) \leq \frac{{\rm Var} [\mathcal E]}{\varepsilon^2(\EE\mathcal{E})^2}  \leq \frac{\EE\mathcal E }{\varepsilon^2(\EE\mathcal{E})^2} =\frac{1}{\varepsilon^2 \|\omega\|_1}.
\end{split}
\end{equation}
Let us take $\varepsilon = 1/\log n$. Then we have
\begin{equation}
\begin{split}
\PP(|\mathcal{E}-\EE\mathcal{E}| &\geq  \EE\mathcal{E} / \log n)  \leq \frac{1}{\varepsilon^2 \|\omega\|_1} \sim  \frac{\log^3(n)}{n^{1+a+b}} \leq \frac{1}{n^{1+a+b+\varepsilon}}.
\end{split}
\end{equation}
Hence the number of edges $\mathcal E$ concentrates around $\EE\mathcal{E}$. Note that the upper bound on the probability is $1/n^{1+\varepsilon}$ for $\varepsilon$, which, thanks to Borel-Cantelli lemma, means that the almost all graphs in a sequence will have this property.

\subsection{Degree convergence} \label{app:cl-proofs-degree}
Let $\mathcal{D}_i$ be a random variable denoting the degree of the $i$-th vertex of edges of random \CL graph with the proposed $\omega$. We do not assume $i$ is fixed. Let $p_{ij} = \frac{\omega_{i}\omega_j}{\|\omega\|_1}$. Then
\begin{equation}
\begin{split}
\EE\mathcal D_i &= \omega_i, \\
{\rm Var} [\mathcal D_i]&= \sum_{j}p_{ij}(1-p_{ij}) = \omega_i - \sum_{j}\frac{\omega_i^2\omega_j^2}{\|\omega\|_1^2} \leq \omega_i.
\end{split}
\end{equation}
By the Chebyshev inequality
\begin{equation}
\begin{split}
\PP(|\mathcal{D}_i-\omega_i| &\geq \varepsilon \omega_i) \leq \frac{\omega_i}{\varepsilon^2\omega_i^2}   =\frac{1}{\varepsilon^2 \omega_i}.
\end{split}
\end{equation}
Let us take $\varepsilon = 1/\log n$. Then we have
\begin{equation}
\begin{split}
\PP(|\mathcal{D}_i-\omega_i| &\geq  \omega_i / \log n)  \leq  \frac{\log^2n}{n^{a+\frac{i}{n}b}}. 
\end{split}
\end{equation}
Note that for any choice of $i$ and $a,b>0$ the probability converges to 0, but the series of probabilities is not converging for any choice of $i$. Hence we can say at best there is infinite subsequence of graphs s.t. the degree is close to the expected degree.

Let us use the Hoeffding theorem this time we have
\begin{equation}
\begin{split}
\PP(|\mathcal{D}_i-\omega_i| \geq  \varepsilon\omega_i ) &\leq 2 \exp \left( - \frac{2\varepsilon^2 \omega_i^2}{n}\right) = 2 \exp \left( - 2\varepsilon^2 n^{2a+2\frac{i}{n}b-1}\right) \\
&= \frac{2}{n^\frac{2\varepsilon^2 n^{2a+2\frac{i}{n}b-1}}{\log n}}
\end{split}
\end{equation}
Note that the series of probabilities is convergent if $\frac{2\varepsilon^2 n^{2a+2\frac{i}{n}b-1}}{\log n} \geq C > 1$ for some fixed $C$, which is can be relaxed by $\varepsilon=1/\log n$  to
\begin{equation}
\begin{split}
n^{2a+2\frac{i}{n}b-1} & \geq  \frac{C}{2} \log^3 n \\
2a+2\frac{i}{n}b-1 &> 0  \\
a+\frac{i}{n}b &> \frac{1}{2}  \\
\end{split}
\end{equation}
So if $i$ is chosen in such a way that $\omega_i > n^{\frac{1}{2}+\varepsilon}$, then almost all degrees concentrate around their expectation. Thus, for almost all graphs, $\mathcal{D}_i = \Theta(\omega_i)$.

\subsection{Convergence of $\ket{\lambda_1}$ for adjacency graphs} \label{app:cl-proofs-eigenvector}
\paragraph{The overlap} Let $A=\sum_i\lambda_i\ket{\lambda_i}$. Let $\ket{\omega} =
\sum_i\omega_i \ket i$. Note that $\EE A = \frac{1}{\|\omega\|_1} \ketbra{\omega}$
By \cite{chung2011spectra} we have a.a.s.
\begin{equation}
\| A- \EE A\| \leq \sqrt{8 \delta_{\max} \log n}.
\end{equation}
Let $\frac{1}{\|\omega\|_2}\ket{\omega}= \alpha \ket{\lambda_1} + \beta \ket{\lambda_1^\perp}$. Note that
\begin{equation}
\begin{split}
(A- \EE A) \ket{\lambda_1} &= \lambda_1 \ket{\lambda_1} - \frac{\|\omega\|_2}{\|\omega\|_1} \alpha \ket \omega \\
&= \left ( \lambda_1 - \frac{\|\omega_2^2}{\|\omega\|_1} \alpha^2\right ) \ket{\lambda_1} - \frac{\|\omega\|_2^2}{\|\omega\|_1} \alpha\beta \ket{\lambda_1^\perp}.
\end{split}
\end{equation}
By this we have $\|(A- \EE A) \ket{\lambda_1}\|_2^2 \geq \left ( \lambda_1 - \frac{\|\omega\|_2^2}{\|\omega\|_1} \alpha^2\right)^2$. Since $\|(A-\EE A)\ket{\lambda_1} \| \leq \|(A-\EE A)\|\|\ket{\lambda_1} \|=\|A-\EE A\|$, we have
\begin{equation}
\begin{split}
\left ( \lambda_1 - \frac{\|\omega\|_2^2}{\|\omega\|_1} \alpha^2\right ) ^2 &\leq  8 \delta_{\max} \log n \\ 
\alpha^2 &\geq \frac{\|\omega\|_1}{\|\omega\|_2^2}\left (\lambda_1- \sqrt{8 \delta_{\max} \log n } \right ) \\
&\geq \frac{\|\omega\|_1}{\|\omega\|_2^2}\left (\frac{\|\omega\|_2^2}{\|\omega\|_1} - 2\sqrt{8 \delta_{\max} \log n } \right ) \\
& \geq 1 - 2\frac{\|\omega\|_1}{\|\omega\|_2^2}\sqrt{8 \delta_{\max} \log n}.
\end{split}
\end{equation}
So as long as $\frac{\|\omega\|_1}{\|\omega\|_2^2}\sqrt{\delta_{\max} \ln n} =o(1)$ we have that $\alpha=1-o(1)$. 

This is satisfied for the proposed $\omega$ for any $a,b>0$.

\paragraph{Convergence for almost all nodes} We follow the proof similar to the one presented in \cite{kukulski2020comment}. 
Let $\ket{\lambda_1}=\sum_{i}\gamma_{i}\ket{i}$ and $\ket{\bar \omega} =  \sum_{i}\bar \omega_i\ket i$ where $\bar \omega_i=\frac{\omega_i}{\|\omega\|_2}$. We know that $\alpha\coloneqq \braket{\lambda_1}{\bar \omega}= 1-o(1)$. We will search for a indexes set $I_n\subseteq \{1,\dots,n\}$ such that $|I_n| = n(1-o(1))$ and
\begin{equation}
\max_{i\in I_n} |\gamma_{i,n}-1| = o(1).
\end{equation}

Let $\ket{\lambda_1} = \alpha\ket{\bar \omega} + \beta \ket{\bar \omega ^\perp}$, with $\ket{\bar \omega ^\perp}=\sum_{i} \bar \omega_{i}^\perp \ket{i}$ being a normed vector. Let 
\begin{equation}
I_\varepsilon(n)^c \coloneqq \{i\in \{1,\dots,n\} \colon |  \gamma_{i}/\bar \omega_{i}-\alpha| > \varepsilon\}
\end{equation} 
be the collection of indices for which values in vectors are not sufficiently close.
Since $\gamma_{i}/\bar \omega_{i} -\alpha =  \beta \bar \omega ^\perp_{i}/\bar \omega_{i}$, we have 
\begin{equation}
\begin{split}
1 \geq|\braket{\bar \omega ^\perp}|^2 &= \sum_{i=0}^{n-1} |\bar \omega ^\perp_i|^2 \geq \sum_{i\in I^c_\varepsilon(n)} |\bar \omega ^\perp_i|^2 > \left (\frac{\varepsilon\min_{i}\bar \omega_{i}}{\beta} \right )^2 |I^c_\varepsilon(n)|,
\end{split}
\end{equation}
hence $|I^c_\varepsilon(n)|<  \left (\frac{\beta}{\varepsilon\min_{i}\bar \omega _{i}} \right )^2$. We will expect $|I^c_\varepsilon(n)| = o(n)$, which gives us following condition on $a,b$:
\begin{equation}
\begin{split}
|I^c_\varepsilon(n)|&<  \left (\frac{\beta}{\varepsilon\min_{i}\bar \omega _{i}} \right )^2 \leq \frac{\frac{\|\omega\|_1}{\|\omega\|_2^2} \sqrt{8\delta_{\max} \log n}}{\varepsilon^2 \frac{n^{2a}}{\|\omega\|_2^2}} = \frac{\|\omega\|_1 \sqrt{8n^{a+b} \log n}}{\varepsilon^2 n^{2a}} \\
&\sim 2\sqrt 2 \frac{n^{1+a+b}}{b\log n} \frac{\sqrt{n^{a+b} \log n}}{\varepsilon^2 n^{2a}} =
\frac{2 \sqrt 2}{b\varepsilon^2}  \frac{n^{1-\frac{1}{2}a+\frac{3}{2}b} }{\sqrt{\log n}}
\end{split}
\end{equation}
We require $|I^c_\varepsilon(n)|=o(n)$, which translates to
\begin{equation}
\begin{split}
\frac{2 \sqrt 2}{b\varepsilon^2}  \frac{n^{1-\frac{1}{2}a+\frac{3}{2}b} }{\sqrt{\log n}} &= o(n)\\
\frac{1}{\varepsilon^2}  & = o\left(\frac{\sqrt {\log n}}{n^{\frac{1}{2}a-\frac{3}{2}b}} \right)\\
%\varepsilon^2  & = \omega \left(\frac{n^{\frac{1}{2}a-\frac{3}{2}b}}{\sqrt {\log n}} \right)\\
\varepsilon  & = \omega \left(\frac{n^{\frac{1}{4}a-\frac{3}{4}b}}{\sqrt {\log n}} \right).
\end{split}
\end{equation}
We also will require $\varepsilon = o(1)$. In order to satisfy both condition we will need $\frac{1}{4}a - \frac{3}{4}b <0$, which is equivalent to $a< 3b$.

Let $I_n \coloneqq \{1,\dots,n\} \setminus I^c_{\varepsilon'}(n)$ with $\varepsilon' = n^{\frac{1}{4}(a-3b)}$. Then  $\varepsilon' =o(1)$ and $I_n = n(1-o(1))$, and furthermore
\begin{equation}
\begin{split}
\max_{i\in I_n} | \gamma_{i}/{\bar \omega}_i-1| &\leq \max_{i\in I_n} |\gamma_{i}/{\bar \omega}_i-\alpha| + |1-\alpha| \leq \varepsilon' +|1-\alpha | = o(1).
\end{split}
\end{equation}

%Let us now show by example that $|I_n|=n(1-o(1))$ is tight, given $\alpha_{i,n} = \frac{1}{\sqrt n}$. Let $f(n)=o(n)$ and let $\ket{\varphi_n} = \frac{1}{\sqrt{n-f(n)}} \sum_{i=0}^{n-f(n)-1} \ket{i}$ satisfies the assumptions of the theorem, yet the maximal $|I_n|$ is of order $n-f(n)$.

\section{Classical search}\label{app:classical_search}

Let $P= AD^{-1}$ be a stochastic matrix of uniform walk on undirected graph. Let $P_j^\infty$ be its unique stationary state. Let $P_{ij}(t)$ be the probability of being at $j$ at time $t$ starting at node $i$. Finally let $R_{ij} \coloneqq \sum_{t=0}^{\infty} (P_{ij}(t) - P_j^\infty)$
Then we have
\begin{equation}
\begin{split}
R_{jj} - R_{ij} &= \sum_{t=0}^\infty (P_{jj}(t) - P_j^\infty- P_{ij}(t) + P_j^\infty) 
=\sum_{t=0}^\infty (P_{jj}(t) - P_{ij}(t)) \\
&=\sum_{t=0}^\infty (\bra j P^t \ket j  - \bra j  P^t \ket i) =\bra j \sum_{t=0}^\infty (P^t (\ket j  - \ket i)) .\\
\end{split}
\end{equation}
Note we cannot move $\ket j - \ket i$ outside the series, since $\sum_{t=0}^\infty P^t$ is not converging. Let $\langle T_{ij} \rangle$ be a mean first passage time from $i$ to $j$.
Then $\langle T_{ij} \rangle =  \frac{2|E|}{d_j}[R_{jj} - R_{ij}]$ for $i\neq j$ \cite{noh2004random} and $\langle T_{jj} \rangle = 0$. The mean first passage time starting at stationary state equals
\begin{equation}
\begin{split}
\langle T_j \rangle &\coloneqq \sum_{i=1}^n \frac{d_i}{2|E|}\langle T_{ij} \rangle =   \frac{1}{d_j}\sum_{\substack{i=1\\ i \neq j}}^nd_i(R_{jj} - R_{ij})\\
&= \frac{1}{d_j} \bra j \sum_{t=0}^\infty P^t \Big((2|E|-d_j)\ket j  - d_i\sum_{\substack{i=1\\ i \neq j}}^n\ket i\Big)  \\
&= \frac{1}{d_j} \bra j \sum_{t=0}^\infty \Big(P^t( 2|E|\ket j  - \ket{P^\infty})\Big)= \frac{1}{d_j}  \sum_{t=0}^\infty \Big(2|E| \bra j P^t \ket j  - \braket{j}{P^\infty}\Big)\\
&= \frac{1}{d_j}  \sum_{t=0}^\infty \Big(2|E| \bra j P^t \ket j  - d_j\Big)= \frac{2|E|}{d_j}  \sum_{t=0}^\infty \Big( \bra j P^t \ket j  - \frac{d_j}{2|E|}\Big)\\
&= \frac{2|E|}{d_j}  \sum_{t=0}^\infty \Big( \bra j P^t \ket j  - \frac{d_j}{2|E|}\Big).
\end{split}
\end{equation}
Let $\ket{\mu_i}$ be an eigenvector of normalized Laplacian with eigenvalue $\mu_i>0$. Based on the formula for $P_{jj}(t)$ before Eq.~(2.1) from \cite{sinclair2012algorithms} we have
\begin{equation}
\sum_{i=1}^n \frac{d_i}{|E|}\langle T_{ij} \rangle = \frac{2|E|}{d_j}  \sum_{t=0}^\infty \sum_{i\geq 2} \lambda_i^t \braket{j}{\mu_i}^2 = \frac{2|E|}{d_j}   \sum_{i\geq 2} \frac{1}{1-\lambda_i} \braket{j}{\mu_i}^2  =   \frac{2|E|}{d_j}S_1.
\end{equation}

Let us upper bound it from the above and from below
\begin{equation}
\langle T_{j} \rangle =  \frac{2|E|}{d_j}   \sum_{i\geq 2} \frac{1}{1-\lambda_i} \braket{j}{\mu_i}^2 \leq  \frac{2|E|}{d_j} \frac{1}{1-\lambda_2} \sum_{i\geq 2} \braket{j}{\mu_i}^2 \leq   \frac{2|E|}{d_j} \frac{1}{\Delta} .
\end{equation}
Note that $\varepsilon = \frac{d_j}{2|E|}$, which confirms one bound. Similarly for the other side we have.
\begin{equation}
\begin{split}
\langle T_{j} \rangle &= \frac{2|E|}{d_j}  \sum_{t=0}^\infty \sum_{i\geq 2} \lambda_i^t \braket{j}{\mu_i}^2 = \frac{2|E|}{d_j}   \sum_{i\geq 2} \frac{1}{1-\lambda_i} (e_j^{(i)})^2 \\
&>  \frac{|E|}{d_j}   \sum_{i\geq 2}  (e_j^{(i)})^2 =  \frac{|E|}{d_j}    (1- (e_j^{(1)})^2 )\\
&=  \frac{|E|}{d_j}    (1- \frac{d_j}{2|E|} )= \frac{|E|}{d_j} -\frac{1}{2}.
\end{split}
\end{equation}

\end{document}